\documentclass[11pt,reqno]{amsart}
\usepackage[margin=1.05in]{geometry}
\usepackage[T1]{fontenc}
\usepackage[utf8]{inputenc}
\usepackage{amsmath,amssymb,booktabs,array,microtype,tikz,needspace}
\usetikzlibrary{arrows.meta,positioning}
\usepackage[hidelinks]{hyperref}
\hypersetup{pdfauthor={Enkai Zhang},pdftitle={The triple rendezvous time of a synchronizing automaton can be floor(4n/3)}}
\newtheorem{lemma}{Lemma}[section]
\newtheorem{theorem}[lemma]{Theorem}
\newtheorem{corollary}[lemma]{Corollary}
\theoremstyle{definition}\newtheorem{definition}[lemma]{Definition}
\theoremstyle{remark}\newtheorem{remark}[lemma]{Remark}
\newtheorem{conjecture}[lemma]{Conjecture}
\theoremstyle{plain}\newtheorem*{theoremA}{Theorem A}

\newcommand{\floor}[1]{\left\lfloor #1\right\rfloor}
\newcommand{\act}{\mathbin{\cdot}}
\title[Triple rendezvous time]{The triple rendezvous time of a synchronizing automaton can be $\lfloor4n/3\rfloor$}
\author{Enkai Zhang}
\address{University of Toronto Scarborough, Toronto, Ontario, Canada}
\email{ek.zhang@mail.utoronto.ca}
\subjclass[2020]{68Q45, 68R15, 20M20}
\keywords{synchronizing automaton, triple rendezvous time, state merging, constructive lower bound}
\date{}

\providecommand{\Q}{Q}
\providecommand{\Tr}{T_3}
\providecommand{\G}[2]{G_{#1,#2}}
\providecommand{\dep}{d}
\providecommand{\lab}{\lambda}

\providecommand{\CWA}{\operatorname{CWA}}
\providecommand{\MainPrefix}{}
\providecommand{\SuppPrefix}{}
\theoremstyle{plain}
\newtheorem{Stheorem}{Theorem}[section]
\theoremstyle{plain}
\newtheorem{Slemma}[Stheorem]{Lemma}
\theoremstyle{plain}

\theoremstyle{plain}
\newtheorem{Sproposition}[Stheorem]{Proposition}
\theoremstyle{plain}

\theoremstyle{plain}
\newtheorem*{StheoremB}{Theorem B}
\theoremstyle{definition}

\theoremstyle{definition}
\newtheorem{Sconjecture}[Stheorem]{Conjecture}
\theoremstyle{definition}
\newtheorem{Squestion}[Stheorem]{Question}
\theoremstyle{definition}

\providecommand{\MainPrefix}{}
\providecommand{\SuppPrefix}{}
\hypersetup{hypertexnames=false}
\begin{document}
\hypertarget{article-start}{}
\renewcommand{\SuppPrefix}{Supplement }

\begin{abstract}
For every $n\ge9$, we construct a strongly connected synchronizing automaton
with two input letters and $n$ states whose shortest word merging some three
distinct states has length $\lfloor4n/3\rfloor$. One letter is a permutation;
the other has an image of size $n-1$. We give complete transition maps and
an explicit word attaining the bound. For the lower bound, we assign an
integer to each unordered pair of states and prove that applying either input letter
can decrease it by at most one. For $n\not\equiv0\pmod3$, the same construction
contains a specified state whose shortest merging word, with its partner
chosen freely, has length $\lfloor4n/3\rfloor-1$.
\end{abstract}
\maketitle

\section{Introduction}\label{M-sec:intro}
A complete deterministic automaton consists of a finite state set $Q$ and a
map $Q\to Q$ for each input letter. We use the two letters $a,b$, and write
$q\act w$ for the image of a state under a word, read from left to right.
For $S\subseteq Q$, put $S\act w=\{q\act w:q\in S\}$. A word \emph{merges}
$S$ if $|S\act w|=1$. A word merging $Q$ is a \emph{reset word}, and an
automaton having such a word is \emph{synchronizing}. It is \emph{strongly
connected} if every state is reachable from every other state. The
\emph{rank} of a letter or word is the size of its image on $Q$.

For an automaton with at least three states, Gonze and Jungers
\cite{M-GJ15,M-GJ16} introduced the \emph{triple rendezvous time}
\[
 T_3(A)=\min\{|w|:S\subseteq Q,\ |S|=3,\ |S\act w|=1\}.
\]
Both $S$ and $w$ are chosen in this minimum. Thus $T_3$ concerns the easiest
triple to merge, rather than a prescribed triple or a word resetting the entire state set.

This minimum measures a first obstruction beyond merging a pair: a
non-permutation letter already merges some pair in one step, but creating
a fibre of size three can require a long word. Every reset word merges a
triple, so $T_3$ is a lower bound on its length. More specifically, studying
short words that merge small sets separates local merging mechanisms from
the problem of resetting the entire automaton. Gonze and Jungers relate
this viewpoint to the synchronizing probability function
\cite[Section 3]{M-GJ16}; Behague and Johnson place it in the broader
study of $k$-set rendezvous times \cite{M-BJ22}.

There remains a substantial gap between the known lower and upper bounds.
The \v{C}ern\'y family has $T_3=n+1$, and Gonze and Jungers constructed
binary examples with $T_3=n+3$ for every odd $n\ge9$ \cite{M-GJ16}.
Their general upper estimate has quadratic leading term
$(\sqrt5-1)n^2/8$, approximately $0.1545n^2$
\cite[Theorem 3.13]{M-GJ16}; see also \cite[Section 2.5]{M-Szykula2026}
for its place among the current bounds. A linear bound for arbitrary
synchronizing automata remains conjectural \cite[Conjecture 20]{M-BJ22}.
Our contribution concerns the lower side of this gap. It raises the
asymptotic coefficient of a constructive lower bound from $1$ to $4/3$
and supplies examples for every state count $n\ge9$.

\begin{theoremA}
For every $n\ge9$, there is a strongly connected synchronizing binary
automaton with $n$ states, one permutation letter and one letter of rank
$n-1$, whose triple rendezvous time is exactly $\lfloor4n/3\rfloor$.
\end{theoremA}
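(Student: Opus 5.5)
The plan has two halves, an explicit construction and a potential-function lower bound, as the abstract indicates. For each $n\ge9$, write $n=3k+r$ with $r\in\{0,1,2\}$. Let $a$ be a permutation of $Q$, essentially an $n$-cycle perhaps adjusted by a short transposition to break symmetry according to $r$. Let $b$ fix all states except one, say $b$ sends $q_0$ to $q_1$ and is the identity elsewhere (rank $n-1$), possibly composed with a small local permutation. Strong connectivity follows from $a$ containing a long cycle through everything. Synchronization follows from the standard Černý-type argument: $b$ merges a pair, $a$ acts transitively, so by the usual extension argument one can keep shrinking the image. For the upper bound $T_3\le\lfloor4n/3\rfloor$, I will write down an explicit word $w$ of that length and a triple $\{p,q,s\}$, and verify by direct tracing that $|\{p,q,s\}\act w|=1$. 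The natural shape is: first merge a pair in one application of $b$, then use roughly $n/3$ more letters to steer the merged state and the third state back into the unique collision window $\{q_0,q_1\}$. The mixture of $a$'s and $b$'s should be chosen so that the third state moves "faster" around the cycle by exploiting the local permutation part of $b$.

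The lower bound is the heart. Any word merging a triple must, at its last collision step, merge two states $x\ne y$ of the current image, one of which is already the image of two original states. I will define an integer potential $\phi(\{x,y\})$ on unordered pairs of distinct states. Roughly, $\phi$ is a cyclic distance of $x$ and $y$ to the collision pair $\{q_0,q_1\}$ along $a$, weighted to account for how $b$ can shortcut. The key lemma is then: for every pair $P$ and letter $c\in\{a,b\}$, if $P\act c$ is still a pair then $\phi(P\act c)\ge\phi(P)-1$, and $\phi(P)=0$ exactly when $P=\{q_0,q_1\}$.

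Then take a shortest triple-merging word $w=uc$. The triple $S$ first collapses to a pair at some time $t$, necessarily by a $b$-step merging $q_0,q_1$ into $q_1$. The resulting pair $P_t$ contains $q_1$ together with the image of the third state. From time $t$ the remaining suffix has length at least $\phi(P_t)$. Before time $t$, the third state's trajectory together with the pair that collapses forces a further cost. Summing these, I want $t+\phi(P_t)\ge\lfloor4n/3\rfloor$ for all choices. To get this cleanly, I will define the potential on the pair formed before the collapse, or on triples via the pair of "the two states not yet merged". Then I will prove a single monotonicity statement whose initial minimum over all triples equals $\lfloor4n/3\rfloor$.

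The main obstacle is designing $\phi$ so that the decrease-by-at-most-one property holds for both letters simultaneously, including near the special states where $b$ acts nontrivially and the wraparound of the cycle. I expect a case analysis by residues mod $3$, with $\phi$ a piecewise-linear expression in the positions of $x,y$ (something like $\max$ of two distances plus a third of their gap). The factor $4/3$ arises from balancing "bring one state to the window" against "bring the other one behind it". I would first compute exact shortest triple-merging words by BFS for $n=9,\dots,15$, read off the true distance function on pairs, and fit $\phi$ to it. Then I would prove the one-step inequality by exhaustive cases on whether $x$ or $y$ lies in the support of $b$'s nontrivial part.
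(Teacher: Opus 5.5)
Your lower-bound framework (reduce to the pair left after the first collision, then use an integer potential on pairs that drops by at most one per letter) is exactly the paper's Lemmas~\ref{M-lem:formula} and~\ref{M-lem:potential}. The gap is the automaton: the one you sketch cannot reach $\lfloor4n/3\rfloor$, so no such $\phi$ exists for it.

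If your permutation letter is an $n$-cycle, the automaton is circular. Then $T_3\le n+1$ whatever the second letter is, by Dubuc's extension theorem applied to the collision fibre (Section~\ref{S-sec:circ}). For your \v{C}ern\'y-type letter this is elementary. Take $a\colon i\mapsto i+1$, let $b$ send $q_0$ to $q_1$ and fix everything else, put $\Delta=q_1-q_0$, and choose $0<m<n$ with $m\equiv-\Delta\pmod n$.
\begin{itemize}
\item If $q_1+\Delta\ne q_0$, the word $ba^mb$, of length at most $n+1$, merges $\{q_0,q_1,q_1+\Delta\}$.
\item If $q_1+\Delta=q_0$, the rotation preserves antipodality and $b$ never turns a non-antipodal pair into an antipodal one. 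So no non-antipodal pair ever reaches the antipodal collision pair $\{q_0,q_1\}$, and the automaton is not synchronizing.
\end{itemize}
Since $\lfloor4n/3\rfloor\ge n+3$ for $n\ge9$, fitting $\phi$ to BFS data would only rediscover $n+1$. Splitting $a$ by a transposition gives two $a$-cycles, which a $b$ with one non-identity point cannot join in both directions. You never pin down a variant that avoids both failures.

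Drop the idea of a ``pre-collision cost'' as well. The triple is free, so the collision can occur at the first letter, with the third state anywhere in the image of the singular letter except the collision image. The whole bound must therefore come from $d(\{z,y\})\ge\lfloor4n/3\rfloor-1$ for every such $y$.

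What is missing is a lock between the hole and the double image, and this requires a non-circular permutation letter. In the paper the letters are swapped relative to yours. There $b$ has a $2r$-cycle, an $r$-cycle and $k$ fixed points, with $n=3r+k$, and $a$ differs from $b$ at only five to seven states. The construction has three features:
\begin{itemize}
\item the offset $(v-u)\bmod r$ of a pair $\{u,v'\}$ is $b$-invariant;
\item the unique collision pair is the offset-zero pair $\{c,c'\}$;
\item the unique missing image $0'$ of $a$ is exactly a cheap partner of the collision image (for $k=0$, $\{0,0'\}$ merges in $r-1$ letters).
\end{itemize}
This alignment persists under rotation because $r\mid2r$. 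Lemma~\ref{S-lem:rot} shows that if the hole is not aligned in this way, a pure rotation already merges a triple within about one cycle length. With that partner excluded, the labels \eqref{M-eq:lambda} certify $d(\{z,y\})\ge4r+k-1$ for all $y\in Q\setminus\{z,0'\}$, matching the route through the rotations $b^{2r-2}$ and $b^{2r-1}$. The factor $4/3$ is $4r$ letters against $3r$ states; it does not come from balancing distances on a single cycle. You need a construction of this kind before there is any one-step inequality to prove.
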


Consequently, any upper bound of the form $T_3\le\alpha n+O(1)$ valid
for all synchronizing binary automata must have $\alpha\ge4/3$.
For each displayed automaton, the lower bound applies to every triple,
and a matching word merges one triple. The theorem does not assume that
the family is extremal among all automata of the same size.

Write $n=3r+k$, where $r\ge3$ and $k\in\{0,1,2\}$. Section 3 defines the
automaton $G_{r,k}$ and depicts all three cases. The proof has two parts.
First, an integer label on every state pair gives a lower bound on its merging
distance. After the first collision in a triple, every possible remaining
pair has a sufficiently large label. Second, an explicit word attains that
bound. Sections 4 and 5 carry out these two steps. Section 6 explains why
merging a specified state with a freely chosen partner is a different problem.

The proof uses inverse images and shortest paths on pairs. All steps needed
for Theorem A are given below, including the three small boundary cases.
Separate finite enumerations and restricted-class bounds accompany the article.

\section{Merging pairs of states}\label{M-sec:prelim}
For a nonempty set $S$ of at most two states, define
\[
 d(S)=\min\{|w|:|S\act w|=1\},
\]
with $d(S)=\infty$ if no such word exists. In particular, $d(\{q\})=0$.
For a pair $P$ with finite merging distance, separating the first letter gives
\begin{equation}
 d(P)=1+\min_{x\in\{a,b\}}d(P\act x). \tag{1}\label{M-eq:depth}
\end{equation}
An automaton is synchronizing if and only if every pair has finite merging
distance: one direction follows from a reset word, and the other follows by
repeatedly merging two states in the current image.

For a letter $x$ and a state $z$, let
$x^{-1}(z)=\{q\in Q:q\act x=z\}$. We call this a \emph{collision set} when
it has at least two elements, and a \emph{collision pair} when it has exactly two.

\begin{lemma}[First collision]\label{M-lem:formula}
Let $A$ be synchronizing. If some letter has a collision set of size at
least three, then $T_3(A)=1$. Otherwise,
\[
 T_3(A)=1+\min\{d(\{z,y\}):x\in\{a,b\},\ |x^{-1}(z)|=2,
                         \ y\in Q\act x\setminus\{z\}\}.
\]
\end{lemma}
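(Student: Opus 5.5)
The plan is to track a triple along a shortest merging word and look at the moment two of its states first coincide. If some letter $x$ has $|x^{-1}(z)|\ge3$, then any three states of that fibre are merged by $x$, so $T_3(A)=1$ (and $T_3\ge1$ always, since distinct states are not merged by the empty word). From now on we assume every collision set has size exactly two, and we prove the displayed formula by showing $\le$ and $\ge$ separately. We write $M$ for the minimum on the right-hand side. Note that $M$ is finite: $A$ is synchronizing, so every pair has finite $d$, and some letter is non-injective. Also $Q\act x\setminus\{z\}$ is nonempty because $|Q|\ge3$ and $|x^{-1}(z)|=2$.

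For the upper bound, take $x,z,y$ attaining $M$, let $x^{-1}(z)=\{p,p'\}$, and choose $q$ with $q\act x=y$. Then $q\notin\{p,p'\}$ because $y\ne z$, so $S=\{p,p',q\}$ is a genuine triple. We have $S\act x=\{z,y\}$. If $w$ is a word of length $d(\{z,y\})$ merging $\{z,y\}$, then $xw$ merges $S$, which gives $T_3\le 1+M$.

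For the lower bound, let $S=\{s_1,s_2,s_3\}$ and let $w=x_1\cdots x_m$ be a shortest word merging $S$, with $m=T_3$. Let $j$ be the least index such that $|S\act x_1\cdots x_j|<3$; it exists because $|S\act w|=1<3$. Set $S'=S\act x_1\cdots x_{j-1}$, which still has three elements. Then $x=x_j$ identifies at least two elements of $S'$. Because every fibre has size at most two, $x$ maps $S'$ onto exactly two states. Call them $z$, the image of the two identified elements, and $y\ne z$, the image of the third. Here $|x^{-1}(z)|=2$ and $y\in Q\act x\setminus\{z\}$, so the triple $(x,z,y)$ is admissible. The remaining suffix $x_{j+1}\cdots x_m$ merges $\{z,y\}$, so $m-j\ge d(\{z,y\})\ge M$. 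Since $j\ge1$, this gives $m\ge 1+M$.

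The only delicate point is the step that uses the hypothesis that all fibres have size at most two. It guarantees that the first collision produces exactly a two-element set of the form $\{z,y\}$, with $z$ having a fibre of size two, rather than collapsing the triple all at once. The rest is bookkeeping: checking that $j\ge1$, that the chosen states are distinct, and that $d$ is finite because $A$ is synchronizing.
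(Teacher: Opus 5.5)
Your proposal is correct and follows the paper's own proof. Both arguments take the two states of $x^{-1}(z)$ together with a preimage of $y$ for the upper bound. For the lower bound, both examine the first letter that shrinks the triple's image, use the fibre-size-two hypothesis to obtain an admissible pair $\{z,y\}$, and note that the remaining suffix merges that pair.
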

\begin{proof}
Take a word $v=x_1\cdots x_t$ merging a triple $S$, and let $i$ be the first
index for which $|S\act x_1\cdots x_i|<3$. In the absence of a collision set
of size at least three, this first reduction leaves a pair $\{z,y\}$, where
$|x_i^{-1}(z)|=2$ and $y\in Q\act x_i\setminus\{z\}$. The remaining suffix
merges this pair, so $t\ge i+d(\{z,y\})\ge1+d(\{z,y\})$.
Conversely, for any $x,z,y$ in the displayed minimum, take the two states in
$x^{-1}(z)$ and one state in $x^{-1}(y)$. They are distinct. Reading $x$,
then a shortest word merging $\{z,y\}$, merges this triple.
\end{proof}

\begin{lemma}[Integer labels for a lower bound]\label{M-lem:potential}
Assign an integer $\lambda(P)$ to every pair. Suppose that
$\lambda(P)\le1$ whenever $d(P)=1$, and
\[
 \lambda(P)\le 1+\lambda(P\act x)
\]
for every pair with $d(P)\ge2$ and every letter $x$. Then
$d(P)\ge\lambda(P)$ for every pair.
\end{lemma}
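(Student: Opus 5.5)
The plan is to prove $d(P)\ge\lambda(P)$ by strong induction on the merging distance $d(P)$. Pairs with $d(P)=\infty$ satisfy the inequality trivially, so only pairs with finite distance need treatment. A pair of distinct states has $d(P)\ge1$, because the empty word does not merge it. The induction therefore starts at $d(P)=1$, and there the first hypothesis gives $\lambda(P)\le1=d(P)$ directly.

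For the inductive step, take a pair $P$ with $2\le d(P)<\infty$. By the recursion \eqref{M-eq:depth}, choose a letter $x$ with $d(P)=1+d(P\act x)$. Then $d(P\act x)=d(P)-1\ge1$, so $P\act x$ is again a pair of two distinct states rather than a singleton. Its distance is strictly smaller than $d(P)$, so the induction hypothesis gives $d(P\act x)\ge\lambda(P\act x)$. The second hypothesis applies because $d(P)\ge2$, and combining the two inequalities yields
\[
\lambda(P)\le 1+\lambda(P\act x)\le 1+d(P\act x)=d(P).
\]

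The only delicate point is making sure that $\lambda$ is only ever evaluated on genuine pairs. This is exactly why the case $d(P)=1$ is handled by a separate hypothesis: there every minimizing letter sends $P$ to a singleton, where $\lambda$ is not defined. For $d(P)\ge2$, the minimizing letter always produces a two-element set, so the argument is routine and I expect no real obstacle. It is also worth noting that a non-minimizing letter may map $P$ to a singleton, which a literal reading of the hypothesis ``for every letter $x$'' would cover. The inductive step uses the inequality only for the minimizing letter, however, so that case never needs to be examined.
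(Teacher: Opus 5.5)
Your proof is correct and is essentially the paper's argument: induction on finite $d(P)$ through \eqref{M-eq:depth}, taking the first letter of a shortest merging word, with the distance-one hypothesis as the base case and infinite distances handled trivially. One small correction to your closing aside: when $d(P)\ge2$ no letter can send $P$ to a singleton, because that one letter would already merge $P$, so the situation you describe for non-minimizing letters cannot occur.
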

\begin{proof}
For finite $d(P)$, use induction in \eqref{M-eq:depth}, choosing a first
letter of a shortest merging word. The base case is the assumed inequality
at distance one. Pairs with infinite merging distance satisfy the conclusion
trivially.
\end{proof}

\section{The construction}\label{M-sec:family}
The design uses a long cycle and a cycle of half its length. Rotation
preserves their relative position modulo the shorter length, while a few
modified transitions provide a single collision. The collision letter
also omits one state from its image, excluding that state as the third
state's position immediately after a collision. In the base construction,
this omitted state is precisely a cheaply merging partner of the collision
image. A short added chain supplies the other state counts. The following
maps implement this design; the pair labels in Section~\ref{M-sec:lower}
control all alternative routes.

Fix $r\ge3$ and $k\in\{0,1,2\}$. The states are the disjoint sets
\[
 C=\{0,1,\ldots,2r-1\},\qquad
 D=\{0',1',\ldots,(r-1)'\},\qquad E=\{e_1,\ldots,e_k\}.
\]
Here $E$ is empty for $k=0$. A primed symbol is the name of a state in $D$;
it is never an alternative name for an unprimed state. Define
\[
 b(i)=(i+1)\bmod 2r,\qquad
 b(j')=((j+1)\bmod r)',\qquad b(e_j)=e_j.
\]
Set $a(x)=b(x)$ except at the states in the following table. A dash means
that the state is absent.
\begin{equation}
\begin{array}{c|ccc}
 x & a(x),\ k=0 & a(x),\ k=1 & a(x),\ k=2\\\hline
 r-2       &0&e_1&e_1\\
 (r-2)'    &0&e_1&e_1\\
 2r-2      &r-1&r-1&r-1\\
 2r-1      &(r-1)'&(r-1)'&(r-1)'\\
 (r-1)'    &2r-1&2r-1&2r-1\\
 e_1       &-&0&e_2\\
 e_2       &-&-&0
\end{array}\tag{2}\label{M-eq:a}
\end{equation}
These rules specify both letters at every state for every permitted parameter.
Figure~\ref{M-fig:construction} gives an equivalent description by the two
functional digraphs, drawn separately to avoid overlapping letter labels.

For the proofs, abbreviate $c=r-2$ and put $z=0$ for $k=0$, and $z=e_1$
for $k\ge1$. The only collision pair of $a$ is $\{c,c'\}$, with image $z$;
its only omitted image is $0'$. Hence
\[
a^{-1}(z)=\{c,c'\},\qquad Q\act a=Q\setminus\{0'\},
\qquad \operatorname{rank}(a)=3r+k-1.
\]

The length scale can be seen in one explicit route. The pair $\{0,1\}$
follows
\[
 \{0,1\}\xrightarrow{b^{2r-2}}\{2r-2,2r-1\}
 \xrightarrow{a}\{r-1,(r-1)'\}
 \xrightarrow{b^{2r-1}}\{c,c'\}
 \xrightarrow{a}\{z\}.
\]
The two long rotations account for $4r-3$ letters of this route. The
modified transition at $2r-2$ sends the first pair into corresponding
positions on the two cycles, from which the second rotation reaches the
collision pair. Section~\ref{M-sec:upper} supplies the short prefix that
brings an initial triple to $\{0,1\}$. This route proves attainability;
the separate label argument is needed to exclude shorter ways of merging
any triple.

\newcommand{\constructionpanel}[1]{%
\begin{tikzpicture}[x=1cm,y=1cm,>=Stealth,font=\small,
 state/.style={draw,rounded corners=1pt,minimum width=9mm,minimum height=6mm,inner sep=2pt},
 every path/.style={line width=.5pt}]
\path[use as bounding box] (-.2,-.5) rectangle (14.7,4.15);
\node[anchor=west,font=\bfseries] at (0,3.96) {$k=#1$};
\node[anchor=west] at (0,3.5) {letter $b$};
\node[anchor=west] at (5.1,3.5) {letter $a$};
\draw[densely dotted] (4.55,-.2)--(4.55,3.6);
\node[state] (b0) at (.45,2.65) {$0$};
\node[state] (bp) at (3.55,2.65) {$2r-1$};
\draw[->] (b0)--node[above] {$\cdots$} (bp);
\draw[->] (bp.south) to[bend left=30] (b0.south);
\node[state] (bd0) at (.45,1.25) {$0'$};
\node[state] (bdp) at (3.55,1.25) {$(r-1)'$};
\draw[->] (bd0)--node[above] {$\cdots$} (bdp);
\draw[->] (bdp.south) to[bend left=30] (bd0.south);
\ifnum#1>0
\node[state] (be1) at (1.1,-.02) {$e_1$};
\draw[->] (be1) edge[loop left,looseness=5] (be1);
\fi
\ifnum#1>1
\node[state] (be2) at (3,-.02) {$e_2$};
\draw[->] (be2) edge[loop right,looseness=5] (be2);
\fi
\node[state] (a0) at (5.7,2.65) {$0$};
\node[state] (ac) at (8.5,2.65) {$r-2$};
\draw[->] (a0)--node[above] {$\cdots$} (ac);
\node[state] (ad0) at (5.7,1.35) {$0'$};
\node[state] (adc) at (8.5,1.35) {$(r-2)'$};
\draw[->] (ad0)--node[above] {$\cdots$} (adc);
\ifnum#1=0
\draw[->] (ac.north) to[bend right=31] (a0.north);
\draw[->] (adc.north) .. controls (8.5,2.1) and (5.7,1.9) .. (a0.south);
\else
\node[state] (ae1) at (10.6,2.65) {$e_1$};
\draw[->] (ac)--(ae1);
\draw[->] (adc.east) to[out=0,in=-90] (ae1.south);
\ifnum#1=1
\draw[->] (ae1.north) to[bend right=21] (a0.north);
\else
\node[state] (ae2) at (12.5,2.65) {$e_2$};
\draw[->] (ae1)--(ae2);
\draw[->] (ae2.north) to[bend right=15] (a0.north);
\fi
\fi
\node[state] (am0) at (5.7,.05) {$r-1$};
\node[state] (am1) at (8.5,.05) {$2r-2$};
\draw[->] (am0)--node[above] {$\cdots$} (am1);
\draw[->] (am1.south) to[bend left=26] (am0.south);
\node[state] (az) at (11.15,.15) {$2r-1$};
\node[state] (adz) at (13.45,.15) {$(r-1)'$};
\draw[->] (az.north east) to[bend left=17] (adz.north west);
\draw[->] (adz.south west) to[bend left=17] (az.south east);
\end{tikzpicture}}

\begin{figure}[!htbp]
\centering
\constructionpanel{0}\par\vspace{4mm}
\constructionpanel{1}\par\vspace{4mm}
\constructionpanel{2}
\caption{Complete transition maps of $G_{r,k}$ for every $r\ge3$.
The left and right columns draw the actions of $b$ and $a$ separately on the
same states. An arrow marked $\cdots$ passes through every consecutive index
between its endpoints, in order: for example, the arrow from $0$ to $r-2$ on the right is the path
$0\to1\to\cdots\to r-2$. An empty list, as occurs when $r=3$, means a
single edge. Unannotated arrows are single transitions. Each arrow in a
column carries that column's letter. After expanding the indicated paths, every state has exactly one
outgoing edge under each letter; there are no additional transitions.}
\label{M-fig:construction}
\end{figure}

\clearpage

\Needspace{6\baselineskip}
\begin{lemma}\label{M-lem:sc}
The automaton $G_{r,k}$ is strongly connected.
\end{lemma}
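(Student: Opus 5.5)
Since $b$ permutes $C$ and $D$ cyclically and fixes each $e_j$, it is enough to show three things. First, from some state of $C$ we can reach some state of $D$. Second, from some state of $D$ we can reach some state of $C$. Third, each $e_j$ can be reached from $C$ and can reach $C$. Inside $C$ every state reaches every other by powers of $b$, since $b$ acts as the cycle $0\to1\to\cdots\to 2r-1\to0$; likewise inside $D$. So if we exhibit a directed path of the transition digraph (edges labelled $a$ or $b$) into and out of each of the three blocks, the blocks are mutually reachable and the automaton is strongly connected.

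The needed edges come straight from table~\eqref{M-eq:a}. The edge $2r-1\xrightarrow{a}(r-1)'$ goes from $C$ into $D$, and $(r-1)'\xrightarrow{a}2r-1$ goes from $D$ back into $C$. These two already show that $C\cup D$ is strongly connected for $k=0$, where $E=\varnothing$.

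For $k\ge1$ we also use $r-2\xrightarrow{a}e_1$, which enters $E$ from $C$.
\begin{itemize}
\item For $k=1$, the edge $e_1\xrightarrow{a}0$ returns to $C$.
\item For $k=2$, we follow $e_1\xrightarrow{a}e_2\xrightarrow{a}0$. This path also reaches $e_2$ from $C$ and returns from $e_2$ to $C$.
\end{itemize}
Hence every $e_j$ lies on a closed walk through $C$.

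The argument is routine, and the only real care needed is bookkeeping in the boundary case $r=3$. There $c=r-2=1$, and the modified states $r-2$, $2r-2=4$, $2r-1=5$ and $(r-1)'=2'$ are still distinct. Because the proof only uses the $b$-cycles and the listed $a$-edges, it does not depend on $r$ beyond these states being well defined, which holds for $r\ge3$.
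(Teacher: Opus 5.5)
Your proof is correct and follows the paper's argument. It uses the $b$-cycles on $C$ and $D$, the two $a$-edges $2r-1\leftrightarrow(r-1)'$ joining them, and the $a$-cycle $0\to\cdots\to r-2\to e_1(\to e_2)\to 0$, which passes through the added states.
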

\begin{proof}
The letter $b$ cycles through $C$ and through $D$. The two $a$-transitions
$2r-1\leftrightarrow(r-1)'$ connect these cycles in both directions.
Every added state $e_j$ lies on the $a$-cycle containing $0$.
\end{proof}

For a pair with one state in each of $C,D$, write it as $\{u,v'\}$ and define
its \emph{offset} to be $\delta=(v-u)\bmod r$. The letter $b$ preserves the
offset, and all pairs of a given offset form a single $b$-orbit. In particular,
the offset-zero pairs are
\[
 O_j=\{j,(j\bmod r)'\},\quad 0\le j<2r,
 \qquad O_j\act b=O_{j+1\bmod 2r}.
\]
Thus $O_c$ is the collision pair. Every $O_j$ merges under
$b^{(c-j)\bmod 2r}a$.

\begin{lemma}\label{M-lem:sync}
The automaton $G_{r,k}$ is synchronizing.
\end{lemma}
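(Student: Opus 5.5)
By the criterion recalled in Section~\ref{M-sec:prelim}, it suffices to show that every pair of distinct states has finite merging distance. The offset-zero pairs $O_j$ already merge under $b^{(c-j)\bmod 2r}a$. The plan is to reduce every other pair to some $O_j$, or to a single state, by a word. Call a pair \emph{good} if some word sends it to an offset-zero pair or merges it. If $P\act w$ is good, then $P$ is good.

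First consider pairs inside $C$, say $\{u,u+s\}$ with $0<s<2r$. By rotating with $b$, the pair reaches $\{2r-2,\,2r-2+s\}$. One application of $a$ sends $2r-2\mapsto r-1$, while the other state $x=2r-2+s \bmod 2r$ goes to $a(x)$. If $s=1$, then $x=2r-1$ and we obtain $\{r-1,(r-1)'\}=O_{r-1}$, as in the displayed route. For general $s$, I would instead rotate so that the partner lands on $2r-1$. That is, rotate $\{u,v\}$ until $v=2r-1$ and apply $a$. Then $v\mapsto (r-1)'$ and $u\mapsto a(u)$.

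This produces a mixed pair in $C\times D$ or $D\times D$, or, through $e_j$, a pair involving $E$. Pairs involving $E$ return to $C$ or $D$ after at most $k+1$ further letters $a$ (where $a(e_k)=0$), unless they merge. Pairs inside $D$ can likewise be pushed by $b$ so that one element is $(r-1)'$; applying $a$ then moves it to $2r-1\in C$, giving a mixed pair or a pair in $C$.

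The key step concerns mixed pairs $\{u,v'\}$ with offset $\delta\ne0$. The letter $b$ preserves the offset, so $a$ must be used at a modified state to change it. Rotate the pair so that $u=2r-2$; applying $a$ gives $u\mapsto r-1$, which shifts the $C$-coordinate by $r$ instead of $1$. This does not change $u \bmod r$ relative to the usual step, so it preserves the offset. I therefore expect a different modified state to be needed. The best candidate is $r-2=c\mapsto z$ together with the $E$ chain, or $c\mapsto0$ when $k=0$. With $u=c$ and $v'$ not equal to $c'$, applying $a$ sends $u$ to $0$ (for $k=0$), while $v'\mapsto (v+1)'$. Compared with $b$, this shifts $u$ by $-(r-2)$ instead of $+1$, which is $\equiv 1\pmod r$... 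Since that is again offset-preserving modulo $r$, I would fall back on the transitions $(r-1)'\mapsto 2r-1$ and $2r-1\mapsto(r-1)'$, which swap which cycle a state lies on. This turns a mixed pair into a pair inside $C$ or inside $D$. Pairs in $C$ can be rotated freely and converted back into mixed pairs, and the offset-changing effect should come from the asymmetric cycle lengths $2r$ and $r$.

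The concrete verification would be a small case computation. Starting from $\{2r-1,v'\}$ and applying $a$ yields $\{(r-1)',(v+1)'\}$, a pair in $D$ with difference $v+1-(r-1)$. Then rotating by $b$ and applying $a$ at $(r-1)'$ returns to $C$ with a controlled shift. Tracking the offset through this loop should show that it changes by a fixed nonzero amount modulo $r$, so iterating the loop reaches offset zero. This offset bookkeeping is the main obstacle. If a direct argument stalls, an alternative route is to exhibit the rank reduction via the standard criterion: a strongly connected automaton (Lemma~\ref{M-lem:sc}) with a rank-$(n-1)$ letter and a permutation letter is synchronizing as soon as the permutation group together with the collision pair admits no invariant nontrivial graph. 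In that route, one would check that the $\langle b\rangle$-orbit of $\{c,c'\}$, together with its images under $a$, connects all pairs.
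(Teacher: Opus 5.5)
Your overall plan (reduce every pair to some offset-zero pair $O_j$, or merge it) is the paper's plan. However, the step that carries the whole difficulty, mixed pairs $\{u,v'\}$ of nonzero offset, is never carried out: it ends in ``should show'' and ``the main obstacle''. You also discard the transition that does the job because of an arithmetic slip. For $k=0$, $a$ sends $c=r-2$ to $0$, a shift of $-(r-2)\equiv 2\pmod r$, not $1$. Since $b$ shifts by $1$, this transition \emph{does} change the offset.

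Concretely, for $2\le\delta\le r-1$, rotate to $\{c,v'\}$ with $v=\delta-2\in[0,r-3]$. Then $a$ gives $\{0,(v+1)'\}$, of offset $\delta-1$. For $\delta=1$, $\{c,(r-1)'\}\xrightarrow{a}\{0,2r-1\}\xrightarrow{b^{2r-1}a}O_{c+1}$. Induction on $\delta$ then settles every mixed pair when $k=0$. (The paper obtains $\delta\mapsto\delta-1$ as the composite of its reductions $\delta\mapsto-\delta$ and $\delta\mapsto-\delta-1$; the latter begins with exactly this transition at $c$.)

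For $k\ge1$ the same transition sends $c$ to $e_1$, which $b$ fixes, so the $D$-coordinate can be rotated to $c'$. Then $a$ gives $\{0,e_1\}$ or $\{e_2,e_1\}$ (for $k=1$ or $k=2$). One or two more letters $a$ give $\{0,1\}$, which $b^{2r-2}a$ sends to $O_{c+1}$. Offset one goes through $\{e_1,2r-1\}$ instead. None of this is in your proposal.

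Your fallback would not close the gap. In the loop through the two-cycle $2r-1\leftrightarrow(r-1)'$, with the rotation that brings the second point onto $(r-1)'$, one gets $\{2r-1,(\delta-1)'\}\xrightarrow{a}\{(r-1)',\delta'\}\xrightarrow{b^{r-1-\delta}a}\{2r-1,(r-1-\delta)'\}$ for $1\le\delta\le r-2$. So $\delta\mapsto-\delta$, which is an involution on offsets rather than a fixed shift, and iterating it never reaches zero.

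The ``standard criterion'' is not a shortcut either. The known result of that type (a primitive group together with any map of rank $n-1$ generates a constant map) needs primitivity, and $\langle b\rangle$ is not even transitive. Otherwise it amounts to checking that every pair can be driven onto $\{c,c'\}$, i.e.\ lies in the closure of $\{c,c'\}$ under preimages by $a$ and $b$ (you wrote images). That is exactly the verification still missing.

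Two smaller points also need attention. First, you must show your reductions are not circular: for $k\ge1$ a pair in $C$ can go to $\{e_1,(r-1)'\}$, which in turn returns to $C\cup D$. Second, your claim that pairs meeting $E$ leave $E$ under powers of $a$ fails in $G_{3,2}$. There $a$ acts on $\{0,1,e_1,e_2\}$ as a $4$-cycle, and $\{0,e_1\}\xrightarrow{a}\{1,e_2\}\xrightarrow{a}\{e_1,0\}$. The paper avoids this boundary case by treating $r=3$ with explicit reset words.
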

\begin{proof}
For $r=3$, direct substitution in \eqref{M-eq:a} verifies the reset words
in Table~\ref{M-tab:small-sync}. It remains to treat $r\ge4$, so $c\ge2$. We show
that every pair reaches an offset-zero pair or merges.

\smallskip\noindent\emph{Pairs contained in $C$ or $D$.}
For $\{i,j\}\subseteq C$, $i<j$, rotate by $b^{2r-1-j}$ to obtain
$\{u,2r-1\}$, where $u=i+2r-1-j\in[0,2r-2]$. The letter $a$ sends this
to a pair with one state in $D$ and one in $C$, except possibly when
$u=c$ and $k\ge1$, in which case it gives $\{e_1,(r-1)'\}$.
When $u=2r-2$, the image is already $O_{c+1}$.
For a pair in $D$, rotate one state to $(r-1)'$ and apply $a$. The resulting
pair is mixed, contained in $C$, or contains $e_1$.

\smallskip\noindent\emph{Pairs containing added states.}
Apply $a^{k+1-j}$ to a pair containing $e_j$. One image is $0$. If the other
is in $C$ or $D$, the pair is of a type just considered. If the other image
is an added state $e_i$, apply $a^{k+1-i}$ again; the images are $0$ and
$k+1-i\le2\le c$, hence lie in $C$.
There is no unresolved loop among these reductions. The only reduction
from a pair in $C$ back to an added state gives $\{e_1,(r-1)'\}$.
Applying $a^k$ gives either $\{0,2r-1\}$ or $\{0,(r-1)'\}$.
The latter is mixed; the former reaches $O_{c+1}$ under $b^{2r-1}a$.
Consequently every pair reaches a mixed pair or merges.

\smallskip\noindent\emph{Mixed pairs when $k\ge1$.}
A nonzero-offset pair can be rotated to $\{c,v'\}$ with $v\ne c$.
If $v=r-1$, applying $a$ gives $\{e_1,2r-1\}$. For $k=2$, the next
$a^2$ gives $\{0,2r-1\}$ and the preceding reduction finishes. For $k=1$,
the next $a$ gives $\{0,(r-1)'\}$, of offset $r-1$. After rotation, this
falls in the other case below, since its primed index is $r-3$.
Otherwise the first $a$ gives $\{e_1,(v+1)'\}$. Since $b$ fixes $e_1$,
rotate the second entry to $c'$ and apply $a$. The result is
$\{0,e_1\}$ for $k=1$ or $\{e_2,e_1\}$ for $k=2$.
One or two further $a$-steps give $\{0,1\}$, which reaches $O_{c+1}$ under
$b^{2r-2}a$.

\smallskip\noindent\emph{Mixed pairs when $k=0$.}
All offsets in this paragraph are taken modulo $r$. Offset zero has already
been settled. For offset one, rotate to $\{c,(r-1)'\}$ and apply $a$ to
reach $\{0,2r-1\}$.
For the other offsets we verify two reductions:
\[
 \delta\longmapsto-\delta-1\quad(\delta\notin\{0,1\}),
 \qquad
 \delta\longmapsto-\delta\quad(\delta\notin\{0,r-1\}).
\]
For the first, rotate to $\{c,v'\}$ and apply $a$, obtaining
$\{0,(v+1)'\}$. Choose $s\in[0,r-1]$ with
$s\equiv c-(v+1)\equiv-1-\delta$. Then $s\ne c$.
The word $b^s a$ gives $\{0,d\}$, where $d=s+1\in[1,r]$ and
$d\equiv-\delta$. If $d=1$, this pair reaches offset zero as above.
Otherwise the reduction for pairs in $C$ gives
$\{2r-d,(r-1)'\}$, of offset $d-1\equiv-\delta-1$.

For the second reduction, rotate to $\{2r-1,(\delta-1)'\}$ and apply $a$.
The result is $\{(r-1)',\delta'\}\subseteq D$. Rotating by
$b^{r-1-\delta}$ and applying $a$ gives
$\{(-\delta-1)',2r-1\}$, of offset $-\delta$.
The excluded offsets are precisely those for which one of the exceptional
transitions would change this calculation.
Offset $r-1$ reaches zero by the first reduction. For
$2\le\delta\le r-2$, perform the second reduction and then the first;
the offset becomes $\delta-1$. Repetition reaches offset one, which has
already been settled. This completes the proof that every pair merges.
\end{proof}

\begin{table}[htbp]
\centering\small
\begin{tabular}{cclc}
\toprule
$k$ & Length & Reset word for $G_{3,k}$ & Image\\\midrule
0 &37&\texttt{aabbabababaabbaababababbaabbabababbaa}&$0$\\
1 &46&\texttt{aabababaabbababbbaabaaabbabababbaaabbabababbaa}&$e_1$\\
2 &54&\texttt{aabbabababaabbbaabbabababbaaaabbabababbaaaabbabababbaa}&$e_1$\\
\bottomrule
\end{tabular}
\caption{Direct witnesses for the three small synchronization cases.
Their reset property follows by applying the maps in \eqref{M-eq:a} to
every state. Their optimality is not needed.}
\label{M-tab:small-sync}
\end{table}

\section{A lower bound on every triple}\label{M-sec:lower}
By Lemma~\ref{M-lem:formula}, it is enough to prove
$d(\{z,y\})\ge4r+k-1$ for every $y\in Q\setminus\{z,0'\}$.
We assign labels to all pairs and verify the one-step inequalities of
Lemma~\ref{M-lem:potential}. On the pairs $O_j$, the labels count the letters
in the direct merging word $b^{(c-j)\bmod 2r}a$; the remaining labels control
transitions outside this orbit.

Define a height on the unprimed and added states by
\[
 h(i)=i\quad(i\in C),\qquad h(e_j)=j-k-1\quad(1\le j\le k).
\]
No height is assigned to states in $D$. For a pair $P\not\subseteq D$, put
\[
 \mu(P)=\min\{h(x):x\in P\cap(C\cup E)\}.
\]
Assign the following integer labels to pairs:
\begin{equation}
\lambda(P)=
\begin{cases}
 r-1-j,&P=O_j,\quad 0\le j\le c,\\
 3r-1-j,&P=O_j,\quad c+1\le j\le 2r-1,\\
 2r-1,&P\subseteq D,\\
 4r-1-\mu(P),&\text{otherwise}.
\end{cases}\tag{3}\label{M-eq:lambda}
\end{equation}
In particular, $\lambda(O_c)=1$. Every pair containing $z$ that can arise
after a first collision has label $4r-1+k$: the one exception when $k=0$
would be $O_0=\{0,0'\}$, but $0'\notin Q\act a$.

\begin{lemma}\label{M-lem:lip}
For every pair $P\ne O_c$ and every $x\in\{a,b\}$, the image $P\act x$
is a pair and
\[
 \lambda(P)\le1+\lambda(P\act x).
\]
\end{lemma}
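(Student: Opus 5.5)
We verify the statement by a direct case analysis over the four clauses of \eqref{M-eq:lambda} and the two letters. Before checking labels we confirm that $P\act x$ is a pair. The letter $b$ is a permutation. The only collision of $a$ is $\{c,c'\}=O_c$, which is excluded. So for $P\ne O_c$ the image $P\act x$ is again a pair and $\lambda(P\act x)$ is defined.

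\emph{Letter $b$.} On the orbit $O_j$ we have $O_j\act b=O_{j+1}$. For $0\le j<c$ and for $c+1\le j<2r-1$ the label drops by exactly one. At $j=2r-1$ the image is $O_0$, and $\lambda(O_{2r-1})=r\le 1+(r-1)$. Pairs inside $D$ map to pairs inside $D$, so the label $2r-1$ is unchanged. For the remaining pairs, $b$ fixes every $e_j$ and raises every $C$-index by one, except at $2r-1\mapsto0$. Hence $\mu$ increases by at most one, unless the minimizing state wraps, and that only lowers $\mu$ or leaves it in range. The one delicate point is that the image of an ``otherwise'' pair might be some $O_j$ with a small label. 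A non-offset-zero mixed pair cannot become offset zero under $b$, since $b$ preserves the offset. Pairs meeting $E$ stay meeting $E$. So the image stays in the ``otherwise'' clause, and the check reduces to $\mu(P\act b)\le\mu(P)+1$.

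\emph{Letter $a$.} Here $a=b$ except at the five or so exceptional states in \eqref{M-eq:a}. Any pair avoiding these states is handled as above. I will enumerate the pairs containing an exceptional state, namely $c$, $c'$, $2r-2$, $2r-1$, $(r-1)'$, or some $e_j$, and compute the image label in each case. The key subcases are the following.
\begin{itemize}
\item $O_{2r-2}=\{2r-2,(r-2)'\}\mapsto\{r-1,z\}$ (for $k=0$, $\{r-1,0\}$). This needs $\lambda(O_{2r-2})=r+1\le 1+\lambda$, which is immediate since the image has label $4r-1-\mu\ge 3r$ or so.
\item $O_{2r-1}\mapsto\{(r-1)',0'\}\subseteq D$ with label $2r-1$, which is fine.
\item Pairs in $D$ containing $(r-1)'$ or $c'$ map to mixed pairs. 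Their labels must be at least $2r-2$. For $O_j$ images, $j\le c$ would give a label as small as $1$, so I must check which $O_j$ can arise. The state $(r-1)'\mapsto 2r-1$ pairs with some $(v+1)'$, giving offset $v+2$. This is offset zero only if $v=r-2=c$, and then $c'\mapsto z$, which is not primed. So the image is never offset zero, and the label is $4r-1-\mu\ge 2r$.
\item Pairs containing $2r-2\mapsto r-1$ or $c\mapsto z$. In these, $\mu$ can drop sharply, for example $\{2r-2,x\}$ with $\mu=2r-2$ has label $2r+1$, while the image contains $r-1$. I have to verify that $\lambda(P)=4r-1-\mu(P)$ with $\mu(P)\ge\mu(P\act a)-1$, i.e.\ that the image's minimum height is at most one more than the original's. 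Decreases of $\mu$ only help. The real danger is the image landing in the $O_j$ or $D$ clauses with a small label. For example, $\{2r-2,j'\}\mapsto\{r-1,(j+1)'\}$ could equal $O_{r-1}$, whose label is $3r-1-(r-1)=2r$; the source has label $4r-1-(2r-2)=2r+1$, so this is fine. Similarly $\{c,v'\}\mapsto\{z,(v+1)'\}$, and the pairs containing $2r-1\mapsto(r-1)'$, which can give $D$-pairs of label $2r-1$ from sources with $\mu\ge 2r-1$, i.e.\ label $\le 2r$.
\item Added states. Under $a$, $e_j\mapsto e_{j+1}$ or $0$, so $h$ rises by exactly one along the chain, with $h(e_k)=-1\mapsto h(0)=0$. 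This is consistent with the height design.
\end{itemize}

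The main obstacle is making sure no ``otherwise'' pair with a large label reaches an $O_j$ with $j\le c$, where labels are small, via an exceptional $a$-transition. I will handle this by listing exactly which $a$-images are offset-zero pairs: preimages of $O_j$ under $a$, computed via $a^{-1}$. Then for each such preimage I compare labels. The omitted state $0'$ kills $O_0$, and the offset computations above show the remaining preimages are essentially the $O_{j-1}$ themselves or $O_{2r-2}$-type pairs. The boundary cases $r=3$ and $k\in\{1,2\}$ would be checked by the same table, noting which exceptional states coincide.
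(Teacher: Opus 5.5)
Your treatment of the letter $b$ is correct and matches the paper: offsets and $D$ are $b$-invariant, and $h(s\act b)\le h(s)+1$ for the minimizing state $s$. Your handling of pairs inside $D$ under $a$ is also right. The gap is in the rest of the letter-$a$ case. The decisive step, deciding which pairs $a$ can send into a low-label clause, is announced rather than carried out. The computations you give in its place are wrong exactly where it matters:
\begin{itemize}
\item $a$ interchanges $2r-1$ and $(r-1)'$, so $O_{2r-1}\act a=O_{2r-1}$, not $\{(r-1)',0'\}$. Indeed $0'\notin Q\act a$, which you use yourself a few lines later.
\item $\{2r-2,j'\}\act a$ is never $O_{r-1}$. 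That would need $j=c$, but $c'\act a=z$. The pair that really lands on $O_{c+1}=O_{r-1}$ is $\{2r-2,2r-1\}$, via $2r-2\mapsto r-1$ and $2r-1\mapsto(r-1)'$. It is the only pair outside the orbit $\{O_j\}$ that either letter sends into the orbit, and it is an equality case: $2r+1=1+\lambda(O_{c+1})$.
\item Your summary that the preimages of offset-zero pairs are ``the $O_{j-1}$ or $O_{2r-2}$-type pairs'' is false. The image $O_{2r-2}\act a=\{r-1,z\}$ has label at least $4r-1$.
\item You never check $O_{c+1}\act a=\{c+2,2r-1\}\subseteq C$. Its label $3r-1$ is what covers $\lambda(O_{c+1})=2r$.
\end{itemize}

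Your preimage idea does close the argument once it is executed. Use $a^{-1}(r-1)=\{2r-2\}$, $a^{-1}(2r-1)=\{(r-1)'\}$, $a^{-1}((r-1)')=\{2r-1\}$ and $a^{-1}(0')=\varnothing$. Then the pairs with $a$-image an offset-zero pair are as follows:
\begin{itemize}
\item none for $O_0$ and $O_r$;
\item $O_{j-1}$ for $O_j$ with $1\le j\le c$ or $r+1\le j\le 2r-2$;
\item $O_{2r-1}$ for itself;
\item $\{2r-2,2r-1\}$ for $O_{c+1}$.
\end{itemize}
A pair not contained in $D$ whose $a$-image lies in $D$ must be $\{2r-1,j'\}$ with $j\notin\{c,r-1\}$, and it has label $2r$. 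For every other pair of the last clause, $\mu(P\act a)\le\mu(P)+1$ holds. Indeed, if the minimizer satisfies $s\ne2r-1$, then $s\act a\in C\cup E$ and $h(s\act a)\le h(s)+1$. If $s=2r-1$, then $P=\{2r-1,j'\}$, whose image lies in $D$ or contains $z$.

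The paper organizes the same facts differently. It splits the last clause by the minimizing state $s$ into five subcases. In the generic subcase, it excludes an offset-zero image directly: a primed preimage would force $P=O_s$, while the unprimed preimage $2r-1$ would force $s\in\{c,2r-2\}$. This avoids tabulating $a^{-1}$ of every $O_j$.
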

\begin{proof}
Only $a$ merges a pair, and its unique collision pair is $O_c$.
We check the three types in \eqref{M-eq:lambda}.

\smallskip\noindent\emph{1. Offset-zero pairs.}
Under $b$, $O_j$ goes to $O_{j+1\bmod2r}$, and its label decreases by one
for $j\ne c$, including the wrap from $2r-1$ to $0$.
Under $a$, the same happens unless $j\in\{2r-2,c+1,2r-1\}$.
For $j=2r-2$, the image is $\{r-1,z\}$ and has label at least $4r-1$.
For $j=2r-1$, the pair is fixed. For $j=c+1$, the image is
$\{c+2,2r-1\}\subseteq C$, with label $3r-1\ge2r-1=\lambda(O_{c+1})-1$.

\smallskip\noindent\emph{2. Pairs in $D$.}
The letter $b$ preserves $D$ and the label. So does $a$ on a pair avoiding
$c'$ and $(r-1)'$. If the pair contains $c'$, its image contains $z$ and
cannot be $O_0$, since $0'$ has no $a$-preimage; its label is at least $4r-1$.
Otherwise, a pair $\{(r-1)',j'\}$ with $j\notin\{c,r-1\}$ goes to
$\{2r-1,(j+1)'\}$. This is not an offset-zero pair, and has label $2r$.
Each case satisfies the required inequality.

\smallskip\noindent\emph{3. All other pairs.}
Choose $s\in P\cap(C\cup E)$ with $h(s)=\mu(P)$, and let $t$ be the
other state. Under $b$, the image is neither contained in $D$ nor offset
zero. Moreover $h(s\act b)\le h(s)+1$. Hence
\[
 \lambda(P\act b)\ge4r-1-h(s\act b)\ge\lambda(P)-1.
\]
For $a$, there are five possibilities.

If $s=2r-1$, then $t$ must lie in $D$: a distinct state of $C\cup E$ would
have smaller height, contradicting the choice of $s$. Write $t=j'$.
Since $P\ne O_{2r-1}$, $j\ne r-1$, and $\lambda(P)=2r$.
For $j=c$, the image contains $z$ and has label at least $4r-1$.
Otherwise it lies in $D$ and has label $2r-1$.

If $s\in C\setminus\{c,2r-2,2r-1\}$, then $s\act a=s+1$.
The image cannot be contained in $D$. Nor can it be $O_{s+1}$:
a primed preimage of its other entry would make $P=O_s$, while the
unprimed preimage $2r-1$ would require $s\in\{c,2r-2\}$.
Therefore its label is at least $4r-1-(h(s)+1)=\lambda(P)-1$.

If $s=c$, the image contains $z$ and cannot be $O_0$. Its label is at
least $4r-1$, as required.

If $s=2r-2$ and $t=2r-1$, the image is $O_{c+1}$ and
$\lambda(P)=2r+1=1+\lambda(O_{c+1})$. Otherwise the image contains
$r-1$, is not offset zero, and has label at least $3r$.

Finally, if $s=e_j$, its image has height $h(s)+1$. The image pair is not
in $D$ and cannot be $O_0$, again because $0'$ has no $a$-preimage. Its
label is at least $\lambda(P)-1$. These possibilities exhaust $C\cup E$.
\end{proof}

\begin{theorem}\label{M-thm:lower}
For $r\ge3$ and $k\in\{0,1,2\}$, $T_3(G_{r,k})\ge4r+k$.
\end{theorem}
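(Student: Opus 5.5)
We combine Lemma~\ref{M-lem:formula}, Lemma~\ref{M-lem:potential} and Lemma~\ref{M-lem:lip}. First we check that Lemma~\ref{M-lem:formula} applies in its second form. The letter $b$ is a permutation, and the only collision set of $a$ is $a^{-1}(z)=\{c,c'\}$, which has size two. Lemma~\ref{M-lem:sync} gives synchronization. Hence
\[
 T_3(G_{r,k})=1+\min\{d(\{z,y\}):y\in Q\act a\setminus\{z\}\},
\]
with $Q\act a=Q\setminus\{0'\}$. It therefore suffices to show $d(\{z,y\})\ge 4r+k-1$ for every $y\ne z,0'$.

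Next we verify the hypotheses of Lemma~\ref{M-lem:potential} for the labels~\eqref{M-eq:lambda}.
\begin{itemize}
\item Distance-one pairs: a pair with $d(P)=1$ must be merged by a single letter. Since $b$ is a permutation and $O_c$ is the only collision pair of $a$, such a pair is $O_c$, and $\lambda(O_c)=r-1-c=1$.
\item One-step inequality: every pair with $d(P)\ge2$ differs from $O_c$, so Lemma~\ref{M-lem:lip} gives $\lambda(P)\le1+\lambda(P\act x)$ for both letters.
\end{itemize}
Lemma~\ref{M-lem:potential} then yields $d(P)\ge\lambda(P)$ for every pair.

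Finally we evaluate $\lambda(\{z,y\})$ for $y\in Q\setminus\{z,0'\}$ and aim for $\lambda\ge 4r+k-1$.
\begin{itemize}
\item Case $k=0$, so $z=0$. The pair $\{0,y\}$ is not contained in $D$. It is offset zero only if $y=0'$, which is excluded. So $\lambda=4r-1-\mu$. Since $0$ has height $0$ and every state of $C\cup E$ has height at least $0$ here, $\mu=0$ and $\lambda=4r-1=4r+k-1$.
\item Case $k\ge1$, so $z=e_1$ with $h(e_1)=-k$. The pair is neither offset zero nor inside $D$, because it contains $e_1$. The minimum height over $C\cup E$ is $-k$, attained at $e_1$. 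Hence $\mu=-k$ and $\lambda=4r-1+k$, which is at least $4r+k-1$.
\end{itemize}
Thus $d(\{z,y\})\ge4r+k-1$, and $T_3\ge4r+k$.

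The main obstacle is already isolated in Lemma~\ref{M-lem:lip}, so the remaining care lies only in the bookkeeping. We must confirm that $d(P)=1$ forces $P=O_c$. We must also handle pairs with $d(P)=\infty$, which the lemma treats trivially and which in any case do not occur by Lemma~\ref{M-lem:sync}. Lastly, the omitted state $0'$ is exactly what excludes the low-label pair $O_0$ when $k=0$.
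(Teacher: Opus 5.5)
Your proposal is correct and follows essentially the same route as the paper: you apply the first-collision formula with $Q\act a=Q\setminus\{0'\}$, use Lemmas~\ref{M-lem:potential} and~\ref{M-lem:lip} to get $d(P)\ge\lambda(P)$, and evaluate $\lambda(\{z,y\})=4r-1+k$ for every admissible $y$. The only difference is that you write out the base case $\lambda(O_c)=1$ and the label computation for pairs containing $z$, which the paper states in its remarks immediately after \eqref{M-eq:lambda} rather than inside the proof.
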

\begin{proof}
Lemmas~\ref{M-lem:potential} and \ref{M-lem:lip} give $d(P)\ge\lambda(P)$.
The only collision is $a^{-1}(z)=O_c$, and $Q\act a=Q\setminus\{0'\}$.
The first-collision formula therefore gives
\[
 T_3(G_{r,k})
 =1+\min_{y\in Q\setminus\{z,0'\}}d(\{z,y\})
 \ge1+(4r-1+k)=4r+k.
\]
\end{proof}

\section{A word attaining the bound}\label{M-sec:upper}
\begin{theorem}\label{M-thm:upper}
The word
\[
 a^{k+1}b^{2r-2}ab^{2r-1}a
\]
has length $4r+k$ and merges the triple $\{c,c',0\}$ when $k=0$, and
$\{c,c',e_1\}$ when $k\ge1$.
\end{theorem}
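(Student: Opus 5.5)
The proof is a direct computation: we follow the three states of the triple through the word letter by letter, using the table \eqref{M-eq:a} and the route displayed in Section~\ref{M-sec:family}. The length is immediate, since $(k+1)+(2r-2)+1+(2r-1)+1=4r+k$. The work lies in tracking the images of the triple, and it suffices to show that $a^{k+1}$ sends the triple to the pair $\{0,1\}$. Once this is established, the displayed route
\[
 \{0,1\}\xrightarrow{b^{2r-2}}\{2r-2,2r-1\}\xrightarrow{a}\{r-1,(r-1)'\}=O_{r-1}\xrightarrow{b^{2r-1}}O_c\xrightarrow{a}\{z\}
\]
finishes the argument. For this we re-check each arrow. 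The letter $b$ rotates $C$, so $b^{2r-2}$ takes $0,1$ to $2r-2,2r-1$. By the table, $a$ sends $2r-2\mapsto r-1$ and $2r-1\mapsto(r-1)'$. The pair $O_{r-1}=O_{c+1}$ moves to $O_{c+1+2r-1\bmod 2r}=O_c$ under $b^{2r-1}$, because $b$ shifts $O_j$ to $O_{j+1\bmod 2r}$. Finally $a^{-1}(z)=O_c$.

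For the prefix $a^{k+1}$ we split on $k$.
\begin{itemize}
\item \emph{Case $k=0$.} Here $c$ and $c'$ both map to $0$ under $a$, and $0\mapsto 1$ because $0\neq r-2$ when $r\ge3$. Hence $\{c,c',0\}\act a=\{0,1\}$.
\item \emph{Case $k=1$.} The states $c,c'$ go to $e_1$ and $e_1\mapsto 0$, giving $\{e_1,0\}$. A second $a$ gives $\{0,1\}$.
\item \emph{Case $k=2$.} The first $a$ gives $\{e_1,e_2\}$. The second gives $\{e_2,0\}$, and the third gives $\{0,1\}$.
\end{itemize}
In each case, after $a^{k+1}$ the triple has become $\{0,1\}$, as required.

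One should also confirm that the three states listed are distinct, which is clear. The triple coincides with the one produced in the converse part of Lemma~\ref{M-lem:formula}: the two states in $a^{-1}(z)$ together with a preimage of the partner.

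The only delicate point is to check that no exceptional transition interferes with a generic step for small $r$, in particular $r=3$, where $c=1$. The concern is the step $0\mapsto1$ under $a$ when $k=0$, together with any generic step inside the rotations. The rotations use only $b$, which has no exceptions. The step $0\act a=1$ requires $0\notin\{r-2,2r-2,2r-1\}$, and this holds because $r\ge3$. So the main obstacle amounts to this bookkeeping rather than any real difficulty. Combined with Theorem~\ref{M-thm:lower}, the statement gives $T_3(G_{r,k})=4r+k=\lfloor 4n/3\rfloor$, which is Theorem~A.
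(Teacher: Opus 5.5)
Your proof is correct and follows the same route as the paper: $a^{k+1}$ sends the triple to $\{0,1\}$, and the pair then follows the displayed route through $O_{c+1}$ to the collision pair $O_c$, using the same length count and the congruence $c+1+(2r-1)\equiv c\pmod{2r}$. Your case-by-case check of the prefix and your verification that $0\act a=1$ for $r\ge3$ spell out steps the paper leaves to the reader.
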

\begin{proof}
After $a^{k+1}$, the chosen triple has image $\{0,1\}$. It then follows
the route displayed in Section~\ref{M-sec:family}, where the mixed pair is
$O_{c+1}$ and the collision pair is $O_c$.
The long final rotation uses $c+1+(2r-1)\equiv c\pmod{2r}$.
The total length is $(k+1)+(2r-2)+1+(2r-1)+1=4r+k$.
\end{proof}

\begin{proof}[Proof of Theorem A]
Lemmas~\ref{M-lem:sc} and \ref{M-lem:sync} give strong connectivity and
synchronization. The maps have the stated ranks by construction.
Theorems~\ref{M-thm:lower} and \ref{M-thm:upper} give
$T_3(G_{r,k})=4r+k=\lfloor4(3r+k)/3\rfloor$.
Every $n\ge9$ has a unique representation $n=3r+k$ with $r\ge3$ and
$k\in\{0,1,2\}$.
\end{proof}

\section{Merging a specified state}\label{M-sec:cwa-main}
\begin{definition}\label{M-def:cwa}
For a state $q$ of an automaton $A$, its \emph{compress-with-another
threshold} is
\[
 \tau_A(q)=\min\{|w|:q\act w=p\act w\text{ for some }p\in Q\setminus\{q\}\},
\]
with value $\infty$ if no such word exists. The partner is chosen freely.
\end{definition}

\begin{remark}\label{M-rem:cwa}
In $G_{r,0}$, the pair $O_0=\{0,0'\}$ merges under $b^{r-2}a$ and has
label $r-1$. Every other pair containing $0$ has label $4r-1$.
Thus $\tau(0)=r-1$, whereas $T_3=4r$. The inexpensive partner $0'$ is
missing from $Q\act a$, so it cannot be the third state's position after
that first collision. This is why the two minima differ.
\end{remark}

\begin{corollary}\label{M-cor:tau}
For $k\in\{1,2\}$ and $1\le j\le k$,
\[
 \tau_{G_{r,k}}(e_j)=4r+k-j.
\]
In particular, $\tau(e_1)=\lfloor4n/3\rfloor-1$. Every state in $C\cup D$ has
threshold at most $2r$.
\end{corollary}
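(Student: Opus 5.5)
The plan is to prove the two bounds for $\tau(e_j)$ separately: the lower bound from the pair labels of Section~\ref{M-sec:lower}, the upper bound from an explicit word. The claim about $C\cup D$ then follows from the offset-zero pairs.

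By definition, $\tau(q)=\min_{p\ne q}d(\{q,p\})$. Lemmas~\ref{M-lem:potential} and \ref{M-lem:lip} give $d(P)\ge\lambda(P)$ for every pair $P$, so the lower bound reduces to bounding $\lambda$ on pairs containing $e_j$.

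\emph{Lower bound.} Fix $k\in\{1,2\}$, $1\le j\le k$, and a partner $p\ne e_j$, and put $P=\{e_j,p\}$. Since $e_j\in E$, the pair $P$ is neither contained in $D$ nor an offset-zero pair $O_i$. Hence $P$ falls in the last case of \eqref{M-eq:lambda}, and $\lambda(P)=4r-1-\mu(P)$.
\begin{itemize}
\item If $p\in C\cup D$, then $p$ has height $\ge 0$ or no height, while $h(e_j)=j-k-1<0$. So $\mu(P)=j-k-1$.
\item If $p=e_i$ with $i\ne j$, then $\mu(P)=\min(i,j)-k-1\le j-k-1$.
\end{itemize}
In both cases $\lambda(P)\ge 4r-1-(j-k-1)=4r+k-j$. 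Since $d(P)\ge\lambda(P)$, this gives $\tau(e_j)\ge4r+k-j$. The label argument already carries all the work here, so I expect this direction to be routine.

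\emph{Upper bound.} The goal is a partner $p$ with $e_j\act a^{k+1-j}=0$ and $p\act a^{k+1-j}=1$. Then the word $a^{k+1-j}b^{2r-2}ab^{2r-1}a$ merges $\{e_j,p\}$, because $\{0,1\}$ merges under $b^{2r-2}ab^{2r-1}a$ by the route of Section~\ref{M-sec:family}. The total length is $(k+1-j)+(4r-1)=4r+k-j$.
\begin{itemize}
\item For $j=k$, take $p=0$. Then $e_k\act a=0$, and $0\act a=1$ since $0\ne c$ (as $r\ge3$).
\item For $k=2$, $j=1$, take $p=e_2$. Then $e_1\act a^2=0$ and $e_2\act a^2=0\act a=1$.
\end{itemize}
This is the only step needing care. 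The natural choice of a partner in $C$ fails for $j<k$, because $0$ has the unique $a$-preimage $e_k$, so the partner must be the other added state. One should check directly from \eqref{M-eq:a} that the listed images are correct. For $j=1$ this gives $\tau(e_1)=4r+k-1=\lfloor4n/3\rfloor-1$.

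\emph{States in $C\cup D$.} Every $i\in C$ lies in $O_i$, and every $j'\in D$ lies in $O_j$ with $0\le j<r$. Each $O_j$ merges under $b^{(c-j)\bmod 2r}a$, of length at most $2r-1+1=2r$. Hence every state of $C\cup D$ has threshold at most $2r$.
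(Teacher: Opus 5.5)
Your proposal is correct and matches the paper's proof: the lower bound uses the last case of \eqref{M-eq:lambda} with $\mu(P)\le h(e_j)$, and the upper bound sends the pair to $\{0,1\}$ by $a^{k+1-j}$ using the partner $e_{j+1}$ (with the convention $e_{k+1}=0$, which is exactly your two cases). The bound $2r$ for $C\cup D$ comes from the offset-zero pairs $O_i$, as in the paper.
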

\begin{proof}
Any pair containing $e_j$ is covered by the last case of \eqref{M-eq:lambda},
and $\mu(P)\le h(e_j)$. Its merging distance is therefore at least
$4r-1-h(e_j)=4r+k-j$.
For the matching word, set $e_{k+1}=0$ and choose partner $e_{j+1}$.
The prefix $a^{k+1-j}$ sends this pair to $\{0,1\}$. The remaining word
$b^{2r-2}ab^{2r-1}a$ merges it in $4r-1$ letters, as above.
Finally every unprimed state belongs to an $O_i$, which merges under
$b^{(c-i)\bmod2r}a$ in at most $2r$ letters. Every primed state belongs
to two such pairs, so the same bound covers it as well.
\end{proof}

The specified-state problem was posed in the extended version of
\cite[Open Problem 4]{M-Szykula2018}; see also \cite[Section 2.3]{M-Szykula2026}.
The D\.{z}yga family \cite{M-Dzyga2018} concerns this invariant.
Remark~\ref{M-rem:cwa} and Corollary~\ref{M-cor:tau} show directly how it
differs from triple rendezvous in the present construction.

\section{Discussion}\label{M-sec:conj}
Theorem A provides a lower construction for every $n\ge9$. It does not bound
$T_3$ from above on arbitrary synchronizing automata.
The following binary conjecture records a pattern in the finite data;
its computational scope is listed immediately below.
\begin{conjecture}\label{M-conj:main}
For every $n\ge6$, the maximum triple rendezvous time of a synchronizing
$n$-state automaton with two letters is $\lfloor4n/3\rfloor$.
\end{conjecture}
\begin{center}\small
\begin{tabular}{@{}>{\raggedright\arraybackslash}p{6.3cm}cl@{}}
\toprule Computed class & State count & Maximum $T_3$\\\midrule
All synchronizing binary automata & $3\le n\le10$
 & $4,5,7,8,9,10,12,13$\\
Permutation letter and rank-$(n-1)$ letter, all cycle types
 & $6\le n\le12$ & $8,9,10,12,13,14,16$\\
The same class, only 18 specified cycle types
 & $n=13$ & $17$ in those types\\\bottomrule
\end{tabular}
\end{center}
Canonical enumeration covers $n\le9$; exhaustive branch-and-bound handles
$n=10$. Supplement Section S1 gives the methods, reductions and the
18 selected cycle types at $n=13$, where the enumeration is partial.
The exception $T_3=7$ at $n=5$ explains the conjecture's starting range.
These finite results leave open both a matching upper bound and families
with a larger asymptotic coefficient.

\paragraph{Data and code.}
The accompanying archive checks the printed maps, labels and words.
The \hyperlink{supplement-start}{additional-results document} contains
the finite enumerations and restricted-class bounds, with a table directing
the reader to their full hypotheses and proofs.

\paragraph{Use of generative AI}
Generative AI tools, including OpenAI Codex, assisted with mathematical
analysis, research programming, literature review and manuscript preparation.
The author has reviewed the manuscript and takes responsibility for its content.

\enlargethispage{3\baselineskip}

\paragraph{Electronic evidence.}
The computational data and verification programs accompany the preprint at
\url{https://doi.org/10.5281/zenodo.22561629}.

\clearpage
\hypertarget{supplement-start}{}
\setcounter{section}{0}\setcounter{subsection}{0}
\setcounter{equation}{0}
\setcounter{table}{0}\setcounter{figure}{0}
\renewcommand{\MainPrefix}{Main }\renewcommand{\SuppPrefix}{}
\section*{Additional results and computational data}

This document gives finite enumeration results and upper bounds for restricted
classes of synchronizing automata, together with their proofs and computation
details. These are additional results: the companion article contains the
complete proof of its construction theorem, including the three small
synchronization cases.

\raggedbottom

\section*{Notation and contents}
The state action, rank, merging distance $d(P)$ and triple rendezvous time
$T_3$ are defined in the main article. An \emph{involution} is a
permutation $b$ with $b^2=\mathrm{id}$; its nontrivial disjoint cycles are
transpositions, each exchanging exactly two states. An automaton is
\emph{circular} if one letter permutes all its states in a single cycle.
The \emph{functional digraph} of $a$ has the edges $q\to a(q)$.
A vertex on one of its directed cycles is called \emph{cyclic}.
For a map of rank $n-1$, the noncyclic vertices form a single directed
path into a cycle. We call that path the \emph{tail}; the notation for
its vertices and cycle positions is introduced before the relevant proofs.
Abbreviations TT, TC and RR distinguish transpositions whose endpoints
are both on the tail, one on the tail and one on a cycle, or both on
cycles, respectively. They are local notation, not additional classes
of letters.

For comparison, $\CWA(n)$ denotes the maximum of $\tau_{\mathcal A}(q)$
over all states of all strongly connected synchronizing binary automata
with $n$ states. The specified-state threshold $\tau$ is defined in
Main Definition~\ref{M-def:cwa}.

Sections S1--S9 and Appendix A belong to this additional-results document. References
prefixed by ``Main'' point to the main article; all others are local.
The finite maxima are Theorem B in Section~\ref{S-sec:comp}.
Section~\ref{S-sec:circ} is background on a circular permutation letter,
not an additional main contribution. The following sections give the
restricted-class bounds and their proofs. The small synchronization
witnesses are in Main Table~\ref{M-tab:small-sync}; Appendix A supplies the
remaining position cases.

\paragraph{Which bounds apply?}
Except for the circular background, the bounds below assume that $a$
has rank $n-1$ and $b$ is an involution. Synchronization is required
unless the cited theorem explicitly says otherwise.
\begin{center}\small
\begin{tabular}{@{}p{9.6cm}l@{}}
\toprule Restriction & Result\\\midrule
One or two transpositions in $b$ & Theorem~\ref{S-thm:sparse}\\
Three transpositions in $b$ & Theorem~\ref{S-thm:three-swaps}\\
Every transposition touches the tail & Theorem~\ref{S-thm:no-rr}\\
At most one transposition has two cyclic endpoints & Theorem~\ref{S-thm:orr-one-cyclic}\\
One $a$-cycle, with the missing image exchanged onto it & Theorem~\ref{S-thm:internal-crossed-hole}\\
Two cyclic transpositions and the specified tail support & Section~\ref{S-sec:two-rr}\\
Aligned tail--cycle support whose index permutation is not an involution & Theorem~\ref{S-thm:aligned-permutations}\\\bottomrule
\end{tabular}
\end{center}
These restrictions overlap without forming one increasing sequence.
The full hypotheses are in the indicated statements. The position-case
proofs in Appendix A are needed for the three-transposition theorem.

\renewcommand{\thesection}{S\arabic{section}}
\renewcommand{\theequation}{S\arabic{equation}}
\renewcommand{\thetable}{S\arabic{table}}
\section{Computations}\label{S-sec:comp}
\begin{StheoremB}
The maximum of $\Tr$ over all synchronizing binary automata with $n$ states equals
$4,5,7,8,9,10,12,13$ for $n=3,\dots,10$, hence $\floor{4n/3}$ for $n=3,4$ and $6\le n\le 10$. Over the
automata with a permutation letter and a rank $n-1$ letter the maximum equals $\floor{4n/3}$ for every
$6\le n\le 12$, namely $8,9,10,12,13,14,16$.
\end{StheoremB}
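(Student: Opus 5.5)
Theorem B is a finite computational statement, so the proof I would give is a verified exhaustive search: for each $n$ and each class, compute $T_3$ for every automaton up to isomorphism and take the maximum. The matching lower values for $n\ge9$ are already supplied by Main Theorem A. For $6\le n\le 8$ and for $n=3,4,5$, the lower values come from explicit witnesses found by the search, and each witness can be checked directly with Main Lemma~\ref{M-lem:formula}. The real content is therefore the upper bound: no automaton in the class exceeds the listed value.

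For each automaton I would evaluate $T_3$ with Main Lemma~\ref{M-lem:formula}, not by searching over triples. If some letter has a collision set of size at least three, the answer is $1$. Otherwise I run one backward breadth-first search on the pair graph, starting from the collision pairs at distance $1$ and using inverse images of pairs. This yields $d$ on all pairs in $O(n^2)$ time. Synchronization holds exactly when every $d$ is finite, as noted after \eqref{M-eq:depth}, so the same pass filters out non-synchronizing automata. Then $T_3$ is $1$ plus the minimum of $d(\{z,y\})$ over the pairs allowed by the formula.

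The enumeration is organized in three parts.
\begin{itemize}
\item For all binary automata with $n\le9$, I would generate the pairs of maps $(a,b)$ up to relabelling of states and swapping of letters. Canonical augmentation, or orderly generation on the first letter's functional digraph up to isomorphism, does this efficiently. For each canonical $a$, the second letter $b$ then ranges over maps modulo the automorphism group of $a$.
\item For $n=10$ this is too large, so I would use branch-and-bound. Fix $a$ canonically and assign $b$ state by state. I prune a partial assignment once some triple is already merged by a word, using only defined transitions, of length at most the target bound $\lfloor 4n/3\rfloor$. Such a word survives every completion, so no completion can exceed the bound.
\item For the restricted class, namely a permutation letter together with a rank $n-1$ letter and $6\le n\le12$, the only collision is a single pair under $a$ and one state is omitted from its image. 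I would enumerate by the cycle type of the permutation, normalizing $b$ to a fixed representative, and let $a$ range over rank $n-1$ maps modulo the centralizer of $b$. The same pair-BFS and pruning apply.
\end{itemize}

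The main obstacle is making the $n=10$ branch-and-bound, and the larger restricted cases, both feasible and provably complete. Feasibility needs strong symmetry reduction and good pruning. Completeness needs a justification that every pruned branch contains only automata with $T_3\le\lfloor4n/3\rfloor$. I would start with the monotonicity observation from the second item: adding transitions never removes an existing merging word. Search order matters for efficiency, so I would branch first on the images of states in the collision pairs of $a$. As a cross-check I would recompute the smaller $n$ independently by brute force, and verify the maxima with a separate implementation. This is what the accompanying archive records.
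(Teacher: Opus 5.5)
Your plan is correct, and its computational core matches the paper's: pair distances by reverse BFS, $T_3$ read off from Main Lemma~\ref{M-lem:formula}, a monotonicity-pruned branch-and-bound at $n=10$, and the restricted class enumerated by cycle type of $b$ with a centralizer reduction.

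The real difference is how you cover automata that are not strongly connected. The paper never enumerates all automata. It generates only tables in which every state is reachable from state $0$, numbered in discovery order; at $n=10$ these are also rooted at a collision image. It then transfers the result to all synchronizing automata using three ingredients:
\begin{itemize}
\item the sink lemma (Lemma~\ref{S-lem:sink});
\item the \v{C}ern\'y automaton, to dispose of the small-sink cases;
\item the observed strict increase $E(3)<\cdots<E(10)$ of the computed maxima, since no monotonicity of the true maximum in $n$ is available.
\end{itemize}
You enumerate all isomorphism classes of pairs $(a,b)$, and at $n=10$ you search over a canonical $a$ with $b$ unrestricted. This makes the reduction unnecessary. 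The cost is a genuine canonical-augmentation scheme in place of the cheap discovery-order normal form. The number of classes, of order $10^{11}$ at $n=9$, is comparable to the paper's $7\cdot10^{11}$ tables, so this remains feasible.

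Your pruning rule at $n=10$ is also more aggressive than the paper's. You prune when an already-defined triple-merging word of length at most $13$ exists; the paper removes only branches forced to have $T_3\le12$. Your search certifies that nothing exceeds $13$, and you correctly rely on $G_{3,1}$ from Theorem~A for attainment. The paper's search is exhaustive for $T_3\ge13$, so it also identifies the three extremal classes, which Theorem~B itself does not need.
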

\subsection{Reduction to a terminal component}
\begin{Slemma}[sink lemma]\label{S-lem:sink}
Let $\mathcal A$ be synchronizing with $n\ge3$ states and let $S$ be its sink strongly connected
component, i.e.\ the unique strongly connected component closed under both letters. Then
$\mathcal A|_S$ is a strongly connected synchronizing automaton, and
\[
\Tr(\mathcal A)\le
\begin{cases}
\Tr(\mathcal A|_S), & |S|\ge3,\\
2, & |S|=2,\\
n, & |S|=1.
\end{cases}
\]
Consequently the maximum of $\Tr$ over synchronizing automata with at most $n$ states is attained
by a strongly connected one (for $n\ge3$, since \v{C}ern\'y's automaton is strongly connected with
$\Tr=n+1$).
\end{Slemma}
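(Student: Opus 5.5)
The plan has three parts: show that the sink component $S$ is itself a strongly connected synchronizing automaton, bound $\Tr(\mathcal A)$ separately in the three size cases for $|S|$, and then derive the "consequently" clause.

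\emph{Existence, uniqueness and synchronization of $S$.} Take a reset word $w$ with image $\{s\}$. Every state reaches $s$, so $s$ lies in every closed strongly connected component. Hence there is exactly one closed component $S$, namely the set of states reachable from $s$. Restricting both letters to $S$ gives a strongly connected automaton $\mathcal A|_S$. It is synchronizing because $w$ maps all of $Q$, and in particular all of $S$, to a single state. That state lies in $S$ because $S$ is closed.

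\emph{Case $|S|\ge3$.} Any triple inside $S$ merged by a word $v$ in $\mathcal A|_S$ is also merged by $v$ in $\mathcal A$. Taking the minimum over such triples gives $\Tr(\mathcal A)\le\Tr(\mathcal A|_S)$.

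\emph{Case $|S|=2$.} Write $S=\{p,q\}$. Since $\mathcal A|_S$ is synchronizing, some letter $x$ merges $p$ and $q$. Then $x$ maps $S$ into itself with image a single state, say $p$. Let $y$ be the other letter. If $y$ is constant on $S$, then a strongly connected two-state automaton is impossible. Hence $y$ acts on $S$ as the swap $p\leftrightarrow q$. Now take any state $t\notin S$; one exists since $n\ge3$.
\begin{itemize}
\item If $t\act x=p$, then $\{p,q,t\}$ is merged by the single letter $x$, so $\Tr(\mathcal A)\le1$.
\item If $t\act x\ne p$, look at $t\act xx$. If it equals $p$, the word $xx$ merges $\{p,q,t\}$.
\item Otherwise I would try words of the form $yx$ and $xyx$ on triples containing $p$ and $q$.
\end{itemize}
I am not sure these local attempts close the case. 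A cleaner route is a counting argument over the terminal pair. If a length-$2$ merge fails for every triple, then at most one state of $Q$ is mapped into each of $p,q$ by each letter. I would then examine the non-sink states that are adjacent to $S$: each must map into $S$ by some letter, and that should give a collision within two steps. This is the step I expect to be the main obstacle. My fallback is to prove only a weaker but sufficient bound here, for example $\le n$, which still suffices for the "consequently" clause.

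\emph{Case $|S|=1$.} Write $S=\{s\}$. Every state reaches $s$, and the distance from any state to $s$ is at most $n-1$. Take the set $N$ of states at distance $\le1$ from $s$ (including $s$ itself).
\begin{itemize}
\item If $|N|\ge3$, then three states of $N$ are merged at $s$ within one letter each. Because $s$ is fixed by both letters, an extra letter applied after $s$ is reached does no harm, so these three states are merged by a word of length at most $2$. In particular $\Tr(\mathcal A)\le2\le n$.
\item Otherwise I would use a layered argument. Let $m$ be the maximal distance to $s$ and let $u$ be a state attaining it. Take a shortest path from $u$ to $s$. The word labelling it sends $u$ to $s$ in at most $n-1$ steps. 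It also sends $s$ to $s$, since $s$ is fixed.
\item For a third state, take any other state $t$ and append a shortest word sending $t\act(\text{path word})$ to $s$. Bounding the total by $n$ needs care. I would instead choose the triple $\{s,u',u''\}$ where $u',u''$ are the two states nearest to $s$ other than $s$. A shortest word bringing both to $s$ then has length at most their combined depth, which is at most $n$ since depths are bounded by the number of states.
\end{itemize}

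\emph{The "consequently" clause.} \v{C}ern\'y's automaton with $n$ states is strongly connected and has $\Tr=n+1$, which exceeds both $2$ and $n$. So a maximizer of $\Tr$ over automata with at most $n$ states cannot have a sink component of size $1$ or $2$. If its sink component $S$ is a proper subset of $Q$ with $|S|\ge3$, then $\mathcal A|_S$ is a strongly connected automaton with fewer states and at least as large a value of $\Tr$. Therefore the maximum is attained by a strongly connected automaton.
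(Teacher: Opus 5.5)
Your argument for the structure of $S$ and for the case $|S|\ge3$ is correct. So is your deduction of the ``consequently'' clause, which needs only $T_3\le n$ in the small cases. There are genuine gaps in both small cases. They come from the same oversight: a state near $s$ (or near $S$) gets there only if the right letter is read first. Reading another letter first moves it somewhere else.

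\textbf{Case $|S|=2$.} Here you prove no bound at all, not even the fallback $\le n$. The claim that $y$ must swap $p$ and $q$ is also false. If $x$ sends both states to $p$ and $y$ sends both to $q$, the two-state automaton is strongly connected. Working with an arbitrary $t\notin S$ and powers of the merging letter $x$ cannot succeed, because $t$ may be far from $S$. The missing idea is to choose the third state \emph{adjacent} to $S$. Since $S\ne Q$ and every state reaches $S$, there are $u\notin S$ and a letter $\ell$ with $u\act\ell\in S$. Because $S$ is closed, $\ell$ maps $\{u,p,q\}$ into the two-element set $S$, and $x$ then collapses $S$. Hence $\ell x$ merges $\{u,p,q\}$ and $T_3\le2$. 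This is the paper's argument.

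\textbf{Case $|S|=1$.} Both of your subcases are incorrect.
\begin{itemize}
\item In the subcase $|N|\ge3$, suppose $u_1\act a=s$ and $u_2\act b=s$. Neither $ab$ nor $ba$ need merge $\{s,u_1,u_2\}$. Reading $a$ first moves $u_2$ to $u_2\act a$, which $b$ need not send to $s$. The fact that $s$ is fixed protects only the states that have already arrived there.
\item The ``combined depth'' bound in the other subcase fails for the same reason. After a word taking $u'$ to $s$, the state $u''$ has moved to a state whose distance to $s$ can be as large as $n-1$. So no word of length $d(u')+d(u'')$ need merge the triple.
\end{itemize}
Your layered idea was nearly right, but you started from a state of \emph{maximal} depth; start from one of depth one instead. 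Take $u\ne s$ and a letter $x$ with $u\act x=s$, and any third state $t$. Then $x$ sends $\{s,u,t\}$ to $\{s,t\act x\}$. A shortest path from $t\act x$ to the absorbing state $s$ has length at most $n-1$, and it merges that pair. Hence $T_3\le n$. The paper phrases this through the first-collision lemma, using the collision $\{u,s\}\to s$ of $x$.

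With these two repairs, the rest of your proof stands unchanged.
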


\begin{proof}
Behague and Johnson treat the same three cases for their reduction \cite{S-BJ22}; we give the
argument for completeness.
A synchronizing automaton has exactly one strongly connected component closed under the letters
(every state reaches the state in the image of a reset word, and that state reaches a closed
component), and a reset word of $\mathcal A$ is a reset word of $\mathcal A|_S$. If $|S|\ge3$, a
word synchronizing a triple inside $S$ in $\mathcal A|_S$ does the same in $\mathcal A$. If
$|S|=2$, some letter $y$ merges the two states of $S$ (otherwise both letters would permute $S$
and no word could merge them), and some state $u\notin S$ is sent into $S$ by a letter $x$ (every
state reaches $S$); then $x$ maps $\{u\}\cup S$ into $S$ and $xy$ synchronizes this triple. If
$S=\{s\}$, some state $u\ne s$ has $u\cdot x=s$ for a letter $x$, so $x$ has the collision
$\{u,s\}\to s$; if $\Q\cdot x=\{s\}$ then $\Tr=1$, and otherwise Main Lemma~\ref{M-lem:formula} gives
$\Tr\le1+\dep(\{s,w\})$ for any $w\in\Q\cdot x\setminus\{s\}$, where $\dep(\{s,w\})$ is the length
of a shortest word taking $w$ to the absorbing state $s$, at most $n-1$.
\end{proof}

\subsection{The small synchronization cases}
Main Table~\ref{M-tab:small-sync} lists reset words for $G_{3,k}$,
$k=0,1,2$. Their images follow directly from Main equation~\eqref{M-eq:a}.
Only their reset property is used, not shortestness. The archived boundary
check records the trajectory of every state.

\subsection{Methods}
All computations use the reverse breadth-first search on pairs, which yields the merging distances of all
pairs of an $n$-state automaton in $O(n^2)$ time, and Main Lemma~\ref{M-lem:formula}. Three enumerations were
run on a GPU (CUDA kernels written through CuPy; one automaton per thread).

\emph{All binary automata, $n\le9$.} We enumerate every binary automaton in which all states are
reachable from state $0$, with states numbered in order of discovery (letter $a$ before $b$); every
isomorphism class of such automata occurs at least once, and by Lemma~\ref{S-lem:sink} of this document the maximum of
$\Tr$ over all synchronizing automata with $n$ states is the maximum over these automata with at most
$n$ states. The numbers of enumerated automata are $216$, $5\,248$, $160\,675$, $5\,931\,540$,
$256\,182\,290$, $12\,665\,445\,248$ and $705\,068\,085\,303$ for $n=3,\dots,9$. Write $E(m)$ for the maximum of $\Tr$ over the enumerated synchronizing $m$-state automata
(for $m=10$, over the branch-and-bound enumeration described next), $S(m)$ for the maximum over
strongly connected synchronizing $m$-state automata and $M(m)$ for the maximum over all
synchronizing $m$-state automata. Every strongly connected automaton is reachable from each of its
states, so $S(m)\le E(m)\le M(m)$ (for $m=10$ the first inequality uses that the branch-and-bound
search is exhaustive for $\Tr\ge13$), and Lemma~\ref{S-lem:sink} of this document together with \v{C}ern\'y's
automaton ($S(n)\ge n+1>n$) gives $M(n)\le\max_{3\le m\le n}S(m)\le\max_{3\le m\le n}E(m)$. The
computed values $E(3),\dots,E(10)=4,5,7,8,9,10,12,13$ increase strictly, so
$\max_{m\le n}E(m)=E(n)$ and hence $M(n)=E(n)$ for $3\le n\le10$: the maximum over at most $n$
states is attained with exactly $n$ states, and Table~\ref{S-tab:max} reports $E(n)$. (No
monotonicity of $M$ in $n$ is claimed; adjoining a state can lower $\Tr$.)

\emph{Branch and bound, $n=10$.} The maximum for $n=10$ was obtained by a branch-and-bound
enumeration rooted at a collision image, run with a target value $T$; here $T$ is the value to be
confirmed, $T=13=\floor{40/3}$ for $n=10$. Prefixes of the canonical numbering on $7$ states were
generated on the CPU and pruned when a pair $\{0,w\}$, with $w$ already known to lie in the image of
$a$, had merging distance $<T-1$ in the partial automaton (merging distances in a partial automaton are upper bounds for
merging distances in every completion), and the $2.98\cdot10^{12}$ surviving completions were evaluated on the
GPU. The enumeration is exhaustive for automata with $\Tr\ge T$, because a completion of a pruned
prefix inherits the short pair and satisfies $\Tr\le1+\dep(\{0,w\})<T$ by Main
Lemma~\ref{M-lem:formula}, while nothing with $\Tr<T$ needs to be found. The run found no automaton
with $\Tr\ge14$ and 3 rooted tables with $\Tr=13$, forming 3 isomorphism classes, the same three
classes as in the permutation class; hence $\max\Tr=13=\floor{40/3}$ for $n=10$.
The same enumeration with $T=12$ reproduces the two extremal classes
for $n=9$.

\emph{Automata with a permutation letter and a rank $n-1$ letter.} Here $b$ is taken as one canonical
permutation of each cycle type, the collision image $z$ of $a$ ranges over one state of each cycle
length of $b$ (orbits of the centralizer of $b$), and $a$ ranges over all $(n-1)\binom n2(n-2)!$ maps
with a prescribed collision image and image of size $n-1$. This class contains all extremal automata
for $n\le9$ and was enumerated exhaustively up to $n=12$ ($5.1\cdot10^{11}$ automata),
and for $n=13$ over $18$ selected cycle types of $b$ (all types with two cycles, the types with three
cycles and smallest cycle of length at least $2$, and $9{+}3{+}1$, $8{+}4{+}1$, $6{+}6{+}1$,
$5{+}5{+}2{+}1$); for $n=13$ the maximum over the selected types is $17=\floor{52/3}$
($1.6\cdot10^{12}$ automata), attained by $3$ classes, one of them $\G{4}{1}$.

\emph{Finite validation.} Direct evaluation of $\lab$ from Main equation~\eqref{M-eq:lambda}
checks the edge inequalities of Main Lemma~\ref{M-lem:lip} and the
hypotheses of Main Lemma~\ref{M-lem:potential} on each finite automaton. Breadth-first search over all $3$-subsets confirms $\Tr(\G{r}{k})=4r+k$, for all $r\le15$ and $k\le2$ ($n\le47$). A separate check follows the
pair-merging strategy of the proof of Main Lemma~\ref{M-lem:sync} move by move from every pair, for
$4\le r\le 12$ and $k\le 2$, and the word of Main Theorem~\ref{M-thm:upper} was applied to the stated
triple for $r\le 11$.

\subsection{Results}
Table~\ref{S-tab:max} lists the maxima. For each $n$ the second column is the maximum over all
synchronizing binary automata with $n$ states, the third the number of isomorphism classes (up to
renaming the letters) attaining it, the fourth the maximum over the permutation/rank-$(n-1)$ class
and the fifth the number of classes attaining it there.

\begin{table}[ht]
\centering
\begin{tabular}{@{}rrrrrr@{}}
\toprule
$n$ & $\max\Tr$ & classes & $\max\Tr$, perm.\ class & classes & $\floor{4n/3}$\\
\midrule
3 & 4 & 2 & 4 & 2 & 4\\
4 & 5 & 4 & 5 & 4 & 5\\
5 & 7 & 1 & 7 & 1 & 6\\
6 & 8 & 2 & 8 & 2 & 8\\
7 & 9 & 12 & 9 & 12 & 9\\
8 & 10 & 24 & 10 & 24 & 10\\
9 & 12 & 2 & 12 & 2 & 12\\
10 & 13 & 3 & 13 & 3 & 13\\
11 & & & 14 & 9 & 14\\
12 & & & 16 & 2 & 16\\
13 & & & 17 & 3 & 17\\
\bottomrule
\end{tabular}
\medskip
\caption{The maximum triple rendezvous time of binary $n$-state synchronizing automata (all
automata for $n\le 10$; automata with a permutation letter and a rank $n-1$ letter for $n\le 13$,
restricted cycle types for $n=13$).}
\label{S-tab:max}
\end{table}

All extremal automata for $3\le n\le10$ have a permutation letter and a letter of rank $n-1$, and the
values agree with $\floor{4n/3}$ except for $n=5$, where the maximum $7$ exceeds $\floor{20/3}=6$. For
$n=9$ the two extremal classes are Gonze and Jungers' $TR_9$ and $\G{3}{0}$; for $n=10$, $11$, $12$
the members $\G{3}{1}$, $\G{3}{2}$ (both variants in which the two extra states are fixed by $b$ or
swapped by $b$) and $\G{4}{0}$ are among the extremal classes. The second extremal class for $n=12$
is $a=(0\,1)(2\,3\,10\,11)(4\,9)(7\,8)$, $5\to6\to0$, $b=(0\,1\,2\,3\,4)(5\,6\,7\,8\,9)(10\,11)$, again
with $\Tr=16$; we do not know whether it extends to a family.

\subsection{The permutation class}
Within the automata with a permutation letter and a rank $n-1$ letter the enumeration also gives the
maximum compress-with-another threshold: it is $n+1$ for $n=7$ and $n+2$ for $8\le n\le 11$.
Compare these values with the maximum triple rendezvous times in the
same class in Table~\ref{S-tab:max}. At $n=8$ the two maxima coincide;
for $9\le n\le11$ the maximum compress-with-another threshold is one
smaller. These are separate maxima, with no assertion that a single
automaton attains both.

\subsection{Larger subsets}\label{S-sec:largerk}
Let $T_k$ denote the shortest length of a word synchronizing some $k$-subset, so that
$\operatorname{rdv}(k,n)$ of \cite{S-BJ22} is the maximum of $T_k$ over all synchronizing $n$-state
automata, with no restriction on the alphabet. Our enumeration covers the class of strongly
connected synchronizing \emph{binary} automata, and we write $\operatorname{rdv}_{2,\mathrm{sc}}(k,n)$
for the maximum of $T_k$ over that class; the reduction of Lemma~\ref{S-lem:sink} of this document to strongly
connected automata is not applied here, since for $T_k$ it would require a separate treatment of
sinks with fewer than $k$ states. We computed $T_k$ for all $4\le k\le n\le 8$ over all strongly
connected synchronizing binary automata by a breadth-first search over the subsets of size at most
$k$ (one automaton per GPU thread). In every case the maximum is
\[
  \operatorname{rdv}_{2,\mathrm{sc}}(k,n)=(k-2)n+1\qquad(4\le k\le n\le 8),
\]
that is, exactly \v{C}ern\'y's lower bound of \cite{S-BJ22}; for $k=n$ it is the maximal reset threshold
$(n-1)^2$ of strongly connected binary automata with $n\le8$ states. For $6\le n\le 8$ the
\v{C}ern\'y automaton is the only automaton attaining the maximum, except for $(n,k)=(6,6)$, where a
second automaton with reset threshold $25$ is known; for $n=5$, $k=4$ there are four extremal
classes. For $n=9$ and $k=4$ the enumeration of all $7.05\cdot10^{11}$ canonical tables gives the
maximum $19=2n+1$, attained by one class. The histograms are gapped: for $n=8$ and $k=4$ no
automaton has $T_4=16$. In \v{C}ern\'y's automaton $C_n$ ($a$ a cyclic permutation, $b$ fixing all
states but one) we verified $T_k(C_n)=(k-2)n+1$ for all $3\le k\le n\le 12$. Thus the triple case is exceptional: \v{C}ern\'y's automaton gives $n+1$,
whereas Main Theorem~A supplies the lower construction $\floor{4n/3}$ for
every $n\ge9$. A matching general upper bound remains open.
For $k\ge4$ the data support the following conjecture,
stated for the general quantity $\operatorname{rdv}(k,n)$ although the evidence concerns the binary
strongly connected class.

\begin{Sconjecture}\label{S-conj:k}
$\operatorname{rdv}(k,n)=(k-2)n+1$ for all $4\le k\le n$.
\end{Sconjecture}

For $k=n$ this is \v{C}ern\'y's conjecture, which is known for binary automata with $n\le 12$ states
and for ternary automata with $n\le 8$ states \cite{S-KKS2016}; for $k=4,5$ the
best upper bounds are the quadratic ones of \cite{S-BJ22}.

\section{An upper bound for circular automata}\label{S-sec:circ}

For circular synchronizing automata the bound $\Tr\le n+1$ already follows from
Dubuc~\cite[Proposition~4.6]{S-Dubuc98}: every nonempty proper subset can be extended by an inverse
word of length at most $n$. Choose a collision fibre $S=x^{-1}(\{z\})$ of a non-permutation
letter $x$. If $|S|\ge3$, then $\Tr=1$. Otherwise $|S|=2$; an extending word $u$ has
$|S\cdot u^{-1}|\ge3$ and $|u|\le n$, so $ux$ merges at least three states. Thus no restriction
on the ranks of the other letters is needed. We give an elementary constructive proof in the
rank-$(n-1)$ case, which exposes the rotation obstruction used below. Throughout,
$b$ is a permutation of $\Q$ and $a$ has rank $n-1$, so $a$ has exactly one collision pair
$\{p,q\}$, $a(p)=a(q)=z$, and exactly one state $h\notin a(\Q)$, the \emph{hole}.

\begin{Slemma}\label{S-lem:dist}
For every pair $P$ of states, $\dep(P)=1+\operatorname{dist}(P,\{p,q\})$, where $\operatorname{dist}$
denotes the length of a shortest word mapping $P$ onto $\{p,q\}$. Consequently
\begin{equation}\label{S-eq:t3perm}
\Tr=2+\min\{\operatorname{dist}(\{z,y\},\{p,q\}) : y\in\Q\setminus\{z,h\}\}.
\end{equation}
\end{Slemma}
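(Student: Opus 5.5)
We have to prove Lemma S-lem:dist. The setting is that $b$ is a permutation, $a$ has rank $n-1$ with unique collision pair $\{p,q\}$ mapping to $z$, and hole $h$. The automaton is presumably synchronizing, though the statement may also allow $\dep=\infty=\operatorname{dist}$.

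\textbf{Step 1: characterise the merging moves.} The only letter-step that turns a pair into a singleton is $a$ applied to $\{p,q\}$. Indeed, $b$ is injective, and $a$ is injective on every pair other than $\{p,q\}$, because $\{p,q\}$ is the unique collision pair of $a$.

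\textbf{Step 2: the formula for $\dep(P)$.}
\begin{itemize}
\item[\emph{Upper bound.}] If a word $u$ maps $P$ onto $\{p,q\}$, then $ua$ merges $P$. Hence $\dep(P)\le 1+\operatorname{dist}(P,\{p,q\})$.
\item[\emph{Lower bound.}] Let $w=x_1\cdots x_t$ be a shortest merging word for $P$. Let $i$ be the first index with $|P\act x_1\cdots x_i|=1$. By Step~1, the pair just before that step, $P\act x_1\cdots x_{i-1}$, equals $\{p,q\}$, and $x_i=a$. So $\operatorname{dist}(P,\{p,q\})\le i-1\le t-1$.
\item[\emph{Infinite case.}] If $\dep(P)=\infty$, the same argument shows $\{p,q\}$ is unreachable from $P$, so both sides are infinite.
\end{itemize}
This gives $\dep(P)=1+\operatorname{dist}(P,\{p,q\})$.

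\textbf{Step 3: the formula for $\Tr$.} Since $a$ has rank $n-1$, no collision set has size at least three. So Main Lemma~\ref{M-lem:formula} applies with only the term $x=a$, $z$, and $y\in\Q\act a\setminus\{z\}=\Q\setminus\{z,h\}$. This gives
\[
\Tr=1+\min_y \dep(\{z,y\}).
\]
Substituting Step~2 yields \eqref{S-eq:t3perm}. Note that $b$ contributes no term to the minimum, since it has no collision set.

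No step is a real obstacle. The one point needing care is Step~1: one must check that a pair can only collapse at $\{p,q\}$ under $a$. This is exactly the rank $n-1$ hypothesis together with $b$ being a permutation.
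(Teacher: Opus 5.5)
Your proposal is correct and follows essentially the same route as the paper. The paper's proof notes that $b$ merges no pair and $a$ merges only $\{p,q\}$, so every merging word finishes with $a$ applied to $\{p,q\}$. It then applies Main Lemma~\ref{M-lem:formula} with $Q\act a\setminus\{z\}=Q\setminus\{z,h\}$, exactly as your Steps~1--3 do, only with the bounds written out in more detail.
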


\begin{proof}
The letter $b$ merges no pair and $a$ merges exactly the pair $\{p,q\}$, so a word merging $P$ ends
with the letter $a$ applied to $\{p,q\}$. The formula follows from \MainPrefix Lemma~\ref{M-lem:formula}, since
$a(\Q)\setminus\{z\}=\Q\setminus\{z,h\}$.
\end{proof}

It is convenient to read \eqref{S-eq:t3perm} dynamically. Put a token on every state and apply a word
$u$; a state carries $|u^{-1}(s)|$ tokens, so it is a \emph{hole}, a \emph{single} or a \emph{double}.
After the first letter $a$ the only double sits on $z$ and the only hole on $h$. The letter $b$ moves
all tokens bijectively. The letter $a$ moves the tokens of $p$ and $q$ together to $z$; a triple
appears exactly when one of $p,q$ carries a double and the other is not a hole. Thus
$\Tr\le 2+|v|$ for every word $v$ with $z\cdot v=p$ and $q\in\Q\cdot(av)$ (or with $p$ and $q$
exchanged). For $s,t$ in one $b$-cycle let $\operatorname{rot}(s\to t)$ be the number of $b$-steps
from $s$ to $t$, and let $\ell_s$ be the length of the $b$-cycle of $s$.

\begin{Slemma}[rotation lemma]\label{S-lem:rot}
Suppose $z$ lies in the $b$-cycle of $p$ and put $j_1=\operatorname{rot}(z\to p)$.
If $h\cdot b^{j_1}\ne q$ then $\Tr\le j_1+2\le\ell_p+1$; if $h\cdot b^{j_1}=q$ but $\ell_q\nmid\ell_p$
then $\Tr\le j_1+\ell_p+2\le 2\ell_p+1$. The same holds with $p$ and $q$ exchanged.
\end{Slemma}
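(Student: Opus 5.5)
We use the token picture described just before the lemma, together with the bound $T_3\le 2+|v|$ for every word $v$ with $z\act v=p$ and $q\in Q\act(av)$. Read the first letter $a$. This leaves a double on $z$, a hole on $h$, and singles everywhere else. We then apply only powers of $b$, so $v=b^m$. Since $b$ is a permutation, after $b^m$ the double sits on $z\act b^m$, the hole on $h\act b^m$, and every other state is a single. A final $a$ creates a triple as soon as the double lies on $p$ (or on $q$) and the other collision state is not the hole. Hence $T_3\le 2+m$ whenever $z\act b^m=p$ and $h\act b^m\ne q$.

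\emph{First case.} Take $m=j_1=\operatorname{rot}(z\to p)$, so that $z\act b^{j_1}=p$. If $h\act b^{j_1}\ne q$, the criterion above gives $T_3\le j_1+2$. Moreover $j_1\le \ell_p-1$, since it is a rotation count inside the $b$-cycle of $p$. This gives $T_3\le\ell_p+1$.

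\emph{Second case.} Suppose $h\act b^{j_1}=q$. Then $h$ lies in the $b$-cycle of $q$. Take $m=j_1+t\ell_p$ with $t\ge1$. The double still lands on $p$, because $b^{\ell_p}$ fixes the $b$-cycle of $p$ pointwise. The hole now sits at $q\act b^{t\ell_p}$. This differs from $q$ exactly when $\ell_q\nmid t\ell_p$, and $t=1$ already works under the hypothesis $\ell_q\nmid\ell_p$. Thus $T_3\le j_1+\ell_p+2\le 2\ell_p+1$, again using $j_1\le\ell_p-1$. The statement with $p$ and $q$ exchanged follows by the same argument, since the roles of $p$ and $q$ in the collision are symmetric.

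The only point needing care is the claimed bound $T_3\le 2+|v|$. To confirm it, let $S$ consist of the two $a$-preimages of $z$ (namely $p$ and $q$) together with one preimage of the state $p\act b^{-m}$-partner. Concretely, after $ab^m$ the tokens on $p$ number two, and $q$ carries one token because it is not the hole. Then $ab^ma$ merges three original states. Equivalently, one can invoke Main Lemma~\ref{M-lem:formula} with $y=q\act b^{-m}\in Q\act a\setminus\{z\}$, noting that $\{z,y\}\act b^m=\{p,q\}$, and then use Lemma~\ref{S-lem:dist}. I expect no real obstacle; the only thing to watch is that $y\ne z$ and $y\ne h$, which hold since $b$ is a bijection and $p\ne q$.
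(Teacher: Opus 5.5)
Your proof is correct and is essentially the paper's argument: the paper likewise takes $y=q\act b^{-m}$ with $m=j_1$ or $m=j_1+\ell_p$, observes that $b^m$ maps $\{z,y\}$ onto $\{p,q\}$ with $y\ne z$ (since $p\ne q$) and $y\ne h$ (by hypothesis, resp.\ because $\ell_q\nmid\ell_p$), and then applies the formula of Lemma~\ref{S-lem:dist}. The only blemish is the garbled phrase ``one preimage of the state $p\act b^{-m}$-partner'', which should read ``an $a$-preimage of $q\act b^{-m}$''; your following ``equivalently'' sentence states it correctly.
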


\begin{proof}
The word $b^{j_1}$ maps $\{z,y\}$ with $y=q\cdot b^{-j_1}$ onto $\{p,q\}$; here $y\ne z$ since
$q\ne p$, and $y\ne h$ by hypothesis, so \eqref{S-eq:t3perm} gives the first bound. In the second case
$b^{j_1+\ell_p}$ maps $\{z,y'\}$, $y'=q\cdot b^{-j_1-\ell_p}=h\cdot b^{-\ell_p}$, onto $\{p,q\}$, and
$y'\ne h$ because $\ell_q\nmid\ell_p$.
\end{proof}

The remaining situation, $h\cdot b^{j_1}=q$ with $\ell_q\mid\ell_p$, is a \emph{lock}: the double
started at $z$ and the hole started at $h$ move in step under $b$, and the hole stands on $q$
whenever the double stands on $p$. The family $\G{r}{0}$ exhibits this lock ($\ell_q=r$ divides
$\ell_p=2r$, and the hole $0'$ is aligned with $z$). For $k>0$ the initial collision image lies
on the added chain, so this initial pure-rotation argument does not apply to it.
The exact value $4r+k$ is established by the lower and upper bound proofs
in the main article.

\begin{Stheorem}\label{S-thm:circ}
Let $A$ be a synchronizing automaton with $n\ge3$ states in which $b$ is a cyclic permutation of
$\Q$ and $a$ has rank $n-1$. Then $\Tr(A)\le n+1$, and \v{C}ern\'y's automaton attains the bound.
\end{Stheorem}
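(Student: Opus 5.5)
The plan is to apply the rotation lemma (Lemma~\ref{S-lem:rot}) with $b$ a single cycle of length $\ell_p=\ell_q=n$. Since $b$ is one cycle, $z$, $p$, $q$ and $h$ all lie on it. Put $j_1=\operatorname{rot}(z\to p)$ and $j_2=\operatorname{rot}(z\to q)$. Both lie in $\{0,\dots,n-1\}$, and they are distinct because $p\ne q$. If $h\act b^{j_1}\ne q$, the first clause gives $\Tr\le j_1+2\le n+1$, and symmetrically if $h\act b^{j_2}\ne p$. The second clause is useless here, since $\ell_q=\ell_p$. So the whole difficulty is the doubly locked case:
\[
 h\act b^{j_1}=q \quad\text{and}\quad h\act b^{j_2}=p.
\]
Writing positions on the cycle mod $n$, this says $h+j_1=q$ and $h+j_2=p$. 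Combined with $z+j_1=p$ and $z+j_2=q$, it gives $h-z\equiv q-p\equiv p-q$. Hence $2(p-q)\equiv0$, so $n$ is even and $p$, $q$ are antipodal. In that case pure rotations never separate the double from the hole, and $a$ must be used in the middle of the word.

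To handle the locked case I would use the token picture. I first apply $a$, which puts the double on $z$ and the hole on $h=z+n/2$. I then look for a short word $v$ with $z\act v\in\{p,q\}$ and the other collision state in $Q\act(av)$. If $p$ and $q$ are interchangeable under this antipodal symmetry, I would consider an extra application of $a$ at an intermediate point. Applying $a$ moves the double from a state $s\notin\{p,q\}$ to $a(s)$, and it moves the hole to a new position: the hole after $a$ sits at the state outside $a(Q\act u)$. Here synchronization is essential. If $a$ commuted with the rotation by $n/2$ on all relevant states, the antipodal relation between double and hole would be preserved forever, and no triple could merge. That contradicts synchronization via Lemma~\ref{S-lem:dist} and the first-collision formula. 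So some state $s$ has $a(s+n/2)\ne a(s)+n/2$, and I would choose the first such $s$ encountered along the rotation.

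To keep the length within $n+1$, I would bound the word $b^{i}ab^{j}$ (or a variant) with $i+j+1\le n-1$. The idea is to place the double at the defect state $s$ with its antipode being the hole, apply $a$, and then rotate the double onto $p$ or $q$ while the hole is now off the partner. This length accounting is the main obstacle. It may need an alternative, namely Dubuc's extension argument quoted just before: a collision fibre $S$ of size two extended by an inverse word $u$ with $|u|\le n$ gives $\Tr\le n+1$ directly. I would fall back on citing that argument if the elementary construction becomes messy. Finally, attainment: in \v{C}ern\'y's automaton $\Tr=n+1$, as stated in the introduction, and it is circular with a rank $n-1$ letter.
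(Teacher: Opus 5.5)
Your reduction matches the paper's proof. Applying the rotation lemma to both $p$ and $q$ leaves only the lock $q=p+n/2$, $h=z+n/2$, and your intended word $a\,b^{i}a\,b^{j}a$, built from a defect of $\sigma(x)=x+n/2$, is exactly the paper's word. The gap is that the step you call the main obstacle is the whole content of the locked case, and you leave it open. Two points need repair before it can be done.

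First, the defect statement as written is vacuous, since $a(\sigma(p))=a(q)=z\ne h=\sigma(a(p))$ always. What you need is an antipodal pair $\{s,\sigma(s)\}\ne\{p,q\}$ with $a(\sigma(s))\ne\sigma(a(s))$. The contradiction with synchronization is cleanest on pairs. Suppose $a\sigma=\sigma a$ on $Q\setminus\{p,q\}$. Then $a$ carries the antipodal pairs of $Q\setminus\{p,q\}$ bijectively onto those of $Q\setminus\{z,h\}$, so it sends non-antipodal pairs to non-antipodal pairs; a pair $\{p,t\}$ goes to $\{z,a(t)\}$, which is non-antipodal because $a(t)\ne h$. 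Since $b$ preserves antipodality and every merge passes through the antipodal pair $\{p,q\}$, no non-antipodal pair could ever merge ($n\ge4$), contradicting synchronization.

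Second, your account of the middle $a$ is wrong exactly where the check lives. It does not merely move the hole. The original hole $h$ remains a hole, a second double appears at $z$, and there are then two holes.

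With that corrected, the accounting is short. Applying $\sigma$ to the inequality shows non-preservation is symmetric within the pair, so you may name $s$ with $i=s-z\in[0,n/2-1]$. This is also why your first defect along the rotation lies within $n/2-1$ steps of $z$.

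After $ab^i$ the double is on $s$ and the only hole on $\sigma(s)$, and neither lies in $\{p,q\}$. The second $a$ sends the double to $u=a(s)$, with $u\ne z$ because $s\notin\{p,q\}$. The holes become exactly $h$ and $a(\sigma(s))$.

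Let $j\in[0,n/2-1]$ be the number of forward $b$-steps from $u$ to the nearer of the antipodal states $p,q$. After $b^j$ the double stands on one collision state, and its partner carries what $\sigma(u)$ carried before the rotation. That partner is a hole only if $\sigma(u)=h$, i.e.\ $u=z$, or $\sigma(u)=a(\sigma(s))$, which is the excluded preservation. Hence the final $a$ merges a triple, and $|ab^iab^ja|=i+j+3\le n+1$.

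Your Dubuc fallback is valid: the paper itself notes, at the start of that section, that Dubuc's extension proposition gives $T_3\le n+1$ for circular automata with an arbitrary second letter. But it replaces your argument rather than completing it, and it makes the rotation-lemma analysis unnecessary. As an elementary proof, the proposal is incomplete without the verification above.
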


\begin{proof}
Identify $\Q$ with $\mathbb Z_n$ so that $b(i)=i+1$. By Lemma~\ref{S-lem:rot}, applied to $p$ and to
$q$, we have $\Tr\le n+1$ unless $h=z+(q-p)$ and $h=z+(p-q)$. In the latter case $2(q-p)=0$, so $n$
is even, $q=p+n/2$ and $h=z+n/2$. Write $\sigma(s)=s+n/2$. Call an antipodal pair $\{s,\sigma(s)\}$
other than $\{p,q\}$ \emph{preserved} if $a(\sigma(s))=\sigma(a(s))$.

Not every such pair is preserved. Otherwise, as $a$ restricted to $\Q\setminus\{p,q\}$ is a
bijection onto $\Q\setminus\{z,h\}$ and both sets are unions of $n/2-1$ antipodal pairs, $a$ would
induce a bijection between these antipodal pairs and hence would map every non-antipodal pair of
$\Q\setminus\{p,q\}$ to a non-antipodal pair; a pair $\{p,t\}$ or $\{q,t\}$ with $t\notin\{p,q\}$ is
mapped to $\{z,a(t)\}$, which is not antipodal because $a(t)\ne h$; and $b$ preserves distances.
So no non-antipodal pair would ever become antipodal, and since every merge passes through the
antipodal pair $\{p,q\}$, no non-antipodal pair would ever merge, contradicting synchronization
($n\ge4$ here).

Fix a non-preserved pair $\{s,\sigma(s)\}$ and name its elements so that $i:=s-z\in[0,n/2-1]$.
Put $u=a(s)$ and let $j\in[0,n/2-1]$ be the smaller of $p-u$ and $q-u$ modulo $n$. We claim that
$w=a\,b^{i}\,a\,b^{j}\,a$ merges three states; as $|w|=i+j+3\le n+1$ this proves the theorem.
After $a\,b^i$ the double is on $s$ and the only hole on $\sigma(s)$; neither lies in $\{p,q\}$.
After the second $a$ this double is on $u$ (a second double appears on $z$, since $p$ and $q$ were
singles) and the holes are $h$ and $a(\sigma(s))$. After $b^j$ the first double is on $p$ or on $q$,
and the partner of that state is $\sigma(u)+j$, which is a hole only
if $\sigma(u)=h=\sigma(z)$, i.e.\ $a(s)=z$, impossible as $s\notin\{p,q\}$, or if
$\sigma(u)=a(\sigma(s))$, which is excluded since the pair is not preserved. So the partner is
occupied and the last $a$ merges the double with the partner's token.

\v{C}ern\'y's automaton attains the bound: Behague and Johnson
\cite[Section~1, p.~4]{S-BJ22} give the minimum merging length over all $k$-subsets of $C_n$ as
$(k-2)n+1$. Taking $k=3$ therefore yields $\Tr(C_n)=n+1$.
\end{proof}

The proof constructs a triple-merging word of length at most $n+1$, of the form $a b^j a$
in the unlocked case or $a b^i a b^j a$ in the locked case. It does not claim that a shortest
word must have either form. A script replays the construction on all $1.2\cdot10^6$ synchronizing circular
automata with a rank-$(n-1)$ letter for $n\le8$. The rank
restriction is used only through the uniqueness of the collision pair and of the hole.
Exhaustively, the maximum of $\Tr$ over all circular synchronizing automata (any second letter,
$n^n$ maps) equals $n+1$ for $5\le n\le11$, and for $5\le n\le 11$ it is
attained only when the second letter has rank $n-1$; over circular automata with a rank-$(n-1)$
letter it equals $n+1$ for $7\le n\le12$ (the circular type; Table~\ref{S-tab:types} lists selected cycle-type maxima for $n=9,10,11$). The
arbitrary-rank upper bound is the consequence of Dubuc's theorem established at the start of this
section.

\begin{table}[ht]
\centering
\begin{tabular}{@{}lrrr@{}}
\toprule
cycle type of $b$ & $n=9$ & $n=10$ & $n=11$\\
\midrule
$n$ (circular) & 10 & 11 & 12\\
$(n-1)+1$ & 11 & 11 & 13\\
$(n-2)+2$ & 10 & 12 & 12\\
$(n-3)+3$ & 12 & 11 & 13\\
$(n-4)+4$ & 10 & 11 & 13\\
$(n-5)+5$ & -- & 11 & 12\\
maximum over all types & 12 & 13 & 14\\
\bottomrule
\end{tabular}
\medskip
\caption{Maximum of $\Tr$ over automata with a permutation letter $b$ of the given cycle type and
a letter $a$ of rank $n-1$ (exhaustive). The identity type gives
$\Tr=2$. The maximum is attained by the types $6{+}3$ and $3{+}3{+}2{+}1$ ($n=9$), $6{+}3{+}1$ and
$3{+}3{+}2{+}1{+}1$ ($n=10$), and by seven types for $n=11$, among them $8{+}2{+}1$, $6{+}3{+}2$ and
the involution type $2{+}2{+}2{+}2{+}2{+}1$.}
\label{S-tab:types}
\end{table}

For two $b$-cycles of lengths $\ell_1\ge\ell_2$ the data for $9\le n\le 12$ (Table~\ref{S-tab:types};
for $n=12$ the exhaustive enumeration of the permutation class in Section~\ref{S-sec:comp}) satisfy
$\Tr\le\ell_1+2\ell_2+1$, with equality for the types $8{+}1$, $10{+}1$ and $10{+}2$; the members
$\G{r}{0}$ have $\ell_1=2\ell_2$ and $\Tr=4r=\ell_1+2\ell_2$. Types with many short cycles behave
differently: for $b$ an involution the maximum is $n+3$ ($n=11,12$). These examples rule out a
uniform $n+1$ bound for the whole permutation class and show that short $b$-cycles need not
make $\Tr$ small; the actions of $a$ also matter.

\section{Permutation letters with one or two transpositions}\label{S-sec:sparse}

We next restrict the support of the permutation letter rather than requiring
it to be one cycle. Throughout this section, $a$ has rank $n-1$ and $b$ is
an involution. The arguments use full inverse images of pairs, as in the
standard subset approach to rendezvous times~\cite{S-GJ16,S-BJ22}; neither the
inverse-image method nor the near-permutation setting is introduced here.
Near-permutation semigroups have also been studied from the different
viewpoint of inverse-semigroup structure~\cite{S-AndreInverse04}. We retain
the original binary word length throughout. The subcase in which $z$ and
$h$ lie in the same transposition orbit of $b$ is already covered by
the extension tools of Zhu~\cite[Propositions~9--10]{S-Zhu24}, which give
$\Tr\le n+1$ there. The argument below also treats the cross-orbit cases.
Under additional subgroup hypotheses, Andr\'e describes idempotents with
triple fibres and reductions to local monoids~\cite[Lemma~5.13 and
Proposition~5.14]{S-AndreRegular04}. Those are structural predecessors;
the bounds here count steps in the original binary alphabet.

\begin{Stheorem}\label{S-thm:sparse}
Let $\mathcal A=(Q,\{a,b\})$ be synchronizing and suppose
$\operatorname{rank}(a)=n-1$.
If $b$ is a single transposition, then $\Tr(\mathcal A)\le n$ for $n\ge3$.
If $b$ is a product of exactly two disjoint transpositions, then
$\Tr(\mathcal A)\le n+1$ for $n\ge4$.
\end{Stheorem}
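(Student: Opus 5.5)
We work in the token picture of Lemma~\ref{S-lem:dist}. Since $b$ is a permutation and $a$ has rank $n-1$, $a$ has a unique collision pair $\{p,q\}$ with image $z$ and a unique hole $h$. By \eqref{S-eq:t3perm} it suffices to exhibit some $y\in Q\setminus\{z,h\}$ and a word $v$ with $\{z,y\}\act v=\{p,q\}$ and $|v|\le n-2$ (one transposition), respectively $|v|\le n-1$ (two transpositions). Equivalently, we follow the double created at $z$ by the first $a$ and steer it onto $p$ or $q$ while the partner state is not a hole. The main tool is the inverse-image method. Starting from $\{p,q\}$, we grow the family of pairs that can reach $\{p,q\}$, reading letters backwards. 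Because $b$ is an involution with small support, almost every state is $b$-fixed, so $b$ acts trivially on pairs outside $\operatorname{supp}(b)$. Hence essentially all the dynamics comes from $a$, and $b$ only repairs a few positions.

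\textbf{Step 1: the functional digraph of $a$.} Decompose $a$ into its tail $t_1\to\cdots\to t_m$, which starts at $h$ and ends on a cycle, together with its cycles. The collision image $z$ is the vertex where the tail joins a cycle, and $p,q$ are its two preimages (one the last tail vertex, one cyclic), or both lie on tails in degenerate cases. Synchronization forces the support of $b$ to connect all $a$-components and to prevent $\{p,q\}$ from being the only mergeable configuration. With a single transposition $(s\,s')$, there are at most two $a$-cycles, and the transposition must link them (or link a cycle to the tail). The main counting fact we use is that the pure-$a$ route from $z$ along the cycle through $z$ reaches $p$ or $q$ within $\ell$ steps, where $\ell$ is the cycle length. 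We take the partner $y=q\act a^{-j}$ along the $a$-preimage structure. The only failure is when the tail lock occurs, that is, when the hole runs in step with the double, as in the rotation lemma, Lemma~\ref{S-lem:rot}, but with $a$ replacing $b$ off the collision.

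\textbf{Step 2: unlocking with $b$.} When the lock occurs, insert the single letter $b$ at a moment when exactly one of the two tracked states lies in $\operatorname{supp}(b)$. This shifts the relative position of the double and the hole. For one transposition we must show the detour costs at most one extra letter beyond the pure route, whose length is bounded by $n-3$ since the tail contributes at least $h$ and one further state. That gives $\Tr\le n$. For two transpositions we allow up to two insertions, or one insertion plus a second lap of a short cycle, which gives the weaker bound $n+1$. The cases split by where the transposition endpoints sit, following the TT/TC/RR distinction: both on the tail, one on the tail and one on a cycle, or both cyclic. In each case we write an explicit word of the form $a\,a^{i}\,b\,a^{j}$ or $a\,a^{i}\,b\,a^{j}\,b\,a^{l}$ and bound $i+j+l$ by the number of states not visited.

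\textbf{Main obstacle.} The two-transposition RR case, where both transpositions join cyclic vertices, is the hardest. Here a single $b$ may move both tracked tokens simultaneously and fail to break the lock. For this case we first use synchronization, exactly as in the antipodal argument of Theorem~\ref{S-thm:circ}: some pair must be non-preserved, since otherwise a nontrivial invariant pair-relation would prevent merging. Then we build the word through that non-preserved pair, first applying $a\,b^{\epsilon}$ to place the double and then a second $a$, and we bound the length by summing cycle distances, which total at most $n-1$.
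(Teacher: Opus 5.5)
Your reduction through \eqref{S-eq:t3perm} is sound. The rest, however, is an outline, and the step that actually carries the theorem is missing.

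After disposing of $t>c$ with $a^{c+1}$, the paper follows the inverse chain $P_i=\{T_i,C_{v+i}\}$, $v=c-t$, where $P_i$ is the full preimage of $\{z\}$ under $a^{t-i}$, and asks how $b$ acts on it. If some $b(P_i)$ consists of two distinct cyclic states touching the receiving cycle, at most $c$ further letters finish. Otherwise the support of $b$ is forced into an aligned position. Then either a second $b$ gives an escape, or $b$ and $a^{-1}$ close up a family of non-target pairs containing $K$. That last alternative is excluded only by synchronization, through the restart argument: every inverse trajectory first passes $\{z\}\to K$, so such a closed family forbids every triple.

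Your Step~2 (insert $b$ when exactly one tracked state lies in $\operatorname{supp}(b)$) gives no reason such a moment exists within budget. There are configurations where it never exists:
\begin{itemize}
\item the single transposition $(T_x\;C_{v+x})$ swaps the two endpoints of $P_x$ and fixes every other $P_i$, so no triple ever merges;
\item for two transpositions, the analogous obstructions are $b(h)=p$ when $t=1$, $\{b(h),b(T_i)\}=\{C_v,C_{v+i}\}$, and antipodal endpoints on an external $a$-cycle.
\end{itemize}
You use a non-preservation argument in only one case, borrowed from Theorem~\ref{S-thm:circ}. Its antipodal map $s\mapsto s+n/2$ exists only because $b$ is an $n$-cycle there.

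Several structural claims you would rely on are also wrong.
\begin{itemize}
\item \emph{Your main obstacle is vacuous.} If every transposition of $b$ has only cyclic endpoints (including a single transposition joining two $a$-cycles), then $b$ preserves the set $R$ of $a$-cyclic states, on which $a$ is a bijection. Hence $|R\act w|=|R|\ge2$ for every word $w$, contradicting synchronization. Some transposition must touch the tail. The genuine difficulty lies in mixed configurations such as $b=(h\;X)(p\;Y)$ with $X,Y$ on an external cycle $D$. There the budget $c+j+3\le c+d+2=n+1$ also needs a separate argument that no third cycle exists.
\item \emph{Your length estimate fails for $t=1$.} You bound the pure route by $n-3$ because the tail contains $h$ and one further state, but the tail can be $\{h\}$ alone. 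The paper treats $t=1$ separately: for one transposition, synchronization forces the cyclic endpoint to differ from $C_{c-1}$, which gives $j\le c-2$.
\item \emph{The obstruction is not the hole moving in step with the double.} For $t\le c$ the double, starting at the cyclic state $z$, never reaches $q=T_{t-1}$ under $a$. No state other than $h$ reaches $T_{t-1}$ in $c-1$ steps either. So the pure-$a$ route fails generically, not just in a special aligned situation.
\item \emph{There is no degenerate case for $p,q$.} They are always one cyclic and one tail state; both cannot lie on the tail.
\end{itemize}
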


The bounds concern the existence of a word merging \emph{some} three
states. In particular, no linear bound on the reset threshold is asserted.
The proof is uniform in all cycle and tail lengths; the finite experiments
reported with the proof files are not premises of the theorem.

\subsection{Inverse pairs and the functional graph}
Write $h$ for the missing image of $a$, $z$ for its double image, and
$K=a^{-1}(\{z\})$ for its collision pair. For every pair $P$,
\begin{equation}\label{S-eq:sparse-size}
 |a^{-1}(P)|=2+\boldsymbol1_{z\in P}-\boldsymbol1_{h\in P}.
\end{equation}
A pair containing $z$ and avoiding $h$ is therefore an expansion target.
In a shortest inverse trajectory from a singleton to a set of size at
least three, the first growth to size two occurs at $\{z\}$ under $a^{-1}$.
Any earlier singleton steps can be removed. A later contraction to a
singleton likewise allows a shorter restart, and the empty set cannot
recover. Hence
\begin{equation}\label{S-eq:sparse-pairdistance}
 \Tr=2+\operatorname{dist}(K,\{P:z\in P,\ h\notin P\}),
\end{equation}
where the distance uses only size-preserving inverse-pair steps under
$a$ and $b$. An unreachable target means that no triple can merge.
We use $A(P)=a^{-1}(P)$ and $B(P)=b(P)$ below; sequences of these operators
are described in the order applied. Reversing the corresponding letters
gives an ordinary forward word of the same length.

The indegrees of the functional graph of $a$ are one zero, one two, and
otherwise one. Thus there is a unique noncyclic tail
\[
 T_0=h\longrightarrow T_1\longrightarrow\cdots\longrightarrow
 T_{t-1}\longrightarrow C_0=z
\]
entering a cycle $C=(C_0,\ldots,C_{c-1})$ of length $c$; all other
components, if present, are cycles. Here $t,c\ge1$ and
$K=\{T_{t-1},C_{c-1}\}$. If $t>c$, the word $a^{c+1}$ maps the three
distinct states $C_{c-1},T_{t-1},T_{t-c-1}$ to $z$, so $\Tr\le c+1$.
When $t\le c$, put $v=c-t$. The initial inverse-pair chain is
\begin{equation}\label{S-eq:sparse-chain}
 P_i=A^{t-i}(\{z\})=\{T_i,C_{v+i}\},\qquad 0\le i<t.
\end{equation}
For $i>0$, $A(P_i)=P_{i-1}$. At $P_0$ it contracts to a singleton if
$v>0$, and returns to $K$ if $v=0$. No $P_i$ is an expansion target.
Once a pair of cyclic states touching $C$ is reached, at most $c-1$
further $A$ steps reach an expansion target: stop at the first visit
to $z$ and then apply one more $A$.

\subsection{One transposition}
Let $R$ be the set of cyclic states of $a$. If $b$ preserves $R$, both
letters permute $R$, so synchronization forces $|R|=1$. Then $a^2$
merges three states. Otherwise $b$ exchanges a tail state $T_x$ with a
cyclic state $Y$. If $Y\notin C$, both letters permute the invariant set
$C$, which must be a singleton. For $t\ge2$, $a^2$ again suffices; for
$t=1$, $b$ fixes $z$ and the word $aba$ merges three states.

If $Y\in C$, synchronization excludes additional disjoint cycles, so
$n=t+c$. The long-tail case was handled above. Suppose $t\le c$, and
write $Y=C_y$. If $y=v+x$, the transposition fixes each pair in
\eqref{S-eq:sparse-chain}, including the pair whose endpoints it exchanges.
The chain is closed apart from a singleton contraction or a return to
$K$, so \eqref{S-eq:sparse-pairdistance} excludes synchronization.
Otherwise $B(P_x)=\{C_y,C_{v+x}\}$ consists of distinct cyclic states.
It reaches a target in $j=\min(y,v+x)$ inverse $a$ steps. Including the
initial $t-x$ steps, the $B$ step and the final expansion, the length is
\[
 t-x+j+2\le c+2.
\]
This is at most $n$ when $t\ge2$. If $t=1$, then $x=0$ and
$y\ne v=c-1$, whence $j=y\le c-2$ and the length is at most $c+1=n$.
This proves the first assertion of Theorem~\ref{S-thm:sparse}, without a
strong-connectivity assumption.

\subsection{Two transpositions with a one-state tail}
We first cover $t=1$ without assuming strong connectivity. The restriction
of $a$ to $R=Q\setminus\{h\}$ is a permutation $\pi$. Put
$p=C_{c-1}=\pi^{-1}(z)$, so $K=\{h,p\}$.
Synchronization excludes $b(h)=h$, since then both letters permute $R$.
It also excludes $b(h)=p$: in that case $B$ fixes $K$, whereas $A(K)$
is either a singleton or $K$ itself. Consequently $B(K)$ is a cyclic
pair. If it touches $C$, a triple is reached in at most
$1+1+(c-1)+1=c+2\le n+1$ steps.

The remaining configuration is necessarily
\[
 b=(h,X)(p,Y),\qquad X,Y\notin C.
\]
The points $X,Y$ must belong to the same $\pi$-cycle $D$: otherwise
$C$ together with the cycle of $Y$ is invariant and permuted by both
letters, contradicting synchronization. Further cycles are disjoint
closed components. Hence $n=1+c+d$, where $d=|D|$.
Let $X\pi^\Delta=Y$, with $1\le\Delta<d$. If $2\Delta\not\equiv0
\pmod d$, set $j=d-\Delta$. Starting from $K$, apply $B,A^j,B$ to get
successively
\[
 \{X,Y\},\quad\{Y,W\},\quad\{p,W\},\qquad W\in D\setminus\{X,Y\}.
\]
After $c-1$ further $A$ steps and the final expansion a triple is reached.
The total length, including the initial $A$ from $\{z\}$, is
$c+j+3\le c+d+2=n+1$.
If $2\Delta\equiv0\pmod d$, the points are antipodal. The antipodal
pairs of $D$ are closed under $A$; $B$ fixes them except that
$\{X,Y\}$ returns to $K$. Together with $K$ this is a closed inverse-pair
graph without a target. Equation~\eqref{S-eq:sparse-pairdistance} again
contradicts synchronization. This completes the tail-one case.

\subsection{Reduction to strong connectivity}
We may now assume $2\le t\le c$. A synchronizing automaton has a unique
terminal strongly connected component $U$, and its restriction to $U$
is synchronizing. Because $b$ is a permutation and $U$ is closed, $b$
permutes $U$. The component cannot be a singleton: such a point would
be fixed by both letters, lie outside the nontrivial receiving cycle,
and have itself as its only preimage under either letter, making it
unreachable from outside.
If $a|_U$ were injective, both letters would permute $U$, contradicting
synchronization. Thus $K\subset U$, so $C\subset U$ and $|U|\ge c+1\ge3$;
also $\operatorname{rank}(a|_U)=|U|-1$.
If $b|_U$ is a single transposition, the bound just proved applies.
If it is the identity, the cycle $C$ cannot synchronize. Otherwise it
has two transpositions, and it suffices to prove the bound on $U$.
The restricted tail may be shorter; the long-tail and tail-one arguments
apply again if necessary. We can therefore assume strong connectivity
and $2\le t\le c$ in what follows. Strong connectivity forces $h$ to
be moved by $b$, since $h$ has no incoming $a$ edge.

Let $k$ be the number of moved states on the tail. Then $k\ge1$.
The case $k=4$ preserves $R$, and both letters would permute this set
of size at least two. For the same reason, when $k=2$ the two tail states
cannot be exchanged with each other. Thus $k=1,2,3$, with precisely the
following possibilities.

\subsection{One moved tail state}
Here the only moved tail state is $h$.
First suppose $b(h)=C_x$. If $x\ne v$, then $B(P_0)$ is a cyclic pair
touching $C$, yielding length at most $t+c+1\le n+1$.
Suppose $x=v$. If the second transposition avoids
$C_{v+1},\ldots,C_{c-1}$, it fixes the entire initial chain, so no target
is reachable. Otherwise it exchanges $C_{v+w}$, for some $1\le w<t$,
with a different cyclic state $V$. From $\{z\}$ take $A^{t-w},B$,
reaching $\{T_w,V\}$. If the cyclic coordinate reaches $z$ in fewer than
$w$ further $A$ steps, the next step expands before the tail reaches $h$,
at total cost at most $t+1$.
Otherwise $w$ steps reach $\{h,V'\}$ with $V'\ne C_v$.
Indeed this is automatic on another cycle; on $C$, write $V=C_q$.
Absence of an earlier visit to $z$ gives $q\ge w$, and
$q-w=v$ would make $V=C_{v+w}$, contrary to the transposition having
distinct endpoints. A further $B$ now gives a cyclic pair containing
$C_v$. At most $v$ further $A$ steps and one expansion suffice, for
total cost at most $t+v+3=c+3\le n+1$.

Next suppose $X=b(h)$ lies on another cycle. The second transposition
must cross from $C$ to an external cycle, since otherwise both letters
permute $C$. Its external endpoint $Y$ must be on the same cycle $D$
as $X$: if not, $C$ and the cycle of $Y$ form an invariant set permuted
by both letters. No further cycle can occur. Thus
\[
 b=(h,X)(C_y,Y),\qquad n=t+c+d,
\]
where $X,Y\in D$ and $d=|D|$.
If $y\ne v$, the pair $B(P_0)=\{X,C_v\}$ expands after $v+1$ more
$A$ steps, for length $c+2$. If $y=v$, it becomes $\{X,Y\}$ on $D$.
The same non-antipodal rotation as in the tail-one argument takes this
pair to $\{Y,W\}$ in some $j\le d-1$ steps, and $B$ then gives
$\{C_v,W\}$. The total length through the final expansion is
$t+1+j+1+v+1=c+j+3\le c+d+2\le n+1$.
In the antipodal case, the antipodal pairs on $D$ and the initial chain
$\{P_0,\ldots,P_{t-1}\}$ form a closed inverse-pair graph: only
$\{X,Y\}$ and $P_0$ are exchanged by $B$, and all other pairs are
fixed. There is no target, contrary to synchronization.

\subsection{Two moved tail states}
The transpositions have the form $(h,X)(T_i,Y)$, where $0<i<t$ and
$X,Y$ are distinct cyclic states. If $C_v\notin\{X,Y\}$, then
$B(P_0)=\{X,C_v\}$ is a cyclic pair touching $C$. If
$C_{v+i}\notin\{X,Y\}$, the same holds for
$B(P_i)=\{Y,C_{v+i}\}$. These give lengths at most $t+c+1$ and
$t+c+1-i$, respectively, both at most $n+1$.
If neither condition holds, distinctness forces
$\{X,Y\}=\{C_v,C_{v+i}\}$. According to the assignment of the endpoints,
$B$ either fixes both $P_0,P_i$ or exchanges them as whole pairs; it
fixes every other $P_j$. Thus the initial chain is closed without a
target, which synchronization excludes. This covers both alignments,
including the possibility of exchanging two different initial pairs.

\subsection{Three moved tail states}
There is one tail--tail exchange and one tail--cycle exchange
$(T_i,C_x)$. Strong connectivity forces the cyclic endpoint onto $C$
and excludes additional cycles. Hence $n=t+c$ and $t\ge3$.
If $x\ne v+i$, then $A^{t-i},B$ produces the cyclic pair
$\{C_x,C_{v+i}\}$. With $\min(x,v+i)$ more $A$ steps and an expansion,
the length is at most $c+2\le n$. Only $x=v+i$ remains.

If the cross edge is $(h,C_v)$, write the tail exchange as
$(T_i,T_j)$ with $1\le i<j<t$. The sequence
$A^{t-j},B,A^i,B$ gives
\[
 \{T_j,C_{v+j}\},\quad\{T_i,C_{v+j}\},\quad
 \{h,C_{v+j-i}\},\quad\{C_v,C_{v+j-i}\}.
\]
All displayed cycle indices lie between $0$ and $c-1$. Apply $A^v$
and one final $A$. The length is $c+3-(j-i)\le c+2\le n$.

Otherwise the tail exchange is $(h,T_j)$ and the cross edge is
$(T_i,C_{v+i})$. Strong connectivity forces $0<i<j<t$: a path from
$C$ to $h$ must first enter at $T_i$ and reach $T_j$ before using its
edge to $h$. Put $d=j-i$. Starting with $A^t,B$ gives
$\{T_j,C_v\}$. If $v\le d$, use $A^v$ to reach
$\{T_{j-v},z\}$. Its tail coordinate is at least $i\ge1$, so a final
$A$ expands to three, at cost $c+2\le n$.
If $v>d$, use $A^d,B$ to obtain $\{C_{v+i},C_{v-d}\}$.
The two indices differ by $j<c$, and $v-d>0$, so these states are
distinct. Apply $A^{v-d}$ and one final $A$. The total length is
$c+3\le t+c=n$.
All possibilities have now been covered, proving Theorem~\ref{S-thm:sparse}.

\subsection{Sharpness of the two-transposition bound}
\begin{Sproposition}[Sharpness for two transpositions]
\label{S-prop:two-transpositions-sharp}
For every $n\ge5$, there is a synchronizing binary automaton on $n$ states
whose letter $a$ has rank $n-1$, whose letter $b$ is a product of exactly
two disjoint transpositions, and whose triple rendezvous time is $n+1$.
\end{Sproposition}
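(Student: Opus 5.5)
The plan is to realize equality in the tail-one case of the proof of Theorem~\ref{S-thm:sparse}. That case gives the length $c+j+3$ with $j=d-\Delta$, which reaches $c+d+2=n+1$ exactly when $\Delta=1$. So for $n\ge5$ I would take states $h$, a cycle $C=(C_0,\dots,C_{c-1})$ and a cycle $D=(D_0,\dots,D_{d-1})$, with $d=3$ and $c=n-4\ge1$. The letter $a$ sends $h\mapsto C_0=z$ and rotates each cycle forward, $C_i\mapsto C_{i+1}$ and $D_i\mapsto D_{i+1}$ with indices mod $c$ and mod $d$. Then $a$ has rank $n-1$, hole $h$ and collision pair $K=\{h,p\}$ with $p=C_{c-1}$. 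Put $b=(h,X)(p,Y)$ with $X=D_0$ and $Y=D_1=X\act a$, so that $\Delta=1$ and $2\Delta\not\equiv0\pmod 3$. I would work throughout with \eqref{S-eq:sparse-pairdistance}, $\Tr=2+\operatorname{dist}(K,\{P:z\in P,\ h\notin P\})$, using the inverse-pair steps $A$ and $B$.

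\emph{Upper bound.} Follow the route from the tail-one argument. Start with $B(K)=\{X,Y\}$, apply $A^{2}$ to reach $\{Y,D_2\}$, then $B$ to reach $\{p,D_2\}$, then $A^{c-1}$ to reach $\{z,D_{2-(c-1)}\}$. This last pair contains $z$ and avoids $h$, so it is a target. The distance is $1+2+1+(c-1)=c+3$, so $\Tr\le c+5=n+1$.

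\emph{Lower bound.} I would trace the whole reachable inverse-pair graph from $K$.
\begin{itemize}
\item $A(K)=a^{-1}(\{h,p\})=\{C_{c-2}\}$ is a singleton (or $\{p\}$ when $c=1$), so it is dead. Likewise, any pair containing $h$ goes under $A$ to a singleton.
\item Hence the only useful move from $K$ is $B$, to $\{X,Y\}=\{D_0,D_1\}$.
\item On pairs inside $D$, $A$ rotates backwards and $B$ fixes every pair avoiding $X,Y$. From $\{D_0,D_1\}$, $A$ gives $\{D_2,D_0\}$ and then $\{D_1,D_2\}$. Applying $B$ earlier either returns to $K$ or produces a pair containing $h$, which is dead.
\item The first pair meeting $C$ is therefore $\{p,D_2\}$, reached after $4$ steps, and any pair meeting $C$ is reached through this one.
\item From a mixed pair $\{C_i,W\}$ with $W\in D$, the letter $B$ only moves $W$, to $h$ (dead) or to $p=C_{c-1}$, the farthest position from $z$. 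Reaching $z$ from $C_i$ under $A$ costs $i$ steps, and $B$ does not move $C$-states other than $p$. A potential along the lines of Main Lemma~\ref{M-lem:potential}, taking the index of the lowest $C$-coordinate, shows that no target is reached in fewer than $c-1$ further steps.
\end{itemize}
The main obstacle is making this exhaustive shortest-path argument watertight, especially for the pairs inside $C$ produced by $B$ and in the degenerate case $c=1$. The cleanest way is to write down an explicit label on every reachable pair and check one-step decrease by at most one, as in Main Lemma~\ref{M-lem:lip}.

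\emph{Synchronization.} Every pair must merge, equivalently every pair must reach $K$ by forward letters. I would reduce as follows.
\begin{itemize}
\item Pairs in $D$: rotating by $a$ brings one entry to $X$, and then $b$ moves it to $h$, giving a pair with one entry $h$.
\item Pairs in $C$: rotate to $\{p,C_i\}$ and apply $b$ to move $p$ into $D$, giving a mixed pair.
\item Mixed pairs $\{C_i,D_j\}$: when $\gcd(c,3)=1$, $a$-rotation and $b$ reach $\{p,Y\}$, which $b$ fixes; instead use the non-antipodality of $\{X,Y\}$ to show the offsets are not invariant. Then reach $\{h,p\}$ directly, or use a pair of the form $\{h,C_i\}$, which $a$ sends to $\{z,C_{i+1}\}$ and eventually, by rotation, to $\{z,p\}$ and then via $a$-step $h$-collision.
\end{itemize}
A cleaner alternative is to invoke the converse of the tail-one case: the only obstruction to synchronization found there was the antipodal one, $2\Delta\equiv0$. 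I would make that converse explicit by checking that the inverse-pair graph from $K$ reaches every pair.
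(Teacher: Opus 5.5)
Your automaton is a valid alternative to the paper's, and your $T_3$ computation is essentially right; the gap is in the synchronization proof. Both constructions are instances of the tail-one configuration with $\Delta=1$. The paper makes the receiving cycle the fixed point $y$ ($c=1$) and the external cycle long ($m=n-2$), and gets synchronization from an explicit reset word. You make the receiving cycle long ($c=n-4$) and take $D$ to be a $3$-cycle; the two automata coincide at $n=5$.

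\textbf{What already works.} Your upper-bound route is correct. The lower bound goes through as you outline it:
\begin{itemize}
\item Apart from singleton contractions, the pairs $K$, $\{D_0,D_1\}$, $\{D_0,D_2\}$, $\{D_1,D_2\}$, $\{h,D_2\}$ form a closed family. Its only exit is $\{D_1,D_2\}\xrightarrow{B}\{C_{c-1},D_2\}$, at distance $4$ from $K$.
\item Put $\Phi(P)=\min_{s\in P}\phi(s)$ with $\phi(C_i)=i$ and $\phi\equiv c-1$ on $D\cup\{h\}$. This is $b$-invariant, decreases by at most one under a size-preserving $A$, and vanishes on targets. So at least $c-1$ further steps are needed after the exit.
\end{itemize}
There are two small slips. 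For $c=1$ one has $A(K)=K$, not $\{p\}$. Also, $\{h,C_0\}$ returns to $K$ under $A$ rather than contracting to a singleton.

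\textbf{The gap: synchronization.} This is part of the statement and your sketch does not establish it.
\begin{itemize}
\item Your mixed-pair step ends at $\{z,p\}$ ``and then via $a$-step $h$-collision''. But $a$ merges only $K=\{h,p\}$. Since $h\notin Q\act a$, the pair $K$ can be entered only by $b$ from $\{D_0,D_1\}$, so every pair must be steered back into $D$.
\item The appeal to a ``converse'' of the tail-one case is unavailable. That argument assumes synchronization and only derives necessary conditions.
\item Your sketch never treats $3\mid c$. There $a$ splits the mixed pairs $\{C_i,D_j\}$ into three orbits indexed by $j-i\bmod 3$, and rotation alone makes only the class of $\{C_{c-1},D_2\}$ (class $0$) good.
\end{itemize}

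\textbf{A fix.} Call a pair good if some word sends it to $K$.
\begin{itemize}
\item The three $D$-pairs are good, since $a$ cycles them and $\{D_0,D_1\}\act b=K$.
\item $\{C_{c-1},D_2\}\act b=\{D_1,D_2\}$, so its whole $a$-orbit is good. If $3\nmid c$, this orbit contains every mixed pair, because $(1,1)$ generates $\mathbb Z_c\times\mathbb Z_3$.
\item If $3\mid c$, link the three classes in turn:
  \begin{itemize}
  \item $\{C_{c-2},C_{c-1}\}\act b=\{C_{c-2},D_1\}$ lies in class $0$, so all $C$-pairs of difference $1$ are good.
  \item $\{C_0,D_1\}\act b=\{C_0,C_{c-1}\}$ has difference $1$, so class $1$ is good.
  \item $\{C_{c-3},C_{c-1}\}\act b=\{C_{c-3},D_1\}$ lies in class $1$, so all $C$-pairs of difference $2$ are good.
  \item $\{C_1,D_0\}\act ba=\{C_2,C_0\}$ has difference $2$, so class $2$ is good.
  \end{itemize}
\item Every $C$-pair becomes a mixed pair after rotating one point to $C_{c-1}$ and applying $b$.
\item For $i\ne c-1$, $\{h,C_i\}\act a=\{C_0,C_{i+1}\}$ is a $C$-pair, and $\{h,D_j\}\act a=\{C_0,D_{j+1}\}$ is mixed.
\end{itemize}
Hence every pair merges, the automaton is synchronizing, and with this step added your construction proves the proposition.
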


\begin{proof}
Put $m=n-2\ge3$ and take
$Q=\{0,1,\ldots,m-1,x,y\}$. Let $a$ cyclically permute
$0,1,\ldots,m-1$ and satisfy $xa=ya=y$, and let
$b=(m-1\;x)(0\;y)$, fixing the other states.
Then $a$ has rank $m+1=n-1$ and $b$ has the required form.
The word $ab a^{m-1}ba$ maps $0,x,y$ to $y$, so $T_3\le m+3$.

For the reverse inequality, trace full inverse images from a singleton
until the first set of size three. The only first pair is
$K=\{x,y\}$, reached from $\{y\}$ by $a^{-1}$.
Returns to a singleton can be discarded by restarting there, and pair
loops can be deleted. At $K$, inverse $a$ is a loop and inverse $b$
gives $E_0=\{m-1,0\}$. The ordinary adjacent pairs
\[
 E_j=\{m-1-j\pmod m,-j\pmod m\},\qquad 0\le j<m,
\]
form a directed cycle under inverse $a$.
The letter $b$ returns $E_0$ to $K$, sends
$E_1=\{m-2,m-1\}$ to $\{m-2,x\}$, and sends
$E_{m-1}=\{0,1\}$ to $\{y,1\}$; it fixes every other $E_j$.
At $\{m-2,x\}$, inverse $a$ contracts to a singleton and $b$ returns
to $E_1$. At $\{y,1\}$, inverse $a$ expands to $\{x,y,0\}$ and $b$
returns to $E_{m-1}$. This lists all reachable pair types, also for $m=3$.
Consequently the first expansion needs the initial $a,b$, at least $m-1$
steps of inverse $a$ from $E_0$ to $E_{m-1}$, and the final $b,a$.
Thus $T_3\ge m+3$, proving equality.

It remains to verify synchronization. Set $c=ab$. Its unique cycle is
\[
 C'=(0,1,\ldots,m-2,x),
\]
and its tail is $m-1\to y\to0$. The map $c^m$ fixes $C'$ pointwise
and maps $Q$ onto $C'$. On this cycle, $f=ac^m$ has the action
\[
 i f=i+1\quad(0\le i\le m-3),\qquad (m-2)f=m-2,\quad xf=x.
\]
Thus $Qc^m f^{m-2}=\{m-2,x\}$. Applying $c^{m-1}$ gives
$\{m-3,m-2\}$, which $f$ merges. An explicit reset word is therefore
\[ c^m f^{m-2}c^{m-1}f. \qedhere \]
\end{proof}

\noindent
This construction adjoins two states to the classical \v{C}ern\'y automaton.
On the $m$ original states, that automaton is generated by the cyclic
permutation and the identity modified at $m-1$ to map it to $0$.
Its triple time is $m+1$~\cite[Section~3, footnote~6]{S-GJ16}.

\section{Further bounds for involution letters}\label{S-sec:three-swaps}

The number of transpositions can be relaxed substantially when their
positions in the functional graph of $a$ are restricted. We first give
such a bound, and then complete the case of three arbitrary disjoint
transpositions. We continue to count original binary letters.

\begin{Slemma}\label{S-lem:three-terminal}
Let $(Q,\{a,b\})$ be synchronizing, with $n=|Q|\ge3$,
$\operatorname{rank}(a)=n-1$, and $b$ an involution. Let $U$ be its unique
terminal strongly connected component. If $|U|\le2$, then $\Tr\le3$.
If $|U|\ge3$, the restriction is synchronizing,
$\operatorname{rank}(a|_U)=|U|-1$, and any triple word for the restriction
is a triple word for $Q$. The transpositions of $b|_U$ are precisely the
transpositions whose endpoints lie in $U$. Cyclicity under $a$ is unchanged
by this restriction.
\end{Slemma}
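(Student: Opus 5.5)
We split according to $|U|$, reusing the argument of the sink lemma (Lemma~\ref{S-lem:sink}) and refining it to the setting $\operatorname{rank}(a)=n-1$, $b$ an involution. First recall why $U$ exists and is unique. In a synchronizing automaton the image of a reset word is a single state, every state reaches it, and so exactly one strongly connected component is closed under both letters. Since $b$ is a permutation and $U$ is closed under $b$, the restriction $b|_U$ is a permutation of $U$. Because $b^2=\mathrm{id}$, it is again an involution. Its transpositions are exactly those transpositions $(s\;t)$ of $b$ with $s\in U$, since then $t=b(s)\in U$. This settles the statement about transpositions.

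\emph{Case $|U|\le2$.} If $U=\{s\}$, then $s$ is fixed by both letters. Some state $u\ne s$ has $u\act x=s$ for a letter $x$, because every state reaches $s$ and $n\ge3$. This $x$ cannot be $b$: $b$ is injective and $s\act b=s$. So $x=a$, and the unique collision pair of $a$ is $\{u,s\}$ with image $s$.
\begin{itemize}
\item If some $w\ne s$ in $Q\act a$ satisfies $w\act a=s$ or $w\act b=s$, then three states merge in two letters.
\item In general, I would argue with the hole $h$: strong connectivity is absent, but $|Q\act a|=n-1\ge2$ provides some $w\in Q\act a\setminus\{s\}$.
\end{itemize}
Here I expect difficulty, since $d(\{s,w\})$ need not be $1$. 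The plan is to use the involution. The states that reach $s$ in one step are the $a$-preimages of $s$, namely $u$ and $s$ itself, while $b$ fixes $s$. Take the state $u$ itself: if $u\in Q\act a$, then the word $aa$ merges $a^{-1}(u)\cup\{u,s\}$, provided $a^{-1}(u)\ne\{s\}$. I would carry out this small case analysis on whether $u$ is the hole and whether $u\in U$, aiming at a word of length at most $3$.

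If $|U|=2$, say $U=\{s,s'\}$, then some letter merges $U$, exactly as in the sink lemma. That letter must be $a$, so $K=U$. Some state outside $U$ enters $U$ by a letter $x$, and $xa$ merges a triple. So $\Tr\le2\le3$.

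\emph{Case $|U|\ge3$.} A reset word of $Q$ restricts to a reset word of $U$, so $U$ is synchronizing. A word merging three states of $U$ in $\mathcal A|_U$ merges the same three states in $\mathcal A$, since the actions agree on the closed set $U$.

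For the rank, $a|_U$ is not injective, because otherwise both letters would permute $U$ and $|U|\ge3>1$ would contradict synchronization. Hence $a|_U$ has a collision, which must be the unique collision pair $K$ of $a$, so $K\subseteq U$. Moreover $a|_U$ loses exactly one element of $U$ from its image, because $a$ is injective off $K$. Therefore $\operatorname{rank}(a|_U)=|U|-1$.

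Finally, cyclicity. A state of $U$ lies on an $a$-cycle in $Q$ exactly when it does in $U$: the forward $a$-orbit of a point of $U$ stays in $U$, and a cycle through a point of $U$ consists of forward iterates of that point.

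The step I expect to be the main obstacle is $|U|=1$, where the bound $3$ is sharper than the bound $n$ of the sink lemma. The plan there is to exploit that $b$ fixes $s$ and is injective, which forces the $a$-preimages of $s$ to provide the collision directly, combined with one extra $a$ or $b$ step.
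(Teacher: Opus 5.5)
Your arguments for $|U|=2$, for $|U|\ge3$ (synchronization and triple words of the restriction, $K\subseteq U$, $\operatorname{rank}(a|_U)=|U|-1$), for the transpositions of $b|_U$, and for cyclicity are correct and essentially the paper's. The gap is the case $U=\{s\}$, where you announce a case analysis but do not supply its key step. You correctly find $a^{-1}(s)=\{u,s\}$, and that $a^2$ merges a triple when $u\in Q\act a$. Several of your conditions are vacuous: the proviso $a^{-1}(u)\neq\{s\}$ is then automatic, since $s\act a=s\neq u$; $u\in U$ is impossible; and $w\act b=s$ never holds for $w\neq s$. So the only unresolved situation is $u=h$, the missing image of $a$, i.e.\ a tail of length one. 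There $a^{-1}(\{h,s\})=\{h,s\}$, so no power of $a$ merges three states, and in fact no word of length two does. This is exactly why the bound is $3$: the involution letter must enter through an argument you have not given.

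Two ingredients are needed. First, synchronization forces $b(h)\neq h$. Since the collision pair is $\{h,s\}$ and $h$ is not an image, $a$ permutes $R=Q\setminus\{h,s\}$. If $b$ fixed $h$, then $b$, which fixes $s$, would also permute $R$. Then $R$, nonempty because $n\ge3$, would be closed under both letters and would contain a closed component other than $\{s\}$, contradicting the uniqueness of $U$. Second, put $y=b(h)$; it lies in $R$, being neither $h$ nor $s=b(s)$. Let $w\in R$ be its $a$-preimage. Then $aba$ sends $w\mapsto y\mapsto h\mapsto s$ and $h\mapsto s\mapsto s\mapsto s$, and fixes $s$. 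So it merges the three distinct states $w,h,s$, giving $T_3\le3$. With this subcase supplied, your proof is complete.
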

\begin{proof}
Both letters preserve $U$. A permutation preserving a finite set cannot
map a point outside it into it, so no $b$-edge crosses its boundary.
For $|U|\ge2$, injectivity of $a|_U$ would make both letters permutations
on $U$, contradicting synchronization. Thus the global collision pair is
in $U$, and the restriction has rank $|U|-1$.

If $|U|=2$, $a|_U$ is constant at a point $z$. Some $a$-edge from
outside enters $U$, and it must enter the other point $v$, since $z$
already has its two preimages in $U$. Hence $a^2$ merges that outside
point, $v$ and $z$.
If $|U|=1$, its sink is the double image $z$: otherwise its only
$a$-preimage would be itself, and no outside state could reach it.
Its receiving cycle has length one. A tail of length at least two gives
$a^2$. With tail $h\to z$, the word $aba$ suffices if $b(h)\ne h$:
its inverse sequence is $\{z\},\{h,z\},\{b(h),z\}$ followed by an
expansion. If $b(h)=h$, the set $\{h,z\}$ has no incoming transition
from the other states, contradicting synchronization.
The remaining assertions follow from restriction to a closed subset;
in particular an $a$-orbit is cyclic in the restriction exactly when
it was cyclic in $Q$.
\end{proof}

\begin{Stheorem}\label{S-thm:no-rr}
Let $\mathcal A=(Q,\{a,b\})$ be synchronizing on $n\ge3$ states, with
$\operatorname{rank}(a)=n-1$ and $b$ an involution.
If every transposition of $b$ has at least one endpoint on the noncyclic
tail of $a$, then $\Tr(\mathcal A)\le n$.
\end{Stheorem}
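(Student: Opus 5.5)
We plan to work entirely in the inverse-pair formalism of Section~\ref{S-sec:sparse}, using \eqref{S-eq:sparse-pairdistance} and the initial chain \eqref{S-eq:sparse-chain}.

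\textbf{Reductions.} First apply Lemma~\ref{S-lem:three-terminal}. If $|U|\le2$ we get $\Tr\le3\le n$. Otherwise we pass to $U$. The transpositions of $b|_U$ are among those of $b$, and cyclicity is unchanged. The tail of $a|_U$ consists of tail states of $a$ lying in $U$. A transposition of $b|_U$ had a tail endpoint in $Q$, and that endpoint lies in $U$, so it is a tail endpoint in $U$. Hence the hypothesis is inherited, and we may assume strong connectivity.

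The long-tail case $t>c$ gives $\Tr\le c+1\le n$ via $a^{c+1}$. So we assume $t\le c$ and $v=c-t$. Strong connectivity forces $b(h)\ne h$. Every cyclic state moved by $b$ is exchanged with a tail state. Since any invariant union of $a$-cycles avoiding the tail would be permuted by both letters, all cycles other than $C$ must be joined to the tail by transpositions.

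\textbf{Main step.} We trace the chain $P_t=\{z\}$, then $P_{t-1},\dots,P_0$ under $A$, inserting $B$ at a well-chosen index $i$. Write $B(P_i)=\{b(T_i),b(C_{v+i})\}$. Because every transposition touches the tail, $C_{v+i}$ is either fixed or sent to a tail state.

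\emph{Case (a).} Take the largest $i$ such that $b(T_i)$ is a cyclic state $X\ne C_{v+i}$ while $C_{v+i}$ is fixed. If $X\in C$, we have a cyclic pair touching $C$. At most $c-1$ further $A$ steps and an expansion give length at most $(t-i)+1+(c-1)+1\le t+c+1-i$. Since $n\ge t+c$, this is at most $n$ once $i\ge1$. The case $i=0$ needs one extra saved step, which I would obtain from $\min(x,v)\le c-2$ as in the one-transposition case. If $X$ lies on another cycle $D$, I would use a second transposition linking $D$ back to the tail, exactly as in the two-transposition argument. The count then becomes $c+d+\dots$, which is again at most $n$ because $D$ contributes $d$ states to $n$.

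\emph{Case (b).} If no such $i$ exists, then every tail state's partner is either on the tail or equals the aligned $C_{v+i}$. Tail--tail swaps together with aligned swaps $(T_i,C_{v+i})$ permute the chain pairs $P_i$ among themselves or else produce mixed pairs $\{T_j,C_{v+i}\}$. For these I would follow the three-moved-tail-states computation, using a sequence $A^{t-j},B,A^{\cdot},B$ to reach two distinct states of $C$. In that computation the tail exchange shortens the length by $j-i$, giving at most $c+2\le n$ when $t\ge2$. When no such mixing is possible, the set of pairs reachable from $K$ is closed and contains no target, which contradicts synchronization via \eqref{S-eq:sparse-pairdistance}.

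\textbf{Obstacle.} The hard part is bookkeeping with many transpositions: arbitrarily many tail--tail swaps and tail--cycle swaps to several cycles. I will first establish a monotonicity claim. If a route leaves the chain at index $i$ and returns to a chain-like pair, that pair lies at a strictly smaller tail index, so the total number of $A$ steps along the tail is at most $t$. This lets me charge every $B$ step to a distinct tail state, or to the cycle lengths of the external cycles. The charging should give the total bound $t+c+\sum d\le n$, with one $B$ step absorbed because $h$'s partner yields the final saving.
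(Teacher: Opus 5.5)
Your reductions are sound. So is your Case (a), although the external-cycle detour is unnecessary: the pair $B(P_i)=\{X,C_{v+i}\}$ always contains $C_{v+i}\in C$. It therefore expands after at most $v+i$ further steps, at total cost $c+2\le t+c+d\le n$, wherever $X$ lies.

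\textbf{The gap is in Case (b).} Its opening deduction is wrong. If every crossed index $i\in I$ has $C_{v+i}$ moved, then injectivity of $i\mapsto C_{v+i}$ and $|X|=|I|$ give only $X=\{C_{v+i}:i\in I\}$. That means $b(T_i)=C_{v+f(i)}$ for some permutation $f$ of $I$, not the diagonal swaps $(T_i,C_{v+i})$.

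Without tail--tail swaps, $B(P_i)=\{T_{f^{-1}(i)},C_{v+f(i)}\}$. Every such image is a chain pair exactly when $f^2=\mathrm{id}$, and only then does your closure argument apply. If $f$ has a cycle of length at least three, the chain is left and you give no route.

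The paper handles this case as follows. Here $0\in I$, since $h$ is moved. Take the minimum $k_0$ of such a cycle and put $i=f^{-1}(k_0)$, $j=f^{-1}(i)$. Then $k_0<i,j$ and $B(P_i)=\{T_j,C_{v+k_0}\}$. Either $z$ is hit early while the tail point is still positive, at cost at most $c+1$. Or $A^j$ gives $\{h,C_w\}$ with $w=v+k_0-j<v$, which lies below every moved cyclic point. In that case $B$ sends $h$ to $C_{v+f(0)}$ and fixes $C_w$, for total cost $c+k_0-i+3\le c+2$. The three-moved-tail computation of Theorem~\ref{S-thm:sparse} has a single cross edge, hence no $f$, so it contains nothing like this choice.

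\textbf{With tail--tail swaps present, the lock is not addressed.} Saying you will ``follow the three-moved-tail computation'' leaves the essential difficulty untouched. With several crossed tail points, the cyclic coordinate can land exactly on a moved point $C_{v+i''}$ at the moment the tail coordinate reaches a crossed $T_{i'}$. The second $B$ then returns it to the tail instead of producing a cyclic pair.

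The paper excludes this lock by a specific choice. Put $i_0=\min I$. Strong connectivity yields a tail exchange $(T_p,T_q)$ with $p<i_0<q$: a path from the cycle to $h$ must cross the cut at $i_0$ by a tail--tail edge, and if $i_0=0$ any exchange works. Start at $P_p$, whose cyclic point is fixed because $p\notin I$, and apply $B$. Then descend exactly to $T_{i_0}$, where the cyclic coordinate is $v+p-q+i_0<v+i_0$, below all moved points.

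That route costs $c+3$, not the $c+2$ you anticipate. It fits in $n=t+c$ only because $t\ge |I|+2k\ge3$, where $k\ge1$ is the number of tail--tail swaps; alignment plus strong connectivity also rule out external cycles, so your $\sum d$ is zero.

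Your closing monotonicity-and-charging paragraph does not substitute for these choices. The monotonicity claim fails as stated, because $B$-steps can raise the tail index: $P_p\mapsto\{T_q,C_{v+p}\}$ with $q>p$, or $P_i\mapsto P_{f(i)}$ with $f(i)>i$. The number of tail $A$-steps in the working route can also exceed $t$. No charging rule is actually specified.
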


\begin{proof}
Write the tail and its receiving cycle as
\[
 T_0=h\to\cdots\to T_{t-1}\to C_0=z,
 \qquad C=(C_0,\ldots,C_{c-1}).
\]
If $b$ is the identity, synchronization forces $c=1$, $t=n-1$, and no
other cycle, so $a^2$ merges three states. A single transposition is
covered by Theorem~\ref{S-thm:sparse}. Also, if $t>c$, the word $a^{c+1}$
merges $C_{c-1},T_{t-1},T_{t-c-1}$. We may assume at least two
transpositions and $t\le c$; the hypothesis then forces $t\ge2$.

By Lemma~\ref{S-lem:three-terminal}, it suffices to prove the bound on a
strongly connected terminal restriction with at least three states.
Cyclicity under $a$ is unchanged by restriction to a closed subset, and
$b$ cannot exchange a point inside that subset with one outside it.
Thus the hypothesis is inherited. If the restriction has at most one
transposition, the cases just treated apply. Hence assume strong
connectivity and $2\le t\le c$. In particular $h$ is moved by $b$.

Use $A(S)=a^{-1}(S)$, $B(S)=b(S)$, and put $v=c-t$.
All operator strings below are in inverse-image order from $\{z\}$.
The initial pairs are
\[
 P_i=A^{t-i}(\{z\})=\{T_i,C_{v+i}\},\qquad 0\le i<t.
\]
Their $A$ edges go down this chain; at $P_0$ they contract to a singleton
if $v>0$ and return to the collision pair if $v=0$.
None is an expansion target. A closed family of such pairs excludes a
triple, since any singleton contraction permits a restart at the same
unique first collision pair. Once two cyclic points touching $C$ are
reached, stop at the first visit to $z$ and apply one more $A$ to expand.

Let $I$ be the tail indices of tail--cycle transpositions, $X$ their
cyclic endpoints, and $k$ the number of tail--tail transpositions.
Then $|I|=|X|=:m$ and $t\ge m+2k$. We have $m\ge1$, since otherwise
both letters permute the invariant cyclic-state set.
If $C_{v+i}\notin X$ for some $i\in I$, then
\[
 B(P_i)=\{b(T_i),C_{v+i}\}
\]
is a cyclic pair. Its first visit to $z$ takes $j\le v+i$ steps,
giving total length $(t-i)+1+j+1\le c+2\le n$.

Otherwise injectivity of $i\mapsto C_{v+i}$ and equality of cardinalities
give $X=\{C_{v+i}:i\in I\}$. All moved cyclic points are then on $C$;
strong connectivity excludes other cycles, so $n=t+c$.
There is a permutation $f$ of $I$ such that
\[
 b(T_i)=C_{v+f(i)}\quad(i\in I).
\]
The permutation $f$ need not be an involution.

Suppose first that $k\ge1$, and put $i_0=\min I$.
If $i_0=0$, choose any tail exchange $(T_p,T_q)$ with $p<q$.
If $i_0>0$, choose one with $p<i_0<q$: such an edge exists because
a path from $C$ to $h$ first enters the tail at an index at least $i_0$,
and its first visit below $i_0$ cannot be an $a$ step or a cross edge.
In either case $q>i_0$ and $p,q\notin I$.
The string $A^{t-p},B$ gives $\{T_q,C_{v+p}\}$, because its cyclic
partner is fixed by $B$. Put $d=q-i_0$.
If $v+p<d$, then $A^{v+p}$ reaches $z$ while the tail index is
$q-(v+p)>i_0\ge0$; one more $A$ expands, at total cost $c+2$.
Otherwise $A^d$ gives $\{T_{i_0},C_w\}$ with
\[
 w=v+p-q+i_0\ge0,\qquad w<v+i_0.
\]
All moved cyclic coordinates are at least $v+i_0$, so $C_w$ is fixed
by $b$. The next $B$ gives the distinct cyclic points
$\{C_{v+f(i_0)},C_w\}$. Applying $A^w$ and expanding costs
\[
 (t-p)+1+d+1+w+1=c+3\le n,
\]
since $t\ge m+2k\ge3$. Equality $v+p=d$ belongs to this latter
branch: a possible intermediate $\{h,z\}$ is expanded only after $B$
has removed the hole.

Now let $k=0$. Then $0\in I$, and
\[
 B(P_i)=\{T_{f^{-1}(i)},C_{v+f(i)}\}\quad(i\in I).
\]
If $f^2$ is the identity, this is $P_{f(i)}$; all other initial pairs
are fixed by $B$. The initial chain is closed without an expansion
target, contradicting synchronization.
Otherwise choose a cycle of $f$ of length at least three, let $k_0$ be
its smallest index, and set $i=f^{-1}(k_0)$, $j=f^{-1}(i)$.
These indices are distinct and satisfy $k_0<i$ and $k_0<j$.
From $P_i$, the next $B$ gives $\{T_j,C_{v+k_0}\}$.
If $v+k_0<j$, the first visit to $z$ occurs with a positive tail index,
and the expansion costs
\[
 (t-i)+1+(v+k_0)+1=c+k_0-i+2\le c+1.
\]
Otherwise $A^j$ gives $\{h,C_w\}$ with
$w=v+k_0-j\ge0$ and $w<v$. Thus $C_w$ is fixed by $b$, whereas $h$
is sent to $C_{v+f(0)}$. After $B$, applying $A^w$ and expanding costs
\[
 (t-i)+1+j+1+w+1=c+k_0-i+3\le c+2\le n.
\]
Here $t\ge m\ge3$. In the boundary $v+k_0=j$, one has $w=0<v$;
the intermediate $\{h,z\}$ is again converted by $B$ before expansion.
This exhausts the cases.
\end{proof}

\begin{Stheorem}\label{S-thm:three-swaps}
Let $\mathcal A=(Q,\{a,b\})$ be synchronizing on $n\ge6$ states.
If $\operatorname{rank}(a)=n-1$ and $b$ consists of exactly three
disjoint transpositions, fixing all remaining states, then
\[
 \Tr(\mathcal A)\le n+2.
\]
The constant $+2$ cannot be replaced uniformly by $+1$.
\end{Stheorem}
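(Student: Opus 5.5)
The argument follows the reductions of Theorems~\ref{S-thm:sparse} and \ref{S-thm:no-rr}. We work with inverse pairs from $\{z\}$ and use the expansion criterion \eqref{S-eq:sparse-pairdistance}. The first step is to normalize. If $t>c$, the word $a^{c+1}$ already gives $\Tr\le c+1$. By Lemma~\ref{S-lem:three-terminal} we may pass to the terminal strongly connected component $U$. If $b|_U$ has at most two transpositions, Theorem~\ref{S-thm:sparse} gives a bound of at most $n+1$. If $|U|\le2$, then $\Tr\le3$. So we may assume strong connectivity, $t\le c$, and that all three transpositions live in $Q$. If every transposition touches the tail, Theorem~\ref{S-thm:no-rr} gives $\Tr\le n$. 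Hence at least one transposition has both endpoints cyclic; call such a transposition RR. Strong connectivity also forces $h$ to be moved by $b$, so some transposition is TT or TC through $h$. The tail-one case $t=1$ is handled separately, as in the two-transposition proof.

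The main case split is on the number of RR transpositions, which is $1$ or $2$ because $h$ must be moved. Within each case we split on the type of the transposition through $h$: TT, or TC with cyclic endpoint on $C$ or on another $a$-cycle $D$. In each subcase the argument runs the same way. Start from the chain $P_i=\{T_i,C_{v+i}\}$ of \eqref{S-eq:sparse-chain}. Apply the first $B$ that breaks the chain, producing either a cyclic pair touching $C$ or a pair on an external cycle. Then follow with rotations $A^j$.

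For a cyclic pair touching $C$, stop at the first visit to $z$ and expand. This costs at most $t+c+2$.

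Pairs on an external cycle $D$ are treated by the non-antipodal rotation trick from the tail-one argument. Rotate a pair $\{X,Y\}$ on $D$ to $\{Y,W\}$ in at most $d-1$ steps, then use a $B$ to cross back onto $C$. The accounting is $t+1+j+1+v+1\le c+d+2$. An extra $+1$ appears whenever the RR transposition forces one additional detour, and $n\ge t+c+d$ then absorbs it as $\le n+2$.

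Whenever every crossing is "aligned", the exceptional configurations must be shown to be closed families of pairs without a target. These are $\{X,Y\}$ antipodal on $D$, or the cyclic endpoint equal to $C_{v+i}$ for the tail index $i$ it is paired with. Closure follows as in the earlier proofs: $B$ either fixes or exchanges whole pairs of the chain and of the antipodal class, so synchronization is contradicted.

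The main obstacle is the case with two RR transpositions, where the third transposition is the one through $h$. Here a pair may wander between $C$ and up to two external cycles before re-entering $C$. The budget $n+2$ must be charged against $c+d_1+d_2+t$, where $d_1$ and $d_2$ are the lengths of the external cycles. My first attempt would be to pick, among the RR transpositions, the one whose endpoints are closest along $a$ to the current pair. I would then show that at most one full rotation on each visited external cycle is needed, each rotation being shorter than that cycle's length. The few alignment coincidences left over would be pushed into closed families. I expect these leftover position cases to require explicit enumeration, which would go into Appendix~A.

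For the sharpness claim I would exhibit a single explicit automaton with three transpositions and $\Tr=n+2$. The natural candidate is one of the computed extremal examples, for instance the involution type $2{+}2{+}2{+}1$ on $n=7$ or a small modification of the Proposition~\ref{S-prop:two-transpositions-sharp} construction with one extra swap. I would verify it by tracing all inverse pairs from $K$, as in the proof of that proposition, and show that the first expansion needs $n+2$ steps.
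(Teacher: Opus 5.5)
Your reductions match the paper's: terminal restriction via Lemma~\ref{S-lem:three-terminal}, the long tail $t>c$, Theorem~\ref{S-thm:sparse} for one or two transpositions (zero transpositions is not covered by it and needs the separate $a^2$ argument), and Theorem~\ref{S-thm:no-rr} when no transposition is RR. But everything left after these reductions is the actual content of the theorem, and your proposal does not prove it. Count TT, TC and RR transpositions as $(\kappa,\xi,\rho)$. Strong connectivity forces $\xi\ge1$, which you do not use, so the surviving supports are $(0,1,2)$, $(0,2,1)$ and $(1,1,1)$. For these you give only a route template, and for the two-RR case a ``first attempt'' whose leftover position cases are explicitly postponed. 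The paper needs four complete case analyses here: Lemmas~\ref{S-lem:three-tail-one}, \ref{S-lem:three-one-tail}, \ref{S-lem:three-two-tail} and \ref{S-lem:three-three-tail}, organized by the number of moved tail points. In particular, $t=1$ does not go ``as in the two-transposition proof'': a third swap admits three $a$-cycles and new lock configurations.

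The template also fails as stated, in three ways.
\begin{itemize}
\item \emph{Antipodal pairs are not always closed.} You plan to dismiss ``$\{X,Y\}$ antipodal on $D$'' as a closed no-target family. With a third transposition, that family is closed only if the extra swap is internal to $C$, antipodal inside $D$, or joins $C$ to a third cycle. If it joins $D$ to $C$ or to a third cycle $E$, or is non-antipodal inside $D$, it splits antipodal pairs and an escape exists. In the $D$--$E$ case the escape costs $c+d+4$. That exceeds what your ``one extra $+1$'' accounting allows, and it fits in $n+2$ only because $E$ contributes $e\ge1$ to $n$.
\item \emph{This is exactly where the bound is attained.} Proposition~\ref{S-prop:three-tight} has $t=c=1$ and $K=\{h,p\}=\{6,0\}$. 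Here $b(h)=3$ and $b(p)=1$ are antipodal on the $a$-cycle $(1,2,3,4)$, and the third swap $(2\,5)$ goes to the fixed point $5$. This gives $T_3=9=n+2$.
\item \emph{Your list of closed families is incomplete.} If $x=b(h)$ and $y=b(p)$ lie on distinct external cycles of equal length, and the third swap joins points at equal rotation distance from $x$ and $y$, the simultaneous rotations form a closed matching. In the case $r=0$, $e\mid d$ of Lemma~\ref{S-lem:three-two-tail}, the closed family must also contain pairs $\{h,V'\}$.
\item \emph{The escapes are not single sub-length rotations.} For $e\nmid d$ one rotates by exactly $d$. For $e\mid d$ one rotates by $\alpha+e$, or returns through the tail from $\{T_j,V'\}$ to $\{h,V''\}$ before crossing to $C_v$; the latter costs $c+e+4$ and is paid for by $t+d\ge4$.
\end{itemize}

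Finally, the sharpness half requires a concrete automaton whose reachable pair graph from $K$ is checked in full. Naming a plausible cycle type does not establish it, even though $2{+}2{+}2{+}1$ is indeed the type of the paper's example $b=(0\,1)(2\,5)(3\,6)$.
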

\begin{proof}
Lemma~\ref{S-lem:three-terminal} reduces to a strongly connected terminal
restriction. With at most two transpositions use Theorem~\ref{S-thm:sparse};
with none, a synchronizing unary action gives $a^2$.
It remains to prove the bound for a strongly connected automaton with
three transpositions. Its missing image $h$ is moved by $b$.

Write $t,c$ for the tail and receiving-cycle lengths of $a$. If $t>c$,
the word $a^{c+1}$ merges three states. Suppose $t\le c$.
Let $\kappa,\xi,\rho$ count respectively the transpositions with two
tail endpoints, one tail and one cyclic endpoint, and two cyclic endpoints.
Then $\kappa+\xi+\rho=3$, and $\xi\ge1$ because a cyclic state must
be able to reach $h$. The possibilities and their bounds are
\[
\begin{array}{c|c|c|c}
\kappa&\xi&\rho&\hbox{bound}\\\hline
0&3&0&n\\
1&2&0&n\\
2&1&0&n\\
0&1&2&n+2\\
0&2&1&n+2\\
1&1&1&n+2
\end{array}
\]
The first three rows follow from Theorem~\ref{S-thm:no-rr}. In the fourth
row $h$ is the only moved tail point, and Lemma~\ref{S-lem:three-one-tail}
applies, including $t=1$. The fifth row has two moved tail points,
one of them $h$, and is Lemma~\ref{S-lem:three-two-tail}. The last row
is Lemma~\ref{S-lem:three-three-tail}. Their full parameter-uniform proofs
are given in Appendix~\ref{S-app:three-proofs}; none uses a finite
parameter cutoff. These rows exhaust the possible transposition types.
The example below proves the assertion about the constant.
\end{proof}

\begin{Sproposition}\label{S-prop:three-tight}
There is a strongly connected synchronizing seven-state automaton with
exactly three transpositions in $b$ and $\Tr=9$.
\end{Sproposition}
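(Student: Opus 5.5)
The plan is to exhibit one explicit seven-state automaton and verify its four properties directly. By the classification in the proof of Theorem~\ref{S-thm:three-swaps}, an example with $\Tr=9=n+2$ must lie in one of the last three rows of the table, since the first three rows give $\Tr\le n$. I would look for it in the fourth row, $(\kappa,\xi,\rho)=(0,1,2)$. Here $h$ is the only moved tail point, and its cyclic partner lies on the receiving cycle. The two cyclic--cyclic transpositions should be placed to create a ``lock'' of the kind described after Lemma~\ref{S-lem:rot}: exchanging the cyclic states keeps the pair near the collision pair but off every expansion target.

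Concretely, I would take a tail of length $t=1$ or $2$ entering a receiving cycle $C$, possibly with a second $a$-cycle $D$ linked to $C$ by one of the transpositions. I would then run the finite pair search: the inverse-pair BFS of Section~\ref{S-sec:comp}, restricted to this $7$-state permutation/rank-$6$ class with $b$ of type $2{+}2{+}2{+}1$. The search selects an automaton with $\Tr=9$. The enumeration behind Theorem~B already covers $n=7$, and the involution entries of Table~\ref{S-tab:types} show that such short-cycle types can exceed $n+1$, so the search should succeed. The chosen automaton would be printed with complete maps of $a$ and $b$.

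For the verification, I would proceed in this order.
\begin{enumerate}
\item \emph{Strong connectivity.} Read it off the functional digraphs, as in Lemma~\ref{M-lem:sc}: the $a$-edges give the tail and cycles, and the $b$-edges reconnect $h$ and any other cycle to $C$.
\item \emph{Synchronization.} Give an explicit reset word and list the image of every state, as in Main Table~\ref{M-tab:small-sync}.
\item \emph{Upper bound.} Give a word of length $9$ together with the triple it sends to one state.
\item \emph{Lower bound.} Use \eqref{S-eq:sparse-pairdistance}, which gives $\Tr=2+\operatorname{dist}(K,\{P: z\in P,\ h\notin P\})$ in inverse-pair steps $A,B$. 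List every pair reachable from $K=a^{-1}(z)$ together with its $A$- and $B$-images, noting contractions to singletons, which may be discarded by restarting. Then show that the first expansion target lies at distance exactly $7$.
\end{enumerate}
This reachable-pair graph is small, at most $\binom72=21$ pairs. It can be written out as a layered table, in the style of the proof of Proposition~\ref{S-prop:two-transpositions-sharp}.

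The main obstacle is the lower bound, namely ensuring that the table of reachable inverse pairs is complete. Every $B$-move off the main chain must either return, contract, or lead to a target no earlier than layer $7$. I would organize the check by the distance layers of the BFS from $K$. Alternatively, I would supply integer labels satisfying Lemma~\ref{M-lem:potential} on the pairs $\{z,y\}$ with $y\ne h$, so that the bound is certified by one-step inequalities rather than by an exhaustive trace. The archived script would serve as a cross-check.
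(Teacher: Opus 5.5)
\textbf{The gap: no witness.} Your four verification steps (strong connectivity from the digraphs, an explicit reset word, an explicit nine-letter word with its triple, and a complete table of inverse pairs reachable from $K$ with the first target at distance $7$) are how the paper argues. However, you never exhibit the automaton, and for an existence statement the witness is the proof. Your reasons for expecting the search to succeed do not support it:
\begin{itemize}
\item Theorem~B gives the maximum $9$ at $n=7$ over the permutation/rank-$6$ class, but says nothing about the cycle type of $b$ in its twelve extremal classes.
\item Table~\ref{S-tab:types} starts at $n=9$, and its involution datum concerns five transpositions at $n=11$.
\item The case table of Theorem~\ref{S-thm:three-swaps} only gives upper bounds; its sharpness clause is proved by this very proposition.
\end{itemize}
So nothing you cite rules out that every seven-state automaton with three transpositions in $b$ has $\Tr\le8$.

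\textbf{Your structural guess also points the wrong way.} In the row $(\kappa,\xi,\rho)=(0,1,2)$, suppose $b(h)=C_x$ lies on the receiving cycle. Then the first case in the proof of Lemma~\ref{S-lem:three-one-tail} already gives $\Tr\le n+1$. For $x\ne v$, $B(P_0)$ is a cyclic pair touching $C$, at cost at most $t+1+c\le n+1$. For $x=v$, synchronization forces $t\ge2$, and the routes cost at most $c+3\le n+1$, since $n\ge t+c$. So a search restricted to the configuration you describe finds nothing; the partner of $h$ must lie off the receiving cycle.

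\textbf{The paper's witness.} Take $Q=\{0,\dots,6\}$ with $a\colon 6\mapsto0\mapsto0$, $1\mapsto2\mapsto3\mapsto4\mapsto1$, $5\mapsto5$, and $b=(0\,1)(2\,5)(3\,6)$. Then $h=6$, $z=0$ and $K=\{0,6\}$. The receiving cycle is the fixed point $0$, and $h$ is exchanged with the point $3$ of the external $4$-cycle. The pairs reachable from $K$ are $\{0,6\}$, $\{1,3\}$, $\{2,4\}$, $\{4,5\}$, $\{3,5\}$, $\{2,5\}$, $\{2,6\}$, $\{1,5\}$ and $\{0,2\}$. They are closed apart from the contraction $a^{-1}(\{2,6\})=\{1\}$, and the only target, $\{0,2\}$, is at distance $7$. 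The word \texttt{abaaababa} merges $0,1,6$, and \texttt{abaaabababaabababaaababa} resets the automaton. With this witness in hand, your verification plan goes through as written.
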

\begin{proof}
On $Q=\{0,\ldots,6\}$ take the transition rows
\[
 a=(0,2,3,4,1,5,0),\qquad b=(1,0,5,6,4,2,3).
\]
The $a$-cycle $(1,2,3,4)$ is joined in both directions by $b$ to each
of $0,5,6$, so the automaton is strongly connected. Direct composition
shows that
\[
 \texttt{abaaabababaabababaaababa}
\]
maps all seven states to $0$, establishing synchronization.
The word $\texttt{abaaababa}$ maps $0,1,6$ to one state.
For the lower bound, the initial inverse pair is $K=\{0,6\}$.
Its reachable pair graph before the first expansion is completely listed
below; $d$ is the distance from $K$.
\[
\begin{array}{c|c|c|c}
 P&d&a^{-1}(P)&b(P)\\\hline
 \{0,6\}&0&\{0,6\}&\{1,3\}\\
 \{1,3\}&1&\{2,4\}&\{0,6\}\\
 \{2,4\}&2&\{1,3\}&\{4,5\}\\
 \{4,5\}&3&\{3,5\}&\{2,4\}\\
 \{3,5\}&4&\{2,5\}&\{2,6\}\\
 \{2,5\}&5&\{1,5\}&\{2,5\}\\
 \{2,6\}&5&\{1\}&\{3,5\}\\
 \{1,5\}&6&\{4,5\}&\{0,2\}\\
 \{0,2\}&7&\{0,1,6\}&\{1,5\}
\end{array}
\]
A singleton contraction allows a shorter restart and cannot improve the
distance to the first target, as in~\eqref{S-eq:sparse-pairdistance}.
It takes one step to reach $K$, at least seven more to reach a target,
and one expansion, so $\Tr\ge9$. The displayed nine-letter word attains it.
\end{proof}

Theorem~\ref{S-thm:no-rr} shows that a value exceeding $n$ in the involution
class requires a transposition between two cyclic states of $a$.
Theorem~\ref{S-thm:three-swaps} does not cover four or more arbitrary
transpositions. In particular, it does not prove a uniform $n+3$ bound
for all involution letters.

\section{One cyclic--cyclic transposition}\label{S-sec:orr}

\begin{Stheorem}\label{S-thm:orr-one-cyclic}
Let $\mathcal A=(Q,\{a,b\})$ be synchronizing on $n\ge3$ states, with
$\operatorname{rank}(a)=n-1$ and $b$ an involution.
If at most one transposition of $b$ has both endpoints cyclic under $a$,
then $\Tr(\mathcal A)\le n+3$. The number of the other transpositions
is unrestricted.
\end{Stheorem}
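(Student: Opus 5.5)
The plan is to reduce to the situation of Theorem~\ref{S-thm:no-rr} and handle the single extra transposition by a small number of rerouting moves, each costing at most three letters more than the chain arguments already used. First, Lemma~\ref{S-lem:three-terminal} reduces everything to a strongly connected terminal restriction with at least three states. The hypothesis is inherited there, because cyclicity is unchanged and $b$ cannot cross the boundary. If the restriction has no cyclic--cyclic transposition, Theorem~\ref{S-thm:no-rr} already gives $\Tr\le n$. So I will assume strong connectivity and exactly one transposition $(X,Y)$ with $X,Y$ both cyclic. I also assume $t\le c$, since otherwise $a^{c+1}$ merges three states, and then $h$ is moved by $b$. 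As before I work with the inverse-pair operators $A,B$, the initial chain $P_i=\{T_i,C_{v+i}\}$ with $v=c-t$, and the target criterion~\eqref{S-eq:sparse-pairdistance}. The principle is the same as in the earlier sections: once a pair of distinct cyclic points touching $C$ and avoiding the hole is reached, at most $c-1$ further $A$ steps and one expansion finish.

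Second, I rerun the tail--cycle analysis of Theorem~\ref{S-thm:no-rr} with $I,X_{\mathrm{tc}},f$ and $k$ defined as there, now allowing $b$ to move two extra cyclic points $X,Y$. The easy branch is unchanged. Suppose some $i\in I$ has $C_{v+i}$ not equal to the tail--cycle endpoint of $T_i$. Then $B(P_i)$ contains $b(T_i)$, which is cyclic, and $b(C_{v+i})$, which is cyclic whether or not $C_{v+i}\in\{X,Y\}$. Distinctness holds because $b$ is injective. If $b(C_{v+i})$ lies on a cycle other than $C$, the pair still touches $C$ through $b(T_i)$ whenever that point is in $C$. Otherwise I add one rotation step, described below. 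This gives a bound of roughly $c+2$ or $c+3$.

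Third, I treat the branches where the earlier proof concluded that the initial chain is closed. In the matched case ($X_{\mathrm{tc}}=\{C_{v+i}\}$ and $f^2=\mathrm{id}$), and in the $k\ge1$ and $3$-cycle branches, the earlier argument fails only where some intermediate fixed cyclic coordinate $C_w$ or $C_{v+i}$ equals $X$ or $Y$. I split according to the location of $\{X,Y\}$:
\begin{itemize}
\item[(i)] both endpoints on $C$;
\item[(ii)] one endpoint on $C$ and one on another $a$-cycle $D$;
\item[(iii)] both endpoints on one other cycle $D$.
\end{itemize}
Strong connectivity leaves at most one additional cycle $D$, so $n=t+c+d$ with $d\ge0$.

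In case (i), if the chain would otherwise be closed, follow $A$ along the chain until the cyclic coordinate first hits $X$. Then apply $B$ to obtain $\{T_j,Y\}$ or a cyclic pair. Because $X\neq Y$, this breaks the closure argument, and the remaining descent to $z$ costs at most $c$. The total is at most $t+c+3$.

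In case (ii), the only way into $D$ and back is through the cyclic point $X\in C$ and the tail--cycle edges. I use $B$ to move a cyclic coordinate onto $D$, and the non-antipodal rotation trick from the two-transposition section to choose $j\le d-1$ steps of $A$ on $D$. A second $B$ then returns to $C$ with a partner distinct from $X$. The count is $t+1+j+1+c+1\le n+3$.

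Case (iii) is analogous to the $(h,X)(C_y,Y)$ analysis. When $X$ and $Y$ are antipodal on $D$ and everything is aligned, I will show that the antipodal pairs of $D$ together with the initial chain form a closed inverse-pair graph, which contradicts synchronization.

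I expect the main obstacle to be the aligned or antipodal subcases. There the rerouting through $(X,Y)$ might again return to an initial pair $P_i$ or to $K$, and the closed-graph argument has to be checked carefully, since it now involves mixed pairs from both the chain and $D$. The budget $n+3$ appears exactly as the earlier $c+2$ plus one extra $B$ detour. If a subcase exceeds the budget, I would first try starting the detour from a different initial pair, namely the one with the smallest index $i$ whose cyclic coordinate is moved by $(X,Y)$, to save the tail length.
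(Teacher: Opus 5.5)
Your opening reductions agree with the paper: the terminal restriction by Lemma~\ref{S-lem:three-terminal}, the case without a cyclic--cyclic exchange by Theorem~\ref{S-thm:no-rr}, and $t\le c$. The case analysis that follows, however, does not amount to a proof, and several of its steps fail as stated.

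\emph{The easy branch and the off-aligned case.} Your easy branch uses the wrong condition. It must be that $C_{v+i}$ is not a tail--cycle endpoint at all. If $C_{v+i}=b(T_j)$ with $j\ne i$, then $B(P_i)=\{b(T_i),T_j\}$ is a mixed pair, not a cyclic one. Even with the correct condition, the branch fails when $B(P_k)$ misses $C$ entirely. Then $b(T_k)=x\notin C$, the cyclic exchange is $(C_{v+k},y)$ with $y\notin C$, and $x,y$ may lie on two \emph{distinct} external cycles of lengths $e,d$. So your claim that strong connectivity leaves at most one further cycle is false. ``One rotation step'' does not handle this configuration. You must rotate by $d$ when $e\nmid d$, and exclude $e=d$ because the rotations form a closed matching. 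When $e\mid d$ and $e<d$, you must use strong connectivity to force $k>0$ and then detour through $h$ to $C_{v+f(0)}$; this is Lemma~\ref{S-lem:orr-off-alignment}. Your case (iii), by contrast, is vacuous. Under alignment, an exchange inside an external cycle leaves that cycle closed under both letters. Without alignment, some $B(P_i)$ already touches $C$.

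\emph{The aligned case.} Your claim that the cyclic exchange ``breaks the closure argument because $X\ne Y$'' is false. An exchange $(C_{v+p},C_{v+q})$ mirroring a tail exchange $(T_p,T_q)$ gives $B(P_q)=P_p$. So following the chain until it hits an endpoint and then applying $B$ lands back on the chain. Synchronization only guarantees a breach somewhere, and you need a rule that locates it at controlled cost. The paper takes a cycle of $f$ of length at least three. Otherwise it takes the least vertex of a component, neither isolated nor a common edge, of the union of the tail and high cyclic matchings. Either way it obtains $B(P_s)=\{T_j,C_{v+k}\}$ with $k<j$ and $k\le s$.

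Nor does a mixed pair such as $\{T_j,Y\}$ finish within $c$ further letters. If the tail coordinate reaches $h$ first, $b(h)$ may be another tail state, and further $A$ steps simply return to $h$. The missing tool is the low-pair escape of Lemma~\ref{S-lem:orr-low-pair}. It carries the partner $C_\delta$ along a reversed simple path from $h$ to the cycle, at a cost of at most $t$ letters. It is valid only when $b$ fixes every $C$ point of coordinate at most $\delta$. That hypothesis is why the paper must split by the position of the cyclic exchange relative to $v$ and $v+\min I$.

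\emph{The bridge case and the budget.} Your case (ii) recovers only the configuration in which a tail--cycle endpoint and the bridge endpoint share the external cycle, which is where the non-antipodal trick applies. In the aligned bridge $(C_\alpha,Y)$, the cycle $D$ carries no other $b$-edge, so that trick is irrelevant. The real difficulty is to rotate the $D$-coordinate back to $Y$, by a multiple of $d$, while controlling where the $C$-coordinate lands. This is Lemma~\ref{S-lem:orr-bridge-return}, with $J\le g+d$. The low bridge $\alpha<v$ is exactly where the budget $n+3$ is spent. Your count $t+1+j+1+c+1$ therefore does not describe an actual route there. Your accounting ``$c+2$ plus one extra $B$ detour'' also does not explain the bound.
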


We give the auxiliary arguments and then prove the theorem. Throughout
these arguments, $n\ge3$, $\operatorname{rank}(a)=n-1$, and $b$ is an
involution. Write the unique tail and its receiving cycle as
\[
 T_0=h\to\cdots\to T_{t-1}\to C_0=z,\qquad
 C=(C_0,\ldots,C_{c-1}).
\]
The other components of the functional graph of $a$, if present, are
cycles. Let $A(S)=a^{-1}(S)$ be the full inverse image and $B(S)=b(S)$.
Sequences of these operators are read in application order from
$\{z\}$; reversing their letters gives a forward word of the same length.
For a pair $P$,
\[
 |A(P)|=2+\boldsymbol1_{z\in P}-\boldsymbol1_{h\in P}.
\]
Every first inverse growth from a singleton gives the collision pair
$K=\{T_{t-1},C_{c-1}\}$. A later singleton contraction permits a restart,
and an empty preimage cannot recover. Thus closure of all reachable
size-preserving pair steps, with only singleton contractions allowed as
exits, excludes every triple word.

If $t>c$, the word $a^{c+1}$ merges the three distinct points
$C_{c-1},T_{t-1},T_{t-c-1}$. Hence the remaining arguments use $t\le c$,
$v=c-t\ge0$, and
\begin{equation}\label{S-eq:orr-chain}
 P_i=A^{t-i}(\{z\})=\{T_i,C_{v+i}\},\qquad 0\le i<t.
\end{equation}
For $i>0$, $A(P_i)=P_{i-1}$. At $P_0$, $A$ contracts to a singleton
when $v>0$ and returns to $K$ when $v=0$. None of these pairs is an
expansion target. From a cyclic pair containing $C_q$, at most $q$
rotations and one expansion suffice: stop at the first visit to $z$.
Equivalently, $A^{q+1}$ has at least three preimages, since it includes
$T_{t-1}$ and the two distinct cyclic preimages of the original pair.

Let $I\subseteq\{0,\ldots,t-1\}$ index the tail--cycle transpositions,
and let $X$ be their cyclic endpoints. Strong connectivity implies $I\ne\varnothing$, since a cycle must reach the tail.
It also forces $b(h)\ne h$, since $h$ has no incoming $a$ edge.
The following cut observation will be used repeatedly.

\begin{Slemma}\label{S-lem:orr-tail-cut}
Assume strong connectivity and at least one tail--tail transposition.
Put $i_0=\min I$. If $i_0=0$, any tail exchange $(T_p,T_q)$ with $p<q$
has $q>i_0$. If $i_0>0$, some such exchange has $p<i_0<q$.
In either case $p,q\notin I$.
\end{Slemma}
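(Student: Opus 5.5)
The argument is a cut-and-path argument on the tail, the same one used inside the proof of Theorem~\ref{S-thm:no-rr}. I would record two structural facts first. Since $b$ is an involution, every state lies in a $b$-orbit of size one or two. The $a$-edges from a tail state $T_s$ go only to $T_{s+1}$, or to $C_0$ when $s=t-1$. Consequently no $a$-edge ever decreases a tail index, and no $a$-edge enters the tail from a cyclic state.

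\emph{Case $i_0=0$.} Any tail exchange $(T_p,T_q)$ with $p<q$ has $q\ge1>0=i_0$. Its endpoints are moved by a tail--tail transposition. Because $b$ is an involution, each state lies in at most one transposition, so $T_p$ and $T_q$ lie in no tail--cycle transposition. Hence $p,q\notin I$. This half is immediate.

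\emph{Case $i_0>0$.} Here I use strong connectivity to get a path from $C_0=z$ to $h=T_0$, and look at how it first reaches tail indices below $i_0$.
\begin{itemize}
\item The path starts on a cycle. The only edges into the tail are $b$-edges of tail--cycle transpositions, so it first enters the tail at some $T_s$ with $s\in I$, hence $s\ge i_0$.
\item Since $h$ has index $0<i_0$, the path must later move to an index below $i_0$. Take the first step that lands on some $T_p$ with $p<i_0$.
\item That step is not an $a$-edge, because $a$-edges out of the tail never decrease the index and no $a$-edge enters the tail from a cycle.
\item It is not a $b$-edge of a tail--cycle transposition. Such an edge lands on $T_p$ with $p\in I$, forcing $p\ge i_0$.
\item So it is a tail--tail exchange $(T_p,T_q)$. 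Its source $T_q$ precedes the step and lies on the tail, so by minimality of the step $q\ge i_0$.
\end{itemize}
Finally $q\ne i_0$, since $T_{i_0}$ already belongs to a tail--cycle transposition and cannot also be exchanged with $T_p$. Therefore $p<i_0<q$. As in the first case, the involution property gives $p,q\notin I$.

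The only delicate point is the case analysis showing that the first descent below $i_0$ must be a tail--tail $b$-edge. This depends on the observation that $b$-edges are exactly the transposition pairs, so that "$T_p$ is entered from a cycle" is equivalent to $p\in I$.
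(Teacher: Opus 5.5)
Your proof is correct and is essentially the paper's own cut argument: the path from the cycle first enters the tail at an index in $I$, its first visit below $i_0$ can be neither an $a$-step nor a tail--cycle exchange, so it is a tail--tail exchange, and its upper endpoint exceeds $i_0$ because $T_{i_0}$ is already paired with a cyclic state. One small ordering point: an $a$-edge $T_{p-1}\to T_p$ lands below $i_0$ without decreasing the index, so excluding an $a$-edge for that first descent already needs the minimality (its source cannot lie below $i_0$) that you only invoke in the last bullet.
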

\begin{proof}
Only the second assertion needs proof. A path from a cyclic point to $h$
enters the tail at an index in $I$, hence at least $i_0$.
Its first subsequent visit below $i_0$ cannot be a forward $a$ step or
a tail--cycle exchange. It must use a tail exchange across that cut.
Its upper endpoint cannot be $i_0$, already paired with a cyclic point.
\end{proof}

\begin{Slemma}\label{S-lem:orr-low-pair}
Suppose a strongly connected automaton has reached the inverse pair
$\{h,C_\delta\}$, where $\delta\ge0$ and $b$ fixes every $C$ point of
coordinate at most $\delta$. A triple can be obtained in at most
$\delta+B_0+1$ additional steps, where $B_0\le t$.
Consequently, if $B(P_s)=\{T_j,C_{v+k}\}$ with $k<j$, $k\le s$, and
all moved $C$ points have coordinates at least $v$, then
\begin{equation}\label{S-eq:orr-negative}
 \Tr\le c+k-s+B_0+2\le c+t+2.
\end{equation}
\end{Slemma}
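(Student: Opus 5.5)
The plan is to track the two coordinates of the inverse pair separately. The cyclic coordinate $C_\delta$ only moves under $A$, where it drops to $C_{\delta-1}$, and by hypothesis $b$ fixes it throughout. So after any number of steps the cyclic coordinate is some $C_w$ with $w\le\delta$, and $w$ decreases by exactly one at each $A$ step. The tail coordinate starts at $h$ and is what has to be steered. I will define $B_0$ as the number of $B$ steps on a shortest route taking $h$ off the tail. Such a route alternates single $b$-jumps with descents along the tail by $a^{-1}$, and it ends when $b$ sends a tail point to a cyclic point. Strong connectivity, via $b(h)\ne h$ and $I\ne\varnothing$, guarantees that such a route exists. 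Choosing it shortest means it visits no tail point twice at the moment of a $B$ step, so $B_0\le t$.

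Concretely, from $\{h,C_\delta\}$ I first apply $B$. If $b(h)$ is cyclic, I reach a cyclic pair $\{b(h),C_\delta\}$. Its two points are distinct because $C_\delta$ is fixed and $b(h)$ is moved. At most $\delta$ further $A$ steps bring it to its first visit to $z$, and one more $A$ expands. If instead $b(h)=T_q$ is a tail point, I apply $A$ steps while the tail coordinate descends. There are two ways this phase can end.
\begin{itemize}
\item The cyclic coordinate reaches $z$ while the tail coordinate is still some $T_{j}$ with $j\ge1$, or more generally is not $h$. Then one further $A$ yields a triple, because the preimage contains $T_{t-1}$, $C_{c-1}$ and $T_{j-1}$.
\item The tail coordinate reaches a point where the route prescribes the next $b$-jump. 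Then I apply $B$. Its partner $C_w$ is fixed, so the pair stays a pair.
\end{itemize}
The one dangerous configuration is $\{h,z\}$, whose inverse image under $A$ is a singleton. It is avoided because on reaching $h$ I always apply $B$ before the next $A$.

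In total there are at most $\delta$ steps of type $A$ before the first visit to $z$, plus one expansion, plus at most $B_0$ steps of type $B$. This gives the bound $\delta+B_0+1$. I expect the main obstacle to be this bookkeeping: ensuring that the greedy interleaving really follows a route with at most $t$ $B$ steps, and that the tail coordinate never arrives at $h$ simultaneously with the cyclic coordinate arriving at $z$ without a $B$ available. I would handle the latter by noting that $b(h)\ne h$ always lets us leave $h$ first, and that the cyclic coordinate is unchanged by $B$.

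For the consequence, start from $\{z\}$ and apply $A^{t-s}$ to reach $P_s$. Then apply $B$ to obtain $\{T_j,C_{v+k}\}$. There are two cases.
\begin{itemize}
\item If $v+k<j$, then $A^{v+k}$ reaches $z$ with tail coordinate $T_{j-v-k}\ne h$, and one more $A$ expands. The cost is already below the bound.
\item Otherwise $A^j$ gives $\{h,C_\delta\}$ with $\delta=v+k-j<v$, using $k<j$. Every $C$ point of coordinate at most $\delta$ is then unmoved, since moved $C$ points have coordinate at least $v$. So the first part applies.
\end{itemize}
In the second case the total cost is $(t-s)+1+j+(v+k-j)+B_0+1=c+k-s+B_0+2$, using $v=c-t$. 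The final inequality $\le c+t+2$ follows from $k\le s$ and $B_0\le t$.
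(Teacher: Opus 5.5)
Your proposal is correct and is essentially the paper's argument: you take a simple inverse route from $h$ through tail states ending in a cross $B$ (you choose a shortest one, while the paper reverses a forward path from its last tail entry to $h$ and deletes repetitions), carry the $b$-fixed low coordinate along, expand early at a positive tail index, and use the same count $(t-s)+1+j+\delta+B_0+1$ for the consequence. Two harmless slips: $A(\{h,z\})$ is the collision pair $\{T_{t-1},C_{c-1}\}$ rather than a singleton, which is still not a triple, so applying $B$ first remains the right fix; and the route exists because strong connectivity gives a path from a cyclic state to $h$, whereas $b(h)\ne h$ and $I\ne\varnothing$ alone would not suffice.
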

\begin{proof}
Reverse a path from its last entry into the tail to its first visit at
$h$, and remove repeated tail vertices. This gives a simple inverse path
from $h$ through tail states to a cyclic point. It has at most $t$ edges,
including its final cross $B$; let $B_0\le t$ count its $B$ steps.
The path starts with $B$ and does not revisit $h$.
Carry $C_\delta$ along it. Every $B$ fixes this coordinate and every $A$
lowers it by one. If it reaches $z$ before the path ends, the tail point
is positive and one further $A$ expands. For $\delta=0$ initially,
first take the initial $B$, which removes $h$ and fixes $z$.
If the path finishes first, the remaining rotations and expansion
complete a cyclic pair. In all cases the $A$ costs total at most
$\delta+1$, proving the first assertion.

For the consequence, start with $A^{t-s},B$. If $v+k<j$, an early
expansion costs $c+k-s+2$. Otherwise $A^j$ gives
$\{h,C_\delta\}$ with $\delta=v+k-j\ge0$ and $\delta<v$.
The cost through the low-pair escape is
$(t-s)+1+j+\delta+B_0+1=c+k-s+B_0+2$.
Equality $v+k=j$ is handled by the first assertion at $\{h,z\}$,
not by an immediate expansion of that pair.
\end{proof}

\begin{Slemma}\label{S-lem:orr-off-alignment}
Assume strong connectivity, synchronization, and exactly one
cyclic--cyclic transposition. If
$X\ne\{C_{v+i}:i\in I\}$, then $\Tr\le n+1$.
\end{Slemma}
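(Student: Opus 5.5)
The plan mirrors the off-alignment step in the proof of Theorem~\ref{S-thm:no-rr}. Since $|I|=|X|$ and $i\mapsto C_{v+i}$ is injective, the hypothesis $X\ne\{C_{v+i}:i\in I\}$ gives some $i\in I$ with $C_{v+i}\notin X$. Fix this $i$ and set $Y=b(T_i)\in X$, a cyclic state.

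Starting from $\{z\}$, apply $A^{t-i}$ to reach $P_i=\{T_i,C_{v+i}\}$, and then apply $B$. The state $C_{v+i}$ is not a tail--cycle endpoint. If it is fixed by $b$, the image is $\{Y,C_{v+i}\}$. This is a pair of distinct cyclic states, distinct because $Y\in X$ while $C_{v+i}\notin X$, and it touches $C$. By the observation following \eqref{S-eq:orr-chain}, at most $v+i$ further $A$ steps reach the first visit to $z$, and one more $A$ expands to a set of size three. The total length is
\[
 (t-i)+1+(v+i)+1=c+2\le n+1,
\]
because $n\ge t+c\ge c+1$.

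The remaining possibility is that $C_{v+i}$ is an endpoint of the unique cyclic--cyclic transposition, say $b(C_{v+i})=W$ with $W$ cyclic. Then $B(P_i)=\{Y,W\}$, which is again a cyclic pair but need not touch $C$. This is the step I expect to be the main obstacle.

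If $Y$ or $W$ lies on $C$, the rotation argument applies directly. It costs at most $c-1$ steps of $A$ plus the expansion, for a total of at most $(t-i)+1+(c-1)+1\le t+c+1\le n+1$.

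Otherwise both $Y$ and $W$ lie on cycles outside $C$. In that case $n\ge t+c+2$, which leaves extra budget for a detour. There are two routes I would try.

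\emph{Route 1: a second index $i'$.} The count $|X|=|I|$ together with $C_{v+i}\notin X$ means that some element of $X$ is not of the form $C_{v+j}$. I would try to choose, among the off-aligned indices, one for which $C_{v+i'}$ is fixed by $b$. This works unless the cyclic--cyclic transposition absorbs the only off-aligned index.

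\emph{Route 2: using strong connectivity.} Rotate the pair $\{Y,W\}$ under $A$ along its cycle(s) until one coordinate hits a moved point whose $b$-image lies on $C$, or on the tail below the hole. Then apply $B$ and finish by rotating as before. Strong connectivity guarantees such a crossing, since the outer cycle must connect back to $C$ through some transposition. The length of this detour is bounded by the length of that outer cycle, which is at most $n-t-c$. The total is then roughly $(t-i)+1+(n-t-c)+1+(c-1)+1$, which is at most $n+1$ once the tail term is absorbed. I would check carefully that no intermediate pair becomes antipodally locked or contracts to a singleton. Any singleton contraction is harmless, though, since a restart only shortens the path.
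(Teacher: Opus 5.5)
Your reduction is correct and is exactly the paper's opening step. An off-aligned index $i$ with $B(P_i)$ touching $C$ costs at most $(t-i)+1+c\le n+1$. Your Route~1 then shows that otherwise there is a unique off-aligned index $k$, with $b(T_k)=x\notin C$ and cyclic exchange $(C_{v+k},y)$, $y\notin C$. All other cross endpoints are aligned: $b(T_i)=C_{v+f(i)}$ for a permutation $f$ of $I\setminus\{k\}$.

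\textbf{The gap: Route~2 fails in exactly the locked configurations.} The residual configuration is the real content of the lemma, and Route~2 does not prove it. The only moved points off $C$ are $x$ and $y$, so Route~2 amounts to rotating $\{x,y\}$ until a coordinate sits on $y$ or on $x$. This fails in three configurations.
\begin{itemize}
\item If $x,y$ are antipodal on one cycle, every rotation containing $y$ or $x$ is $\{x,y\}$ itself, and $B$ sends it back to $P_k$.
\item If $x,y$ lie on distinct cycles of equal length, the same happens.
\item If the cycles of $x$ and $y$ have lengths $e$ and $d$ with $e\mid d$ and $e<d$, every hit of $y$ is again simultaneous with a hit of $x$. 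The only other hit, at $x$ alone, produces $\{T_k,y'\}$ with $y'$ off $C$. That is a tail pair, not a pair touching $C$.
\end{itemize}
These are precisely the locks you postpone checking, and in them no rotation of $\{x,y\}$ escapes. Strong connectivity guarantees that moved points exist on the external cycles. It does not guarantee that a pair can be placed on one of them without its partner landing on the other.

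\textbf{What is missing.} The missing ideas are the rest of the paper's proof. Before rotating $\{x,y\}$, the paper disposes of two escapes.
\begin{itemize}
\item A tail--tail exchange: Lemma~\ref{S-lem:orr-tail-cut} takes $\{T_q,C_{v+p}\}$ down to $\{T_{i_0},C_w\}$ with $w<v+i_0$ fixed by $b$, at cost $c+3$.
\item A cycle of $f$ of length at least three: this reaches $\{h,C_w\}$ with $w<v$, at cost at most $c+2$.
\end{itemize}
Only after these, with $0\in I$ and $f^2=\mathrm{id}$, do the antipodal and equal-length locks become closed families of pairs. Such a family consists of the initial chain together with the rotations of $\{x,y\}$, and its closure contradicts synchronization.

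The case $e\mid d$, $e<d$ needs yet another route. Strong connectivity forces $k>0$, since otherwise $h$ together with the $x$-cycle receives no transition from outside. Hence $b(h)=C_{v+f(0)}$. Rotate by $e$, apply $B$, then $A^k$ and $B$ again. This reaches a cyclic pair containing $C_{v+f(0)}$, at total cost $c+e+f(0)+4\le n+1$.

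\textbf{Smaller issues.} Your budget is off as written: $(t-i)+1+(n-t-c)+1+(c-1)+1=n+2-i$, which exceeds $n+1$ when the off-aligned index is $0$. Also, a singleton contraction is not harmless in a constructive route, because it destroys the pair. The restart argument is used only to show that closed families admit no triple.
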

\begin{proof}
If $C_{v+i}\notin X$ for $i\in I$, then $B(P_i)$ is cyclic.
If it touches $C$, its cost through expansion is at most
$(t-i)+1+c\le n+1$. Suppose none of these routes applies.
Because the two support sets have the same size but are different,
there is such an index $k$. Both $b(T_k)$ and $b(C_{v+k})$ must be
outside $C$. The latter is possible only at the $C$ endpoint of the
unique cyclic exchange. Thus $k$ is unique, and the support has the form
\[
 b(T_k)=x\notin C,\qquad (C_{v+k},y)\text{ is the cyclic exchange},
 \qquad y\notin C.
\]
The other cross endpoints are exactly $C_{v+i}$ for
$i\in J=I\setminus\{k\}$, and their assignment is a permutation
$f:J\to J$, with $b(T_i)=C_{v+f(i)}$.
All moved $C$ points have coordinates $\{v+i:i\in I\}$.
No further cycle can avoid both $x,y$, since it would have no moved
point and violate strong connectivity.

If a tail exchange exists, use Lemma~\ref{S-lem:orr-tail-cut} and put
$d_0=q-i_0$. The prefix $A^{t-p},B$ gives $\{T_q,C_{v+p}\}$, since
$p\notin I$. If $v+p<d_0$, early expansion costs $c+2$.
Otherwise $A^{d_0}$ gives $\{T_{i_0},C_w\}$ with
$w=v+p-q+i_0\ge0$ and $w<v+i_0$.
This $C$ point is fixed by $B$, which sends the tail point to a cyclic
state. Applying $A^w$ and expanding costs $c+3\le n$, since $t\ge3$.
Equality $v+p=d_0$ belongs to this latter branch.

Hence assume there are no transpositions between tail states, so $0\in I$.
If $f$ has a cycle of length at least three, choose its minimum $q$,
let $i=f^{-1}(q)$ and $j=f^{-1}(i)$, and note $q<i,j$.
Then $B(P_i)=\{T_j,C_{v+q}\}$.
An early $z$ hit costs $c+q-i+2$; otherwise $A^j$ gives
$\{h,C_w\}$ with $0\le w=v+q-j<v$.
The next $B$ fixes $C_w$ and sends $h$ to a cyclic point.
Expansion costs $c+q-i+3\le c+2\le n$.

It remains that $f^2$ is the identity. The initial pairs indexed by $J$
are fixed or exchanged according to $f$, all other initial pairs are
fixed, and $P_k$ is sent to $\{x,y\}$.
If $x,y$ lie on the same cycle of length $d$, then $n=t+c+d$.
If they are not antipodal, some $1\le j\le d-1$ inverse rotations send
$x$ to $y$ and the other point to $w\ne x$.
The next $B$ gives $\{C_{v+k},w\}$, at total cost
\[
 (t-k)+1+j+1+(v+k)+1=c+j+3\le c+d+2\le n+1.
\]
If they are antipodal, their rotations form a matching. $B$ fixes each
pair except $\{x,y\}$, which returns to $P_k$. This matching and the
initial chain are closed without a target, excluding synchronization.

Suppose instead that their cycle lengths are $e,d$, respectively,
on distinct cycles. Then $n=t+c+e+d$.
If $e\nmid d$, rotate by $d$: the $y$ point returns to $y$ and the
$x$ point does not return to $x$. Applying $B$ and completing through
$C_{v+k}$ costs $c+d+3\le n+1$.
If $e=d$, the rotations form a matching whose two marked points occur
simultaneously. It is fixed by $B$ except for return to $P_k$; together
with the initial chain it is a closed no-triple family.

Finally let $e\mid d$ and $e<d$. Strong connectivity forces $k>0$:
for $k=0$, the set consisting of $h$ and the $x$-cycle has no incoming
transition from outside. Thus $0\in J$ and $b(h)=C_{v+f(0)}$.
Rotate $\{x,y\}$ by $e$ and apply $B$, obtaining $\{T_k,y'\}$ with
$y'$ on the $y$-cycle. After $A^k$ and another $B$ this is a cyclic
pair containing $C_{v+f(0)}$; its other point stays external or returns
to $C_{v+k}$. It cannot become $h$, because $x$ is on a different cycle.
The total cost is at most
\[
 (t-k)+1+e+1+k+1+(v+f(0))+1
 =c+e+f(0)+4
 \le c+e+t+3\le n+1,
\]
using $f(0)\le t-1$ and $d>e\ge1$, hence $d\ge2$.
All closed-family assertions concern size-preserving pair steps;
any additional singleton contractions are covered by the restart rule.
\end{proof}

In the aligned case write
\begin{equation}\label{S-eq:orr-alignment}
 X=\{C_{v+i}:i\in I\},\qquad b(T_i)=C_{v+f(i)},\quad f:I\to I
 \text{ a permutation}.
\end{equation}
Let $\theta$ be the involution of
$J=\{0,\ldots,t-1\}\setminus I$ induced by transpositions between tail states, fixing
unmatched points. A single cyclic exchange is then either internal to
$C$, or joins $C$ to one external cycle $D$. Strong connectivity excludes
any other cycle; hence $n=t+c$ or $n=t+c+d$, where $d=|D|$, respectively.
The following aligned lemmas include the remaining dependencies of the proof.

\begin{Slemma}\label{S-lem:orr-high}
Assume strong connectivity, synchronization, and
\eqref{S-eq:orr-alignment}. If every cyclic--cyclic exchange has both
endpoints in $\{C_v,\ldots,C_{c-1}\}$, then $\Tr\le n+2$.
Here the number of cyclic--cyclic transpositions is also unrestricted.
\end{Slemma}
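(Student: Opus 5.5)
We work in the aligned setting \eqref{S-eq:orr-alignment}. Every moved point of $C$ has coordinate at least $v$: the cross endpoints are the points $C_{v+i}$, and by hypothesis every cyclic--cyclic endpoint also lies in $\{C_v,\ldots,C_{c-1}\}$. Hence $b$ fixes $C_0,\ldots,C_{v-1}$, so Lemma~\ref{S-lem:orr-low-pair} is available whenever an inverse trajectory reaches $\{h,C_\delta\}$ with $\delta<v$.

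The plan is to find a route of the form ``initial chain, one $B$, descend to the low region, escape'' costing at most $c+t+2$. We then compare with $n$. If no cyclic--cyclic exchange touches another cycle, then $n\ge t+c$ and the target is $n+2$. If some cyclic--cyclic exchange leaves $C$, its external cycle contributes $d\ge1$ points and we have slack.

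The argument splits into cases as in Lemma~\ref{S-lem:orr-off-alignment} and Theorem~\ref{S-thm:no-rr}.

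\emph{Tail exchanges present.} Use Lemma~\ref{S-lem:orr-tail-cut} to pick $(T_p,T_q)$ with $p<i_0<q$, or with $q>i_0$ when $i_0=0$. Apply $A^{t-p},B$ to reach $\{T_q,C_{v+p}\}$, then apply $A^{q-i_0}$. Either $z$ is hit early and we expand at cost about $c+2$, or we reach $\{T_{i_0},C_w\}$ with $w<v+i_0$. We must check that $C_w$ is not moved. If $w<v$ this is clear. If $v\le w<v+i_0$, it is unmoved because the aligned cross coordinates are $v+I$ and $i_0=\min I$. The worry is a cyclic--cyclic exchange landing at such a coordinate. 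In that case the $B$ step sends $C_w$ to another cyclic point, which still gives a cyclic pair. So in all cases $B$ produces a cyclic pair touching $C$, and the cost is at most $c+3\le n+2$.

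\emph{No tail exchanges, $f$ not an involution.} Take a cycle of $f$ of length at least three with minimum $q$, and set $i=f^{-1}(q)$, $j=f^{-1}(i)$. Then $B(P_i)=\{T_j,C_{v+q}\}$ with $q<j$ and $q\le i$. Lemma~\ref{S-lem:orr-low-pair} gives $\Tr\le c+q-i+B_0+2\le c+t+2\le n+2$.

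\emph{No tail exchanges, $f$ an involution.} This is the main obstacle. Here the initial chain is closed under $B$ except through the cyclic--cyclic exchanges. Since these have endpoints $C_{v+s}$ with $v\le v+s<c$, the route is as follows. Choose $i$ with $C_{v+i}$ an endpoint of a cyclic--cyclic exchange $(C_{v+i},Y)$. The assumption $i\notin I$ would be needed for the unmoved tail partner; otherwise $C_{v+i}$ is a cross endpoint. If no such $i$ exists, the chain together with its $B$-images is closed, which contradicts synchronization. Then $B(P_i)=\{T_i,Y\}$, and we rotate with $A$ until the tail coordinate reaches $h$ or the cyclic coordinate reaches $z$.

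There are two subcases. If $Y\in C$ has coordinate $u<i$ relative to $v$, we use the low-pair lemma. If $Y\in C$ otherwise, or $Y$ lies on an external cycle $D$, we reach $\{h,Y'\}$, apply $B$ to get a cyclic pair, and complete through $C_{v+f(0)}$. This costs about $c+t+2$, with the external case absorbing the extra $d$.

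I expect the bookkeeping to be delicate in one place: verifying that the intermediate points stay fixed by $b$, or at least stay cyclic, when several cyclic--cyclic exchanges are present. I would first try to show directly that any such intermediate $B$ step still yields a cyclic pair touching $C$, which suffices for the cost bound.
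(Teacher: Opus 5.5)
\textbf{Verdict: genuine gap.} Your case without tail--tail exchanges is essentially sound, and your long-$f$-cycle step is the paper's first step; that step works whether or not tail exchanges exist. The case with tail--tail exchanges fails.

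There you reuse the tail-cut route of Lemma~\ref{S-lem:orr-off-alignment} and Theorem~\ref{S-thm:no-rr}. Its first step, $B(P_p)=\{T_q,C_{v+p}\}$, needs $C_{v+p}$ to be fixed by $b$. That was true in those results but fails here. By alignment the cross endpoints are exactly $C_{v+i}$ with $i\in I$. Your hypothesis therefore forces every cyclic--cyclic endpoint to be some $C_{v+s}$ with $s\in J=\{0,\dots,t-1\}\setminus I$. So these exchanges form an involution $\sigma$ of $J$, the same index set that carries the tail--tail involution $\theta$, and $p,q\in J$. Hence $B(P_p)=\{T_q,C_{v+\sigma(p)}\}$, with three consequences:
\begin{enumerate}
\item If the chosen tail edge is mirrored, i.e.\ $(C_{v+p},C_{v+q})$ is an exchange, then $B(P_p)=P_q$ and the route just returns to the initial chain.
\item If $\sigma(p)$ is another index, the coordinate after $A^{q-i_0}$ is $w=v+\sigma(p)-q+i_0$. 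This can be at least $v+i_0$, and even a cross coordinate, in which case $B$ yields a mixed pair rather than a cyclic one.
\item Even when $C_w$ is an RR endpoint and the image is cyclic, completion runs through $b(C_w)$ or $C_{v+f(i_0)}$. Their coordinates are not bounded by $w$, so the cost $c+3$ is unsupported.
\end{enumerate}
Lemma~\ref{S-lem:orr-tail-cut} only supplies some straddling edge, and every such edge may be mirrored, so you cannot just pick an unmirrored one. Your closing remark that every intermediate $B$ step ``still yields a cyclic pair touching $C$'' is false for exactly these reasons.

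\textbf{The missing idea: compare the two matchings on $J$.} Dispose of a long $f$-cycle via \eqref{S-eq:orr-negative} first, with no assumption on tail exchanges, and then assume $f^2=\mathrm{id}$.
\begin{itemize}
\item If $\theta=\sigma$, then $B(P_s)=P_{\theta(s)}$ for $s\in J$ and $B(P_i)=P_{f(i)}$ for $i\in I$. The initial chain is closed, contradicting synchronization.
\item Otherwise take a component of $\theta\cup\sigma$ that is neither an isolated point nor a common single edge, and let $k$ be its minimum. Put $s=k$ if $\sigma(k)=k$, and $s=\sigma(k)$ otherwise. Then $B(P_s)=\{T_j,C_{v+k}\}$ with $j=\theta(s)>k$ and $k\le s$, and \eqref{S-eq:orr-negative} gives $T_3\le c+t+2$.
\end{itemize}

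Your no-tail-exchange argument is the special case $\theta=\mathrm{id}$. There, always orient the RR edge so that $\sigma(i)<i$. Your alternative subcase through $\{h,Y'\}$ can stall at a cross coordinate, for instance at the $B$-fixed pair $\{h,C_{v+f(0)}\}$.

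Finally, your discussion of RR exchanges reaching an external cycle is vacuous. The hypothesis puts all RR endpoints on $C$, so with alignment every moved cyclic point lies on $C$. Strong connectivity then gives $n=t+c$ exactly, and $c+t+2=n+2$ is the bound.
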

\begin{proof}
These transpositions induce an involution $\sigma$ of $J$ by
$b(C_{v+s})=C_{v+\sigma(s)}$. If $f$ has a cycle of length at least
three, choose its minimum $k$, $s=f^{-1}(k)$ and $j=f^{-1}(s)$.
Then $k<s,j$ and $B(P_s)=\{T_j,C_{v+k}\}$, so
\eqref{S-eq:orr-negative} applies.

Suppose $f^2$ is the identity. If $\theta=\sigma$, $B$ preserves the
initial chain, contradicting synchronization. Otherwise choose a
component of the union of the two matchings which is neither an isolated
point nor a common single edge, and let $k$ be its minimum.
If $\sigma(k)=k$, take $s=k$, so $\theta(s)>k$.
Otherwise take $s=\sigma(k)>k$. Equality $\theta(s)=k$ would make the
component a common single edge, so again $\theta(s)>k$.
Thus $\sigma(s)=k\le s$ and $\theta(s)=j>k$.
Equation~\eqref{S-eq:orr-negative} applies once more.
All moved cyclic points lie on $C$, so strong connectivity gives
$n=t+c$, and the bound is $n+2$.
\end{proof}

\begin{Slemma}\label{S-lem:orr-low-internal}
Assume strong connectivity, synchronization, \eqref{S-eq:orr-alignment},
and that the unique cyclic exchange is
$(C_\alpha,C_\beta)$ with $\alpha<\beta<M=v+\min I$.
Then $\Tr\le n+1$.
\end{Slemma}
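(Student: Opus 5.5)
The plan is to follow the template of Lemmas~\ref{S-lem:orr-off-alignment} and~\ref{S-lem:orr-high}. We work in the aligned setting \eqref{S-eq:orr-alignment}. The unique cyclic exchange is internal to $C$, so strong connectivity gives $n=t+c$, and the target is a word of length at most $c+t+1$. The new feature is a moved pair $C_\alpha,C_\beta$ lying \emph{below} all cross endpoints. This is exactly what breaks the hypothesis ``$b$ fixes every $C$ point of coordinate at most $\delta$'' in Lemma~\ref{S-lem:orr-low-pair}. I will therefore re-run the low-pair escape by hand, with an extra rule: when the carried cyclic coordinate passes through $\beta$ (or $\alpha$) while the tail coordinate is still positive, either do nothing or use a $B$ step.

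\textbf{First reductions.} These repeat the earlier arguments verbatim, since the exchange $(C_\alpha,C_\beta)$ is invisible until a cyclic coordinate drops below $M$.
\begin{itemize}
\item If a tail--tail exchange exists, use Lemma~\ref{S-lem:orr-tail-cut}. The prefix $A^{t-p},B,A^{d_0}$ reaches $\{T_{i_0},C_w\}$ with $w<M$. The following $B$ sends $T_{i_0}$ to a cyclic point, and moves $C_w$ only if $w\in\{\alpha,\beta\}$, in which case it merely replaces it by another point of $C$. Either way we get a cyclic pair touching $C$, and the count $c+3\le n$ of Lemma~\ref{S-lem:orr-off-alignment} applies.
\item If instead $0\in I$ and $f$ has a cycle of length at least three, the same route reaches $\{h,C_w\}$. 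Then $B$ again yields a cyclic pair on $C$, of cost at most $c+2$.
\end{itemize}
In both cases the only extra check is that the two resulting cyclic points are distinct. This holds because $b(T_\cdot)$ has coordinate at least $M>\beta$.

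\textbf{Main case.} What remains is $f^2=\mathrm{id}$ with no tail exchange, or more generally the situation in which $B$ would preserve the initial chain were it not for the low exchange. This is where synchronization must be used, and I expect it to be the main obstacle.
\begin{itemize}
\item Since $B$ permutes the pairs $P_i$ with $i\in I$ and fixes the other initial pairs whose cyclic coordinate avoids $\{\alpha,\beta\}$, closure can only be broken at an index $s$ with $v+s\in\{\alpha,\beta\}$, i.e.\ $s<\min I$. Alternatively it is broken after a contraction-free descent through the cycle.
\item If $\beta=v+s$ for some $s$, then $B(P_s)=\{T_s,C_\alpha\}$ with $\alpha<v+s$. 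This is a ``negative shift'' as in \eqref{S-eq:orr-negative}. I would descend $A^{\alpha+1}$ or $A^{s}$, whichever comes first. A $z$-hit with positive tail index expands at cost $(t-s)+1+\alpha+1\le c+1$. Otherwise we reach $\{h,C_{\alpha-s}\}$, and here $\alpha-s<\alpha$ is fixed by $b$; one $B$, at most $\alpha-s$ rotations and one expansion give total cost at most $c+t+1$.
\item If only $\alpha=v+s$, the symmetric move sends $C_\alpha$ up to $C_\beta$. Here I would check that the resulting pair cannot be a target-free closed family without contradicting synchronization.
\end{itemize}

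\textbf{Remaining subcase.} If neither endpoint is of the form $v+s$, then $\beta<v$. The initial chain together with its $B$-images is then closed, contradicting synchronization exactly as in Lemma~\ref{S-lem:orr-high}.

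Throughout, singleton contractions are handled by the restart rule, and the final count uses $n=t+c$.
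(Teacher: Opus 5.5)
Your case split is the paper's. A tail exchange is handled through the cut of Lemma~\ref{S-lem:orr-tail-cut}. Otherwise $h$ must be crossed, so $0\in I$, and you either use a cycle of $f$ of length at least three or, when $f^2=\mathrm{id}$, close the initial chain. The gap is in the length accounting. You reuse the counts of Lemma~\ref{S-lem:orr-off-alignment} on the grounds that $(C_\alpha,C_\beta)$ is invisible until a cyclic coordinate drops below $M$. But your routes deliberately drop that coordinate below $M$ before a $B$ step.

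\emph{Tail-exchange case.} When $i_0>0$, the first $B$ already acts on $C_{v+p}$ with $v+p<M$, so the coordinate $u$ after it need not be $v+p$. More seriously, at $\{T_{i_0},C_w\}$ with $w=\alpha$, the second $B$ lifts the low point to $C_\beta$. Completion then costs $\beta+1$, not $w+1$. Even with $u=v+p$, the route has length $c+3+(\beta-\alpha)$, so your asserted $c+3\le n$ does not follow.

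\emph{Long-cycle case.} The same happens here: the residual point $C_{v+k-j}$ may equal $C_\alpha$ and be lifted to $C_\beta$. Your bound $c+2$ therefore fails as well.

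The repair is the paper's argument. Inside $\{C_0,\dots,C_{M-1}\}$ the only moved points are $C_\alpha$ and $C_\beta$, and every $b(T_i)$ with $i\in I$ has coordinate at least $M$. So $B$ preserves this block and produces two distinct cyclic points. The low point stays below $M$, and completion costs at most $M$. This gives the following totals.
\begin{itemize}
\item Tail-exchange case: $(t-p)+1+(q-i_0)+1+M=c+q-p+2\le n+1$. The early expansion when $u<d_0$ is cheaper.
\item Long-cycle case, where $M=v$: $(t-s)+1+j+1+v=c+j-s+2\le n$.
\end{itemize}
The first route can genuinely cost $n+1$: take $p=0$, $q=t-1$, $w=\alpha$ and $\beta=M-1$. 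That is exactly why the lemma asserts $n+1$ rather than $n$.

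Finally, your ``main case'' is essentially vacuous. With no tail exchange, $M=v$ and $\beta<v$, so the cyclic exchange never touches an initial pair. The alternatives $\beta=v+s$ or $\alpha=v+s$ would require $s<\min I=0$. Your unfinished ``symmetric move'' bullet is therefore not needed. Only the closure argument of your last paragraph remains, and it is correct.
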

\begin{proof}
Here $n=t+c$. Suppose first that a tail exchange exists.
Take $i_0,p,q$ from Lemma~\ref{S-lem:orr-tail-cut} and put $d_0=q-i_0$.
The prefix $A^{t-p},B$ gives $\{T_q,C_u\}$.
If $i_0=0$, then $p\ge1$ and $u=v+p$, since both RR endpoints lie
below $v$. If $i_0>0$, then $0\le u<M$, since $B$ preserves that
lower interval.
If $u<d_0$, early expansion costs $c+2$ in the first case and at most
$c+i_0-p+1$ in the second, both at most $n$.
Otherwise $A^{d_0}$ gives $\{T_{i_0},C_w\}$ with $0\le w<M$.
The next $B$ sends the tail point to a cyclic point of coordinate at
least $M$, and keeps its other point below $M$. Completion therefore
costs at most $M$ more steps, for total
\[
 (t-p)+1+(q-i_0)+1+M=c+q-p+2\le n+1.
\]
Equality $u=d_0$ is included in this latter branch.

If there is no tail exchange, then $0\in I$, $M=v$, and $\beta<v$.
If $f^2$ is the identity, the initial pair chain is closed.
Otherwise choose a cycle minimum $k$, $s=f^{-1}(k)$ and
$j=f^{-1}(s)$, with $k<s,j$.
From $P_s$, $B$ gives $\{T_j,C_{v+k}\}$.
An early expansion costs at most $c+1$.
Otherwise $A^j$ gives $\{h,C_w\}$ with $w<v$.
The next $B$ sends $h$ to a high cyclic point and keeps its other
coordinate below $v$. Its total cost is at most
$c+j-s+2\le n$, since $s\ge1$ and $j\le t-1$.
\end{proof}

\begin{Slemma}\label{S-lem:orr-bridge-return}
Assume strong connectivity, \eqref{S-eq:orr-alignment}, and that the
unique cyclic exchange is
$(C_\alpha,Y)$, where $\alpha<v$ and $Y$ lies on an external cycle
$D$ of length $d$. Suppose the cyclic pair $\{C_{v+g},Y\}$,
$0\le g<t$, is reached by an inverse prefix of length $L$, with
$L+\alpha\le c+2$. Then $\Tr\le n+3$.
If $L+\alpha\le c+1$, the same construction gives $n+2$.
\end{Slemma}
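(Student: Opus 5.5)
The plan is to follow the inverse-pair dynamics of Lemma~\ref{S-lem:orr-low-pair}. We start at the pair $\{C_{v+g},Y\}$, reached after $L$ steps, and make its $D$-coordinate cross the bridge $(C_\alpha,Y)$ at once. All costs are then charged against $L+\alpha$.

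\emph{First step.} Apply $B$ immediately, since the pair already contains $Y$. Under \eqref{S-eq:orr-alignment}, the point $C_{v+g}$ is moved only when $g\in I$, because the only other moved $C$ point is $C_\alpha$ and $\alpha<v$.

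\emph{Case $g\notin I$.} Here $B$ gives the cyclic pair $\{C_{v+g},C_\alpha\}$, which touches $C$. Stopping at the first visit to $z$ costs at most $\alpha$ rotations and one expansion. The total is $L+\alpha+2\le c+4\le n+3$, using $n=t+c+d$ with $t,d\ge1$. Under the sharper hypothesis it is $c+3\le n+2$.

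\emph{Case $g\in I$.} Here $B$ gives $\{T_j,C_\alpha\}$ with $j=f^{-1}(g)$. Suppose first that $j>0$. If $\alpha<j$, the cyclic coordinate reaches $z$ while the tail index is still positive, so one more $A$ expands, at cost $L+\alpha+2$. Otherwise $A^j$ gives $\{h,C_\delta\}$ with $\delta=\alpha-j<\alpha$. Every $C$ point of coordinate at most $\delta$ is fixed by $b$, since the moved $C$ points are $C_\alpha$ and those with coordinates at least $v$. Lemma~\ref{S-lem:orr-low-pair} then finishes in at most $\delta+B_0+1\le\delta+t+1$ further steps. The total is
\[
 L+1+j+(\alpha-j)+t+1=L+\alpha+t+2\le c+t+4\le n+3,
\]
and it is $n+2$ when $L+\alpha\le c+1$. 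The boundary $\alpha=j$, that is $\delta=0$, is covered by the first assertion of Lemma~\ref{S-lem:orr-low-pair} at $\{h,z\}$.

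\emph{Main obstacle: $j=0$, i.e.\ $g=f(0)$.} Then $B$ yields $\{h,C_\alpha\}$, whose $C$-coordinate is itself moved. The next $B$ simply returns to $\{C_{v+g},Y\}$, so the immediate-bridge route loops. In this case I would first rotate: apply $A$ repeatedly to $\{C_{v+g},Y\}$. If the $C$-coordinate reaches $z$ first, we expand directly. Otherwise, after $d$ rotations the $D$-coordinate is back at $Y$ and the $C$-coordinate is $C_{v+g-d}$ (or some lower point). We then apply $B$ and rerun the analysis above with the lowered coordinate. The lowered coordinate produces either a non-aligned cyclic pair $\{C_{v+g-d},C_\alpha\}$, or a pair $\{T_{j'},C_\alpha\}$ with $j'=f^{-1}(g-d)\ne0$, provided $g-d\in I$ and $v+g-d\ge v$. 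This is where the slack must come from. The extra $d$ rotations need to be absorbed into $n=t+c+d$, which is why the bound is $n+3$ rather than $n+1$. I expect the delicate part to be checking that the lowered coordinate does not again produce the looping index $0$. If it falls below $v$, it is fixed, because it differs from $C_\alpha$ by the first-visit choice. I also expect to need that a repeated loop would give a closed family of pairs, contradicting synchronization, so that the case can be excluded rather than bounded.
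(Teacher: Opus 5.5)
Your treatment of the cases $g\notin I$ and $g\in I$ with $j=f^{-1}(g)>0$ is correct. The genuine gap is the case $j=0$, that is $g=f(0)$ and $b(h)=C_{v+g}$, which you leave as a sketch.

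This case is not marginal. It is exactly how the lemma is used in the low-bridge, no-tail-exchange branch of Theorem~\ref{S-thm:orr-one-cyclic}: there $\{C_{v+f(0)},Y\}$ arises as $B(\{h,C_\alpha\})$, so your opening $B$ merely undoes the last step of the prefix.

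More importantly, the continuation you outline does not close with the estimates you use. After $A^d$ and $B$, suppose $g-d\in I$ and $1\le j'=f^{-1}(g-d)\le\alpha$. Rerunning your analysis with $B_0\le t$ gives a total of $(L+\alpha)+d+B_0+2\le c+d+t+4=n+4$, which is one more than allowed.

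Two smaller points. The looping worry and the closed-family idea are beside the point: $j'\ne0$ holds simply because $f$ is injective and $f(0)=g\ne g-d$. Also, if $v+g-d=\alpha$, the rotated pair is $\{C_\alpha,Y\}$ itself rather than a pair containing a fixed point of $b$. You must then finish it directly instead of appealing to a ``first-visit choice''.

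You need one more idea, and there are two ways to supply it.

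\emph{Completing your route.} In this case $b(h)=C_{v+g}$ is cyclic. So the escape from $\{h,C_{\alpha-j'}\}$ is a single $B$, giving $\{C_{v+g},C_{\alpha-j'}\}$, followed by at most $\alpha-j'+1$ inverse $a$-steps. Moreover $t\ge2$, because $0,j'\in I$. The total is then $L+\alpha+d+3\le c+d+5\le n+3$, or $n+2$ under the sharper hypothesis.

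\emph{The paper's route.} The paper avoids the case split altogether and never applies $B$ first. If $d\ge v$, it finishes through $C_{v+g}$ at cost $L+v+g+1\le n+2$. If $d<v$, it rotates by $J=d(\lfloor g/d\rfloor+1)$, so that $g<J\le g+d$. This returns the $D$-point to $Y$ and moves the $C$-point to $\gamma=v+g-J\in[v-d,v)$, below every tail--cycle endpoint. Then either $\gamma=\alpha$ and the pair is finished directly, or $B$ gives $\{C_\alpha,C_\gamma\}$. Completion then costs at most $\alpha+1$ steps, for a total of at most $L+J+\alpha+2\le n+3$. This route is uniform in $g$ and does not need Lemma~\ref{S-lem:orr-low-pair}.
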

\begin{proof}
Here $n=t+c+d$. If $d\ge v$, finish through $C_{v+g}$.
The total is
\[
 L+v+g+1=(L+\alpha)+(v-\alpha)+g+1\le n+2.
\]
If $d<v$, take $J=d(\lfloor g/d\rfloor+1)$, so $g<J\le g+d$.
After $A^J$, the $D$ point returns to $Y$ and the $C$ coordinate is
$\gamma=v+g-J$, with $0<v-d\le\gamma<v$.
No intermediate $C$ point reaches $z$.
If $\gamma=\alpha$, finish directly.
Otherwise $B$ gives $\{C_\alpha,C_\gamma\}$, because $\gamma$ is below
all cross coordinates and is not $\alpha$.
Completion costs at most $\alpha+1$ more steps, giving
\[
 L+J+1+\alpha+1
 \le c+2+(t-1+d)+2=n+3.
\]
The $B$ step is unused when $\gamma=\alpha$.
Replacing $c+2$ by $c+1$ in the hypothesis improves both bounds by one.
This uses the explicit $J\le g+d$, not a least-common-multiple estimate.
\end{proof}

\begin{proof}[Proof of Theorem~\ref{S-thm:orr-one-cyclic}]
The case without a cyclic--cyclic exchange follows from
Theorem~\ref{S-thm:no-rr}.
Use Lemma~\ref{S-lem:three-terminal} to pass to the terminal strongly
connected restriction; components of size at most two already give
$\Tr\le3$. Cyclicity is unchanged and $b$ cannot cross the component,
so at most one cyclic exchange is retained.
We may therefore assume strong connectivity and exactly one such
exchange. The long-tail case was handled above.
Lemma~\ref{S-lem:orr-off-alignment} permits us to assume
\eqref{S-eq:orr-alignment}.

\emph{The exchange is internal to $C$.}
Write it as $(C_\alpha,C_\beta)$ with $\alpha<\beta$.
If $\alpha\ge v$, Lemma~\ref{S-lem:orr-high} applies.
If $\beta<v$, Lemma~\ref{S-lem:orr-low-internal} applies.
Otherwise $\alpha<v\le\beta=v+s$, where $s\in J$.
Put $j=\theta(s)$. If $j>0$, the prefix $A^{t-s},B$ gives
$\{T_j,C_\alpha\}$.
An earlier $z$ visit expands immediately; otherwise $A^j$ gives
$\{h,C_{\alpha-j}\}$, strictly below $\alpha$, the lowest moved
$C$ coordinate. Lemma~\ref{S-lem:orr-low-pair} gives
\[
 \Tr\le(t-s)+1+j+(\alpha-j)+B_0+1
 \le n+1.
\]
This includes $\alpha=j$.

Only $j=0$ remains, so $s>0$ and the tail exchange is $(h,T_s)$.
Let $i_0=\min I>0$. If $i_0>s$, both RR endpoints lie below
$v+i_0$, and Lemma~\ref{S-lem:orr-low-internal} applies.
Thus $i_0<s$.
For any $i\in I$ with $i<s$, start with $A^t,B$ to get
$\{T_s,C_v\}$ and follow $A$ toward $T_i$.
If $z$ is reached by then, the tail is positive and expansion costs
$c+2$. Otherwise the cyclic coordinate is $w=v-s+i>0$, below $v$.
If $w\ne\alpha$, the next $B$ and completion cost $c+3$.
Consequently only the following configuration can remain:
\[
 I\cap\{1,\ldots,s-1\}=\{i_0\},\qquad
 \alpha=v-s+i_0>0.
\]
All other cross indices exceed $s$; the $\alpha=0$ case was an early
expansion at $T_{i_0}$.

Put $g=f(i_0)$. If $g=i_0$, the sequence
$A^t,B,A^{s-i_0},B$ yields
$\{C_{v+i_0},C_{v+s}\}$ and costs $c+s+3\le n+2$ through expansion.
If $g\ne i_0$ and $f(g)=i_0$, then $g>s$ and
\[
 A^{t-g},B,A^{i_0},B,A^{s-i_0},B
\]
uses $B(P_g)=P_{i_0}$ to reach the final pair
$\{C_{v+g},C_{v+s}\}$. Its cost through expansion is
$c-g+2s+4\le c+s+3\le n+2$.
Otherwise let $r=f^{-1}(i_0)$ and $k=f^{-1}(r)$.
Both exceed $s$. From $P_r$, $B$ gives $\{T_k,C_{v+i_0}\}$.
Follow $A$ toward $T_{i_0}$, expanding if $z$ is reached by then.
Otherwise its cyclic coordinate is $w=v+2i_0-k>0$, below $v+i_0$.
If $w\ne\alpha$, the next $B$ and completion cost
$c+i_0-r+3\le c+2$.
If $w=\alpha$, then $k=s+i_0$; the next $B$ gives
$\{C_{v+g},C_{v+s}\}$, with $g>s$, at total cost
$c+2s-r+3\le c+s+2\le n+1$.
This completes every internal position.

\emph{The exchange is a bridge $(C_\alpha,Y)$ to $D$.}
Write $d=|D|\ge1$, so $n=t+c+d$.
First suppose $\alpha<v$.
If a tail exchange exists, choose $i_0,p,q$ as in
Lemma~\ref{S-lem:orr-tail-cut}. The prefix $A^{t-p},B$ gives
$\{T_q,C_{v+p}\}$, since the bridge misses that $C$ point.
If $v+p<q-i_0$, early expansion costs $c+2$.
Otherwise $A^{q-i_0}$ gives $\{T_{i_0},C_w\}$ with
$w=v+p-q+i_0\ge0$ and $w<v+i_0$.
If $w\ne\alpha$, $B$ fixes this low $C$ point, and completion costs
$c+3$. If $w=\alpha$, $B$ gives $\{C_{v+g},Y\}$,
$g=f(i_0)$, after a prefix of length
$L=(t-p)+1+(q-i_0)+1$.
Here $L+\alpha=c+2$, so Lemma~\ref{S-lem:orr-bridge-return} applies.

If there is no tail exchange, then $0\in I$.
When $f^2$ is the identity, the initial chain is closed because the
bridge endpoint $\alpha<v$ misses it.
Otherwise choose a cycle minimum $k$, $i=f^{-1}(k)$ and
$j=f^{-1}(i)$, with $k<i,j$.
From $P_i$, $B$ gives $\{T_j,C_{v+k}\}$.
An early expansion costs at most $c+1$.
Otherwise $A^j$ gives $\{h,C_w\}$, $w=v+k-j\ge0$, $w<v$.
For $w\ne\alpha$, the next $B$ and completion cost
$c+k-i+3\le c+2$.
For $w=\alpha$, the next $B$ gives $\{C_{v+f(0)},Y\}$ after a prefix
of length $L=(t-i)+1+j+1$.
Now $L+\alpha=c+k-i+2\le c+1$;
Lemma~\ref{S-lem:orr-bridge-return} even gives $n+2$.
Thus all low bridges are covered.

It remains that $\alpha=v+s$, $s\in J$.
All moved $C$ coordinates are at least $v$.
A cycle of $f$ of length at least three gives
\eqref{S-eq:orr-negative}, as in Lemma~\ref{S-lem:orr-high}, and hence
$\Tr\le c+t+2\le n+1$.
Assume $f^2$ is the identity.
If a tail exchange $(T_p,T_q)$, $p<q$, has $p\ne s$, then
$B(P_p)=\{T_q,C_{v+p}\}$ and \eqref{S-eq:orr-negative} applies again.
If there is no tail exchange, $s\ge1$ and $T_s$ is fixed by $B$.
The sequence $A^{t-s},B,A^s,B$ then gives a cyclic pair containing
$C_{v+f(0)}$, for cost at most $c+f(0)+3\le c+t+2\le n+1$.
Hence the only remaining tail exchange is $(T_s,T_j)$ with $s<j$.

If $s=0$, strong connectivity supplies a cross index $i<j$.
Put $g=f(i)$. The sequence
\[
 A^{t-g},B,A^i,B,A^{j-i},B
\]
first uses $B(P_g)=P_i$ and reaches $P_0$.
The following $B$ gives $\{T_j,Y\}$.
At the last $B$, the tail point goes to $C_{v+g}$ and the $D$ point
stays cyclic, even if it returns through the bridge to $C_v$.
The cost through completion is at most
$c+j+4\le c+t+3\le n+2$.

Finally let $s>0$, so $h$ is crossed, and put $g=f(0)$.
It cannot equal $s$ or $j$.
If $g<s$, start at $P_s$, use $B,A^j,B$, and finish through $C_{v+g}$.
The cost is $c+j-s+g+3\le c+j+3\le n+1$.
If $s<g<j$, use $B,A^{j-g},B$ from $P_s$.
Since $f(g)=0$, its final cyclic pair contains $C_v$;
completion costs $c+j-s-g+3\le n+1$.

Suppose $g>j$ and put $\Delta=j-s>0$.
If $\Delta\in I$, use the alternative source $P_s$ and take
$B,A^s,B$.
The tail point is $T_\Delta$ before the last $B$, which sends it to
$C_{v+f(\Delta)}$; its $D$ partner remains cyclic.
The total cost is at most
\[
 (t-s)+1+s+1+(v+f(\Delta))+1
 =c+f(\Delta)+3\le c+t+2\le n+1.
\]
If $\Delta\notin I$, start instead at $P_j$ and use $B,A^s,B$.
Before the last $B$ the pair is $\{h,C_{v+\Delta}\}$.
For $\Delta\ne s$ that $C$ point is fixed and completion costs $c+3$.
For $\Delta=s$ it goes to $Y$, while $h$ goes to $C_{v+g}$;
the cost is $c+g-\Delta+3\le c+t+1\le n$.
These cases exhaust the high bridge.

Every aligned internal exchange or high bridge has bound at most $n+2$,
and every low bridge has bound at most $n+3$.
Together with the off-aligned case and the terminal restriction,
this proves the theorem.
\end{proof}

\section{One receiving cycle with unrestricted transpositions on the cycle}
\label{S-sec:internal-involutions}

We retain the notation of Section~\ref{S-sec:orr}: the singular letter has
tail $T_0=h,\ldots,T_{t-1}$ entering $C_0=z$ on a cycle of length $c$;
$A=a^{-1}$ denotes full inverse image and $B=b$. Instruction sequences
are applied from $\{z\}$ in the written order. Reversing their letters
gives ordinary forward words of the same length. We call transpositions
tail--tail, tail--cycle and cyclic--cyclic transpositions, abbreviated TT, TC
and RR only in the descriptions of cases below.

In this section $a$ has exactly one cycle, so $n=t+c$. If $t>c$,
$a^{c+1}$ already merges three states. Otherwise put $v=c-t$ and use
the initial pairs $P_i=\{T_i,C_{v+i}\}$ of \eqref{S-eq:orr-chain}.
Two distinct points of this one cycle have minimum coordinate at most
$c-2$ and expand within at most $c-1$ inverse $a$-steps. In particular,
off-alignment of TC support gives $\Tr\le n$: for some crossed index $i$,
$C_{v+i}$ is not a TC endpoint, and $B(P_i)$ is cyclic, at total cost
at most $(t-i)+1+(c-1)=n-i$. This improvement uses the single-cycle
hypothesis; it is not the external-cycle estimate of
Lemma~\ref{S-lem:orr-off-alignment}.

\subsection{Crossing the missing image directly to the cycle}

\begin{Stheorem}\label{S-thm:internal-crossed-hole}
Let $\mathcal A$ be synchronizing on $n\ge3$ states, with $\operatorname{rank}(a)=n-1$,
exactly one $a$-cycle and an involution $b$. If $b$ exchanges the missing image
$h$ with a cyclic state, then $\Tr\le n$. There is no bound on the number
or placement of the other transpositions.
\end{Stheorem}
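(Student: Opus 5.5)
The plan is to run the inverse-pair calculus of Section~\ref{S-sec:orr} from the unique first pair and exploit the single-cycle slack: two distinct points of $C$ expand within (minimum coordinate)$+1\le c-1$ further inverse $a$-steps. As noted before the theorem, $t>c$ gives $a^{c+1}$, and off-alignment gives $\Tr\le n$. So I assume $t\le c$, $v=c-t$, and the aligned form \eqref{S-eq:orr-alignment}: the TC endpoints are exactly $C_{v+i}$, $i\in I$, with $b(T_i)=C_{v+f(i)}$ for a permutation $f$ of $I$. The hypothesis says $0\in I$ and $b(h)=C_{v+f(0)}$. The remaining transpositions are TT (the involution $\theta$ of $J$) or RR on $C$. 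Since all TC endpoints lie on $C$, every cyclic point below $C_v$ is either fixed by $b$ or moved within $C$ by an RR exchange.

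First step: the case $f(0)\ne0$. Put $j=f^{-1}(0)>0$. The prefix $A^t,B$ sends $P_0=\{h,C_v\}$ to $\{C_{v+f(0)},T_j\}$. I follow $A$ along the tail. If the cyclic coordinate reaches $z$ before the tail point reaches $h$, one more $A$ expands. Otherwise $A^j$ reaches $\{h,C_w\}$ with $w=v+f(0)-j$. Then $B$ sends $h$ to $C_{v+f(0)}$, and sends $C_w$ to a point of $C$ or to a tail point. Whenever the image is a cyclic pair, I finish through its lower coordinate and bound the total by $t+1+j+1+\min(\cdot)+1$. I then check that this is at most $t+c=n$, using the $c-1$ slack and $w<v+f(0)$. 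If instead $C_w$ is itself a TC endpoint, I iterate, as in Lemma~\ref{S-lem:orr-low-pair}. Each round strictly lowers the relevant coordinate, and I expect a telescoping count to bound the total by $n$.

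Second step: the case $f(0)=0$, so $(h,C_v)$ is a transposition and $B(P_0)=P_0$. Here I look for the smallest index $s$ with $B(P_s)\ne P_s$ and $B(P_s)$ not another initial pair. If no such index exists, $B$ permutes the initial chain. Such an $s$ must exist unless $f^2=\mathrm{id}$, $\theta$ fixes $J$ and RR exchanges avoid $\{C_{v+i}\}$, or unless $\theta$ agrees with the RR-induced matching. In that event the chain together with $K$ is closed under size-preserving steps, contradicting synchronization via \eqref{S-eq:sparse-pairdistance}. For such an $s$, $B(P_s)$ is one of three kinds: a cyclic pair, of cost $\le(t-s)+1+(c-1)<n$; a pair $\{T_j,C_{v+k}\}$ with $k<j$, where the descent reaches $\{h,C_w\}$ and the crossed hole $b(h)=C_v$ turns it into a cyclic pair; or a pair with $k>j$, which descends to $z$ early. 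Lemma~\ref{S-lem:orr-low-pair} is the template, but here $B_0=1$ because $h$ crosses directly. This should give $c+k-s+3\le n$.

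The main obstacle I expect is that final accounting. When $C_w$ is an RR endpoint whose partner is another TC endpoint, the pair can bounce back to the tail. I would first try to show that the partner then has coordinate at least $v$, so the resulting pair is an initial pair with smaller index, and then induct on that index. The delicate boundary cases are $w=0$, where $\{h,z\}$ must be processed by $B$ before expanding, and $t=1$ or $2$. I would check these separately to confirm that the bound $n$ is met rather than $n+1$.
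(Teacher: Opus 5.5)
\textbf{Verdict: genuine gap.} Your reductions to $t\le c$ and aligned support are fine. The step you defer to ``a telescoping count'' is the real content of the theorem, and both of your routes fail as written.

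\emph{First step.} Starting at $P_0$ costs the full prefix $t$. When $C_w$ is fixed, the route $A^t,B,A^j,B$ plus completion costs
\[
t+1+j+1+(w+1)=c+f(0)+3,
\]
which exceeds $n=t+c$ as soon as $f(0)\ge t-2$. For example, take $t=3$, $f$ the cycle $0\mapsto1\mapsto2\mapsto0$, $v\ge1$ and $C_{v-1}$ fixed: your route costs $n+1$. The iteration need not lower anything either. If $f(f(0))=0$, then $B(P_0)=P_{f(0)}$, and $A^{f(0)}$ returns you to $P_0$.

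\emph{Second step.} A pair $\{T_j,C_{v+k}\}$ with $k>j$ does not expand early; you have this backwards. The tail point reaches $h$ first, leaving $\{h,C_{v+k-j}\}$, whose cyclic point lies above $v$ and may be a TC endpoint. Then $B$ bounces back to a mixed pair. Choosing the smallest non-initial $s$ does not prevent this. With $f(0)=0$, $f$ acting by $1\mapsto3\mapsto2\mapsto1$ and $t=4$, you get $B(P_1)=\{T_2,C_{v+3}\}$. Following it as you describe costs $n+1$ for $v\ge2$. Separately, an RR exchange crossing the cut $v$ can send the low point back above $v$, so completion is not bounded by $v$ steps.

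\emph{What is missing.} You need an extremal choice of the starting pair $P_s$ with $s\ge1$. It must force the descent to reach $\{h,C_w\}$ with $w<v$ and $b(C_w)$ still below $v$. The paper makes this choice in order:
\begin{enumerate}
\item If some RR $(C_\alpha,C_{v+s})$ crosses the cut, take the one with least $\alpha$. Then $A^{t-s},B$ gives $\{T_{\theta(s)},C_\alpha\}$, and minimality of $\alpha$ keeps $C_{\alpha-\theta(s)}$ below $v$ under $B$.
\item Otherwise, if $f$ has a cycle of length at least three with minimum $k$, take $s=f^{-1}(k)$. Then $B(P_s)=\{T_j,C_{v+k}\}$ with $j=f^{-1}(s)>k$.
\item If $f^2=\mathrm{id}$ but $\theta$ differs from the matching $\sigma$ induced by the high RR exchanges, take the least vertex of a component of $\theta\cup\sigma$ that is neither isolated nor a common edge. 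This again yields $j>k$.
\item If $f^2=\mathrm{id}$ and $\theta=\sigma$, the initial chain is closed, contradicting synchronization.
\end{enumerate}
In every case, whatever $f(0)$ is, $B$ sends $h$ to a cyclic point and keeps the other point below $v$. The cost is therefore at most $c+j-s+2\le n$. Your case split on $f(0)$ is unnecessary: the crossed hole is used only to turn $\{h,C_w\}$ into a cyclic pair.
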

\begin{proof}
The $h$-cross makes the automaton strongly connected: from the cycle its
partner reaches $h$, and $a$ traverses the entire tail and cycle. Reduce to
$t\le c$ and aligned support as above. Let $I$ be the crossed indices,
so $0\in I$, and write $b(T_i)=C_{v+f(i)}$ for a permutation $f$ of $I$.
Let $J=\{0,\ldots,t-1\}\setminus I$ and let $\theta$ be the tail
involution on $J$, fixing unmatched points. In particular $0\notin J$.

First suppose an RR transposition crosses the cut $v$. Among
$(C_\alpha,C_{v+s})$, $\alpha<v$, choose the least lower endpoint
$\alpha$. Then $s\in J$, $s\ge1$, and $j=\theta(s)\ge1$.
The instructions $A^{t-s},B$ reach $\{T_j,C_\alpha\}$.
If $\alpha<j$, an early expansion costs
$(t-s)+1+\alpha+1\le c-s+1\le n$.
Otherwise $A^j$ reaches $\{h,C_w\}$ with $w=\alpha-j<\alpha$.
By the minimal choice of $\alpha$, $b$ keeps $C_w$ below $v$: a jump
above the cut would be an RR with a smaller lower endpoint. The next
$B$ sends $h$ to a high cyclic point and the other point to a low one.
Completing in at most $v$ additional $A$ steps costs
\[
(t-s)+1+j+1+v=c+j-s+2\le n.
\]
This also handles $\alpha=j$: at $\{h,z\}$ take $B$ before counting
growth. The cut-crossing case necessarily has $v>0$.

If no RR crosses $v$, the lower interval is $B$-invariant and the high
cyclic exchanges give an involution $\sigma$ on $J$. If $f$ has a cycle
of length at least three, choose its least index $k$, put
$s=f^{-1}(k)$ and $j=f^{-1}(s)$; then $k<s,j$ and
$B(P_s)=\{T_j,C_{v+k}\}$. If instead $f^2$ is the identity and
$\theta\ne\sigma$, use the least vertex $k$ of a component of their
matching union that is neither an isolated vertex nor a common edge.
Take $s=k$ when $\sigma(k)=k$, and $s=\sigma(k)$ otherwise. Then
$\sigma(s)=k\le s$ and $j=\theta(s)>k$, giving the same pair formula.
Indeed $\theta(s)=k$ in the second case would isolate a common edge.

For either pair, if $v+k<j$, early expansion costs $c+k-s+2$.
This is at most $c+1$ in the $f$-cycle case. In the matching case it is
at most $c+2$, and at least two vertices of $J$ together with $0\in I$
force $t\ge3$, so it is at most $n$. Otherwise $A^j$ reaches
$\{h,C_{v+k-j}\}$ below $v$. Taking $B$ moves $h$ high and keeps the other
point low; completion has cost $c+j-s+2\le n$, since $s\ge1$ and
$j\le t-1$. When $v=0$ the early branch is forced by $k<j$.

Finally if $f^2$ is the identity and $\theta=\sigma$, then $B$ permutes
the initial pairs: it acts by $f$ on $I$ and by $\theta$ on $J$. Together
with their $A$-steps this is a closed no-growth family, contradicting
synchronization by the restart argument of Section~\ref{S-sec:orr}.
Every case has now been covered without counting RR transpositions.
\end{proof}

\subsection{Tails of length at most three}

\begin{Stheorem}\label{S-thm:internal-short-tail}
If a synchronizing automaton on $n\ge3$ states has $\operatorname{rank}(a)=n-1$, exactly
one $a$-cycle, an $a$-tail of length at most three, and an involution $b$,
then $\Tr\le n+1$. The bound is attained for every $n\ge12$.
\end{Stheorem}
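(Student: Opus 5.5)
The plan has two parts: an upper bound by case analysis on the tail, and a sharpness family.

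\emph{Reductions.} I keep the notation of Section~\ref{S-sec:internal-involutions}: tail $T_0=h,\dots,T_{t-1}$, cycle $C$ of length $c$, $n=t+c$, and the initial pairs $P_i=\{T_i,C_{v+i}\}$ with $v=c-t$. I would proceed in this order.
\begin{itemize}
\item The case $t>c$ is immediate.
\item Off-alignment of the tail--cycle support gives $\Tr\le n$, by the single-cycle remark.
\item If $b(h)$ is cyclic, Theorem~\ref{S-thm:internal-crossed-hole} gives $\Tr\le n$.
\item If no transposition has two cyclic endpoints, Theorem~\ref{S-thm:no-rr} applies.
\end{itemize}
Strong connectivity (after Lemma~\ref{S-lem:three-terminal}) forces $b(h)\ne h$. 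So $b(h)$ is a tail point $T_j$ with $j\in\{1,2\}$, and then $t\in\{2,3\}$.

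\emph{Case enumeration.} With $t\le3$ and $h$ matched inside the tail, the tail data form a short list:
\begin{itemize}
\item $t=2$ with $(h,T_1)$: then $I=\varnothing$, which contradicts strong connectivity, since the cycle cannot reach the tail.
\item $t=3$ with $(h,T_1)$ and $T_2$ crossed to $C_{v+2}$.
\item $t=3$ with $(h,T_2)$ and $T_1$ crossed to $C_{v+1}$.
\end{itemize}
In each surviving case the cyclic--cyclic transpositions are arbitrary. I would sort them by the cut $v$, as in the proof of Theorem~\ref{S-thm:internal-crossed-hole}.
\begin{itemize}
\item If some cyclic exchange $(C_\alpha,C_{v+s})$ crosses the cut, choose the least lower endpoint $\alpha$. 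Push $P_s$ through $B$ and then $A^j$ down to $\{h,C_w\}$ with $w<\alpha$. Use $B$ once more to move $h$ to $T_{\theta(0)}$. One further descent and a $B$ then reach a cyclic pair.
\item If no exchange crosses the cut, the high exchanges induce an involution $\sigma$ on $J$. Compare $\sigma$ with $\theta$ exactly as in Lemma~\ref{S-lem:orr-high}. Equality gives a closed family, which contradicts synchronization. Inequality gives a pair $\{T_j,C_{v+k}\}$ with $k<j$, to which Lemma~\ref{S-lem:orr-low-pair} applies with $B_0\le t$.
\end{itemize}
The key point is that with $t\le3$ the escape path from $h$ back to the cycle has at most two $B$-steps. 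I would count lengths explicitly, hoping to replace the generic bound $c+t+2$ by $c+t+1=n+1$. The saving should come from the fact that $\theta$ and $f$ live on at most three indices, so $k-s\le-1$ whenever the escape uses two $B$ steps.

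\emph{Main obstacle.} The hardest configuration is the one where the low-pair escape needs the full $B_0=2$. This happens for $(h,T_2)$ with cross $(T_1,C_{v+1})$ and a low exchange hitting the carried coordinate. I would handle it by a direct finite trace of the pair graph, which has only $O(1)$ tail states, parametrized by the position of that exchange. If some subcase gives $n+2$, I would look for an alternative source pair, as in the high-bridge argument, starting from $P_1$ rather than $P_2$.

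\emph{Sharpness.} For $n\ge12$ I would build a one-cycle example with $t=3$ and tail exchange $(h,T_2)$, together with a chain of cyclic transpositions that forces the worst alignment. The lower bound would come from listing the reachable inverse-pair graph, as in Proposition~\ref{S-prop:three-tight}. The finite data (maximum $n+1$ in this class for $n\ge12$) would guide the choice of exchanges.
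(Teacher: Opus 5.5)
\textbf{There is a genuine gap.} Your reductions are the right ones, but they leave the whole difficulty untouched, and the case list is slightly off. With $t=3$, $(h\ T_1)$ and $T_2$ crossed, the set $\{T_0,T_1\}$ receives edges only from itself, so strong connectivity excludes this configuration. After alignment, the only survivor is
\[
b(T_0)=T_2,\qquad b(T_1)=C_{c-2},
\]
with arbitrary cyclic exchanges on $C\setminus\{C_{c-2}\}$.

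In this configuration the only exchange with both endpoints at coordinates $\ge v=c-3$ is $(C_{c-3}\ C_{c-1})$, and it closes the initial chain. So every remaining case involves an exchange touching a coordinate below $v$, which neither of your tools handles:
\begin{itemize}
\item The consequence of Lemma~\ref{S-lem:orr-low-pair} assumes that all moved $C$ points have coordinate $\ge v$. Lemma~\ref{S-lem:orr-high} likewise excludes low exchanges.
\item The cut argument of Theorem~\ref{S-thm:internal-crossed-hole} works only because $B$ sends $h$ to a high cyclic point. Here $B(h)=T_2$, so two more letters $A,B$ are needed. Meanwhile a low exchange can carry the cyclic coordinate back up, or onto $C_{c-1}$, which returns the pair to $K$.
\end{itemize}
Concretely, take $C_{c-3},C_{c-1}$ fixed and one low exchange $(C_{c-5}\ C_{c-4})$. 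Your route $A^3,B,A^2,B,A,B$ ends at $\{C_{c-2},C_{c-4}\}$ and costs $c+6=n+3$. The shorter route $A^3,B,A,B$ happens to work in that example. That is exactly the point: the route must be chosen according to where the low exchanges sit, and there can be arbitrarily many of them.

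Your ``finite trace of the pair graph with $O(1)$ tail states, parametrized by the position of that exchange'' does not exist. The pair graph has order $c^2$ vertices, the number of low exchanges is unbounded, and a route can be locked several times in succession. The paper proceeds as follows:
\begin{itemize}
\item It chases a specific chain of $b$-values. First $C_u=b(C_{c-3})$, where $u=c-1$ closes the chain.
\item Next $C_k=b(C_{u-1})$. The only bad value is the lock $k=c-1$, and that lock supplies the cheap cyclic pair $S=\{C_{c-2},C_{c-1}\}$ at cost six.
\item Then $C_r=b(C_{u-2})$ comes from the second route $A^3,B,A^2,B$. Each $r$ is settled directly, or by one rotation of $S$ followed by $B$.
\item Finally $C_{k'}=b(C_{r-1})$, where injectivity of $b$ kills the last bad case.
\item The cases $c=3,4$ are checked by hand.
\end{itemize}
Some argument of this kind, exploiting how locks force specific values of $b$, is what your plan is missing.

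The sharpness half is also absent. You give no family, and enumerating the reachable pair graph as in Proposition~\ref{S-prop:three-tight} only works for one fixed $n$. The paper (Proposition~\ref{S-prop:two-rr-exact-family}) takes $c\ge9$ and
\[
b=(T_0\ T_2)(T_1\ C_{c-2})(C_0\ C_1)(C_{c-4}\ C_{c-1}).
\]
The word $a^{c-5}baba^2ba^3$ merges a triple. The matching lower bound $T_3\ge c+4$ comes from an explicit pair potential $\Phi$, checked on every inverse edge uniformly in $c$, and synchronization is proved separately. The finite data you invoke for this class (maximum $n+1$ for $n\ge12$) are not in the paper.
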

\begin{proof}[Proof of the upper bound]
Use the terminal-component reduction of Lemma~\ref{S-lem:three-terminal}.
The restriction retains the sole cycle and a suffix of the tail, so the
tail length does not increase. A nontrivial synchronizing restriction
cannot have both letters permutations. Components of size at most two
are covered by that lemma's $\Tr\le3$ alternative. Thus assume strong
connectivity and $t\le c$.

If $h$ is crossed, Theorem~\ref{S-thm:internal-crossed-hole} applies.
If $h$ is fixed by $b$, it has no incoming $a$- or $b$-edge from outside,
contradicting strong connectivity. A TT through $h$ uses the entire tail
when $t=2$, leaving no way from the cycle into it. For $t=3$ the only
possible remaining configuration is
\[
b(T_0)=T_2,\qquad b(T_1)=C_{c-2},
\]
after off-alignment has been removed. The alternative TT $(T_0,T_1)$
leaves those two points unreachable from a cross at $T_2$. The initial
pairs are $K=P_2$, $P_1$ and $P_0$, at costs 1,2,3; $B$ fixes $P_1$.

For $c=3$, the unused cyclic points are 0,2. Swapping them closes the
initial pairs; otherwise $aba^3$ works. For $c=4$, a swap (1,3) closes
them; a swap (0,1) permits $aba^3$; with no swap or with (0,3),
$a^2ba^3$ works. These exhaust the involutions on the unused points.
Assume $c\ge5$.

Let $b(C_{c-3})=C_u$. The value $u=c-1$ closes the initial pair
family, and $u=c-2$ is unavailable. Hence $0\le u\le c-3$.
From $P_0$, $B$ gives $\{T_2,C_u\}$; if $u\le1$, completion costs
at most six. Otherwise $A,B$ give at cost six a cyclic pair
\[
S=\{C_{c-2},C_k\},\qquad C_k=b(C_{u-1}).
\]
Here $k\ne c-2,c-3$: the latter would imply $u-1=u$ by involutivity.
If $k\le c-4$, direct completion costs at most $c+3=n$.
Only the lock $b(C_{u-1})=C_{c-1}$ remains, so
$S=\{C_{c-2},C_{c-1}\}$ is available at cost six.

The second mixed route $A^3,B,A^2,B$ reaches $\{T_2,C_r\}$ at
cost seven, where $b(C_{u-2})=C_r$. The already fixed reverse images
give
\[
0\le r\le c-4,\qquad r\ne u,u-1.
\]
If $r\le1$, direct expansion costs at most nine, within $c+4$ since
$r\le1$ in this branch requires only $c\ge5$. Otherwise $r\ge2$.
For $r>u$, use the retained $S$ and $A^{c-2-r},B$. The cyclic output
contains $C_{u-2}$; its other point remains cyclic because
$r+1\le c-3$. The total with completion is
\[
6+(c-2-r)+1+(u-1)=c+u-r+4\le c+3.
\]
For $r\le u-4$, use $S$ and $A^{c-u},B$. The output is
$\{C_r,C_{c-1}\}$ and the total is $c+r-u+8\le c+4$.

Only $r=u-2$ or $u-3$ remains. Continuing the second mixed route with $A,B$
gives $\{C_{c-2},C_{k'}\}$ at cost nine, with
$b(C_{r-1})=C_{k'}$. The values $k'=c-2,c-1,c-3$ are excluded by
the crossed fibre and the two fixed reverse images. If $k'\le c-6$,
completion costs at most $c+4$. For $k'=c-4$, use $S$ and $A^2,B$;
one output is $C_{r-1}$ and the other is cyclic. Completion costs at
most $r+9$, which is within $c+4$ in both cases for $r$. For $k'=c-5$,
use $S$ and $A^3,B$, costing at most $r+10$. If $r=u-3$ this is at most
$c+4$. For $r=u-2$ it could exceed $c+4$ only when $u=c-3$. But then
$r=c-5$ is fixed by $b$, whereas $b(C_{r-1})=C_r$, violating injectivity.
Thus this last case also has $u\le c-4$ and the desired budget.

All displayed rotations have nonnegative length and do not wrap before
the reserved expansion. Their $B$-inputs avoid the unique crossed cyclic
coordinate. An earlier visit to $z$ has a positive tail companion and
only shortens the word. Hence $\Tr\le c+4=n+1$.
\end{proof}

\subsection{A sharp family and its labeling of state pairs}

\begin{Sproposition}\label{S-prop:two-rr-exact-family}
For $c\ge9$, let $a$ be the path $T_0\to T_1\to T_2\to C_0$ followed
by the $c$-cycle $C$, and put
\[
b=(T_0\ T_2)(T_1\ C_{c-2})(C_0\ C_1)(C_{c-4}\ C_{c-1}).
\]
This automaton is strongly connected and synchronizing, and
$\Tr=c+4=n+1$.
\end{Sproposition}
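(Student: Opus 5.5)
The plan has four parts: strong connectivity, synchronization, an upper bound from Theorem~\ref{S-thm:internal-short-tail}, and a matching lower bound. Write $n=c+3$, $t=3$, $v=c-3$ and $h=T_0$. The collision pair is $K=\{T_2,C_{c-1}\}$, and the initial pairs are
\[
P_2=K,\qquad P_1=\{T_1,C_{c-2}\},\qquad P_0=\{T_0,C_{c-3}\}.
\]

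\emph{Strong connectivity.} From the cycle, the exchange $(T_1\ C_{c-2})$ enters the tail at $T_1$. Then $a$ reaches $T_2$, and $(T_0\ T_2)$ reaches $h$. From $h$, the letter $a$ traverses the whole tail and the cycle.

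\emph{Synchronization.} I would argue on pairs, as in Main Lemma~\ref{M-lem:sync}, and show that every pair reaches $K$ or merges. Pairs meeting the tail are pushed by $a$ onto the cycle. For a cyclic pair $\{C_i,C_j\}$, rotate one point to $C_{c-2}$ and apply $b$, which produces a pair containing $T_1$. Pairs of the form $\{T_1,C_j\}$ or $\{T_2,C_j\}$ reach $K$ after a suitable rotation combined with the exchanges $(T_0\ T_2)$ and $(T_1\ C_{c-2})$. One must check that no closed family avoids $K$. The only danger is a set of pairs invariant under both letters, which I would rule out using the two low/high swaps $(C_0\ C_1)$ and $(C_{c-4}\ C_{c-1})$. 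As a fallback, an explicit reset word pattern valid uniformly in $c$ can be verified by direct composition.

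\emph{Upper bound.} The tail has length three, so Theorem~\ref{S-thm:internal-short-tail} gives $\Tr\le n+1=c+4$. I would also record an explicit witness, obtained by following the locked route in that proof: $A^3,B$ gives $\{T_2,C_{c-3}\}$; then $A,B$ gives $S=\{C_{c-2},C_{c-1}\}$, because $b(C_{c-4})=C_{c-1}$, so we are in the lock case with $u=c-3$; finally we rotate and complete.

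\emph{Lower bound.} By \eqref{S-eq:sparse-pairdistance} it suffices to show that, in the inverse-pair graph, every path from $K$ to a pair containing $z$ and avoiding $h$ has at least $c+2$ steps. The restart rule disposes of singleton contractions. I would assign a label $\psi$ to every pair and check that $\psi(P)\le1+\psi(P')$ for each size-preserving step $P\to P'$ under $A$ or $B$, and that $\psi\ge1$ on targets, mirroring Main Lemma~\ref{M-lem:potential} run in the inverse direction. A natural first attempt is as follows.
\begin{itemize}
\item For a cyclic pair, let $\psi$ be $c+2$ minus the minimal coordinate. This decreases by one under $A$, and only the exchanges at $C_0,C_1,C_{c-4},C_{c-1},C_{c-2}$ can disturb it.
\item For a mixed pair $\{T_i,C_j\}$, choose $\psi$ so that the chain $P_2\to P_1\to P_0$ and the moves $B(P_0)=\{T_2,C_{c-3}\}$ and $B(\{T_1,C_{c-4}\})=S$ force $\psi(S)$ to be large, about $6$ plus the cost of escaping the lock, since the low coordinates near $C_0$ are blocked by $(C_0\ C_1)$.
\end{itemize}
An equivalent route is a finite case analysis of the reachable pair classes, in the style of the table in Proposition~\ref{S-prop:three-tight} but parametrized by coordinates. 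This analysis shows that each alternative branch in the proof of Theorem~\ref{S-thm:internal-short-tail} ($r>u$, $r\le u-4$, $k'$ small, and so on) costs at least $c+4$ here.

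The main obstacle is finding labels that are valid uniformly in $c\ge9$. The labels must be tight along the locked route, and they must correctly penalize pairs whose lower point is $C_0$ or $C_1$, where the swap $(C_0\ C_1)$ can reroute a pair arriving at $z$. I would first compute the exact inverse distances for small $c$, for example $c=9,10,11$, read off the pattern, and then verify the one-step inequalities case by case over the five exchanged cyclic points.
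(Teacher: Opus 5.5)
Your treatment of strong connectivity matches the paper, and so does your reduction of the lower bound to $\operatorname{dist}(K,\text{targets})\ge c+2$. The upper bound follows legitimately by citing Theorem~\ref{S-thm:internal-short-tail}. The gap is the lower bound. The framework you describe (an integer label on every unordered pair, one-step inequalities along size-preserving inverse steps, restart at singleton contractions) is the paper's framework. But you never produce the labels, and finding and verifying them is the whole content of $T_3\ge c+4$. As written, that inequality is unproved.

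The one concrete ingredient you offer also points the wrong way. With your inequality $\psi(P)\le1+\psi(P')$ for each inverse step $P\to P'$, the base condition must be an \emph{upper} bound at targets ($\psi\le1$ when the final expansion is counted), together with $\psi(K)\ge c+3$. So on cyclic pairs the label must be $\min(i,j)+1$. Your choice $c+2-\min(i,j)$ equals $c+2$ on the targets $\{C_0,C_j\}$. It also \emph{increases} under $A$, since $A$ lowers both coordinates.

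The paper's potential is as follows.
\begin{itemize}
\item Cyclic pairs get $\min(i,j)+1$.
\item The pairs $\{T_0,T_1\}$, $\{T_1,T_2\}$, $\{T_0,T_2\}$ get $c-1$, $c$, $c+3$.
\item Mixed pairs get a table $\Phi_i(j)=\Phi(\{T_i,C_j\})$. In the bulk $4\le j\le c-5$ it is $j+3$ for $i=0$ and $j+2$ for $i=1,2$. The columns $j=0,1,2,3$ and $j=c-4,\dots,c-1$ are tuned individually, for example $\Phi_1(0)=\Phi_2(0)=1$, $\Phi_0(c-4)=c+4$ and $\Phi(K)=\Phi_2(c-1)=c+3$.
\end{itemize}
The verification has two parts. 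First, an explicit table shows all $A$-margins $1+\Phi_{i-1}(j-1)-\Phi_i(j)$ are nonnegative. Second, one checks $|\Phi(P)-\Phi(B(P))|\le1$, organized by $\sigma=(0\ 1)(c-4\ c-1)$, which interchanges rows $0$ and $2$.

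Your fallback does not close the gap. Showing that each branch in the proof of Theorem~\ref{S-thm:internal-short-tail} costs at least $c+4$ here bounds only the particular routes that proof constructs. A lower bound must control every inverse path from $K$, uniformly in $c$.

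There are two smaller issues. First, your explicit witness is misdescribed. The lock pair $S=\{C_{c-2},C_{c-1}\}$ is reached at cost $6$, and rotating and completing from it takes $c-1$ more letters, for $c+5$ in total. The word of length $c+4$ takes the second mixed route $A^3,B,A^2,B,A,B,A^{c-5}$, through $\{T_2,C_{c-5}\}$, $\{T_1,C_{c-6}\}$ and $\{C_{c-2},C_{c-6}\}$. This gives $W=a^{c-5}baba^2ba^3$, which merges $\{T_2,C_3,C_{c-1}\}$. Since the citation already gives the upper bound, this is only a slip.

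Second, your synchronization paragraph is a plan rather than an argument. The idea that makes it short is that $ba^c$ acts on $C$ as $(C_0\ C_1)(C_{c-4}\ C_{c-1})$, because the crossed point returns via $C_{c-2}\to T_1\to T_2\to C_0\to\cdots\to C_{c-2}$. Since $c-4>\lfloor c/2\rfloor$, this permutation followed by a backward rotation sends $\{C_0,C_d\}$ to $\{C_0,C_{d-1}\}$ for $2\le d\le\lfloor c/2\rfloor$. Hence words acting as permutations of $C$ carry every cyclic pair to $\{C_3,C_{c-1}\}$, which $W$ merges. Starting from $Q\cdot a^3=C$ and repeating this strictly shrinks the image until it is a single state.
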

\begin{proof}
All four swaps are disjoint. Strong connectivity follows from the cross
to $T_1$, then $a$ to $T_2$, $b$ to $T_0$, and $a$ through the tail and cycle.
The inverse instruction sequence
\[
A^3,B,A^2,B,A,B,A^{c-5}
\]
has successive pairs $\{T_0,C_{c-3}\}$, $\{T_2,C_{c-3}\}$,
$\{T_0,C_{c-5}\}$, $\{T_2,C_{c-5}\}$, $\{T_1,C_{c-6}\}$,
$\{C_{c-2},C_{c-6}\}$. Its final set is
$\{T_2,C_3,C_{c-1}\}$. Thus
$W=a^{c-5}baba^2ba^3$ is a triple word of length $c+4$.

For the lower bound use a potential on every unordered pair. On cyclic
pairs set $\Phi(\{C_i,C_j\})=\min(i,j)+1$. On tail pairs set
\[
\Phi(\{T_0,T_1\})=c-1,\quad
\Phi(\{T_1,T_2\})=c,\quad\Phi(\{T_0,T_2\})=c+3.
\]
For mixed pairs write $\Phi_i(j)=\Phi(\{T_i,C_j\})$ and use
the following complete table; the bulk is nonempty even when $c=9$.
\begin{center}\small
\begin{tabular}{@{}c|rrrrrrrrr@{}}
$i\backslash j$&0&1&2&3&$4\le j\le c-5$&$c-4$&$c-3$&$c-2$&$c-1$\\\hline
0&3&2&4&5&$j+3$&$c+4$&$c+1$&$c+1$&$c-1$\\
1&1&2&3&5&$j+2$&$c-1$&$c-1$&$c+2$&$c-2$\\
2&1&2&3&4&$j+2$&$c-2$&$c$&$c$&$c+3$
\end{tabular}
\end{center}
We verify $\Phi(P)\le1+\Phi(Q)$ on every pair-preserving inverse
edge and $\Phi(P)\le1$ on every immediate growth edge.

For $A$, cyclic pairs rotate down by one or expand from a pair of value
one containing $C_0$. The mixed pairs with $i\ge1$, $j=0$ expand and
have value one. For $i\ge1$, $j\ge1$, the target is
$\{T_{i-1},C_{j-1}\}$. The margins $1+\Phi_{i-1}(j-1)-\Phi_i(j)$
are
\begin{center}\small
\begin{tabular}{@{}c|rrrrrrrrr@{}}
$i\backslash j$&1&2&3&4&$5\le j\le c-5$&$c-4$&$c-3$&$c-2$&$c-1$\\\hline
1&2&0&0&0&1&0&6&0&4\\
2&0&0&0&0&0&0&0&0&0
\end{tabular}
\end{center}
All are nonnegative. At $c=9$ the middle interval is empty and the
columns still do not overlap. The pair $\{T_0,C_0\}$ returns to $K$
and satisfies $3\le1+(c+3)$; other mixed pairs with $T_0$ contract.
The only pair-preserving two-tail edge is
$\{T_1,T_2\}\to\{T_0,T_1\}$, with equality of its bound.

For $B$ it suffices to check absolute potential differences at most one.
Let $\sigma=(0\ 1)(c-4\ c-1)$ on cyclic indices other than the
crossed index $c-2$. The mixed $i=0,2$ rows are exchanged with $j$ replaced
by $\sigma(j)$, and their values differ by one. At $j=c-2$ their
partners are the two-tail pairs with values $c$ and $c-1$, again differing
by one. The pair $\{T_0,T_2\}$ and mixed pair $\{T_1,C_{c-2}\}$
are fixed. Every other $i=1$ mixed pair is exchanged with
$\{C_{c-2},C_{\sigma(j)}\}$; the absolute differences for
$j=0,1,2,3$, the bulk, $c-4$, $c-3$, $c-1$ are respectively
$1,1,0,1,1,0,1,1$. Finally, for two cyclic points not containing $c-2$,
the low swap changes the minimum by at most one. The high swap leaves
any other coordinate at most $c-5$ as the minimum; its endpoint pair is
fixed, and the only remaining other coordinate is $c-3$, again changing
the minimum by one. This exhausts $B$.

Every first singleton expansion is the initial $A$-step to $K$. A later
contraction permits a shorter restart at the next such expansion, so
a shortest successful inverse word has a pair-preserving continuation
from $K$ to its first growth. Since $\Phi(K)=c+3$, that continuation
needs at least $c+3$ letters, counting the final expansion. Together with
the first $A$ this proves $\Tr\ge c+4$.

To prove synchronization, $a$ rotates $C$ and $ba^c$ acts on $C$ as
\[
\sigma=(C_0\ C_1)(C_{c-4}\ C_{c-1}).
\]
The crossed
point returns from $T_1$ to $C_{c-2}$ after $a^c$.
Rotate and orient any cyclic pair as $\{C_0,C_d\}$,
$1\le d\le\lfloor c/2\rfloor$. If $d\ge2$, then
$c-4>\lfloor c/2\rfloor$, so applying $\sigma$ and the backward
rotation sends it to $\{C_0,C_{d-1}\}$. Every such operation and its
inverse is available as a word permuting $C$. Thus any cyclic pair can
be sent to $\{C_3,C_{c-1}\}$, which $W$ merges. Start by $a^3$ to
enter $C$; repeatedly apply these permutations, $W$, and $a^3$. The image
cardinality strictly decreases until it is one. This proves synchronization
for every $c\ge9$ and completes the sharpness assertion above.
\end{proof}

\section{Two transpositions on cycles with restricted tail--cycle endpoints}\label{S-sec:two-rr}

The next results keep exactly one $a$-cycle, but permit arbitrary tail length
and arbitrarily many TT transpositions. They use the initial pairs and original
letter budgets of Section~\ref{S-sec:orr}, especially the low-pair escape
\SuppPrefix Lemma~\ref{S-lem:orr-low-pair} and the arbitrary-RR all-high
Lemma~\ref{S-lem:orr-high}.

\begin{Stheorem}\label{S-thm:one-cross-two-rr}
Let a synchronizing binary automaton on $n\ge3$ states have $\operatorname{rank}(a)=n-1$,
exactly one $a$-cycle, an involution $b$ and at most two RR transpositions.
If it has exactly one TC transposition, then $\Tr\le n+2$.
\end{Stheorem}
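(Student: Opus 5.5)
We follow the template of Theorem~\ref{S-thm:internal-crossed-hole} and Lemma~\ref{S-lem:orr-high}. First reduce, by Lemma~\ref{S-lem:three-terminal}, to a strongly connected terminal restriction. The restriction keeps the unique cycle, at most two RR transpositions and at most one TC transposition, and a TC transposition must survive because a cyclic point has to reach $h$. Then dispose of $t>c$ by $a^{c+1}$. Off-alignment gives $\Tr\le n$ by the single-cycle remark at the start of Section~\ref{S-sec:internal-involutions}. With exactly one TC transposition $(T_i,C_{v+i})$, alignment means $I=\{i\}$ and $f$ is the identity. If $i=0$, the TC crosses $h$ and Theorem~\ref{S-thm:internal-crossed-hole} gives $\Tr\le n$. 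Hence assume $i=i_0>0$. Strong connectivity then forces $h$ to be matched by a TT, and Lemma~\ref{S-lem:orr-tail-cut} supplies a tail exchange $(T_p,T_q)$ with $p<i_0<q$.

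Next split by the position of the (at most two) RR transpositions relative to the cut $v$.

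\emph{All RR endpoints high.} If both RR transpositions have both endpoints in $\{C_v,\dots,C_{c-1}\}$, Lemma~\ref{S-lem:orr-high} gives $n+2$ directly.

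\emph{No RR meets the high interval.} If every RR endpoint lies below $v+i_0$ (the cut $M$ of Lemma~\ref{S-lem:orr-low-internal}), repeat the tail-cut argument of that lemma. The prefix $A^{t-p},B$ gives $\{T_q,C_u\}$ with $u<M$, and the interval below $M$ is $B$-invariant. After $A^{q-i_0}$ and $B$ the tail point goes to $C_{v+i_0}$ while the other stays low, so the cost is $c+q-p+2\le n+1$. This lemma was stated for one RR, but its proof only used invariance of the lower interval, which two low RRs still give.

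\emph{Some RR crosses the cut.} Take the RR $(C_\alpha,C_{v+s})$ with least lower endpoint $\alpha$. Here $s\in J$, and we set $j=\theta(s)$. If $j>0$, the route $A^{t-s},B,A^j$ reaches $\{h,C_{\alpha-j}\}$, or expands early. By minimality of $\alpha$ the low point is then fixed or stays low, and Lemma~\ref{S-lem:orr-low-pair} yields $\Tr\le(t-s)+1+\alpha+B_0+1\le n+1$. If instead $s$ is unmatched or $\theta(s)=0$, we follow the internal-exchange part of the proof of Theorem~\ref{S-thm:orr-one-cyclic}. We run $A^t,B$ toward $T_{i_0}$, then apply $B$ to reach a cyclic pair, and check that the second RR can spoil this route at most once. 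Each such spoiling should be absorbed by one extra $B$, keeping the cost within $c+s+3\le n+2$.

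The main obstacle is this last, mixed case, where the second RR interferes with the low-pair escape. The proof of Theorem~\ref{S-thm:orr-one-cyclic} relied on there being a unique moved low coordinate. With two RRs, a low point may be sent by the second RR to another low point, or to a high point aligned with a tail partner, producing a lock analogous to $w=\alpha$ there. My first attempt would be to enumerate the relative orders of the four RR endpoints against $v$ and $v+i_0$. For each order I would either exhibit a closed initial-pair family, which contradicts synchronization, or an alternative source $P_q$ or $P_p$ that avoids the lock at the cost of one additional rotation. The constraint that only one TC exists (so $f$ is trivial) should be what caps the total at $n+2$ rather than the $n+3$ bound of Theorem~\ref{S-thm:orr-one-cyclic}.
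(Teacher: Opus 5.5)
There is a genuine gap. Your reductions and cases (a) and (b) are sound. The trouble is that the three cases do not exhaust the RR configurations, and the one they miss is the hard one.
\begin{itemize}
\item Case (a) needs both RRs inside $\{C_v,\dots,C_{c-1}\}$, (b) needs every RR endpoint below $M=v+i_0$, and (c) needs an RR crossing $v$.
\item None of them covers one RR wholly below $v$ together with a second RR inside $\{C_v,\dots,C_{c-1}\}$ that has an endpoint above $M$.
\end{itemize}
This case is not vacuous. Suppose the high RR is the cyclic mirror $(C_{v+p},C_{v+q})$ of the straddling exchange $(T_p,T_q)$ from Lemma~\ref{S-lem:orr-tail-cut}. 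Then your tail-cut route is locked: $A^{t-p},B$ sends $P_p$ to $P_q$, and $A^{q-i_0}$ lands on $\{T_{i_0},C_M\}$, which $B$ fixes.

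Escaping this lock needs a further tail exchange below $i_0$.
\begin{itemize}
\item For $p>0$, the first descent below $p$ forces $(T_r,T_{q'})$ with $r<p<q'<i_0$. The route through the cheap entrance $B(P_q)=P_p$, namely $A^{t-q},B,A^{p-r},B,A^{q'-p},B,A^{q-i_0},B$ followed by completion through $C_M$, costs $c+q'-r+5\le n+2$.
\item For $p=0$, a different route uses an exchange with $1\le r<q'<i_0$. If no such exchange exists, the initial chain is closed.
\end{itemize}
Case (c) also has holes. Lemma~\ref{S-lem:orr-low-pair} requires $b$ to \emph{fix} every $C$-point of coordinate at most $\delta$. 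Minimality of $\alpha$ among crossing RRs does not give this when the second RR lies wholly below $v$, and the carried coordinate is then no longer controlled. The subcase $\theta(s)=0$, where you arrive at $\{h,C_\alpha\}$ with $C_\alpha$ moved, is only a plan, not an argument.

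The idea you are missing is to split on whether the straddling exchange is mirrored, rather than on the RR positions.
\begin{itemize}
\item From $P_p$, $B$ gives $\{T_q,C_u\}$ with $C_u=b(C_{v+p})$, wherever it lies.
\item Then $A^{q-i_0}$ either expands early with positive tail index, or reaches $\{T_{i_0},C_{u-q+i_0}\}$.
\item Because $C_M$ is the only crossed cyclic point, the next $B$ gives a cyclic pair containing $C_M$ unless $u=v+q$, i.e.\ unless the mirror RR is present.
\end{itemize}
Completion through $C_M$ then costs $c+q-p+3\le n+2$. This settles at once every configuration with an unmirrored straddler, whatever the RR positions. Apply it to every tail exchange with upper endpoint above $i_0$; afterwards each such exchange is mirrored. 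Two of them would use up both RRs, all high, and Lemma~\ref{S-lem:orr-high} applies. So the straddler is unique, one RR is pinned high as its mirror, and only the second RR's position remains:
\begin{itemize}
\item absent or high: Lemma~\ref{S-lem:orr-high};
\item wholly below $v$: the five-block route above;
\item crossing $v$ as $(C_\alpha,C_{v+j})$, in three short subcases:
  \begin{itemize}
  \item $j>i_0$: $T_j$ is fixed, and one uses $B,A^{j-i_0},B$ from $P_j$;
  \item $j<i_0$ and $\theta(j)>0$: the low-pair escape is now legitimate, because $C_\alpha$ is the only moved point below $v$;
  \item $\theta(j)=0$: the five-block route again.
  \end{itemize}
\end{itemize}
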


\begin{Stheorem}\label{S-thm:diagonal-two-rr}
Let a synchronizing binary automaton on $n\ge3$ states have
$\operatorname{rank}(a)=n-1$, exactly one $a$-cycle, an involution $b$,
and at most two RR transpositions. Suppose $t\le c$ and every TC transposition
is $(T_i,C_{v+i})$, where $v=c-t$. Then $\Tr\le n+3$, with no bound
on the numbers of TC or TT transpositions.
\end{Stheorem}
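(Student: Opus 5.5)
Throughout, $a$ has one cycle, so $n=t+c$. Every TC transposition is aligned, $(T_i,C_{v+i})$, which means $f$ is the identity on the crossed index set $I$. Consequently $B(P_i)=P_i$ for every $i\in I$. The proof runs the same case scheme as Lemma~\ref{S-lem:orr-high} and the proof of Theorem~\ref{S-thm:internal-crossed-hole}, but the permutation $f$ can no longer supply an escape. All movement of the initial chain therefore has to come from the TT involution $\theta$ on $J$ and from the at most two RR transpositions.

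First, reduce as in Section~\ref{S-sec:orr}. Lemma~\ref{S-lem:three-terminal} lets us pass to the terminal strongly connected restriction. That restriction keeps the single cycle, and it keeps diagonality, because the tail shortens from the front and $v$ is recomputed with the same $c$. We may then assume $b(h)\ne h$. If $h$ is crossed, then $h=T_0$ is exchanged with $C_v$, and Theorem~\ref{S-thm:internal-crossed-hole} gives $\Tr\le n$. So assume $0\in J$, with $\theta(0)=s_0>0$.

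Next, split by the position of the RR endpoints relative to the cut $v$.

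\emph{All RR endpoints at or above $v$.} Lemma~\ref{S-lem:orr-high} applies directly with $f=\mathrm{id}$. It gives $\Tr\le n+2$, and if $\theta=\sigma$ the initial chain is closed, which contradicts synchronization.

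\emph{Some RR crosses the cut.} Choose the one with least lower endpoint $\alpha<v$, say $(C_\alpha,C_{v+s})$ with $s\in J$. Run $A^{t-s},B$ to reach $\{T_{\theta(s)},C_\alpha\}$. Then descend the tail as in the cut-crossing paragraph of Theorem~\ref{S-thm:internal-crossed-hole}, using Lemma~\ref{S-lem:orr-low-pair} at $\{h,C_w\}$. The difference from that theorem is that $h$ is now not crossed to the cycle. The escape from $h$ therefore goes through $B(h)=T_{s_0}$ and further tail steps, and this costs the extra $B_0\le t$ steps. I expect to recover a bound of at most $n+3$ here. If $\theta(s)=s$, I would use the other crossing RR, or else a lower one.

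\emph{Both RR entirely below $v$.} The lower interval is $B$-invariant. Use a tail exchange across the cut $i_0=\min I$, supplied by Lemma~\ref{S-lem:orr-tail-cut}, to reach $\{T_{i_0},C_w\}$ with $w$ low. After one $B$ we have a cyclic pair of cost about $c+q-p+2$, as in Lemma~\ref{S-lem:orr-low-internal}. The point to check is that $B$ keeps $C_w$ low and sends $T_{i_0}$ to $C_{v+i_0}$. Since the pair is then not $P_{i_0}$, its cyclic point is distinct, and we finish within $M$ further steps.

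\emph{Mixed case: one RR low, one high, or a bridge-type configuration inside $C$.} Combine the two arguments above. The high RR gives $\sigma$ on part of $J$, and the low RR alters only the completion phase. I would apply Lemma~\ref{S-lem:orr-negative}, i.e.\ \eqref{S-eq:orr-negative}, whenever $\sigma\ne\theta$ yields a pair $B(P_s)=\{T_j,C_{v+k}\}$ with $k<j$. That lemma needs all moved $C$ points to lie at or above $v$, so a low RR forces a modified low-pair escape. At the pair $\{h,C_\delta\}$, if $C_\delta$ hits a low RR endpoint, I would apply that $B$ as well. This keeps the coordinate below $v$ at the cost of one additional letter, which is where the constant $+3$ arises.

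\textbf{Main obstacle.} The hard step is the low-pair escape when some RR endpoint lies below $v$, because Lemma~\ref{S-lem:orr-low-pair} assumes that $b$ fixes every low $C$ coordinate. Along the reversed tail path from $h$, a $B$ step may move the carried cyclic coordinate, and it may even carry it above the cut. I would handle this as in Lemma~\ref{S-lem:orr-bridge-return}.

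1. Stop at the first $B$ that moves the coordinate. The current pair then has a tail point and a cyclic point.
2. Since only two RR transpositions exist, this happens at most twice.
3. Each occurrence costs at most one extra letter over the budget $c+t+2$.

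I would need to check that two such events cannot both happen in the worst configuration, or else that the first event already produces a cyclic pair and so terminates the argument early. That check is exactly what is required to stay within $n+3$.
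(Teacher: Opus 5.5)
Your reduction is sound, and so are three of your cases: the all-high case (Lemma~\ref{S-lem:orr-high} with $f=\mathrm{id}$), the both-low case (the argument of Lemma~\ref{S-lem:orr-low-internal}, which survives a second low swap because $\{C_0,\ldots,C_{v-1}\}$ stays $B$-invariant), and the crossing case when the other RR is absent, high, or also crossing.

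\emph{Where the proof breaks.} The gap is in the two configurations with a wholly low swap $(C_{\alpha'},C_{\beta'})$, $\alpha'<\beta'<v$, alongside a crossing or a high swap. There you invoke Lemma~\ref{S-lem:orr-low-pair} and \eqref{S-eq:orr-negative}. Their budget $\delta+B_0+1$ needs $b$ to fix every $C_x$ with $x\le\delta$, so that the carried coordinate never increases along the escape path. Choosing the least crossing endpoint $\alpha$ does not control a wholly low swap with $\alpha'\le\delta$. If the carried coordinate equals $\alpha'$ at a $B$ step of the path, it jumps to $\beta'$, and the completion grows by $\beta'-\alpha'$ letters, which can be of order $v$.

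So the correction is not ``one additional letter'', and keeping the coordinate below $v$ is not what the budget needs. Stopping at that $B$ step leaves a tail--cycle pair with no cost bound. You flag this check as open, but it is the whole content of these cases, so $n+3$ is not established for them. Nor is this where the $+3$ of the statement comes from.

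\emph{The missing idea.} Bypass the low-pair escape and finish through the least crossed index $i=\min I>0$. There $b(T_i)=C_M$ with $M=v+i$, and no crossed $C$ point lies below $M$. Strong connectivity supplies a TT $(T_p,T_q)$ with $p<i<q$. The route $A^{t-p},B,A^{q-i},B$ either expands early with a positive tail point, or ends with a cyclic pair containing $C_M$. Completion through $C_M$ costs $M+1$ whatever the RR swaps did to the partner, giving \eqref{S-eq:two-rr-nonmirror-cost}.

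The route locks only if the partner reaches $T_i$ on a crossed point. If that point is $C_M$ itself, this forces the high ``mirror'' swap $(C_{v+p},C_{v+q})$. If it is $C_{v+k}$ with $k>i$, one more diagonal return $B,A^{k-i},B$ costs at most $n+3$; this is the actual source of the constant.

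A mirror uses up an RR at high coordinates, which leaves three cases:
\begin{itemize}
\item A second mirror, or a high second RR, puts all RR high, so Lemma~\ref{S-lem:orr-high} applies.
\item A low second RR is handled by the cheap entrance \eqref{S-eq:mirror-cheap-entrance} through the mirror, followed by the five-block route of \eqref{S-eq:mirror-five-block-cost}. The low interval is $B$-invariant, so the partner stays below $M$.
\item Only with a crossing second RR is the low-pair escape used. It is then legitimate, because $C_\alpha$ is the unique moved point below $v$.
\end{itemize}
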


\begin{proof}[Proof of the two theorems]
Terminal restriction retains $C$ and a tail suffix. Every TC endpoint on $C$
forces its paired tail point into the closed component, so the unique TC
is retained. Diagonal alignment is also inherited: deleting $\ell$ initial
tail points replaces $t,i$ by $t-\ell,i-\ell$ and leaves
$(c-t)+i$ unchanged. Use Lemma~\ref{S-lem:three-terminal} for components
of size at most two, and otherwise assume strong connectivity. The case
$t>c$ in the first theorem is already covered by $a^{c+1}$.

If $h$ is crossed, Theorem~\ref{S-thm:internal-crossed-hole} gives $n$.
If the sole TC is off-aligned, the cyclic-pair argument at the start of
Section~\ref{S-sec:internal-involutions} gives $n$ as well. Thus in either
remaining case the minimum cross index $i$ is positive and
$b(T_i)=C_M$, where $M=v+i$. No cyclic point below $M$ is crossed.
Strong connectivity supplies a TT exchange $(T_p,T_q)$ with $p<i<q$.

\emph{Nonmirrored upper TT edges.} Consider any TT $(T_p,T_q)$ with
$p<q$ and $q>i$, including edges wholly above $i$. Suppose its cyclic mirror
$(C_{v+p},C_{v+q})$ is absent. At cost $t-p$, reach $P_p$, then apply $B$
to obtain $\{T_q,C_u\}$. An $A$-run of length $q-i$ either expands
earlier with positive tail coordinate, or reaches
$\{T_i,C_w\}$ with $w=u-q+i\ge0$. The value $w=M$ is exactly the
excluded mirror. If $C_w$ is not crossed, $B$ gives two cyclic points
containing $C_M$; completion gives the budget
\begin{equation}\label{S-eq:two-rr-nonmirror-cost}
(t-p)+1+(q-i)+1+(v+i)+1=c+q-p+3\le n+2.
\end{equation}
This settles the one-TC case at this edge.

In the diagonal case the only further possibility is $w=v+k$ with
$k\in I$, $k>i$. $B$ gives $\{T_k,C_M\}$. After $A^{k-i}$, unless
growth occurred earlier, the other coordinate is $v+2i-k<M$ at $T_i$.
The next $B$ is cyclic and completion costs at most
\[
c+q+k-i-p+4.
\]
But $u=v+k+q-i\le c-1$ was an actual nonwrapping cyclic coordinate,
so $q+k-i\le t-1$, and the total is at most $n+3$. Earlier expansion
in this extra run is already within \eqref{S-eq:two-rr-nonmirror-cost}.

\emph{The remaining mirror.} We may now assume that every TT with
upper endpoint greater than $i$ has its cyclic mirror. Two such edges
consume both RR transpositions and put them entirely at coordinates at least $v$;
Lemma~\ref{S-lem:orr-high} gives $n+2$. Otherwise the unique such edge is
\[
(T_p,T_s),\quad p<i<s,\qquad(C_{v+p},C_{v+s})\text{ its mirror}.
\]
Every other TT is below $i$. An absent or wholly high second RR again
uses the all-high lemma. The mirror provides the cheap entrance
\begin{equation}\label{S-eq:mirror-cheap-entrance}
A^{t-s},B,A^{p-r}:\{z\}\longmapsto P_r,
\qquad r\le p,
\end{equation}
of length $t-s+1+p-r$. If $p>0$, first descent below $p$ from the cycle
forces another TT $(T_r,T_q)$ with $r<p<q<i$: the mirror reaches $p$ but
not below it, and all other upper tail endpoints are below $i$.

\emph{Second RR below v.} The lower cyclic interval is $B$-invariant.
For $p>0$, take the cheap entrance \eqref{S-eq:mirror-cheap-entrance}, then
\[
B,A^{q-p},B,A^{s-i},B.
\]
The first $B$ gives $\{T_q,C_{v+r}\}$. Before the middle $B$ its cyclic
coordinate is $v+r-q+p<v+p$. That $B$ keeps it below $M$: high points
below $v+p$ are fixed, and low points stay below $v$. After the next
$A$-block it is still below $M$, so the last $B$ gives a cyclic pair
containing $C_M$. The total with completion is
\begin{equation}\label{S-eq:mirror-five-block-cost}
c+q-r+5\le n+2,
\end{equation}
because $q<i<s<t$ implies $q\le t-3$. Any earlier $z$-visit has positive
tail coordinate and shortens the construction.

For $p=0$, if no other TT exists, the initial pairs are closed: $B$ exchanges
$P_0,P_s$ and fixes the rest, including every diagonal crossed pair.
Otherwise choose TT $(T_r,T_q)$ with $1\le r<q<i$ and use
\[
A^{t-r},B,A^q,B,A^{s-i},B.
\]
The pair after $A^q$ is $\{h,C_{v+r-q}\}$ below $v$, unless expansion
occurred earlier. At equality with $z$ take $B$ first, moving $h$ to $T_s$.
The cyclic point stays below $v$ through the remaining instructions;
completion needs at most $v$ $A$-steps. The total is
$c+q-r+s-i+3\le c+s-r+2\le n$.

\emph{Second RR crossing v.} Write it as
$(C_\alpha,C_{v+j})$, $\alpha<v$, with $j$ distinct from $p,s$ and all
cross indices. Let $\ell=\theta(j)$ be its tail partner. If $j>i$, then
$T_j$ is fixed, since a second TT above $i$ was excluded. From $P_j$,
the route $B,A^{j-i},B$ reaches a cyclic pair containing $C_M$, or
expands sooner, with total at most $c+3$.

If $j<i$ and $\ell>0$, use $A^{t-j},B,A^\ell$ to reach
$\{h,C_{\alpha-\ell}\}$, or expand earlier. Every cyclic point
at most $\alpha-\ell$ is fixed, since $\alpha$ is the unique low
moved point. Lemma~\ref{S-lem:orr-low-pair} gives total
\[
t-j+\alpha+B_0+2\le c-j+B_0+1\le n+1,
\qquad B_0\le t.
\]
The escape may leave through any TC edge, so this also handles the
diagonal case with additional crosses.

Finally $\ell=0$ implies $p>0$ and TT $(T_0,T_j)$ below $i$. The cut
exchange $(T_r,T_q)$ above has $r\ne j$. Use the cheap entrance and
the same five blocks as in \eqref{S-eq:mirror-five-block-cost}. Before
the middle $B$ the cyclic coordinate is below $v+p$. $B$ either fixes it,
or exchanges $C_\alpha$ and $C_{v+j}$, both below $M$. The last $B$ is
therefore again cyclic, and the bound \eqref{S-eq:mirror-five-block-cost}
applies. These cases exhaust the second RR position. Every possible
extra crossed input was explicitly treated in the nonmirror case or
avoided by staying below $M$. The only budget $n+3$ arose in the extra
diagonal return, proving both assertions.
\end{proof}

\subsection{Two crossed indices with the transposed assignment}

\begin{Stheorem}\label{S-thm:two-crosses-two-rr}
A synchronizing automaton on $n\ge3$ states with $\operatorname{rank}(a)=n-1$,
exactly one $a$-cycle, an involution $b$, exactly two TC transpositions and at
most two RR transpositions satisfies $\Tr\le n+3$. Tail length and TT
support are unrestricted.
\end{Stheorem}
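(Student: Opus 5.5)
The plan is to follow the template of Theorems~\ref{S-thm:one-cross-two-rr} and \ref{S-thm:diagonal-two-rr}, reducing quickly to one aligned configuration and then treating it by explicit inverse instruction sequences.

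\textbf{Reductions.} Apply Lemma~\ref{S-lem:three-terminal} to pass to the terminal strongly connected restriction. This retains $C$ and a tail suffix, and it retains both TC transpositions, since each cyclic endpoint drags its tail partner into the closed component. Components of size at most two give $\Tr\le3$. If $t>c$, the word $a^{c+1}$ suffices. If $h$ is crossed, Theorem~\ref{S-thm:internal-crossed-hole} gives $n$. If the TC support is off-aligned, the single-cycle remark at the start of Section~\ref{S-sec:internal-involutions} gives $n$. There remain two crossed indices $0<i<k$ with $X=\{C_{v+i},C_{v+k}\}$ and a permutation $f$ of $\{i,k\}$.
\begin{itemize}
\item If $f$ is the identity, every TC is diagonal and Theorem~\ref{S-thm:diagonal-two-rr} gives $n+3$ directly.
\item So the essential new case is the transposed assignment $b(T_i)=C_{v+k}$, $b(T_k)=C_{v+i}$, which the subsection title points to. Here $B$ exchanges $P_i$ and $P_k$ as whole pairs.
\end{itemize}

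\textbf{Transposed case, general moves.} Strong connectivity and Lemma~\ref{S-lem:orr-tail-cut} supply a TT exchange $(T_p,T_q)$ with $p<i<q$, since $h$ is fixed by neither cross. Run the nonmirrored argument of the previous proof. From $P_p$, apply $B$ and then $A^{q-i}$ to reach $\{T_i,C_w\}$, with early expansion otherwise. If $C_w$ is uncrossed, the next $B$ sends $T_i$ to $C_{v+k}$ at cost $c+q-p+3\le n+2$, exactly as in \eqref{S-eq:two-rr-nonmirror-cost}. If instead $w=v+i$ or $w=v+k$, the pair is $P_i$, or it is sent by $B$ to $\{C_{v+k},C_{v+k}\}$-type coincidences. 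The first is an excluded mirror configuration. The second needs the extra run toward $T_k$ or $T_i$, which costs at most one more letter, giving $n+3$ by the same nonwrapping estimate $q+k-i\le t-1$. Thus I may assume every TT with upper endpoint above $i$ has its cyclic mirror. For the mirrored TT edges, reuse the case split of the previous proof:
\begin{itemize}
\item two mirrored edges consume both RR transpositions, and the all-high Lemma~\ref{S-lem:orr-high} applies, because $f^2=\mathrm{id}$ and $\theta\ne\sigma$ is forced by synchronization;
\item one mirrored edge plus a second RR lying below $v$ is handled by the cheap entrance \eqref{S-eq:mirror-cheap-entrance} followed by the five-block route;
\item one mirrored edge plus a second RR crossing $v$ is handled by the low-pair escape, Lemma~\ref{S-lem:orr-low-pair}, at cost at most $n+1$.
\end{itemize}

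\textbf{Main obstacle.} The hard step is the mirrored situation where the TT edge lies between $i$ and $k$, that is $p<i<q<k$ or $i<p<q<k$. In that position the transposed assignment can return the route to the initial chain instead of escaping. I would first try entering at $P_k$, using $B(P_k)=P_i$ as a free shortcut, and then applying $A^{i}$ toward $h$. From there I would either use the low-pair escape or show that closure of $\{P_j\}$ together with the mirror pairs forms a closed family without a target, contradicting synchronization. The budget check $\le n+3$ in that branch, where three $B$ steps and two partial rotations accumulate, is the computation I expect to be tightest.
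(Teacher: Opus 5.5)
\textbf{There are genuine gaps in the transposed case.} Your reductions match the paper's: terminal restriction, $t>c$, crossed $h$, off-alignment, and the diagonal assignment via Theorem~\ref{S-thm:diagonal-two-rr}. The problem is the transposed case, which is the whole content of the theorem.

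\emph{Nonmirrored straddler.} This case cannot inherit \eqref{S-eq:two-rr-nonmirror-cost}. There the last $B$ sent $T_i$ to $C_{v+i}$; here it sends $T_i$ to $C_{v+k}$. The partner $b(C_w)$ need not be low, because an RR may move $C_w$. With your direct entrance $A^{t-p}$, completing through $C_{v+k}$ costs $c+q-p+(k-i)+3$, not $c+q-p+3$. This is not bounded by $n+3$; for instance take $p=0$, $i=1$, $q=t-1$, $k=t-2$, $t\ge5$. The paper instead reaches $P_p$ by $A^{t-k},B,A^{i-p}$, using $B(P_k)=P_i$. This saves $k-i-1$ letters and gives $c+q-p+4\le n+3$. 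You mention this shortcut only speculatively at the end, but it is already needed here.

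\emph{The lock $w=v+k$.} This is not a coincidence that an extra run resolves. $B$ exchanges $T_i$ and $C_{v+k}$, so it \emph{fixes} $\{T_i,C_{v+k}\}$. There is no analogue of the diagonal step to $\{T_k,C_M\}$. The paper reads off from the lock that the RR must be $(C_{v+p},C_{v+k'})$ with $k'=q+k-i\le t-1$, and puts $\ell=\theta(k')$. It then escapes from $P_{k'}$ when $\ell\ge q$. When $\ell<q$, it first disposes of an all-high RR configuration by Lemma~\ref{S-lem:orr-high} and then escapes from $P_\ell$.

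\emph{The mirror reduction.} Your claim that every TT with upper endpoint above $i$ has its cyclic mirror is not established. Your argument starts only from a straddler $p<i<q$, and for an edge wholly above $i$ the cheap entrance is unavailable. The paper asserts mirroring only for straddlers and keeps TT edges wholly above $i$ in play, which changes the later case split.

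\emph{Second RR below $v$.} Your five-block route needs the straddler to start above $h$, i.e.\ $p>0$. Its budget also acquires the extra $k-i$, and stays within $n+3$ only because the lower cut exchange lies below $i$. When $p=0$, you need three things you do not have:
\begin{itemize}
\item the low route through a TT below $i$;
\item a separate route from the lower endpoint of a TT edge wholly above $i$, with three subcases;
\item the closed-chain contradiction when there is no other TT ($B$ swaps $P_0,P_s$ and $P_i,P_k$ and fixes the rest).
\end{itemize}

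\emph{Second RR crossing $v$.} The low-pair escape covers only the subcase where the RR index and its tail partner both lie below $i$ and the partner is positive. The subcases where both lie above $i$ need different routes. In particular, the subcase strictly between $i$ and $k$ relies on there being no TC index strictly between $i$ and $k$. The subcase where the tail partner is $h$ needs the five-block route.

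Finally, the configuration you single out as the main obstacle is left as a plan. As it stands, the proposal does not prove the theorem.
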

\begin{proof}
The terminal reduction preserves both TC transpositions, since their cyclic
endpoints lie in the closed receiving cycle. It also preserves the
aligned coordinates. As before assume strong connectivity, $t\le c$,
aligned support and $h$ on a TT exchange. Diagonal assignment is covered
by Theorem~\ref{S-thm:diagonal-two-rr}. The remaining case is
\[
I=\{i,j\},\quad0<i<j<t,\qquad
b(T_i)=C_{v+j},\quad b(T_j)=C_{v+i}.
\]
Put $d=j-i$ and $M=v+i$. The relation $B(P_j)=P_i$ gives an entrance
to any $P_p$, $p<i$, of length $t-j+1+i-p$.

\emph{A nonmirrored straddler.} Strong connectivity supplies a TT
$(T_p,T_q)$ with $p<i<q$. If it lacks its cyclic mirror, use that
cheap entrance followed by $B,A^{q-i}$. Unless there is earlier growth,
the pair is $\{T_i,C_w\}$, $w=u-q+i$, where $b(C_{v+p})=C_u$.
If $C_w$ is not crossed, $B$ gives a cyclic pair containing $C_{v+j}$
and the total is
\[
c+q-p+4\le n+3.
\]
The value $w=v+i$ is the excluded mirror. The only remaining lock is
$w=v+j$, or $u=v+k$ with $k=q+j-i>q,j$ and $k\le t-1$.
Its RR transposition is $(C_{v+p},C_{v+k})$; $B$ fixes the mixed pair
$\{T_i,C_{v+j}\}$, so this is not growth.

Let $\ell=\theta(k)$. If $\ell\ge q$, start from $P_k$; $B$ gives
$\{T_\ell,C_{v+p}\}$. Following $A$ toward $T_i$ either expands or
leaves a cyclic coordinate below $M$, and $B$ completes to a cyclic pair
containing $C_{v+j}$. The total is $c+\ell-q+3\le n+2$.
If $\ell<q$, it is distinct from $p,k,i,j$. Unless the other RR is
absent or wholly high, a case already handled by
Lemma~\ref{S-lem:orr-high}, every high cyclic point off the displayed
exchange is fixed or sent below $v$. Start from $P_\ell$; $B$ sends its
tail point to $T_k$ and its cyclic coordinate to at most $v+\ell$.
After $A^{k-j}$ the coordinate at $T_j$ is at most
$v+\ell-k+j=v+\ell-q+i<M$. The next $B$ gives a cyclic pair
containing $C_M$, with total $c+q-\ell+3\le n+2$.
Thus every nonmirrored TT across $i$ is settled; no assertion has been
made about mirrors of TT edges wholly above $i$.

\emph{The remaining mirrored cut.} If two TT edges across $i$ exist,
their mirrors exhaust both RR and the all-high lemma applies. Otherwise
there is exactly one straddler $(T_p,T_s)$, $p<i<s$, and its cyclic
mirror. Every other TT is wholly below or wholly above $i$. An absent or
high second RR is again settled. If $p>0$, the first descent below $p$
supplies $(T_r,T_q)$ with $r<p<q<i$; use the cheap entrance
\eqref{S-eq:mirror-cheap-entrance}.

If the second RR is wholly below $v$ and $p>0$, the five blocks
$B,A^{q-p},B,A^{s-i},B$ keep the cyclic partner below $M$ before the
last $B$, exactly as in the preceding proof. Completing through
$C_{v+j}$ now gives
\begin{equation}\label{S-eq:two-cross-mirror-budget}
c+q-r+d+5\le n+3,
\end{equation}
since $q+d\le j-1\le t-2$. For $p=0$, a lower TT with
$1\le r<q<i$ gives the same low route as before, of cost
$c+q-r+s-i+3\le n$. If there is no lower TT but there is an upper
TT $(T_r,T_q)$, $i<r<q$, start from $P_r$ and use $B,A^{q-i}$.
At $T_i$ its cyclic coordinate is $v+k$ with $k=r-q+i<i$, unless it
expanded earlier. After $B$, the cases are:
\begin{itemize}
\item $k>0$: the cyclic point is fixed and completion costs $c+3$;
\item $k=0$: the pair has cyclic indices $j,s$, so its total is
$c+\min(j,s)+3\le n+1$, using $j\ne s$;
\item $k<0$: the point stays below $v$, and total cost is
$c+q-r-i+2\le n$.
\end{itemize}
If there is no other TT, $B$ exchanges $P_0,P_s$, exchanges $P_i,P_j$,
and fixes the other initial pairs, contradicting synchronization.

\emph{Second RR crossing v.} Write it as
$(C_\alpha,C_{v+r})$, $\alpha<v$, with $r$ distinct from $p,s,i,j$,
and let $\ell=\theta(r)$. The indices $r$ and $\ell$ lie on the same
side of $i$: otherwise there would be another straddler whose mirror is
unavailable. For $r,\ell>i$, if $\ell\le r$, the route from $P_r$
is $B,A^{\ell-i},B$, completing through $C_{v+j}$ at cost
$c+\ell-r+d+3\le n+1$. If $\ell>r$ and $\ell>j$, stop instead
at $T_j$, giving $c+\ell-r-d+3\le n+2$.
The remaining case is $i<r<\ell<j$. Start from $P_\ell$, whose
cyclic point is fixed by $B$. The route $B,A^{r-i}$ reaches
\[
\{T_i,C_{v+x}\},\qquad x=i+\ell-r,\quad i<x<j.
\]
There is no TC index strictly between $i,j$. Its next $B$ fixes or lowers
that cyclic point, possibly via the mirror's upper endpoint or the
RR endpoint $r$. Completion through the resulting coordinate at most
$v+x$ costs exactly at most $c+3$ in total.

For $r,\ell<i$ with $\ell>0$, $B$ at $P_r$ followed by $A^\ell$
reaches $\{h,C_{\alpha-\ell}\}$, or expands earlier. All coordinates
up to $\alpha-\ell$ are fixed, so the low-pair escape gives
$t-r+\alpha+B_0+2\le n+1$. If $\ell=0$, then $p>0$ and a cut
exchange $(T_u,T_q)$ with $u<p<q<i$ has $u\ne r$. Use its cheap
entrance and $B,A^{q-p},B,A^{s-i},B$. Before the middle $B$ the
coordinate is below $v+p$; that $B$ fixes it or exchanges $C_\alpha$
with $C_{v+r}$, both below $M$. The final $B$ is cyclic and the bound
\eqref{S-eq:two-cross-mirror-budget}, with $u$ in place of $r$, applies.

These cases exhaust both crosses and both RR positions. Every $A$-run
ends at a positive tail index unless a low-pair escape is explicitly
used; in that escape $\{h,z\}$ is followed by $B$ before growth is
claimed. All earlier $z$-visits therefore shorten a valid input-letter
word. This proves the theorem.
\end{proof}

Combining the two theorems, the case of at most two TC transpositions has
$\Tr\le n+3$, with $n+2$ for exactly one. With no TC transposition,
synchronization forces the sole $a$-cycle to have size one; then the
long-tail power already suffices. Diagonal assignment permits any number
of crosses. The two-cross proof uses the absence of a cross index between
$i$ and $j$, and is not a proof for arbitrary nonidentity assignments on
three or more cross indices.

\section{Permutations between paired tail and cycle positions}\label{S-sec:aligned-permutations}

The number of cyclic--cyclic transpositions can be unrestricted when the full
tail--cycle support is aligned on the receiving cycle. The relevant
condition is that the permutation of crossed indices is not an involution.

Let $a$ have rank $n-1$. Its noncyclic states form a unique tail
$T_0,\ldots,T_{t-1}$ entering a cycle $C_0,\ldots,C_{c-1}$, where
\[
 a(T_i)=T_{i+1}\ (i<t-1),\qquad a(T_{t-1})=C_0,
 \qquad a(C_j)=C_{j+1\bmod c}.
\]
Other $a$-cycles may be present. Assume $t\le c$ and put $v=c-t$.
For an involution $b$, let $I$ be the set of tail indices exchanged with
cyclic states. We say that its TC support is \emph{exactly aligned} if
\begin{equation}\label{S-eq:full-aligned-support}
 I\subseteq\{0,\ldots,t-1\},\qquad
 X=\{C_{v+i}:i\in I\},\qquad b(T_i)=C_{v+f(i)}\quad(i\in I),
\end{equation}
where $X$ is the entire set of cyclic endpoints of TC transpositions and
$f$ is a permutation of $I$. Define
$\operatorname{span}(I)=\max I-\min I$ for nonempty $I$.

\begin{Stheorem}\label{S-thm:aligned-permutations}
Suppose $\operatorname{rank}(a)=n-1$, $b$ is an involution, $t\le c$,
and the entire TC support satisfies \eqref{S-eq:full-aligned-support}.
If $f^2\ne\mathrm{id}$, then
\[
 \Tr\le c+\operatorname{span}(I)+3\le n+2.
\]
A witnessing word can be chosen with at most two occurrences of $b$.
There is no restriction on TT or RR transpositions or on additional
$a$-cycles. Strong connectivity and synchronization are not required.
\end{Stheorem}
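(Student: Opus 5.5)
The plan is to reuse the $f$-cycle argument that already appears in Lemma~\ref{S-lem:orr-high} and in the proof of Theorem~\ref{S-thm:internal-crossed-hole}, now tracking lengths in terms of $\operatorname{span}(I)$. Since $f^2\ne\mathrm{id}$, $f$ has a cycle of length at least three. Let $k$ be its least element, and put $s=f^{-1}(k)$ and $j=f^{-1}(s)$. These three indices are distinct, and $k<s$, $k<j$, with all of them in $[\min I,\max I]$. We work with the full inverse-image operators $A=a^{-1}$ and $B=b$ from $\{z\}$. By \eqref{S-eq:orr-chain}, the prefix $A^{t-s}$ reaches $P_s=\{T_s,C_{v+s}\}$. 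Because $b(T_s)=C_{v+k}$ and $b(C_{v+s})=T_j$ by \eqref{S-eq:full-aligned-support}, one application of $B$ gives $\{T_j,C_{v+k}\}$. This is the only input from the alignment. It needs no strong connectivity, because the full TC support is prescribed and every cyclic endpoint of a TC transposition is some $C_{v+i}$.

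Next we run $A$ on this mixed pair. If $v+k<j$, the cyclic coordinate hits $z$ while the tail coordinate is still positive, and one more $A$ yields three preimages. The cost is $(t-s)+1+(v+k)+1=c+k-s+2\le c+1$. Otherwise, after $A^{j}$ we reach $\{h,C_w\}$ with $w=v+k-j$ and $0\le w<v$. Every point $C_x$ with $x<v$ is not a TC endpoint. However, it may be an RR endpoint, and RR transpositions are unrestricted here, so Lemma~\ref{S-lem:orr-low-pair} cannot be used as in the single-cycle case. Instead, we apply $B$ once. It sends $h=T_0$ to $C_{v+f(0)}$ if $0\in I$; otherwise $h$ goes to a tail state or stays fixed. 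Meanwhile $C_w$ goes to some cyclic state $y$, possibly on another $a$-cycle, but never onto the tail, since its partner is not a TC endpoint.

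The main obstacle is the case $0\notin I$, or more generally the case where the cheap descent to $h$ is not the right stopping point. My first attempt is to stop the descent at $T_{\min I}$ rather than at $h$. Put $m=\min I$ and apply $A^{j-m}$ to reach $\{T_m,C_{w'}\}$ with $w'=v+k-j+m$. If $w'<v+m$, then $C_{w'}$ is not a TC endpoint, so $B$ sends it to a cyclic state $y$ while sending $T_m$ to $C_{v+f(m)}$. If instead the cyclic coordinate hits $z$ earlier, we expand at once, since the tail index is then positive. The resulting pair $\{C_{v+f(m)},y\}$ consists of two distinct cyclic states: they are distinct because $b$ is injective and $T_m\ne C_{w'}$. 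It touches $C$, so at most $v+f(m)$ further $A$-steps, stopping at the first visit to $z$, plus one expansion complete a triple. This uses exactly two letters $b$. The total is
\[
(t-s)+1+(j-m)+1+(v+f(m))+1=c+(j-m)+(f(m)-s)+3 .
\]
We need this to be at most $c+\operatorname{span}(I)+3$. Here $j-m\le\max I-m$, and $f(m)-s\le \max I-\min I-(j-m)$ needs checking; I expect to adjust the choice of $k,s,j$ (for instance by taking $k=m$ when $m$ lies on a long $f$-cycle, or by choosing $s$ to maximise $s$) so that the bound reduces to $\operatorname{span}(I)$. When $w'\ge v+m$, I would instead use the earlier stop at $h$ and bound the tail by $w<v$.

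Finally, $\operatorname{span}(I)\le t-1$ gives $c+\operatorname{span}(I)+3\le c+t+2\le n+2$. If the verification of the exact span bound fails in a subcase, I would fall back on the cruder estimate $f(m)\le t-1$.
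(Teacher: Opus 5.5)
\paragraph{Where the proposal breaks.} Your word is the paper's two-$b$ route \eqref{S-eq:span-two-B-word}, specialised to the start $J=j$ (so $f(J)=s$, $f^2(J)=k$) and to the exit $e=m=\min I$. The computation $B(P_s)=\{T_j,C_{v+k}\}$, the early-expansion branch and the upper bound $c+(j-m)+(f(m)-s)+3$ are correct. With this exit, $C_{v+m+k-j}$ automatically lies below every TC endpoint. In fact $w'<v+m$ always holds because $k<j$, so your branch $w'\ge v+m$ never occurs.

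The gap is the step you left open. The inequality $(j-m)+(f(m)-s)\le\operatorname{span}(I)$, i.e.\ $j-s+f(m)\le\max I$, carries the whole content of the theorem, and it is false in general. Take $I=\{0,1,2\}$ and $f\colon0\mapsto2\mapsto1\mapsto0$. Then $k=m=0$, $s=1$, $j=2$, and $j-s+f(0)=3>2=\operatorname{span}(I)$. The only other start with exit $0$ is $J=1$, which gives $\beta=0+f^2(1)-1=1\in I$, so the second $b$ sends $C_{v+1}$ back to the tail point $T_2$. The long cycle is unique and already $k=m$, so none of your proposed readjustments of $k,s,j$ helps.

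Your crude fallback does not reach $n+2$ either. Take $I=\{0,1,3\}$, $t=4$, $f\colon0\mapsto3\mapsto1\mapsto0$, a single $a$-cycle with $v\ge4$, and an RR transposition $(C_{v-3},C_{v+2})$. Your instructions pass through $\{T_3,C_v\}$, $\{T_0,C_{v-3}\}$ and $\{C_{v+2},C_{v+3}\}$, and cost $c+7=n+3$.

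\paragraph{The missing idea.} The exit must range over all crossed indices $e<J$, not just $\min I$. For $e>\min I$ the condition $\beta=e+f^2(J)-J\notin I$ is no longer automatic. You then need a purely combinatorial existence statement: whenever $f^2\ne\mathrm{id}$, there are $J,e\in I$ with $e<J$, $\beta\notin I$ and $K=f(e)-e-f(J)+J\le\operatorname{span}(I)$. In your two examples, $J=2,e=1$ and $J=3,e=1$ work, with $K=0$ and $K=1$, giving words of lengths $c+3$ and $c+4$.

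The paper proves this existence in Lemma~\ref{S-lem:permutation-span}, by contradiction, in three steps:
\begin{itemize}
\item residue confinement modulo $d=J-f^2(J)$ at starts with $J>f(J)>f^2(J)$;
\item the descending segment that follows the largest ascent origin of a cycle of length at least three;
\item a case split on the number of descending edges, ending with a gcd and translation-orbit argument.
\end{itemize}
Without a lemma of this kind, your proposal establishes neither $c+\operatorname{span}(I)+3$ nor $n+2$.
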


We prove the theorem by separating the inverse-word construction from a
finite-permutation lemma. All lengths below count original binary letters.

\subsection{A word with two occurrences of the involution letter}
Write $A(S)=a^{-1}(S)$ and $B(S)=b^{-1}(S)=b(S)$. Inverse instructions
are listed in their application order; reversing their letters gives the
ordinary forward word.

\begin{Slemma}\label{S-lem:span-two-B}
Under \eqref{S-eq:full-aligned-support}, suppose $J,e\in I$ satisfy
\begin{equation}\label{S-eq:span-certificate}
 e<J,\qquad \beta=e+f^2(J)-J\notin I,\qquad
 K=f(e)-e-f(J)+J\le\operatorname{span}(I).
\end{equation}
Then a word of length at most $c+\operatorname{span}(I)+3$ merges
some triple and uses at most two $b$ letters.
\end{Slemma}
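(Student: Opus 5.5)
The plan is to build the word directly as an inverse-instruction sequence from $\{z\}$, using the initial pairs $P_i=\{T_i,C_{v+i}\}$ of \eqref{S-eq:orr-chain} and the exact alignment \eqref{S-eq:full-aligned-support}. By alignment, $b(T_i)=C_{v+f(i)}$ and $b(C_{v+i})=T_{f^{-1}(i)}$ for $i\in I$. Every cyclic point not in $X$ is sent by $b$ to a cyclic point, possibly through an RR transposition and possibly onto an external cycle. As in the earlier sections, a cyclic pair containing $C_q$ expands after at most $q+1$ further $A$-steps. Any earlier visit to $z$ with a positive tail companion expands one step sooner.

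The route I would use is
\[
 A^{t-f(J)},\;B,\;A^{J-e},\;B,
\]
followed by completion.

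\emph{Step 1.} $A^{t-f(J)}$ reaches $P_{f(J)}$; this uses $f(J)\in I\subseteq\{0,\dots,t-1\}$.

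\emph{Step 2.} The first $B$ sends $C_{v+f(J)}$ to $T_J$ and $T_{f(J)}$ to $C_{v+f^2(J)}$. This gives the mixed pair $\{T_J,C_{v+f^2(J)}\}$, and $T_J\ne h$ because $J>e\ge0$.

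\emph{Step 3.} The run $A^{J-e}$ can end in one of two ways.
\begin{itemize}
 \item If the cyclic coordinate reaches $z$ before the run finishes, the tail coordinate is then at least $e+1\ge1$. One more $A$ expands to three states, at a cost below the final budget.
 \item Otherwise the run ends at $\{T_e,C_{v+\beta}\}$ with $v+\beta=v+f^2(J)-J+e\ge0$.
\end{itemize}

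\emph{Step 4.} In the second case, apply the second $B$. Since $e\in I$, it sends $T_e$ to $C_{v+f(e)}$. Since $\beta\notin I$ and $X$ is the full TC support, $C_{v+\beta}$ goes to a cyclic point. That point may lie on another $a$-cycle, but it is never a tail state and never $h$. By injectivity of $b$ the two images are distinct, so we have a cyclic pair containing $C_{v+f(e)}$.

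\emph{Step 5.} Complete through $C_{v+f(e)}$ in at most $v+f(e)+1$ steps. The total length is
\[
 1+(t-f(J))+1+(J-e)+1+v+f(e)+1\;-\;1
 \;=\;c+\bigl(f(e)-e-f(J)+J\bigr)+3=c+K+3,
\]
counting the initial $A$ from $\{z\}$ inside $A^{t-f(J)}$. By \eqref{S-eq:span-certificate} this is at most $c+\operatorname{span}(I)+3$. Only two $B$'s occur, and reversing the instruction letters gives the forward word of the same length.

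The main point needing care is Step 3. I must check that no intermediate pair contains $h$ or contracts, since the tail coordinate stays at least $e\ge0$ and $h$ is reached only if $e=0$. I also must check that the boundary case where $v+\beta=0$ exactly is treated correctly. If $e=0$ and the pair is $\{h,z\}$, I would take $B$ first, as in the low-pair arguments, rather than claiming growth. That $B$ sends $h$ to $C_{v+f(0)}$ and fixes $z$ because $\beta\notin I$, so the count above still holds. No strong connectivity or synchronization is used, only the local transitions, which matches the hypotheses.
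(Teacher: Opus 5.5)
Your proposal is correct and matches the paper's proof essentially step for step: the same inverse sequence $A^{t-f(J)},B,A^{J-e},B$ with completion through $C_{v+f(e)}$, the same split between an early visit to $z$ with a positive tail companion and arrival at $\{T_e,C_{v+\beta}\}$, the same rule of applying $B$ before claiming growth at $\{T_0,C_0\}$, and the same count $c+K+3$. The only slip is your remark that $B$ fixes $z$ in the boundary case; this need not hold, because an RR transposition may move $z$, but it is harmless, since the argument needs only that the image of $C_{v+\beta}$ is cyclic, which your Step 4 already establishes.
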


\begin{proof}
Starting from $\{C_0\}$, use the inverse instructions
\begin{equation}\label{S-eq:span-two-B-word}
 A^{t-f(J)},\ B,\ A^{J-e},\ B,\ A^{v+f(e)+1},
\end{equation}
stopping at the first set with at least three elements. The first block
gives $\{T_{f(J)},C_{v+f(J)}\}$, and the first $B$ gives
$\{T_J,C_{v+f^2(J)}\}$, by alignment and involutivity.

Put $q=v+f^2(J)$. If $q<J-e$, the cyclic coordinate reaches $C_0$
while its tail companion $T_{J-q}$ still has positive index. One more
$A$ then gives at least three preimages. The total cost is
\[
 t-f(J)+1+q+1=c+f^2(J)-f(J)+2
 \le c+\operatorname{span}(I)+2.
\]
Otherwise the middle block gives $\{T_e,C_{v+\beta}\}$. Here
$v+\beta\ge0$, and $\beta<f^2(J)<t$ ensures $v+\beta<c$.
Since $\beta\notin I$, the cyclic point is not a TC endpoint. The
second $B$ therefore gives two distinct cyclic points, one equal to
$C_{v+f(e)}$. The other may lie on any $a$-cycle and continues to have
a cyclic predecessor at every step. After at most $v+f(e)+1$ further
$A$ steps the receiving-cycle point contributes two preimages, giving
a triple. This costs
\[
 t-f(J)+1+J-e+1+v+f(e)+1=c+K+3.
\]
At $q=J-e$ the second $B$ is taken before any expansion is claimed;
this includes the pair $\{T_0,C_0\}$. Thus the boundary case is covered
as well. Reversing the truncated instruction sequence proves the claim.
\end{proof}

\subsection{A span bound for permutations with a cycle of length at least three}

\begin{Slemma}\label{S-lem:permutation-span}
Let $I$ be a finite set of integers and let $f$ be a permutation of $I$
with $f^2\ne\mathrm{id}$. There exist $J,e\in I$ satisfying
\eqref{S-eq:span-certificate}. The starting indices needed to find such a
pair can be selected from one cycle of $f$ of length at least three.
\end{Slemma}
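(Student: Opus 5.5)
The plan is to split the inequality for $K$ into two terms and handle each separately. Write
\[
 K=\bigl(f(e)-e\bigr)+\bigl(J-f(J)\bigr).
\]
Each bracket is a difference of two elements of $I$, so each is at most $\operatorname{span}(I)$. Hence $K\le\operatorname{span}(I)$ holds automatically whenever one of the brackets is nonpositive, that is, whenever $f(e)\le e$ or $f(J)\ge J$. This reduces \eqref{S-eq:span-certificate} to a sign condition together with the requirements $e<J$ and $\beta\notin I$. Throughout I fix one cycle $Z$ of $f$ of length at least three. Then $f^2(J)\ne J$ for every $J\in Z$, and I will always choose $J\in Z$.

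For the first attempt, let $J=\max Z$. Then $f(J)<J$ and $f^2(J)<J$, so I need $e\in I$ with $e<J$, $f(e)\le e$, and $\beta=e-(J-f^2(J))\notin I$. The set $E=\{e\in Z: f(e)<e,\ e<J\}$ is nonempty. Because $|Z|\ge3$, the cycle must descend at least twice, once from $J$ and once more somewhere else, and the second descent starts below $J$. Since $J-f^2(J)>0$, the index $\beta$ lies strictly below $e$. So I will take $e$ to be the least element of $E$ and show that $\beta\notin I$. If $\beta<\min I$ there is nothing to prove. Otherwise $\beta\in I$ forces a relation among the few cycle positions $f^{-1}(e),e,f(e)$ and $f(J),f^2(J)$.

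If this fails, I would switch to the dual choice: take $J=f^{-1}(\min Z)$, which has $f(J)<J$, or take a $J$ with $f(J)>J$. In the latter case the second bracket is negative and any $e<J$ with $f(e)-e\le\operatorname{span}(I)$ works, which imposes no restriction. I would then let $e$ run over $I\cap[\min I,J)$. The values of $\beta$ shift by the same amounts as $e$, so they cannot all land in $I$ once $e$ is pushed to $\min I$ and $\beta<\min I$ is reached. The remaining worry is the regime $f^2(J)>J$, where $\beta>e$.

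The main obstacle is the condition $\beta\notin I$, since $I$ may contain many indices from other cycles and from the fixed points of $f$. My first attempt will be a descent argument. Suppose $e$ gives $\beta\in I$ while $f^2(J)<J$. Then I replace $e$ by an element of $I$ below $e$ and repeat. Such an element need not lie in $Z$; this is allowed, because only $J$ must come from $Z$, in keeping with the final sentence of the lemma. I keep the sign condition by choosing $J$ with $f(J)\ge J$ whenever the replacement $e$ has $f(e)>e$. The sequence of $e$ strictly decreases, so it terminates at a pair with $\beta<\min I$ or $\beta\notin I$. The step I expect to need the most care is checking that some $J\in Z$ with $f(J)>J$ and $f^2(J)<J$ exists to support this switch. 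Such a $J$ sits at a local maximum of the cycle, and a cycle of length at least three does contain one. With that element available, both sign cases can be run from the same $J$, and I would verify the bound $\beta<f^2(J)$ used in Lemma~\ref{S-lem:span-two-B} along the way.
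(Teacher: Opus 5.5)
\textbf{There is a genuine gap.} The hard case of the lemma is left unproved, and it rests on a false existence claim.

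Your splitting $K=(f(e)-e)+(J-f(J))$ is correct, and it settles one case completely. Suppose the chosen cycle $Z$ contains some $J$ with $f(J)>J>f^2(J)$. Then $e=\min I$ is a certificate at once: $J>f^2(J)\ge\min I$, $\beta=\min I-(J-f^2(J))<\min I$, and $K\le\operatorname{span}(I)-(f(J)-J)$. This is exactly the paper's ``one descending edge'' case.

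Your descent argument, however, depends on every cycle of length at least three containing such a $J$. That is false. For $1\mapsto3\mapsto2\mapsto1$ the only ascent is $J=1$, and $f^2(1)=2>1$. The same holds for any cycle that rises once from its minimum and then returns in two or more descending steps.

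In that situation two things go wrong:
\begin{itemize}
\item A start with $f^2(J)<J$ may admit no exit at all. Small $e$ exceed the budget because $f(e)$ is large, and the remaining $e$ have $\beta\in I$.
\item Every ascent has $\beta=e+(f^2(J)-J)>e$, so lowering $e$ never forces $\beta<\min I$.
\end{itemize}
This is precisely the regime you call ``the remaining worry'', and it is the whole content of the lemma.

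Two further claims are also wrong:
\begin{itemize}
\item A cycle of length at least three need not descend twice. For $1\mapsto2\mapsto3\mapsto1$ your set $E$ is empty.
\item $\beta\in I$ forces no relation among positions of $Z$, because $\beta$ may be a fixed point or a point of another cycle.
\end{itemize}

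Here is a concrete instance. Take $I=\{0,1,2,4,5,8,9,10,11\}$ and $f=(4\;8\;5)(0\;9)(1\;10)(2\;11)$, so $\operatorname{span}(I)=11$.
\begin{itemize}
\item At $J=8=\max Z$ no exit works: $e=0,1,2$ give $K=12$, and $e=4,5$ give $\beta=0,1\in I$. The value $1$ lies on the $2$-cycle $(1\;10)$.
\item $Z$ has no $J$ with $f(J)>J>f^2(J)$.
\item The only certificates are $(J,e)=(4,2)$ and $(5,0)$. Both have $f^2(J)>J$, with $\beta=3>e$.
\end{itemize}

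To close the gap you need the interaction the paper exploits. Normalize $\min I=0$ and assume there is no certificate.
\begin{itemize}
\item At a start with $J>f(J)>f^2(J)$, every $e\in I$ with $J-f(J)\le e<J$ is within budget. Hence $e-d\in I$, where $d=J-f^2(J)$. This confines the residues of $I\cap[0,J)$ modulo $d$ to $[0,J-f(J))$.
\item At the largest ascent origin $Q$ of $Z$, every $e<Q$ is within budget. This forces $e+\delta\in I$ for all $e\in I$ below $Q$, where $\delta=f^2(Q)-Q$.
\end{itemize}
The contradiction comes from playing these two constraints against each other along the descending run from $f(Q)$. With two descending edges this is direct. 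With three or more it uses increasing gaps and a $\gcd(d,\delta)$ orbit argument. Nothing in your plan plays this role.
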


\begin{proof}
Translating every index by $\min I$ preserves $K$, order, and membership
of $\beta$ after the same translation. Normalize $\min I=0$ and write
$M=\max I=\operatorname{span}(I)$. Suppose, for a contradiction, that
no pair satisfies \eqref{S-eq:span-certificate}.

\smallskip\noindent\emph{Residue confinement.}
For any start $J>f(J)>f^2(J)$, put
\[
 \alpha=J-f(J)>0,\qquad d=J-f^2(J)>\alpha.
\]
Every exit $e\in I$ with $\alpha\le e<J$ is within budget, since
\begin{equation}\label{S-eq:span-negative-budget}
 K=f(e)-e+\alpha\le M-e+\alpha\le M.
\end{equation}
In particular, for $d\le e<J$, absence of a certificate forces
$e-d\in I$. Repeated subtraction reaches the remainder $r=e\bmod d$;
all preceding inputs are at least $d>\alpha$, so they remain within
budget. If $r\ge\alpha$, then $r$ is itself a within-budget exit with
$r-d<0$ outside $I$, a contradiction. Therefore
\begin{equation}\label{S-eq:span-residue}
 e\bmod d\in\{0,\ldots,\alpha-1\}
 \qquad(e\in I,\ e<J).
\end{equation}
This also applies to an initial exit smaller than $d$.

\smallskip\noindent\emph{Choice of a descending segment.}
Select any cycle $\mathcal C$ of $f$ of length at least three. It has an
ascent. Choose its largest ascent origin:
\[
 Q=\max\{q\in\mathcal C:f(q)>q\},\qquad D=f(Q)>Q.
\]
Every $x\in\mathcal C$ with $x>Q$ satisfies $f(x)<x$: it cannot be
an ascent by the choice of $Q$, and it cannot be fixed on this long
cycle. Thus iterating $f$ from $D$ strictly decreases until the first
point $H\le Q$. Such a point exists because a strictly decreasing
sequence in a finite set cannot continue indefinitely. The point just
before $H$ is greater than $Q$. We distinguish the number of descending
edges from $D$ to $H$.

\smallskip\noindent\emph{One descending edge.}
Here $f(D)=H\le Q$. Equality $H=Q$ would make $Q,D$ a two-cycle,
so $H<Q$. At start $J=Q$, choose $e=0$. Then
\[
 \beta=H-Q<0,\qquad K=f(0)-(D-Q)<M,
\]
contrary to the assumed absence of a certificate.

\smallskip\noindent\emph{Two descending edges.}
Write $D\mapsto C\mapsto H$, so $H\le Q<C<D$, and put
\[
 \alpha=D-C,\quad b=C-H,\quad d=D-H=\alpha+b,
 \quad\delta=C-Q\in(0,b].
\]
At start $D$, the exit $C$ has $K=\alpha-b\le M$ and
$\beta=H-\alpha$. If it is not a certificate, then $H\ge\alpha$.
At the positive start $Q$, every exit $e<Q$ is within budget because
\[
 K=f(e)-e-(D-Q)\le M-e-(D-Q)<M.
\]
Starting at $0\in I$, add $\delta$ repeatedly as a selection procedure
until first reaching $e\ge\alpha$. Each preceding value is
$<\alpha\le H\le Q$, so absence of a certificate at $Q$ forces
the next value to belong to $I$. The first such value satisfies
\[
 \alpha\le e\le\alpha+\delta-1\le d-1<D.
\]
At start $D$ it is within budget by \eqref{S-eq:span-negative-budget},
and its candidate $e-d$ is negative. This is the required contradiction.

\smallskip\noindent\emph{At least three descending edges.}
Reverse the segment's notation:
\[
 H=y_0<y_1<\cdots<y_k=D,\quad k\ge3,\qquad
 f(y_i)=y_{i-1},\quad H\le Q<y_1.
\]
Put $g_i=y_i-y_{i-1}>0$. At start $y_i$, for $i\ge2$, the
residues of $y_{i-2},y_{i-1}$ modulo $g_{i-1}+g_i$ both lie in
$[0,g_i-1]$ by \eqref{S-eq:span-residue}. Their ordinary difference is
either $g_{i-1}$ or $-g_i$. The latter cannot fit in that interval,
and the former forces
\begin{equation}\label{S-eq:span-increasing-gaps}
 g_1<g_2<\cdots<g_k.
\end{equation}
Set
\[
 B=y_{k-2},\qquad C=y_{k-1},\qquad
 X=B-H,\quad Y=C-B,\quad Z=D-C.
\]
At start $C$, every prefix point $H,y_1,\ldots,B$ has residue in
$[0,Y-1]$ modulo $L=g_{k-2}+Y$. Each successive prefix gap $g$ is
at most $g_{k-2}$ by \eqref{S-eq:span-increasing-gaps}. A wrap in
residues would give difference
\[
 g-L\le -Y,
\]
impossible between two points of $[0,Y-1]$. Consequently no successive
step wraps, regardless of the number of prefix points. The total
residue spread is the true sum $X$, and hence
\begin{equation}\label{S-eq:span-XYZ}
 0<X<Y<Z.
\end{equation}
At start $D$, the exit $B$ has
$K=f(B)-B+Z<Z\le M$. Its candidate $B-(Y+Z)$ must therefore
belong to $I$ and be nonnegative. It follows that
\begin{equation}\label{S-eq:span-H-large}
 H\ge Y+Z-X>Z.
\end{equation}
Define
\[
 X'=B-Q,\qquad d=Y+Z,\qquad\delta=C-Q=X'+Y.
\]
The segment's construction gives
$0<X'\le X<Y$, since $H\le Q<y_1\le B$.
In particular $0<\delta<d$. At start $D$, residue confinement gives
the nonempty set
\begin{equation}\label{S-eq:span-R}
 R=I\cap\{0,\ldots,d-1\}\subseteq\{0,\ldots,Z-1\},
 \qquad 0\in R.
\end{equation}
Indeed $d=D-B<D$ because $B>0$. Both $Q$ and $B$ belong to $I$ and are
less than $D$. Repeatedly subtracting $d$ from them, with every input
at least $d>Z$ within budget, puts both remainders in $R$.

For $e\in R$, equations \eqref{S-eq:span-H-large}--\eqref{S-eq:span-R}
give $e<Z<H\le Q$. The positive start $Q$ is within budget and
forces $e+\delta\in I$. Moreover
\[
 e+\delta<Z+X'+Y=D-Q<D.
\]
Reduction modulo $d$ at start $D$, if needed, stays crossed and
within budget. Thus $R$ is closed under addition of $\delta$ modulo
$d$. As a finite set closed under a permutation, it is a union of
complete translation orbits.

Let $g=\gcd(d,\delta)$. These orbits are the residue classes modulo
$g$ inside $\{0,\ldots,d-1\}$. If $g\le Y$, every such class meets
the forbidden interval $\{Z,\ldots,d-1\}$ of $Y$ consecutive integers.
This contradicts \eqref{S-eq:span-R}.

If $g>Y$, then $g\mid\delta=X'+Y<2Y$ forces $g=\delta$, since
every proper positive divisor of $\delta$ is at most $\delta/2<Y$.
Thus $\delta\mid d$. Modulo $\delta$, the forbidden interval has
exactly the residues $\{X',\ldots,\delta-1\}$, because
$Z=d-Y\equiv X'\pmod\delta$. Every orbit contained in $R$ must
therefore have residue in $\{0,\ldots,X'-1\}$. But $R$ contains the
remainders of $Q$ and $B=Q+X'$ modulo $d$, and $\delta\mid d$.
Their residues modulo $\delta$ would both have to lie in this short
interval. Their ordinary difference is either $X'$ or $-Y$, neither
of which fits in an interval of width $X'-1$, since $Y>X'$.
This is the final contradiction.

All cases produce a certificate. Translating the selected indices back
to the original set preserves \eqref{S-eq:span-certificate}, completing
the proof.
\end{proof}

\begin{proof}[Proof of Theorem~\ref{S-thm:aligned-permutations}]
Lemmas~\ref{S-lem:permutation-span} and~\ref{S-lem:span-two-B}, applied to
$I$, give a word of length at most
\[
 c+\operatorname{span}(I)+3\le c+t+2\le n+2.
\]
The inequalities use $\operatorname{span}(I)\le t-1$ and $t+c\le n$.
\end{proof}

\subsection{Construction and remaining cases}
The proof is constructive. Select any $f$-cycle of length at least
three, choose its largest ascent origin $Q$, and record the descending
segment from $f(Q)$ to the first $H\le Q$. Scan the starting index $Q$
and every segment point having at least two subsequent points in the
segment, and for each scan $e\in I$ with $e<J$. These are all the
starts used in the proof, so one of at most $|I|^2$ pairs satisfies
\eqref{S-eq:span-certificate}. Its actual word is obtained from
\eqref{S-eq:span-two-B-word}, using the original physical indices.
Arithmetic progressions and modular orbits select the pair; they do
not add letters to that word.

If $t>c$, the independent word $a^{c+1}$ already merges a triple,
without alignment. For $t\le c$, alignment must describe the entire
TC support: additional TC endpoints on other cycles would invalidate
the second $B$ step. When $f^2=\mathrm{id}$, every proposed
$\beta=e+f^2(J)-J$ equals $e\in I$, so this certificate cannot apply.
The general aligned case $f^2=\mathrm{id}$, unaligned support, and the
unrestricted rank-$(n-1)$/involution bound remain open beyond the
separate cases proved earlier.

\section{Conjectures and questions}\label{S-sec:conj}

\begin{Squestion}\label{S-q:general-binary}
Is $\Tr\le 4n/3$ for every synchronizing binary automaton with $n\ge6$ states, or at least for
those with a permutation letter and a letter of rank $n-1$? (For $n=5$ the maximum is $7>20/3$,
Table~\ref{S-tab:max}.)
\end{Squestion}

\begin{Squestion}\label{S-q:coincidence}
Is there a structural reason for the coincidence $\max\Tr(n)=\CWA(n)=\floor{4n/3}$
(Main Definition~\ref{M-def:cwa}) observed for $8\le n\le 10$, or do the two maxima eventually
diverge? Main Corollary~\ref{M-cor:tau} shows that for
$n\not\equiv0\pmod 3$ the automaton $\G{r}{k}$ itself has a state of threshold $\floor{4n/3}-1$.
\end{Squestion}
The larger-subset conjecture and its finite ranges are in Section~\ref{S-sec:largerk}. Main Conjecture~\ref{M-conj:main} is not asserted as a theorem.
\appendix
\section{The three-transposition case proofs}\label{S-app:three-proofs}
The following lemmas provide the full case arguments used in
Theorem~\ref{S-thm:three-swaps}. Inverse operators are read in application
order; all lengths count original letters.
\subsection{A tail of length one}

\begin{Slemma}\label{S-lem:three-tail-one}
Let $\mathcal A=(Q,\{a,b\})$ be a synchronizing automaton on $n$ states,
where $a$ has rank $n-1$ and $b$ is a product of exactly three disjoint
transpositions. Suppose that $a$ restricts to a permutation on
$R=Q\setminus\{h\}$, where $h$ is its missing image. Then
$\Tr(\mathcal A)\le n+2$. Strong connectivity is not required.
\end{Slemma}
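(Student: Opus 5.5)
We work with inverse pairs as in the one- and two-transposition tail-one argument of Theorem~\ref{S-thm:sparse}. Let $\pi=a|_R$, a permutation of $R$. Put $p=\pi^{-1}(z)$, so the collision pair is $K=\{h,p\}$. By \eqref{S-eq:sparse-pairdistance}, $\Tr=2+\operatorname{dist}(K,\text{targets})$, where a target is a pair containing $z$ and avoiding $h$.

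\emph{Reductions.} As before, synchronization rules out $b(h)=h$, since then both letters permute $R$. It also rules out $b(h)=p$, since then $B$ fixes $K$ and $A(K)$ is $K$ or a singleton. Hence $B(K)=\{b(h),b(p)\}$ is a pair of cyclic states.

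\emph{Cheap case.} If this pair meets the receiving cycle $C$ of $z$, rotate under $A$ to the first visit of $z$ and expand. This costs at most $1+1+(c-1)+1=c+2\le n+1$.

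\emph{Remaining configuration.} Otherwise $b=(h,X)(p,Y)\tau$, where $X,Y\notin C$ and $\tau$ is a third transposition. Consider the $\pi$-cycles of $X$ and $Y$.
\begin{itemize}
\item If they lie on different cycles $D_X,D_Y$, invariance forces $\tau$ to join these cycles and $C$ appropriately. We rotate $\{X,Y\}$ by $d_Y=|D_Y|$ or by $d_X$ until exactly one coordinate returns to a point exchanged by $b$. A suitable $B$ then lands on $C$ or on $\tau$'s support. We track the cost as $c+d+O(1)$, bounded by $n+2$ because $n\ge1+c+d_X+d_Y$. This follows the $e\nmid d$ and $e\mid d$ analysis in Lemma~\ref{S-lem:orr-off-alignment}.
\item If $X,Y$ lie on one cycle $D$, use rotation by $\Delta$ as before. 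In the non-antipodal case, $B,A^j,B$ reaches $\{p,W\}$ with $W\in D$ at cost $c+d+2$. Here we must handle $W$ being an endpoint of $\tau$, where $B$ may move $W$; this costs at most one extra rotation, giving $n+2$.
\item If $X,Y$ are antipodal on $D$, the family of antipodal pairs together with $K$ would be closed, except for the action of $\tau$. So $\tau$ must move a point of $D$ or of $C$, breaking the matching. We enter the matching, travel to the pair containing a $\tau$-endpoint, and apply $B$. This yields either a non-antipodal pair of $D$ or a pair meeting $C$ or a third cycle $E$. We finish with the previous rotations.
\end{itemize}

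\emph{Main obstacle and budget.} The hard step is the antipodal and divisibility lock cases, where $\tau$ supplies the only escape. There we must bound the extra detour so that the total stays within $c+d+e+O(1)\le n+2$. Each closed-family contradiction is stated only for size-preserving steps, with singleton contractions handled by the restart rule. Throughout, we charge each rotation against a cycle length that is disjoint from the others, so the count is at most $1+c+\sum d_i$ plus a constant number of $B$ steps. Verifying that this constant is at most $+2$ in every subcase is where care is needed.
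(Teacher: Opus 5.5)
\textbf{There is a genuine gap.} Your reductions ($b(h)\ne h$, $b(h)\ne p$) and the cheap case of cost $c+2$ agree with the paper. Splitting according to whether $X,Y$ share a $\pi$-cycle, and whether they are antipodal, is also the right skeleton. But the proposal stops where the lemma's content begins. In every lock case the third transposition $\tau=(u,v)$ is the only escape, and you explicitly leave open whether the overhead is at most $+2$.

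You are also missing the structural step that makes the case list finite. The graph on $\pi$-cycles with edges $(p,Y)$ and $\tau$ must be connected. A component not containing the cycle of $X$ is invariant under both letters, which synchronization forbids; for a singleton component, its complement is invariant too. Hence there are exactly two or three cycles, $n=1+c+d$ or $n=1+c+d+e$, and $\tau$ must attach the third cycle. Without this you cannot enumerate where $\tau$ sits, or see how much slack each budget has.

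Several steps you sketch would fail as stated.
\begin{enumerate}
\item \emph{Antipodal case.} A $\tau$ internal to $C$, a $\tau$ joining $C$ to a third cycle, or an antipodal chord of $D$ does not break the matching. The antipodal pairs of $D$ together with $K$ stay closed, so these configurations are excluded by synchronization. There is no pair ``containing a $\tau$-endpoint'' to travel to.
\item \emph{The charging scheme is too coarse.} Take two cycles and $\tau$ a non-antipodal chord of $D$. A route uses the initial $A$, three $B$ steps and up to $c$ completion steps, so it costs $c+4+j+k$ against the budget $n+2=c+d+3$. Charging both rotations to $d$ gives $n+3$. You need a coordinated choice. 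With $X=0$, $Y=m$, $d=2m$, take an endpoint $u\in\{1,\dots,m-1\}$ and rotate $j=m-u$ steps to $\{u,u+m\}$. Apply $B$, then rotate $k=u$ steps, so that $u+m$ lands on $Y$ with companion $v-u\ne X$. This gives $j+k=m$; use the symmetric choice when both endpoints lie in $\{m+1,\dots,d-1\}$.
\item \emph{The $e\mid d$ lock for $X\in E$, $Y\in D$.} This does not follow from Lemma~\ref{S-lem:orr-off-alignment}, which escapes through a crossed tail point $T_k$ with $k>0$. No such point exists here: rotating by $e$ returns the $X$-coordinate to $X$, $B$ sends it to $h$, and $A$ then contracts. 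The escape must come from $\tau$.
\begin{itemize}
\item If $\tau$ joins $E$ to $C$, rotate $X$ to $u$ and apply $B$.
\item If $\tau$ joins $E$ to $D$, rotate by $j=\alpha$, or by $\alpha+e$ when $\alpha=\beta$ and $e<d$. Then the $E$-point is $u$ and the $D$-point avoids $v$. Apply $B$, rotate the original $D$-point back to $Y$ in $d-j$ steps, and apply $B$ again. The cost is $c+d+4\le n+2$.
\item The leftover case $e=d$, $\alpha=\beta$ is a closed matching, excluded by synchronization.
\end{itemize}
\end{enumerate}
Your worry in the non-antipodal single-cycle case is unnecessary. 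Once $B$ sends $Y$ to $p$ and the partner is not $X$, the pair meets $C$ whatever $\tau$ does to the partner.
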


\begin{proof}
Write $\pi=a|_R$, $z=a(h)$, $p=\pi^{-1}(z)$, and $K=\{h,p\}$.
Let $C$ be the $\pi$-cycle containing $p,z$, of length $c$. We use full
preimages $A(S)=a^{-1}(S)$ and $B(S)=b(S)$ in application order, beginning
at $\{z\}$. Reversing this order gives a forward word. For a pair $P$,
\[
 |A(P)|=2+\mathbf1_{z\in P}-\mathbf1_{h\in P}.
\]
Thus a pair in $R$ containing $z$ expands to three under $A$; before its
first visit to $z$, $A$ rotates a pair in $R$ by $\pi^{-1}$. From a
cyclic pair touching $C$, at most $c$ further $A$-steps suffice, including
the final expansion. All bounds below reserve these $c$ steps, whether
or not fewer actually suffice.

For later exclusions, a shortest successful inverse trajectory can start
with $A(\{z\})=K$ and remain a pair until its first expansion to three.
After a singleton contraction it could instead restart at the next
singleton-to-pair expansion, which again gives $K$, since $z$ is the
only double image. Consequently a closed family of pairs, allowing
contractions to singletons but containing no expansion target, excludes
synchronization.

The state $h$ must be moved by $b$: otherwise both letters permute the
invariant set $R$. Put $x=b(h)$. If $x=p$, then $B(K)=K$, whereas
$A(K)$ is $K$ when $c=1$ and a singleton otherwise. The preceding
criterion excludes this case. Hence $x\ne p$, and
$F=B(K)=\{x,b(p)\}\subset R$. If $F$ touches $C$, the initial $A,B$
and the reserved final $c$ steps give $\Tr\le c+2\le n+1$.

Assume therefore that $F$ avoids $C$. Necessarily
\[
 b=(h\ x)(p\ y)(u\ v),\qquad x,y\notin C,
\]
with all six states distinct. Form the graph on the $\pi$-cycles whose
two edges are $(p,y)$ and $(u,v)$; loops are permitted. This graph is
connected. Indeed a component not containing $x$ is invariant and
permuted by both letters. If it has at least two states, synchronization is
impossible. If it is a singleton, it lies outside $C$, since $(p,y)$
joins $C$ to another cycle; that singleton and its complement are both
invariant, again a contradiction. Hence there are two or three cycles.

\smallskip\noindent
\emph{A non-antipodal pair on one external cycle.}
Suppose $x,y$ belong to a cycle $D$ of length $d$. Use coordinates
$D=\mathbb Z_d$ with $\pi(i)=i+1$, $x=0$, and $y=\Delta$.
If $2\Delta\not\equiv0\pmod d$, rotating $F$ by $j=d-\Delta$
inverse steps gives $\{y,w\}$ with $w\ne x$. The next $B$ sends $y$
to $p$ and does not send the other point to $h$. The resulting cyclic
pair touches $C$, regardless of the location of $(u,v)$. The cost is
\[
 1+1+j+1+c\le c+d+2\le n+1.
\]
In all remaining cases with $x,y\in D$, write $d=2m$ and $y=m$;
then $x,y$ are antipodal. Write $\sigma(i)=i+m$.

\smallskip\noindent
\emph{Two cycles $C,D$.}
Here $n=1+c+d$. If $(u,v)$ is internal to $C$, the antipodal pairs of
$D$ are closed under $A$, and $B$ fixes them except that $\{x,y\}$
returns to $K$. This is a closed family without a target, impossible.
The same obstruction applies when $u,v\in D$ are antipodal.

If the extra swap crosses from $u\in D$ to $v\in C$, rotate until
the original $y$-coordinate reaches $u$. This takes $1\le j\le d-1$
steps. The other coordinate is not $x$, since the rotation is nonzero.
Thus the next $B$ produces a cyclic pair touching $C$, at total cost
$j+c+3\le c+d+2$.

It remains that $u,v\in D$ and $v\ne\sigma(u)$. If an endpoint lies
in $\{1,\ldots,m-1\}$, name it $u$ and set $j=m-u$, $k=u$.
After $j$ rotations the pair is $\{u,u+m\}$; $B$ gives
$\{v,u+m\}$. After $k$ more rotations the unchanged point is $y$,
and its companion is not $x$, since that would require $v=u$.
Here $j+k=m$. If neither endpoint lies in that interval, both lie in
$\{m+1,\ldots,d-1\}$. Put $j=d-u$ and $k=v-m$. The same two
stages give $\{u,u-m\}$ and then $\{v,u-m\}$; after $k$ rotations
the moved endpoint is $y$ and its companion is $u-v\ne0$ modulo $d$.
Now $j+k=m+v-u\le d-2$. In either case a further $B$ reaches a cyclic
pair containing $p$, and the reserved completion gives
\[
 1+1+j+1+k+1+c\le c+d+3=n+2.
\]

\smallskip\noindent
\emph{Three cycles, with $x,y\in D$.}
Let the third cycle be $E$ of length $e$, so $n=1+c+d+e$. The extra
swap joins $E$ to $C$ or $D$. A $C$--$E$ swap leaves the antipodal
family in $D$ closed as above and is excluded. Otherwise write
$u\in D$, $v\in E$. Rotate until one of $x,y$ first reaches $u$;
the cost $j$ satisfies $1\le j\le m-1$. The other point is
$w=\sigma(u)$, neither $x$ nor $y$. Applying $B$ gives $\{v,w\}$.
Rotate by $k$ steps until the $D$-point is $y$, then apply $B$ again.
The $E$-point stays on $E$ or moves to $u\in D$, so the resulting
cyclic pair contains $p$. The untouched $D$-point started at $x$ or
$y$, giving respectively $j+k=m$ or $j+k=d$. Hence the total is at most
\[
 c+j+k+4\le c+d+4\le c+d+e+3=n+2.
\]

\smallskip\noindent
\emph{Three cycles, with $x\in E$ and $y\in D$.}
Write $|E|=e$, $|D|=d$. Connectivity forces the extra swap to join
$E$ to $C$ or $D$. In the first case, name $u\in E$, $v\in C$.
Rotating $x$ to $u$ takes at most $e-1$ steps; the other coordinate
stays in $D$. One $B$ reaches a cyclic pair touching $C$, for cost
at most $c+e+2$.

In the second case, name $u\in E$, $v\in D$, and let
$\alpha,\beta$ be the inverse rotation distances from $x$ to $u$
and from $y$ to $v$. Thus
$1\le\alpha<e$, $1\le\beta<d$. If $e\nmid d$, rotating by $d$
returns the second coordinate to $y$ but not the first to $x$.
One $B$ and the final completion cost at most $c+d+3\le n+2$.

Suppose $e\mid d$. If $\alpha\ne\beta$, put $j=\alpha$; if
$\alpha=\beta$ and $e<d$, put $j=\alpha+e$. In either case $j<d$,
the first point is $u$, and the second is not $v$. Applying $B$ puts
both points in $D$. After $d-j$ further rotations the original
$D$-point is $y$; another $B$ gives a cyclic pair containing $p$.
The cost is
\[
 1+1+j+1+(d-j)+1+c=c+d+4\le n+2,
\]
since $e\ge2$ here. The remaining case is $e=d$, $\alpha=\beta$.
The pairs
\[
 \{\pi^{-i}(x),\pi^{-i}(y)\},\qquad i\in\mathbb Z_d,
\]
form a matching preserved by $A$. The map $B$ fixes these pairs except
that $\{x,y\}$ returns to $K$; at $i=\alpha$ it merely exchanges
$u,v$ inside one pair. This closed family is impossible in a
synchronizing automaton. All cases are covered.
\end{proof}

\subsection{Only the missing image moves on the tail}

\begin{Slemma}\label{S-lem:three-one-tail}
Let $\mathcal A=(Q,\{a,b\})$ be synchronizing, with $|Q|=n$,
$\operatorname{rank}(a)=n-1$, and $b$ a product of exactly three
disjoint transpositions. If the only noncyclic $a$-state moved by $b$
is its missing image $h$, then $\Tr(\mathcal A)\le n+2$.
Strong connectivity is not required.
\end{Slemma}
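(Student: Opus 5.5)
The plan is to follow the inverse-pair method of Section~\ref{S-sec:orr}: start at $\{z\}$, look for a short sequence of full preimages $A$ and images $B$ that ends in an expansion target, and split into cases according to where the three transpositions sit. If $t=1$, then $a$ permutes $Q\setminus\{h\}$ and Lemma~\ref{S-lem:three-tail-one} gives the bound directly. If $t>c$, the word $a^{c+1}$ merges $C_{c-1},T_{t-1},T_{t-c-1}$.

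So assume $2\le t\le c$ and put $v=c-t$. The hypothesis says that $b$ moves only $h$ among tail states, and it must move $h$, since otherwise $h$ is never re-entered and the letters permute an invariant set. Hence $b=(h\ X)(U_1\ V_1)(U_2\ V_2)$ with $X,U_1,V_1,U_2,V_2$ cyclic. For $i\ge1$ the initial pairs $P_i=\{T_i,C_{v+i}\}$ of \eqref{S-eq:orr-chain} are moved by $B$ only through their cyclic coordinate. Write $\sigma$ for the permutation that the two cyclic--cyclic transpositions induce on cyclic states.

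\emph{Step 1.} Take $B(P_0)=\{X,\sigma(C_v)\}$. Here $\sigma(C_v)=C_v$ unless $C_v$ is an endpoint of one of the two cyclic--cyclic transpositions. If this pair consists of two distinct cyclic states and one of them lies on $C$, rotation and expansion finish within $t+1+c\le n+1$ letters.

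\emph{Step 2.} Find an earlier exit through some $P_i$ with $i\ge1$. If $C_{v+i}$ is moved by $\sigma$ to $C_q$, then $B(P_i)=\{T_i,C_q\}$. Descending the tail reaches either an early expansion (when $q<i$) or the pair $\{h,C_{q-i}\}$. A second $B$ then sends $h$ to $X$ while moving $C_{q-i}$ at most once more. The cost of this route is at most $(t-i)+1+i+1+c\le n+2$, provided the resulting pair is cyclic and touches $C$. This generalizes the argument of the one-moved-tail-state case in the proof of Theorem~\ref{S-thm:sparse}.

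\emph{Step 3.} Treat the residual locks, where $X$ and $\sigma(C_v)$ fail to give a pair touching $C$ and no $P_i$ exits cheaply. Strong connectivity is not assumed, but the terminal component argument of Lemma~\ref{S-lem:three-terminal} together with the connectivity of the cycle graph will locate $X$ and the external transposition endpoints on at most two external cycles $D,E$, so $n=t+c+d$ or $n=t+c+d+e$. From here I would copy the external-cycle analysis of Lemma~\ref{S-lem:three-tail-one} almost verbatim, with the collision pair $K$ replaced by $P_0$ and the reserved completion through $C_v$ costing $v+1$ rather than $c$. The subcases are: a non-antipodal rotation on $D$; a rotation by $d$ when $e\nmid d$; the shifted choice $j=\alpha+e$ when $e\mid d$; and the two-stage rotation inside $D$ when both extra endpoints lie on $D$. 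Each should cost about $(t+1)+(\text{rotations}\le d+1)+3+v\le n+2$. In each remaining equality case (antipodal pairs, $e=d$ with $\alpha=\beta$, $\sigma$ preserving the chain) I would show that the matching, the initial chain, and the singleton contractions form a closed family with no expansion target, which contradicts synchronization by the restart rule.

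The main obstacle is Step~3 when one cyclic--cyclic transposition hits $C_v$ or a chain coordinate $C_{v+i}$ and the other lies on an external cycle. There the lock condition couples the chain on $C$ with rotations on $D$, and the budget $n+2$ is tight because the completion cost $v+g+1$ after a return through $C$ must be absorbed. For this subcase I would first try the bridge-return estimate of Lemma~\ref{S-lem:orr-bridge-return}, using an explicit rotation $J\le g+d$ instead of least-common-multiple arguments.
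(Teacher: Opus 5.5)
\paragraph{Gap: an unresolved lock in Step~3.} Your outline follows the paper's structure: exit from $P_0$, exits from chain pairs $P_i$ with $i\ge1$, transfer of the constructive branches of Lemma~\ref{S-lem:three-tail-one} with $P_0$ in place of $K$, and a final look at that lemma's closed families. The gap is in that final step. You plan to show that in each equality case the matching, the initial chain and the contractions form a closed family. That is false here whenever the third transposition touches a chain coordinate $C_{v+i}$ with $1\le i<t$, because then $B$ moves $P_i$.

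When the partner of $C_{v+i}$ lies on $C$, your Step~2 does give the escape. You still need to check $C_{q-i}\ne C_v$, which holds because $q\ne v+i$; this guarantees $b(C_{q-i})\in C$. The uncovered configuration is $b=(h\ x)(C_v\ y)(C_{v+i}\ V)$, with $x,y$ antipodal on $D$, $|D|=2m$, and $V$ on a third cycle $E$. Step~2 treats only partners on $C$, and in Lemma~\ref{S-lem:three-tail-one} this family is closed, so neither supplies a route. This is the case you flag as the main obstacle. Lemma~\ref{S-lem:orr-bridge-return} does not fit it. That lemma assumes a single cyclic--cyclic exchange $(C_\alpha,Y)$ with $\alpha<v$ and aligned tail--cycle support, and in general it yields only $n+3$.

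\paragraph{The missing route.} Return through the antipodal swap $(C_v\ y)$, not through the bridge. From $P_i$, apply $B$ to get $\{T_i,V\}$, then $A^i$ to get $\{h,V'\}$ with $V'\in E$, then $B$ to get $\{x,b(V')\}$.
\begin{itemize}
\item If $b(V')\in C$, completion gives total at most $t+c+2$.
\item Otherwise both points are external. After $m$ inverse rotations the $D$-point is $y$ and the other point is still on $E$. The next $B$ sends $y$ to $C_v$, and it cannot produce $h$, since only $x$ maps to $h$.
\end{itemize}
Completing from there costs at most $t+c+m+3\le n+2$, because $n=t+c+2m+e$.

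\paragraph{Two smaller corrections.} First, state Step~2 for an arbitrary partner $U$ of $C_{v+i}$. If $b(h)=C_v$ and $U$ is external, the tail-one classification does not apply, since it presupposes $b(h)\notin C$. The route $\{T_i,U\}\to\{h,U'\}\to\{C_v,b(U')\}$ still finishes at cost $c+3$. Second, completing through $C_v$ costs $v+1$ only in branches that end at $C_v$; branches that meet $C$ elsewhere need the full $c$. Rather than re-estimating each branch roughly, use the paper's transfer: $1+q+c\le n'+2$ with $n'=n-t+1$ gives $t+q+c\le n+2$.
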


\begin{proof}
Write the unique tail and receiving cycle as
\[
 T_0=h\longrightarrow T_1\longrightarrow\cdots\longrightarrow
 T_{t-1}\longrightarrow C_0=z\longrightarrow\cdots\longrightarrow
 C_{c-1}\longrightarrow C_0.
\]
If $t>c$, the word $a^{c+1}$ has a triple fibre. Assume $t\le c$,
put $v=c-t$, $p=C_v$, and retain the inverse notation $A,B$ above.
The initial pair chain is
\[
 P_i=A^{t-i}(\{z\})=\{T_i,C_{v+i}\},\qquad 0\le i<t.
\]
No $P_i$ is an expansion target. At $P_0$, $A$ contracts to a
singleton if $v>0$ and returns to the first collision pair if $v=0$.
The closed-pair criterion from Lemma~\ref{S-lem:three-tail-one} therefore
applies with the whole initial chain included.

First suppose $b(h)=C_x$. If $x\ne v$, $B(P_0)$ is a cyclic pair
touching $C$, and the cost is at most $t+1+c\le n+1$. If $x=v$,
$B$ fixes $P_0$. Unless a cyclic transposition touches some
$C_{v+i}$ with $1\le i<t$, it fixes every other $P_i$ too, giving
an impossible closed chain. Choose such a point and call its partner
$U$. At cost $t-i+1$ we reach $\{T_i,U\}$. If the cyclic point
first reaches $z$ in $k<i$ inverse rotations, one more $A$ expands
while the tail point is positive, at total cost at most $t+1$.
Otherwise $i$ rotations give $\{h,U'\}$ with $U'\ne p$: this is
automatic off $C$, and if $U=C_k$ then $k\ge i$ and equality
$C_{k-i}=p$ would require $k=v+i$, contradicting the transposition.
One $B$ now gives a cyclic pair including $p$. At most $v+1$ final
steps suffice, for total
\[
 (t-i)+1+i+1+v+1=c+3\le n+2.
\]
Both other transpositions may be used in this reasoning; only $p$ can
be sent to $h$.

Now suppose $x=b(h)$ lies outside $C$. If $B(P_0)$ touches $C$, the
same $t+c+1$ construction applies. Otherwise
\[
 b=(h\ x)(p\ y)(u\ w),\qquad x,y\notin C.
\]
The graph on cyclic components with edges $(p,y),(u,w)$ is connected
by the invariant-component argument in Lemma~\ref{S-lem:three-tail-one};
the singleton exception is again outside $C$. Thus there are two or
three cyclic components, and its complete pair-route classification
applies. This does not assume that an auxiliary automaton synchronizes.
One may use the tail-one map agreeing with $a$ on cyclic states and
sending $h$ to $a(p)$ only to identify its initial pair as $P_0$.

Every constructive branch in that lemma starts at $P_0$ with $B$ and
stays on cyclic pairs until it touches $C$. If $q$ is the length of
this prefix, its proof reserves $c$ final steps and establishes
$1+q+c\le n'+2$, where $n'=1+\#\{\text{cyclic states}\}=n-t+1$.
In the present automaton the initial cost is $t$, and the actual
target $z$ still permits completion within $c$ steps. An earlier
visit to $z$ only shortens the route. Consequently
\[
 t+q+c\le n'+2+t-1=n+2.
\]
This transfers the proved budget, not the actual length of an
auxiliary word with a different terminal target.

It remains to inspect every closed family in the classification,
since the extra initial pairs could provide an escape. There are
precisely four possibilities: $x,y$ are antipodal on one external
cycle $D$ and the extra swap is internal to $C$, antipodal within
$D$, or joins $C$ to a third cycle $E$; or $x,y$ lie on equal-length
external cycles and the extra bridge joins corresponding matching
points. In the internal-$D$ and equal-matching cases, the only moved
$C$-point is $p$. Every $P_i$ with $i\ge1$ is fixed by $B$, so the
old matching together with the chain remains closed and is excluded.

If the extra swap is internal to $C$ and avoids all $C_{v+i}$,
$i\ge1$, the same closure holds. Otherwise it sends some $C_{v+i}$
to $C_k$, $k\ne v+i$. From $P_i$ one $B$ gives $\{T_i,C_k\}$.
For $k<i$ the positive-tail early expansion applies. For $k\ge i$,
another $i$ rotations give $\{h,C_{k-i}\}$ with $C_{k-i}\ne p$.
Its $B$-image is a cyclic pair with one point in $C$ and one in $D$.
The cost, including completion, is at most
\[
 (t-i)+1+i+1+c=t+c+2\le n+2.
\]

Finally suppose the extra swap is $(C_q,V)$ with $V\in E$.
If $q$ is not $v+i$ for any $i\ge1$, the initial chain is fixed and
the matching remains closed. Otherwise $B(P_i)=\{T_i,V\}$ and
$A^i(\{T_i,V\})=\{h,V'\}$ with $V'\in E$. Applying $B$ gives
$\{x,b(V')\}$. If its other point lies on $C$, completion costs at
most $t+c+2$ in total. Otherwise that point stays on $E$. Write
$|D|=2m$ and $|E|=e$. After $m$ inverse rotations the $D$-point is
$y$; one more $B$ sends it to $p$ and cannot send the $E$-point to
$h$. A final completion gives the bound
\[
 (t-i)+1+i+1+m+1+c=t+c+m+3
 \le t+c+2m+e+2=n+2,
\]
since $m,e\ge1$. Thus every closed family either remains impossible
or has an additional short escape. This proves the lemma.
\end{proof}

\subsection{Two moved tail points}

\begin{Slemma}\label{S-lem:three-two-tail}
Let $(Q,\{a,b\})$ be a strongly connected synchronizing automaton with
$|Q|=n$ and $\operatorname{rank}(a)=n-1$. Write its unique $a$-tail and
receiving cycle as
\[
T_0=h\longrightarrow T_1\longrightarrow\cdots\longrightarrow T_{t-1}
\longrightarrow C_0=z\longrightarrow C_1\longrightarrow\cdots
\longrightarrow C_{c-1}\longrightarrow C_0,
\]
where $h$ is the missing image and $z$ is the double image of $a$.
Suppose $2\le t\le c$ and
\[
b=(T_0,X)(T_j,Y)(U,V),\qquad 1\le j<t,
\]
where $X,Y,U,V$ are four distinct cyclic states of $a$, and all other
states are fixed by $b$. Additional $a$-cycles are allowed. Then
$\Tr\le n+2$.
\end{Slemma}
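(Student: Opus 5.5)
We follow the inverse-pair method of Section~\ref{S-sec:orr}. We start from $\{z\}$ with $A=a^{-1}$ and $B=b$, and use the initial chain $P_i=\{T_i,C_{v+i}\}$, $v=c-t$. We use the closed-family exclusion (a family of pairs closed under $A,B$, allowing only singleton contractions and containing no expansion target, contradicts synchronization). We also use the rule that a cyclic pair containing $C_q$ expands after at most $q+1$ further $A$-steps. We then split according to where $X,Y$ sit relative to the two initial pairs that $b$ can disturb, namely $P_0$ (through $b(h)=X$) and $P_j$ (through $b(T_j)=Y$), and according to whether $C_v$ or $C_{v+j}$ is moved.

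\emph{Step 1, direct escapes.} If $C_v\notin\{X,Y,U,V\}$, then $B(P_0)=\{X,C_v\}$ is a cyclic pair touching $C$, at cost at most $t+1+c\le n+1$. If $C_{v+j}$ is fixed, then $B(P_j)=\{Y,C_{v+j}\}$ costs at most $t-j+1+c\le n$. Suppose instead that $C_{v+j}=U$, so that $B(P_j)=\{Y,V\}$ is cyclic. When $V\in C$ or $Y\in C$, this again touches $C$ within $n+1$. The same applies symmetrically at $P_0$ when $C_v\in\{U,V\}$.

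\emph{Step 2, aligned cases.} The case $\{X,Y\}=\{C_v,C_{v+j}\}$ is closed unless $(U,V)$ hits some other $C_{v+i}$; this is exactly the argument of the two-moved-tail-states case of Theorem~\ref{S-thm:sparse}. If $(U,V)$ does hit such a point, we reach $\{T_i,C_k\}$ or $\{T_i,W\}$ off the chain, rotate down to $\{h,\cdot\}$ or to a positive-tail expansion, and apply $B$ once more. This repeats the $x=v$ computation of Lemma~\ref{S-lem:three-one-tail} with budget $c+3\le n+2$. Next, when exactly one of $C_v,C_{v+j}$ is a TC endpoint and the other is an endpoint of the RR swap $(U,V)$ whose partner lies off $C$, we land on a pair $\{X,V\}$ or $\{Y,V\}$ lying on external cycles. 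Strong connectivity forces the external cycles to be connected to $C$ through these swaps only.

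\emph{Step 3, external-cycle lock analysis.} This is the step we expect to be the main obstacle. When the landed pair lies in external cycles, we reuse the rotation and antipodal arguments of Lemma~\ref{S-lem:three-tail-one}. If the two points lie on one cycle $D$ and are not antipodal, rotate until one point reaches the endpoint mapped back into $C$, then apply $B$, at cost $\le c+d+3$. If they lie on distinct cycles $D,E$ with $e\nmid d$, rotate by $d$, at cost $\le c+d+3$. In the antipodal case, or in the case $e=d$ with matched offsets, the matching together with the initial chain would be closed, except that the tail point $T_j$ (or $h$) provides an extra exit through the other TC swap. We would follow that exit as in the final paragraph of Lemma~\ref{S-lem:three-one-tail}, using a second TC entry $(t-i)+1+i+1+m+1+c$ style budget, and bound the total by $n+2$ using $n=t+c+d(+e)$ and $m\le d/2$. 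The delicate part is checking that each locked configuration either is genuinely closed, and hence excluded, or admits this second escape within $n+2$. We would enumerate those locks explicitly by the positions of $U,V$, as was done in Lemma~\ref{S-lem:three-tail-one}, and verify the budget case by case.
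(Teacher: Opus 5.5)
Your Steps 1--2 reach the same two residual configurations as the paper. The first is the aligned case $\{X,Y\}=\{C_v,C_{v+j}\}$. In the second, for one $r\in\{0,j\}$ with complementary index $q$, one has $b(T_q)=C_{v+q}$, $C_{v+r}$ is an endpoint of $(U,V)$, and $B(P_r)=\{Z,V\}$ with $Z=b(T_r)$ and $V=b(C_{v+r})$ both off $C$; these are then the only moved points off $C$.

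\textbf{Step~3.} The genuine gap is here: it is only a plan, and its case list is wrong. Let $e$ and $d$ be the lengths of the cycles of $Z$ and $V$. Your split into $e\nmid d$ and $e=d$ omits $e\mid d$ with $e<d$. That is exactly the case where the outcome depends on $r$.
\begin{itemize}
\item For $r=j$ you must rotate by $e$, not $d$. This gives $\{Z,V'\}$ with $V'\ne V$. Then $B,A^j,B$ give $\{T_j,V'\}$, then $\{h,V''\}$, then a cyclic pair containing $b(h)=C_v$, at cost $c+e+4\le n+2$.
\item For $r=0$ there is no escape. A rotation of $\{Z,V\}$ containing $V$ also contains $Z$, and $B$ returns it to $P_0$. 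A rotation containing $Z$ alone is sent to some $\{h,V'\}$, which $A$ contracts to a singleton and $B$ sends back. Adding these pairs to the initial chain and the rotations gives a closed family without a target, so synchronization excludes this configuration.
\end{itemize}
Conversely, the antipodal case and the case $e=d$ are genuinely closed, because every rotation contains both of $Z,V$ or neither. So the ``extra exit through $T_j$ or $h$'' that you propose there does not exist. The budget $(t-i)+1+i+1+m+1+c$ you borrow from Lemma~\ref{S-lem:three-one-tail} comes from a bridge to a third external cycle. Strong connectivity rules that out in this configuration.

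\textbf{Step~2.} This also does not transfer verbatim from the $x=v$ computation of Lemma~\ref{S-lem:three-one-tail}. There $C_v$ is the only cyclic point exchanged with the tail; here both $C_v$ and $C_{v+j}$ are. Take $(U,V)=(C_{v+i},C_{v+i+j})$ with $i\ge1$, $i\ne j$, $i+j<t$, and start at the lower endpoint. Then $A^{t-i},B,A^i$ ends at $\{h,C_{v+j}\}$. The next $B$ either fixes this pair (if $X=C_{v+j}$) or sends it to $\{C_v,T_j\}$, so it produces no cyclic pair.

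The fix is to start, when both endpoints of $(U,V)$ lie on $C$, at the endpoint $C_{v+i}$ of larger coordinate. Then its partner $C_w$ has $w<v+i$. Either an expansion occurs earlier with positive tail index, or $A^i$ gives $\{h,C_{w-i}\}$ with $w-i<v$, which avoids both cross endpoints, so $B$ yields a cyclic pair meeting $C$. The bound obtained this way is $t+c+2$ rather than $c+3$, since $b(h)$ may be $C_{v+j}$. It is still within $n+2$.
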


\begin{proof}
Put $v=c-t$, $A(S)=a^{-1}(S)$ and $B(S)=b(S)$. We list inverse operations
in their application order, starting at $\{z\}$; reversing the listed
letters gives the corresponding forward word. The initial pair chain is
\begin{equation}\label{S-eq:three-two-tail-chain}
P_i=A^{t-i}(\{z\})=\{T_i,C_{v+i}\},\qquad 0\le i<t.
\end{equation}
The first pair is $P_{t-1}$. For every pair $P$,
\[
|A(P)|=2+\boldsymbol{1}_{z\in P}-\boldsymbol{1}_{h\in P}.
\]
Call a pair containing $z$ but not $h$ a target. One further $A$ expands
a target to three states. At $P_0$, $A$ contracts to a singleton if $v>0$
and returns to $P_{t-1}$ if $v=0$.

Two observations will be used repeatedly. First, a pair of cyclic
states containing $C_q$ reaches a target in at most $q$ inverse $a$-steps:
stop when either cyclic coordinate first reaches $z$. One further $A$
expands it, so the completion costs at most $q+1\le c$.
Second, a family of nontarget pairs containing $P_{t-1}$ and closed under
$A,B$ except for contractions to singletons rules out synchronization.
Indeed, a shortest inverse trajectory from any singleton to three states
must first expand from $\{z\}$ to the unique collision pair $P_{t-1}$.
If it subsequently contracts to a singleton, its prefix can be discarded
and the same argument restarted. Thus a closed family of this kind
prevents every triple from being merged.

\smallskip\noindent
\emph{Reduction to two configurations.}
For $i\in\{0,j\}$, if $B(P_i)$ is a cyclic pair meeting the receiving
cycle $C$, the preceding completion gives length at most
\[
(t-i)+1+c\le t+c+1\le n+1.
\]
Assume neither choice works, and put $\alpha=C_v$, $\beta=C_{v+j}$.
If either point is outside $\{X,Y,U,V\}$, it is fixed by $b$ and gives
the construction just described. Hence both belong to this four-point
set. They cannot be $U,V$: in that case the cyclic transposition retains
a point on $C$ and again gives the construction. There remain exactly
two configurations:
\begin{enumerate}
\item $\{\alpha,\beta\}=\{X,Y\}$;
\item for $\{r,q\}=\{0,j\}$, after naming the external endpoints,
\begin{equation}\label{S-eq:three-two-tail-external-form}
b=(T_r,Z)(T_q,C_{v+q})(C_{v+r},V),\qquad Z,V\notin C.
\end{equation}
\end{enumerate}
For the second assertion, the initial pair whose cyclic coordinate is an
endpoint of the cyclic transposition is sent to $\{Z,V\}$. Neither point
can lie on $C$, by our assumption. The remaining cyclic support point on
$C$ is therefore the cross endpoint matched to the other tail port, as
in~\eqref{S-eq:three-two-tail-external-form}.

\smallskip\noindent
\emph{Both cross endpoints on the initial chain.}
Suppose $\{X,Y\}=\{C_v,C_{v+j}\}$. Then $B$ fixes or exchanges $P_0,P_j$,
and fixes every other $P_i$ unless $C_{v+i}\in\{U,V\}$. If neither
endpoint of $(U,V)$ belongs to
\[
\{C_{v+i}:1\le i<t,\ i\ne j\},
\]
the initial chain is closed and has no target, a contradiction. Choose
an endpoint $C_{v+i}$ in this set. If both endpoints lie on $C$, choose
the one with larger coordinate; it still belongs to the displayed set.
Writing its partner as $C_w$, we then have $w<v+i$.

The operations $A^{t-i},B$ produce
$\{T_i,b(C_{v+i})\}$, because $T_i$ is fixed by $b$. If the cyclic
partner is outside $C$, applying $A^i$ gives $\{h,D\}$ with $D$ still on
an external cycle. The next $B$ gives two cyclic states, including
$b(h)=X\in C$: the other state is fixed or returns to $C$ through the
cyclic transposition, and cannot be a cross endpoint. The total cost is
at most
\begin{equation}\label{S-eq:three-two-tail-hole-cost}
(t-i)+1+i+1+c=t+c+2\le n+2.
\end{equation}
If instead the partner is $C_w$ and $w<i$, applying $A^w$ reaches $z$
while the tail coordinate is still positive. The next $A$ expands, at
cost $(t-i)+1+w+1\le t+1$. If $w\ge i$, applying $A^i$ yields
$\{h,C_{w-i}\}$ with $0\le w-i<v$. This cyclic coordinate is neither
cross endpoint, both of which have coordinate at least $v$. Thus the
next $B$ gives a cyclic pair meeting $C$, even if it uses $(U,V)$, and
\eqref{S-eq:three-two-tail-hole-cost} applies. When $v=0$, the inequality
$0\le w-i<v$ is impossible, so the earlier expansion necessarily occurs.

\smallskip\noindent
\emph{One fixed initial pair and an external pair.}
Now assume~\eqref{S-eq:three-two-tail-external-form} and set $U=C_{v+r}$.
The operations $A^{t-r},B$ give $\{Z,V\}$; $B$ fixes $P_q$ and every
other initial pair. Strong connectivity excludes any additional
$a$-cycle containing neither $Z$ nor $V$, since such a cycle would be
invariant under both letters and disjoint from $C$.

Suppose an inverse rotation $A^k$ of $\{Z,V\}$ places one point at $V$
and the other at a point different from $Z$. The latter point is also
different from $V$, since rotation is injective on cyclic states, and
is fixed by $b$. The next $B$ gives a cyclic pair containing $U$.
Completing at $C_{v+r}$ gives total cost at most
\begin{equation}\label{S-eq:three-two-tail-rotation-cost}
(t-r)+1+k+1+(v+r)+1=c+k+3.
\end{equation}

If $Z,V$ lie on the same external cycle of length $d$, let
$1\le k\le d-1$ be the inverse rotation carrying $Z$ to $V$. Unless
the two points are antipodal, the other image is different from $Z$,
and~\eqref{S-eq:three-two-tail-rotation-cost} gives
$\Tr\le c+d+2\le n+2$, where $n=t+c+d$.
If they are antipodal, every rotation of the pair contains both marked
points $Z,V$ or neither. Hence $B$ returns it to $P_r$ or fixes it.
The rotations and the initial chain then form a closed family without
a target, contradicting synchronization.

It remains that $Z,V$ lie on distinct cycles of lengths $e,d$,
respectively; here $n=t+c+e+d$. If $e\nmid d$, take $k=d$: the
$V$-coordinate returns to $V$ while the $Z$-coordinate does not return
to $Z$. Thus~\eqref{S-eq:three-two-tail-rotation-cost} gives
$\Tr\le c+d+3\le n+2$.

Suppose $e\mid d$. If $r=j>0$ and $e<d$, rotate by $e$ instead, obtaining
$\{Z,V'\}$ with $V'\ne V$. The next $B$ gives $\{T_j,V'\}$, and $A^j$
gives $\{h,V''\}$, where $V''$ is still on the $V$-cycle. Another $B$
gives a cyclic pair containing $b(h)=C_v$. Its other point is cyclic
even if $V''=V$, since $Z$ lies on a different cycle. The resulting cost
is at most
\[
(t-j)+1+e+1+j+1+v+1=c+e+4\le n+2;
\]
indeed $t\ge2$ and $d>e\ge1$ give $t+d\ge4$.

The remaining divisibility cases cannot synchronize. If $e=d$, the two
coordinates return to their marked points simultaneously. Each pair
rotation is therefore fixed by $B$ or returned to $P_r$, so these
rotations and the initial chain form a closed family without a target.
If $r=0$ and $e\mid d$, any pair rotation containing $V$ also contains
$Z$, and $B$ returns it to $P_0$. A rotation containing $Z$ alone is
sent to $\{h,V'\}$, with $V'$ on the external $V$-cycle; $A$ contracts
this pair to a singleton and $B$ returns to the external pair. All
remaining rotations are fixed by $B$. Adding these $\{h,V'\}$ pairs
to the initial chain and the external rotations again gives a closed
family without a target. Both alternatives contradict synchronization.
All configurations have now been exhausted, proving the bound.
\end{proof}

\subsection{Three moved tail points and one cyclic exchange}

\begin{Slemma}\label{S-lem:three-three-tail}
Let $\mathcal A$ be strongly connected and synchronizing, with
$\operatorname{rank}(a)=n-1$. Suppose $b$ consists of one tail--cycle
transposition, one tail--tail transposition, and one cycle--cycle
transposition. Then $\Tr(\mathcal A)\le n+2$.
\end{Slemma}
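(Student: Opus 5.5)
We follow the template of Lemmas~\ref{S-lem:three-two-tail} and~\ref{S-lem:orr-off-alignment}, with the notation $T_0=h,\dots,T_{t-1}$, $C=(C_0,\dots,C_{c-1})$, $v=c-t$, and $P_i=\{T_i,C_{v+i}\}$.

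\emph{Reductions.} The long-tail word $a^{c+1}$ handles $t>c$, so assume $t\le c$. Strong connectivity forces $h$ to be moved by $b$. It also forces the cyclic endpoint of the unique tail--cycle transposition $(T_i,C_x)$ to lie on $C$, unless the cycle--cycle transposition bridges to one external cycle $D$. The tail--tail transposition is $(T_p,T_q)$ with $p<q$. The cut argument of Lemma~\ref{S-lem:orr-tail-cut} then forces $p<i<q$: a path from $C$ to $h$ enters the tail at $T_i$ and must cross below $i$ by a tail exchange. Since $h=T_0$ is moved, either $p=0$ or $i=0$; the latter contradicts $p<i$, so the tail exchange is $(h,T_q)$ with $0<i<q<t$. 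This is exactly the configuration $\kappa=\xi=\rho=1$.

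\emph{Off-alignment.} If $x\ne v+i$, then $B(P_i)=\{C_x,C_{v+i}\}$ is either a cyclic pair touching $C$, with cost $(t-i)+1+c\le n$, or $C_{v+i}$ is an endpoint of the cycle--cycle exchange. In the latter case, with $i$ unique, we copy the bridge analysis of Lemma~\ref{S-lem:orr-off-alignment}. The pair $\{C_x,Y\}$ with $Y\in D$ is resolved by rotating until $Y$ returns, applying $B$, and finishing through $C_{v+i}$, giving a budget $c+d+3\le n+2$. The equal-length and antipodal locks give closed families, which synchronization excludes.

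\emph{Aligned case $x=v+i$.} Start with $A^t,B$, giving $\{T_q,C_v\}$, and follow $A$ for $q-i$ steps to $\{T_i,C_w\}$ with $w=v-(q-i)$. If $C$ reaches $z$ earlier, expand, since the tail index is still positive; this costs $\le c+2$. Otherwise the next $B$ sends $T_i$ to $C_{v+i}$. If $C_w$ is fixed or stays cyclic, completing through $C_{v+i}$ or $C_w$ costs $\le t+1+(q-i)+1+w+1\le c+3\le n$, using $t\ge3$. The only obstruction is $C_w$ being an endpoint of the cycle--cycle exchange whose partner is $C_{v+i}$ itself, which would violate disjointness, or $C_w$ being a bridge endpoint. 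Here we split on where the cycle--cycle exchange $(U,V)$ lies: high (both coordinates $\ge v$), low (both $<v$), crossing $v$, or a bridge to $D$. This mirrors the proof of Theorem~\ref{S-thm:orr-one-cyclic}. High exchanges use Lemma~\ref{S-lem:orr-high} with $f=\mathrm{id}$ and $\theta=(0\,q)\ne\sigma$, giving $n+2$. Low exchanges use Lemma~\ref{S-lem:orr-low-internal}, giving $n+1$. A crossing exchange uses the $j=\theta(s)$ routes and Lemma~\ref{S-lem:orr-low-pair}. A bridge uses Lemma~\ref{S-lem:orr-bridge-return}, but that only gives $n+3$.

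\emph{Main obstacle.} The hardest step is the low bridge $(C_\alpha,Y)$ with $\alpha<v$ and $Y\in D$, where Lemma~\ref{S-lem:orr-bridge-return} loses one letter. Our first attempt will exploit that here there is exactly one tail--cycle index, so $L+\alpha\le c+1$ in the prefix of that lemma; this upgrades the bound to $n+2$. We obtain $L$ by entering at $P_i$ directly ($f(i)=i$), not through the tail exchange, which shortens the prefix by $q-i\ge1$. If that fails, we will use $d\ge1$ and $n=t+c+d$ to absorb the extra letter, treating $d=1$, where $Y$ is a fixed point of $a$, by a direct word.
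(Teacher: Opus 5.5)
You have two genuine gaps: the case $i=0$ is dropped by a circular argument, and the low bridge, which you identify as the main obstacle, is left unresolved with a fix that fails.

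\textbf{The case $i=0$.} The cut argument forces $p<i<q$ only when $i>0$. Lemma~\ref{S-lem:orr-tail-cut} asserts nothing of the kind for $i_0=0$, because a path from the cycle can then reach $h$ directly through the tail--cycle edge. Using $p<i$ to exclude $i=0$ is therefore circular. The configuration $(h\ X)(T_p\ T_q)(U\ V)$ with $1\le p<q<t$ is strongly connected and must be treated. Your aligned route $A^t,B$ is useless there, since $B(P_0)=P_0$ when $b(h)=C_v$.

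The case is easy, because the crossed hole can serve as the last $B$. If $B$ preserved the initial chain there would be no triple, so some $s\ge1$ has $B(P_s)=\{T_j,W\}\ne P_j$. Then $A^j,B$ either expands early or turns $\{h,W'\}$ into a cyclic pair containing $C_v$, at total cost $c+3+j-s\le n+1$. For $X\in D$, use $B(P_0)=\{X,C_v\}$; if the bridge sits at $C_v$, use $A^{t-q},B,A^p,B$.

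Your off-alignment paragraph also needs repair. If $X=C_x\in C$, then $B(P_i)$ already contains $C_x$, so no rotation is needed. In the genuinely off-aligned case $X\in D$, the antipodal and equal-length families are \emph{not} closed here, because the tail--tail exchange moves an initial pair off the chain. Cite Lemma~\ref{S-lem:orr-off-alignment} instead; its tail-exchange branch gives $c+3$.

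\textbf{The low bridge.} Your proposed fix rests on a false claim. Under alignment $b(T_i)=C_{v+i}$, so $B(P_i)=P_i$, and entering at $P_i$ produces no pair containing $Y$. The route through $(h\ T_q)$ has $L+\alpha=c+2$ exactly, so Lemma~\ref{S-lem:orr-bridge-return} as stated yields only $n+3$. Your fallback via $d\ge1$ is left unspecified.

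The idea you are missing is that the obstruction is illusory. After $A^t,B,A^{q-i}$ you reach $\{T_i,W'\}$, where $W'$ is $W=b(C_v)$ rotated back $q-i$ steps; you tacitly took $b(C_v)=C_v$. If $W'\ne C_{v+i}$, the next $B$ gives two distinct cyclic states, one of them $C_{v+i}$. The other is cyclic whether it lies on $C$ or on $D$. Completing through $C_{v+i}$ therefore costs at most $v+i+1$, for a total of $t+1+(q-i)+1+(v+i)+1=c+q+3\le c+t+2\le n+2$. Your estimate used only the cheaper completion through $C_w$.

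This single count covers every position of the cycle--cycle exchange, low bridges included. Inside Lemma~\ref{S-lem:orr-bridge-return} it is just the direct finish, with $g=i$ and $v-\alpha=q-i$. The only lock is $W'=C_{v+i}$. It cannot occur on $D$, and on $C$ there is no wrap before a first visit to $z$, so it forces $b(C_v)=C_{v+q}$. Then $B$ exchanges $P_0$ and $P_q$ and fixes every other initial pair, contradicting synchronization. This is the paper's argument, and it makes your high/low/crossing/bridge split through the machinery of Theorem~\ref{S-thm:orr-one-cyclic} unnecessary.
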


\begin{proof}
Use the tail $T_0=h,\ldots,T_{t-1}$, receiving cycle
$C=(C_0=z,\ldots,C_{c-1})$, and inverse operators $A=a^{-1}$, $B=b$.
If $t>c$, the word $a^{c+1}$ merges three states, so assume
$3\le t\le c$ and put $v=c-t$.
The initial pairs are $P_s=A^{t-s}(\{z\})=\{T_s,C_{v+s}\}$.
Their $A$ edges stay on the chain, apart from a singleton contraction
when $v>0$ or a return to the collision pair when $v=0$.
Closure of this entire chain under $B$ would therefore exclude every
triple. A cyclic pair touching $C$ expands after its first visit to $z$,
followed by one further $A$.

Write the cross edge as $(T_i,X)$. Strong connectivity forces $h$ to
move. Thus either $i=0$ and the tail exchange is $(T_p,T_q)$ with
$1\le p<q<t$, or it is $(h,T_j)$ with $0<i<j<t$.
For the latter inequality, a path from a cycle first enters the tail at
$T_i$ and must reach $T_j$ before using its edge to $h$; if $i>j$,
there is no first entry into the earlier tail segment.

First let $X=C_x$. If $x\ne v+i$, then $B(P_i)$ is a cyclic pair
containing $C_x$, giving total cost at most $(t-i)+1+c\le n+1$.
The cyclic exchange may move the other endpoint, but cannot send it
to the tail. Assume $x=v+i$.

If $i=0$, then $B(P_0)=P_0$. If every $B(P_s)$ stayed in the initial
chain, there would be no triple. Hence for some $s\ge1$,
$B(P_s)=\{T_j,W\}\ne P_j$, with $W$ cyclic.
If $W$ first reaches $z$ after $q<j$ inverse steps, one further $A$
expands, at cost at most $t+j-s+1\le2t-1\le n$.
Otherwise $A^j$ gives $\{h,W'\}$ with $W'\ne C_v$.
Indeed this is automatic on another cycle; if $W=C_k$, absence of an
earlier $z$ hit gives $k\ge j$, and $k-j=v$ would mean
$B(P_s)=P_j$. A further $B$ now gives two cyclic points, including
$C_v$. The first $z$ visit takes at most $v$ steps, and the total cost
through expansion is
\[
 (t-s)+1+j+1+v+1=c+3+j-s\le t+c+1\le n+1.
\]
The equality $q=j$ is handled by this latter route, not by an expansion
at a possible $\{h,z\}$.

If $i>0$, let the tail exchange be $(h,T_j)$, $i<j$, and put $d=j-i$.
From $P_0$, $B$ gives $\{T_j,W\}$ with $W=B(C_v)$ cyclic.
If $W$ reaches $z$ after $q\le d$ steps, the tail index is at least
$i\ge1$, so one more $A$ expands at cost $t+q+2\le2t\le n$.
Otherwise $A^d$ gives $\{T_i,W'\}$. If $W'\ne X$, the next $B$
gives a cyclic pair containing $X=C_{v+i}$, for cost at most
\[
 t+1+(j-i)+1+(v+i)+1=c+j+3\le t+c+2\le n+2.
\]
If $W'=X$, then $W$ must be $C_{v+j}$: on $C$ no wrap occurred
before the first $z$ hit, and on another cycle equality is impossible.
The cyclic exchange is therefore exactly $(C_v,C_{v+j})$.
Now $B$ exchanges $P_0,P_j$, fixes $P_i$ by swapping its endpoints,
and fixes every other initial pair. This closed chain contradicts
synchronization. Thus the receiving-cycle case is complete.

Next let $X$ lie on a cycle $D$ outside $C$.
The cyclic exchange must be $(C_y,Y)$ with $Y\in D$:
if it misses $C$, then $C$ is closed; if it joins $C$ to a different
cycle $E$, then $C\cup E$ is closed. Neither can reach $h$.
There are no further cycles, so $n=t+c+d$, where $d=|D|\ge2$.

For $i=0$, if $y\ne v$, then $B(P_0)=\{X,C_v\}$ expands at total
cost $c+2$. If $y=v$, write the tail exchange as $(T_p,T_q)$, $p<q$.
The inverse sequence $A^{t-q},B,A^p,B$ gives
\[
 \{T_q,C_{v+q}\},\quad\{T_p,C_{v+q}\},\quad
 \{h,C_{v+q-p}\},\quad\{X,C_{v+q-p}\}.
\]
Both positive-offset $C$ coordinates avoid the sole moved $C$ point
$C_v$. Applying $A^{v+q-p}$ and expanding costs
$(t-q)+1+p+1+(v+q-p)+1=c+3$.

For $i>0$, write the tail exchange as $(h,T_j)$ with $i<j$.
If $y\ne v+i$, $B(P_i)=\{X,C_{v+i}\}$ gives cost $c+2$.
If $y=v+i$, then $B(P_0)=\{T_j,C_v\}$.
When $v\le j-i$, applying $A^v$ reaches
$\{T_{j-v},z\}$ with $j-v\ge i\ge1$; the expansion costs $c+2$.
Otherwise $A^{j-i}$ gives $\{T_i,C_w\}$ with
$w=v-j+i>0$. Since $y-w=j>0$, $C_w$ is fixed by $B$.
The next $B$ gives $\{X,C_w\}$, and $A^w$ followed by an expansion
costs $t+1+(j-i)+1+w+1=c+3$.

Thus every outside-cycle case has cost at most $c+3\le n-2$;
the receiving-cycle cases have cost at most $n+2$. All costs count
the original binary letters, and all stated first-hit rules avoid
applying a further inverse step after an earlier expansion.
\end{proof}

\subsection*{Data and code availability}
The accompanying data contain the census programs, their outputs and
reproducible commands. Section~\ref{S-sec:comp} specifies which enumerations
are complete and which cover restricted classes.

\subsection*{Use of generative AI}
Generative AI tools, including OpenAI Codex, assisted with mathematical
analysis, research programming, literature review and manuscript preparation.
The author has reviewed the manuscript and takes responsibility for its content.

\end{document}